\documentclass[11pt]{article}
\usepackage{amsthm}
\usepackage{amsmath}
\usepackage{amssymb}
\usepackage{relsize}
\usepackage{nicefrac} % for displaying fraction 1/2
\usepackage[margin=1.5in]{geometry}

\newtheorem{theorem}{Theorem}[section]
\newtheorem{lemma}[theorem]{Lemma}
\newtheorem{corollary}[theorem]{Corollary}
\newtheorem{proposition}[theorem]{Proposition}

\newtheorem{definition}[theorem]{Definition}

\renewcommand{\P}{\mathrm{P}}
\newcommand{\NP}{\mathrm{NP}}
\newcommand{\coNP}{\mathrm{coNP}}
\newcommand{\FP}{\mathrm{FP}}
\newcommand{\PV}{\mathrm{PV}}
\newcommand{\TFNP}{\mathrm{TFNP}}

\newcommand{\NN}{\mathbb{N}}

\newcommand{\PLS}{\mathrm{PLS}}
\newcommand{\CPLS}{\mathrm{CPLS}}
\def\GI{\mathrm{GI}}
\def\FP{\mathrm{FP}}

\def\PPA{\mathrm{PPA}}
\def\PPAD{\mathrm{PPAD}}
\def\PPADS{\mathrm{PPADS}}
\def\PPP{\mathrm{PPP}}
\def\PLC{\mathrm{PLC}}

\def\APPROX{\mathrm{APPROX}}
\def\TFS{\mathrm{TF}\Sigma^p_2}
\def\RWPP{\textsc{RWPP}}
\def\APC{\mathrm{APC}}

\newcommand{\hf}{\nicefrac{1}{2}}

\def\Search{\mathrm{Search}}
\def\dt{^\mathrm{dt}}
\def\Out{\mathrm{Out}}
\def\poly{\mathrm{poly}}
\def\LK{\mathrm{LK}}
\def\PK{\mathrm{PK}}

\def\CNF{\mathrm{CNF}}

\def\Gqbf{\Gamma_{\mathrm{TQBF}}}
\def\Gpty{\Gamma_\oplus}
\def\Gps{\Gamma_{\mathrm{PSPACE}}}
\def\Gsh{\Gamma_{\#}}

\newcommand{\loc}{\textsc{Local-}}
\newcommand{\chk}{\textsc{Checkable~}}

\newcommand{\Reach}{\mathrm{Reach}}

\newcommand{\restrict} {\upharpoonright}
\newcommand{\pty}{\text{\raisebox{0.5pt}{\Large{$\oplus$}}}}

\newcommand{\inp}{\mathrm{inp}}
\newcommand{\eval}{\mathrm{eval}}
\newcommand{\FF}{\mathbb{F}}

\newcommand{\FCon}{\textsc{FCon}}

\newcommand{\bigdoublevee}{%
  \mathop{
    \mathchoice{\bigvee\mkern-15mu\bigvee}
               {\bigvee\mkern-12.5mu\bigvee}
               {\bigvee\mkern-12.5mu\bigvee}
               {\bigvee\mkern-11mu\bigvee}
    }
}

\newcommand{\oneref}{1\textsc{-Ref}}

\let\phi\varphi
\let\epsilon\varepsilon

\newcommand{\ignore}[1]{}

\begin{document}
%\title{NP search classes by interactions and counterexamples}
%\title{TFNP,  land of shadows}
%\title{TFNP and the polynomial-time shadows of complexity classes}
%\title{TFNP and what we can see in polynomial time}
%\title{TNFP and polynomial-time complexity theory}
%\title{TFNP and the polynomial-time traces of complexity~classes}
\title{How to fit large complexity classes into TFNP}

\author{Neil Thapen\footnote{
Institute of Mathematics, Czech Academy of Sciences,  thapen@math.cas.cz. 
Supported by the Czech Academy of Sciences (RVO 67985840) and GA\v{C}R grant 23-04825S.}}

\maketitle

\begin{abstract}
Subclasses of TFNP (total functional NP) are usually defined by specifying a complete problem,
which is necessarily in TFNP, and including all problems many-one reducible to it.
We study two notions of how a TFNP problem can be reducible to an object,
such as a complexity class, outside TFNP. This gives rise to subclasses of TFNP
which capture some properties of that outside object.
We show that well-known subclasses can arise in this way,
for example PPA from reducibility to $\pty \P$ 
and PLS from reducibility to $\P^\NP$.

We study subclasses arising from PSPACE and the polynomial
hierarchy, and show that they are characterized by the propositional
proof systems Frege and constant-depth Frege, extending
the known pairings between natural TFNP subclasses and proof systems.

We study approximate counting from this point of view, and look
for a subclass of TFNP that gives a natural home to combinatorial principles
such as {\sc Ramsey} which can be proved using approximate counting.
We relate this to the recently-studied {\sc Long choice} and 
{\sc Short choice} problems.
\end{abstract}

\section{Introduction}

An \emph{$\NP$ search problem} is specified by a polynomial-time relation
$R(x,y)$ and a polynomial $p$ such that for every string~$x$
there is a string $y$ with $|y| \le p(|x|)$ such that~$R(x,y)$. 
We call~$x$ an \emph{input} or \emph{instance}, and~$y$ a \emph{solution}. The search
problem is: given~$x$,  find such a~$y$.
The class of such problems is called TFNP,
for Total Functional NP
~\cite{johnson1988easy, megiddo1991total}.
We will often not write the length bound on~$y$,
 which may be assumed to be implicit in 
$R(x,y)$, and will use just $R(x,y)$ or $R$
as the name of the problem. 

Let $R(x,y)$ and $Q(u,v)$ be two such problems. Then
$Q$ is \emph{many-one reducible}, or simply \emph{reducible}, to $R$ if there 
are polynomial-time functions~$f$ and~$g$ such that for all $u, y$ we have
$R(f(u), y) \rightarrow Q(u, g(u,y))$. In other words, 
in polynomial time we can convert 
an instance~$u$ of $Q$  into an instance~$f(u)$
of $R$, and then convert any solution back into a solution of $Q$.  
We write this as $Q \le R$, and it gives rise to a structure on TFNP by
which we can classify search problems into 
natural subclasses, closed under reductions
 and usually defined by complete problems.
 
 $\TFNP$ has many connections with logic 
 (not least of which is the characterization of black-box $\TFNP$ classes
 in terms of propositional proof systems, 
 from which many separation results come, 
  e.g.~\cite{buresh2004relativized, goos2022separations}).
  One common observation is that many natural $\TFNP$
  subclasses  collect together all the problems in which the existence
  of a solution is guaranteed by some particular existence theorem
  or combinatorial lemma; the complete problem for the class
  is a particular case of the lemma~\cite{megiddo1991total, papadimitriou1994complexity}.
For example in~\cite{beame1995relative}, the class $\PPA$ 
``is based on the lemma that every graph has an even number of odd-degree nodes''.
{\sc Leaf} (defined below) is a complete problem for this class, 
and the totality of {\sc Leaf}, that is, the statement that every instance of {\sc Leaf} has a solution,
is a case of this lemma.

This line of thought leads to a general recipe for defining $\TFNP$ classes.
Working in first-order logic, choose some language which is rich enough 
to express suitable statements about polynomial time machines, and let $T$
be \emph{any} theory in this language. Then we can define a subclass of $\TFNP$
as the set of  problems provably total in~$T$,
that is,  the set $\{ R \in \TFNP : T \vdash \forall x \exists y R(x,y) \}$.\footnote{
This set may not be closed under many-one reductions.
To get a well-behaved class, we can consider instead the closure of this set under 
reductions, or equivalently strengthen
$T$ by adding every true universal $L_\PV$ sentence.
See e.g.~\cite[Lemma~4]{kol2022approx} or Lemma~\ref{lem:Gamma_over_PV} below.}
This approach works well with theories of bounded arithmetic;
see many citations throughout this paper, in particular~\cite{buss1994application}
which characterizes $\PLS$ as arising in this way from the theory~$T^1_2$.
The goal of this kind of work is often to understand and compare the strength of theories,
motivated partly by analogous work in classical proof theory
that measures theories by their provably recursive functions.

As a way of constructing interesting $\TFNP$ classes, an advantage of this approach,
compared to defining a class by specifying a complete problem, 
is that $T$ does not have to be associated with a $\TFNP$ problem but can be any theory we like,
and as complicated as we like. 
It could be all of ZFC, or express that ``every $\P^\NP$ machine has a computation
on every input'' or that ``a well-behaved parity function for circuits exists'', and we will get a 
corresponding class, capturing something about the theory. 
The disadvantage is that the class may have no natural complete problem,
and in the worst case we may have to work with formal proofs to show membership;
and we may get no insight into how solving problems in the class is computationally
related to $\P^\NP$, or parity, or whatever object the theory describes.

The goal of this paper is to present ways of defining $\TFNP$ problems
and classes that keep the advantage above, that we can construct classes
inside $\TFNP$ which capture some of the properties of much more complicated
objects or ideas, but that avoid the disadvantages; we will use constructions
that have clear computational meanings, and our classes will have natural
complete problems. Sometimes our constructions will parallel work done in bounded
arithmetic, but we  avoid arguments via logic when we can.
We develop two related constructions. 

\subsection{TFNP subclasses from complexity classes}

The first construction
%, in Sections~\ref{sec:SO_setup},~\ref{sec:parity} and~\ref{sec:pspace},
starts with a complexity class, such as $\pty \P$,
and  (almost) canonically defines a $\TFNP$ subclass from it, 
consisting of problems that can be solved by a polynomial-time
interaction with an untrustworthy Adversary who claims to be able
to  answer queries to the class. 
The ``almost'' part is that
we need to specify a particular axiomatization of the complexity class;
the Adversary is allowed to give incorrect answers, as long as they do
not immediately violate any of our  axioms. 
This gives well-behaved classes with natural complete problems,
and our main results here are:
\begin{itemize}
\item
The class arising from $\pty \P$
is the standard class $\PPA$ (Theorem~\ref{the:Gpty_PPA}).
This does not depend on a particular
choice of axioms for $\pty \P$ (Theorem~\ref{the:robust_PPA}).
\item
The class arising from $\NP$, with natural axioms,
is the standard class $\PLS$
(Proposition~\ref{pro:QBF1_from_PNP}).
\item
Two different axiomatizations of PSPACE,
in terms of PSPACE computations and of true QBFs,
give rise to the same class (Theorem~\ref{the:pspace_equivalence}). 
This contains the standard classes $\PLS$, $\PPA$
and $\PPP$ by natural reductions
(Proposition~\ref{pro:QBF1}, Theorems~\ref{the:PPA_Gsh}
and~\ref{the:PPT_Gsh})
and  is characterized by the well-studied
Frege proof system (Proposition~\ref{pro:proof_characterizations}).
\item
The classes arising from the polynomial hierarchy
 coincide with previously-studied classes 
defined by natural combinatorial  problems
(Proposition~\ref{pro:GI_loc_k}) and 
are characterized by the well-studied constant-depth
Frege hierarchy of proof systems (Proposition~\ref{pro:proof_characterizations}).
\end{itemize}

Here a proof system ``characterizes'' a TFNP subclass if the subclass
contains precisely the problems which have small proofs of totality in the system.
It was recently shown in~\cite{goos2022separations} that every standard TFNP
class (except PPP) has a corresponding proof system; we 
exhibit classes for some relatively strong proof systems which did not
appear in this picture before. 

The Frege system in particular is known to be strong, in the sense that 
many natural combinatorial principles have quasipolynomial-sized proofs in it,
and lower bounds for it seem to be 
well out of reach (see e.g. the discussion in~\cite{aisenberg2018short}). 
For this reason, we like to think of the PSPACE class, corresponding to Frege, as sitting like a hat 
on top of the standard TFNP classes; any natural combinatorial TFNP problem 
is very likely to lie inside it.\footnote{Exceptions are the {\it local improvement}
search problems
corresponding to the theory $V^1_2$ from~\cite{kol2011so, beckmann2014improved},
which are plausibly characterized by extended Frege.} At the same time it
does not seem excessively strong, and is plausibly weaker than
the problem {\sc Wrong proof} proposed for this purpose 
in~\cite{goldberg2018towards}, which is a consistency principle for a proof
system above extended Frege.

\subsection{TFNP subclasses from TF$\boldsymbol{\Sigma}^p_2$  problems}

The second construction starts with a search problem in $\TFS$. This is a complexity
class that sits one quantifier level above $\TFNP$, in that a $\TFS$
problem is defined like a $\TFNP$ problem, except that
the relation $R(x,y)$ is $\coNP$ rather than polynomial-time.
In this sense, a string $y$ is a solution to an instance~$x$
of $R$ if and only if there is no \emph{counterexample}
to the $\coNP$ statement $R(x,y)$.

We say that a $\TFS$ problem $Q(u,v)$ is \emph{counterexample reducible}
to a $\TFS$ problem~$R(x,y)$ if there are polynomial-time functions $f$
and $g$ resembling a normal many-one reduction of $Q$ to $R$, 
and a third polynomial-time function~$h$ which converts
any counterexample to $Q(u,v)$ into a counterexample to $R(x,y)$.
Since $\TFNP$ problems are in particular $\TFS$ problems,
any $\TFS$ problem  gives rise to a subclass of $\TFNP$
consisting of the problems
which are counterexample-reducible to it.
Again this gives well-behaved classes with natural complete problems,
and our main results are:
\begin{itemize}
\item
The class arising from the $\TFS$ problem {\sc $\P^\NP$ computation} is precisely PLS
(Theorem~\ref{the:TFS_equivalences}).
\item
The class arising from the $\TFS$ problem {\sc Empty} is precisely PPADS.
Thus, relative to some oracle, there is no counterexample reduction in either
direction between {\sc Empty} and {\sc $\P^\NP$ computation}
(Theorem~\ref{the:TFS_equivalences}, Corollary~\ref{cor:empty_separation}).
\end{itemize}
 $\TFS$ is an increasingly popular topic~\cite{
kleinberg2021total, ren2022range, korten2022hardest, pasarkar2023extremal},
and is often studied under a coarser notion of reducibility, under which for
example all TFNP problems are reducible to {\sc Empty}~\cite{kleinberg2021total}.
Our results show that counterexample reducibility is a fruitful definition,
and we believe that it shows the richness of this class,
and in particular shows that many TFNP classes can be most naturally seen
as arising from $\TFS$ problems under counterexample reducibility.

%[Orphaned:]
%We give a simplified definition of the class 
%APPROX, which captures $\TFNP$ problems that 
%can be proved total using approximate counting.
%We give direct reductions showing that {\sc Ramsey} and a weak version
%of the problem {\sc Long choice} lie in APPROX.

\subsection{{\sc Ramsey}, PLC and approximate counting}

As an application of the machinery of counterexample reductions,
we give a simplified definition of the approximate-counting TFNP class APPROX
from~\cite{kol2022approx}.
We~show a direct reduction of {\sc Ramsey} to it, and compare it with weakened
versions of the counting problems {\sc Long choice} and {\sc Short choice}
from~\cite{pasarkar2023extremal}. Here we describe some of the background to this work.

It is a long-running research theme in logic to study which kind of counting arguments
can be carried out in
arithmetic without exponentiation, that is, in bounded arithmetic. In terms of the concepts
introduced in this paper, this is like asking how much counting we can do in the 
%$\loc\Gqbf^k$ hierarchy.
TFNP subclasses arising from the polynomial hierarchy.
Landmark results are that the weak pigeonhole principle is provable, and useful~\cite{pww:primes};
that the finite Ramsey theorem is provable~\cite{pudlak1990ramsey};
that precise counting is impossible, since the full pigeonhole principle is not 
provable~\cite{ajtai1994complexity, pitassi1993exponential, krajivcek1995exponential};
but that we can carry out any reasonable argument for which approximate
counting suffices, like the standard proofs of the Ramsey theorem and the tournament 
principle~\cite{jerabek:apc2}.
Here ``reasonable argument'' means roughly: we can do inductions with polynomially-many
steps, where the inductive hypothesis contains statements like ``set $X \subseteq [2^n]$
has size approximately~$s$'' where $X$ is polynomial-time and $s$ is correct up to an % multiplicative
error~$s \epsilon$. It is shown in~\cite{jerabek:apc2} that such arguments can be carried out in a 
subtheory~$\APC_2$ of bounded arithmetic.
$\APC_2$ is strong enough to formalize many results of complexity theory,
and it and related theories are popular now in research in metacomplexity,
which seeks to formally understand why progress in complexity theory is hard
(see e.g.~\cite{muller2020feasibly, li2023unprovability}).

The paper~\cite{kol2022approx} introduced a TFNP class $\APPROX$
corresponding to $\APC_2$, as a tool to show an unprovability result.
$\APPROX$ suffered there from having a complicated definition, 
and in particular for not having any clear complete problem.
In Section~\ref{sec:approx} we recap results about APPROX 
from~\cite{jerabek:apc2, kol2022approx}. Our contribution is a new,
simpler definition of APPROX, using reducibility
to a $\TFS$ form of weak pigeonhole principle~(Proposition~\ref{pro:APPROX_PLS}).
This allows us to show in particular that {\sc Ramsey} is in APPROX by a direct,
``combinatorial'' reduction, rather than by an indirect argument through logic.

We also study the TFNP problem 
{\sc Long choice}, which was introduced recently in~\cite{pasarkar2023extremal}
with the goal of capturing the kind of counting that proves the Ramsey theorem.
Our contention is that, if we are interested in {\sc Ramsey} specifically,
 this problem and the corresponding class $\PLC$
are {too strong}. 
This is because in the framework of Inspector-Adversary games
we have
% in Sections~\ref{sec:SO_setup}
%and~\ref{sec:pspace} above,
 a recipe for making TFNP classes that are as strong as you like;
any reasonable counting argument can be carried out in
$\loc\Gps$, and plausibly in $\loc\Gsh$ (see Section~\ref{sec:pspace}).
So the interesting question becomes, not just to find a natural class that is strong enough
to contain {\sc Ramsey} and similar problems,
but to find one that is also not too strong. In particular
it is arguably desirable that it does not contain the classes $\PPP$, or even $\PPAD$, 
associated with precise counting, since this is not necessary for {\sc Ramsey}.\footnote{
On the other hand it seems that it ought to contain
$\PLS$, since this is reducible to approximate counting ---
see the proof of Proposition~\ref{pro:approx_list}.
It should certainly contain {\sc Weak pigeon},
and in particular a (non-uniform) reduction of {\sc Weak pigeon} to {\sc Ramsey}
was shown in~\cite{krajicek2005structured}.}
 Here APPROX is a candidate, sitting, apparently strictly,
 inside $\loc\Gqbf^3$ (Propositions~\ref{pro:approx_in_Gqbf3}
 and~\ref{pro:CPLS_not_in_approx}) and so not containing $\PPAD$
 (Corollary~\ref{cor:hierarchy_separate}).
However, we do not have any evidence that a strictly-weaker class
would not be enough.

In Section~\ref{sec:long_choice} we define {\sc Long choice}
and introduce a variant {\sc Weak long choice}, with a similar
relationship to it as {\sc Weak pigeon} has to {\sc Pigeon}. 
We observe that this does not trivialize the problem, in that {\sc Ramsey}
and {\sc Weak pigeon} are still reducible to {\sc Weak long choice};
so in some sense the motivating idea behind {\sc Long choice}
is still present in {\sc Weak long choice}.
We show that {\sc Weak long choice} is in APPROX
(Proposition~\ref{pro:wlc_to_rwpp2}).

In Section~\ref{sec:short_choice} we show something similar
for the the $\TFS$ problem {\sc Short choice}, which was introduced
in~\cite{pasarkar2023extremal} as a kind of dual of {\sc Long choice}.
We define a $\TFS$ problem {\sc Weak short choice} and show
that it is non-trivial. We use counterexample reducibility to compare it to APPROX,
showing that every TFNP problem counterexample reducible to it is in APPROX
(Proposition~\ref{pro:weak_short_approx}).

\subsection{Open problems}

\noindent\textbf{1.} {Separate the polynomial hierarchy in TFNP.}
Precisely, show a relativized separation between $\loc\Gqbf^k$ and $\loc\Gqbf^{k+1}$
for any $k \ge 2$. 
This is the major open problem in this area.
It has a long history in proof complexity,
where it appears in the forms:
find a family of polylogarithmic width CNFs super-quasipolynomially separating
depth-$k$ from depth-$(k{+}1)$ Frege; or 
show that the relativized theory $T^{k+1}_2$ proves strictly more $\forall \Sigma^b_1$
sentences than $T^k_2$. 
As far as we know the current frontiers of what is known are a
separation of a ``linear'' form of the hierarchy~\cite{impkra_02}, 
a lower bound for the weak pigeonhole principle CNF
in a restricted version of Res(log)~\cite{razborov2015pseudorandom},
and that APPROX does not contain CPLS, or equivalently~$\loc\Gqbf^2$~\cite{kol2022approx}.

\medskip

\noindent\textbf{2.} {Find natural axioms $\Gamma$ capturing PPAD and PPADS 
in the style of $\Gpty$ and PPA  in Theorem~\ref{the:Gpty_PPA}.}
It is potentially easy to come up with ``unnatural'' axioms $\Gamma$ which work, for
example for PPAD taking an axiom expressing ``$\alpha(x)$ is a solution
for instance $x$ of {\sc Onto pigeon}''. We have in mind axioms that do
not necessarily describe any fixed object (such as $\pty \P$) but 
that set up reasonable rules about how the Adversary answers questions about cardinality.
It would also be interesting to set up some similar game where
we ask the Adversary the size of a set, and they are allowed some leeway
to give imprecise answers, and to study whether APPROX can be 
characterized in this way.

\medskip

\noindent\textbf{3.} Prove more general results of the form:
any two natural axiomatizations of a complexity class $\mathcal{C}$
give rise to the same TFNP subclass. This asks for a more general version
of Theorem~\ref{the:robust_PPA},
which shows that any recursive definition of $\pty \P$ with a constant
number of queries gives rise to PPA, or of Theorem~\ref{the:pspace_equivalence},
which shows that two very different characterizations of PSPACE
give rise to the same subclass.
%We would like to expand the ``robustness'' results in Section~\ref{sec:robustness}, which aim to
%show that we get the same TFNP class from a complexity class regardless
%of which axiomatization we choose, to a more general class of axiomatizations of parity,
%and then to prove similar results for other classes. 
This may be
connected to a phenomenon in proof complexity that, up to equivalence, a proof system is 
often determined by the expressiveness of lines in the proof, and the precise choice of rules does not
matter; for example Reckhow showed this for Frege systems in~\cite{reckhow1975lengths}.

\medskip

Let us also note one piece of work left undone in this paper.
We introduce a TFNP problem $\loc\Gps$ and show that it 
is connected to PSPACE and the Frege proof system via the theory~$U^1_2$.
Similar machinery should work for \mbox{EXPTIME}. That is, we expect
that a similar TFNP problem $\loc\Gamma_\mathrm{EXPTIME}$
is connected to EXPTIME and the extended Frege proof
system via the theory~$V^1_2$, using work 
from~\cite{kol2011so, beckmann2014improved, beckmann2017np}.

\subsection{Outline}

In the next section we give some basic definitions, of TFNP classes and of the language we will use
for writing axioms.

In Section 3 we define Inspector-Adversary games, our general model for creating a TFNP
subclass from an axiomatization of a complexity class. We show that this subclass has
a natural complete problem, and discuss  related constructions in logic.

In Section 4 we show that the subclass arising from $\pty \P$ is PPAD,
and that up to a point this is independent of which axioms we pick for $\pty \P$.

In Section 5 we study the subclass arising from PSPACE, axiomatized either by
the properties of QBFs or of PSPACE computations. We show that it contains the 
standard TFNP classes, and define subclasses arising from the polynomial
hierarchy by restricting the complexity of QBFs.

In Section 6 we define $\TFS$ and counterexample reducibility, and show that 
some standard TFNP subclasses arise in this way from simple $\TFS$ problems.

In Section 7 we study approximate counting and the Ramsey theorem in TFNP.
We give a simplified definition of the class APPROX
from~\cite{kol2022approx}, using the machinery of counterexample reductions.
We give a direct reduction of {\sc Ramsey} to it, and compare it with weakened
versions of the counting problems {\sc Long choice} and {\sc Short choice}
from~\cite{pasarkar2023extremal}.

In Section 8 we introduce decision-tree TFNP, and show that in this model
the  problems $\GI_k$ from~\cite{skelley2011provably} are characterized by constant-depth
Frege, and the problem $\FCon$ from~\cite{beckmann2017np} is characterized
by Frege.

In Section 9 we use results about bounded arithmetic from~\cite{skelley2011provably,
kol2011so, beckmann2017np} to show that the problems $\GI_k$ are equivalent to our
 problems arising from the polynomial hiearchy, and
that $\FCon$ is equivalent to our problem arising from PSPACE.

\medskip

The results of the broadest interest in this paper are probably the
new characterizations of TFNP classes by Frege and constant-depth Frege,
and we should note that these are more-or-less  implicit in the setting of bounded arithmetic
in~\cite{skelley2011provably} and~\cite{beckmann2017np}.
However those papers are concerned with slightly different questions
 and it takes a non-trivial amount of work to get them into the form we need.
Many other ideas come from taking things from proof complexity or bounded arithmetic and recontextualizing
them as computational constructions about TFNP or $\TFS$, 
in a way that we hope clarifies them and leads to new connections. 
Particularly new things are:
the notion of defining a TFNP class in a canonical way from a complexity class
and the work in this direction in Section~\ref{sec:robustness}; the 
TFNP problem $\loc\Gsh$ and its class;  the notion of counterexample reducibility; the 
reductions to APPROX in Section~\ref{sec:approx_counting}.

\paragraph{Acknowledgments.}
I am grateful to Sam Buss, Pavel Hub\'{a}\v{c}ek, Erfan Khaniki,
Leszek Ko\l{}odziejcyk, Jan Kraj\'{i}\v{c}ek and Pavel Pudl\'{a}k for discussions about this topic.

%%%%%%%%%%%%%%%%%%%%%%%%%%%%%%%%%%%%%%
\section{Preliminaries} \label{sec:preliminaries}

We write $[a]$ for the interval $\{0, \dots, a{-}1\}$. Where
convenient we will identify numbers with their expressions as 
binary strings, and in particular 
identify $[2^n]$ with the set of binary strings of length~$n$.

When we discuss Boolean circuits we take them to have unlimited fan-in, with non-input
gates from $\bigwedge, \bigvee, \neg, 0, 1$.
For a circuit $C$ we write $\inp(C)$ for its set of input gates, 
for which we will use names
$x$, $y$, $x_1$, etc. For a gate $x \in \inp(C)$ and a bit $b \in \{0,1\}$
we write $C_{\restrict x=b}$ for the result of hard-wiring the value~$b$
into the gate~$x$, which is then no longer an input gate. 
We do not simplify the circuit after this operation. Note that the notation $\restrict \! \! x {=} b$ means that the gate $x$ is assigned value~$b$,
not that the variable $x$ (ranging over gates), is assigned  value~$b$.
If a circuit $C$ has no input gates, $\eval(C)$ evaluates it.

\paragraph{TFNP.}
%An \emph{$\NP$ search problem} is specified by a polynomial-time relation
%$R(x,y)$ and a polynomial $p$ such that for every string~$x$
%there is a string $y \in [2^{p(|x|)}]$ such that~$R(x,y)$. 
%We call~$x$ an \emph{input} or \emph{instance}, and~$y$ a \emph{solution}. The search
%problem is to find such a~$y$, given~$x$.
%The class of such problems is called TFNP,
%for Total Functional NP
%~\cite{johnson1988easy, megiddo1991total}.
%When discussing such problems we will often not write
%the bound $y \in [2^{p(|x|)}]$, which may be assumed to be implicit in 
%the predicate $R(x,y)$, and will use just $R(x,y)$ or $R$
%as the name of the problem. 
%
%Let $R(x,y)$ and $Q(u,v)$ be two such problems. We say that 
%$Q$ is \emph{many-one reducible}, or simply \emph{reducible}, to $R$ if there 
%are polynomial-time functions $f$ and $g$ such that for all $u, y$ we have
%\begin{equation} \label{eq:basic_reduction}
%R(f(u), y) \longrightarrow Q(u, g(u,y)).
%\end{equation}
%We write this as $Q \le R$, and it gives rise to a structure on TFNP by
%which we can classify search problems into natural
% subclasses, closed under reductions
% and usually defined by complete problems.
We defined $\TFNP$ problems and many-one reductions at the start of the introduction.
Problems $Q$ and $R$ are equivalent, $Q \equiv R$, if there are reductions in both directions.
We will also write $\mathcal{S} \le R$ for a class $\mathcal{S}$ of $\TFNP$
problems, to mean that every problem in~$\mathcal{S}$ is reducible to $R$.
We say \emph{$R$ is complete for $\mathcal{S}$} if $\mathcal{S} \le R$ and $R \in \mathcal{S}$.
We list some standard $\TFNP$ problems and classes we will need.

\begin{definition} 
[\cite{johnson1988easy, papadimitriou1994complexity, beame1995relative, 
buresh2004relativized, buss2012propositional}] \label{def:standard_TFNP}
~
\begin{enumerate}
\item
\textsc{$\P$ computation} is the $\TFNP$ problem:
given a description $M$ of a Turing machine and a unary time
bound $1^t$, find a length-$t$ computation of $M$.

$\FP$ is the set of $\TFNP$ problems reducible to {\sc $\P$ computation}.
\item
\textsc{Pigeon} is the $\TFNP$ problem:
given a circuit defining a function $f$ from a set~$[2^n]$ of pigeons
to a set~$[2^n-1]$ of holes, find distinct pigeons $x, x'$
such that $f(x)=f(x')$.

$\PPP$ is the set of $\TFNP$ problems reducible to {\sc Pigeon}.
\item
{\sc Left pigeon} is the $\TFNP$ problem:
given circuits for functions $f$ from a set~$[2^n]$
of pigeons to a set~$[2^n-1]$ of holes, and $g$ mapping
in the other direction, from holes to pigeons,
find a pigeon $x$ such that $g(f(x)) \neq x$. 

$\PPADS$ is the set of $\TFNP$ problems reducible to {\sc Left pigeon}.

\item
{\sc Onto pigeon} is the $\TFNP$ problem:
given circuits for functions $f$ and $g$ as in {\sc Left pigeon},
find either a pigeon $x$ such that $g(f(x)) \neq x$
or a hole~$y$ such that $f(g(y)) \neq y$.

$\PPAD$ is the set of $\TFNP$ problems reducible to {\sc Onto pigeon}.

\item
\textsc{Leaf} is the $\TFNP$ problem:
given a circuit defining a graph on $[2^n]$ with nodes of degree at most~2,
in which node $0$ has degree~1, find another node of degree~1.

$\PPA$ is the set of $\TFNP$ problems reducible to {\sc Leaf}.
\item
\textsc{Lonely} is the $\TFNP$ problem:
given a circuit describing a partial matching on $[2^n]$
which leaves node $0$ unmatched, find another unmatched node.

\item
\textsc{Localopt} is the $\TFNP$ problem:
given circuits for a set $S \subseteq [2^n]$ with $2^n-1 \in S$, and a function~$N$ on~$[2^n]$,
find $a \in S$ such that either $N(a) \notin S$ or $N(a) \ge a$.

$\PLS$ is the set of $\TFNP$ problems reducible to {\sc Localopt}.
\end{enumerate}
\end{definition}

Some of these must be stated carefully for them to make sense as $\TFNP$ problems;
for example, in {\sc Leaf} we need to specify how   a general
circuit gives rise to a graph that is guaranteed to have to have degree at most~$2$. 
We will not go into the details of this here.
{\sc Left pigeon} is studied
and observed to be complete for $\PPADS$ in~\cite{buss2012propositional};
notice that~$g$ plays a role like the left-inverse of~$f$.
We give {\sc Localopt} in a slightly different form from how it is introduced 
in~\cite{johnson1988easy}, but it is easily seen to be equivalent.
{\sc Lonely} is equivalent to {\sc Leaf} and is thus also complete for $\PPA$.

We have that all $\TFNP$ subclasses contain $\FP$;
that $\PPP \supseteq \PPADS \supseteq \PPAD$;
that $\PPA \supseteq \PPAD$; and that, at least 
relative to suitable oracles, no other inclusions hold
among the classes in 
Definition~\ref{def:standard_TFNP}~\cite{beame1995relative, 
morioka2001classification,goos2022separations}.
We will also work with $\TFS$ problems, in which the relation 
$R(x,y)$ defining the problem is $\coNP$ rather than polynomial time,
and where we will use a different notion of reducibility.
We postpone the definitions to the beginning of Section~\ref{sec:counterexamples}.

\paragraph{A formal language.}
To write axioms describing complexity classes,
it is convenient to work with a formal language for expressing the properties
of polynomial-time machines. We use a standard approach
from bounded arithmetic. 
We take $L_{\PV}$ to be a language for first-order logic,
whose variables range over binary strings, and which has symbols
for all functions and relations defined by polynomial-time machines~\cite{cook1975feasibly}.
For example it has symbols for equality; for binary arithmetic operations;
and for the function mapping $(C, x, b) \mapsto C_{\restrict x=b}$
where $C, x, b$ represent a circuit, an input gate and a bit.
We will call these \emph{$\PV$ function and relation symbols} 
and call quantifier-free formulas built from them \emph{$\PV$ formulas}.
We evaluate
formulas and sentences in this language as true or false in the natural way.
In particular the truth of a fixed $\PV$ formula can be evaluated in polynomial time
(in the length of the strings assigned to the free variables), 
and below we implicitly associate such a formula with a canonical machine,
constructable from the formula, which carries out this evaluation.

In fact we will usually work with a richer language that talks about
\emph{oracle} machines. Taking $A$ to be a symbol standing for an unspecified oracle,
$L_{\PV(A)}$ is similar to $L_{\PV}$ but now has symbols for $A$ and for
polynomial-time functions and relations defined by machines
equipped with an oracle tape for $A$. We will call these $\PV(A)$ functions
and relations etc.
We will extend this without comment to situations with more than one oracle symbol 
and to oracles which output strings rather than single bits
(the strings will only ever be polynomially long).

If $R(\vec x)$ is a $L_{\PV(A)}$ formula, we will  write expressions like
 $R(\vec x; A)$ if we want to emphasize how the formula depends on queries to $A$.
 Given a concrete oracle $\alpha$, we will write $R(\vec x; \alpha)$ for the predicate
 on strings $\vec x$ defined by evaluating this formula, interpreting the symbol $A$ as a call to $\alpha$.
 We say that an $L_{\PV(A)}$ sentence $R(A)$ is true for~$\alpha$, or that $\alpha \vDash R(A)$,
 if $R(\alpha)$ is true.
 
 \begin{definition}
 A \emph{partial} oracle is one that may be undefined on some inputs. 
 A \emph{sparse} oracle is a finitely-supported partial oracle, encoded as a string listing
all its (input, value) pairs.
 \end{definition}

We can evaluate simple $L_{\PV(A)}$ formulas over partial oracles, if
they are defined in enough places.

\begin{definition}
Let $R(\vec x; A)$ be a $\PV(A)$-formula  (that is, a quantifier-free $L_{\PV(A)}$ formula)
and $\vec s$  a tuple of strings. Let $\alpha$ be a partial oracle.
We write $\alpha \vDash R(\vec s; A)$ 
to express that $\alpha$ is defined at every point where it is queried during the computation of
$R(\vec s; \alpha)$, and that $R(\vec s; \alpha)$ evaluates to true in this computation.
\end{definition}

For fixed $R$, over strings~$\vec s$ and \emph{sparse} oracles~$\alpha$
(recalling that a sparse oracle is encoded as a string),
the predicate $\alpha \vDash R(\vec s; A)$ of $(\vec s, \alpha)$
is  polynomial-time.

\begin{definition}
Let $R(\vec x, \vec y; A)$ be a $\PV(A)$-formula,
$\vec s$ a tuple of strings, and $\alpha$ a partial oracle.
We write
$\alpha \vDash \exists \vec y \,  R(\vec s, \vec y; A)$,
or `` $\exists \vec y \, R(\vec s, \vec y; A)$ is true in~$\alpha$'',
to express that $\alpha \vDash R(\vec s, \vec t ; A)$ for some tuple of strings~$\vec t$.
We call $\vec t$ a \emph{witness}.
\end{definition}

As examples of how we will use this language, 
consider the  sentences
\begin{enumerate}
\item
$\forall C  \forall x, \
x \in \inp(C) \longrightarrow
A(C) = A(C_{\restrict x = 0}) \oplus A(C_{\restrict x = 1})$
\item
$\forall C, \
\inp(C) = \emptyset \longrightarrow A(C) = \eval(C)$
\end{enumerate}
where $C$ is understood as coding circuits, $x$ as coding gates,
and $\oplus$ as addition modulo~2.
These sentences should be read as asserting some properties of the undefined
oracle symbol~$A$. In particular, for any oracle $\alpha$ for
which both sentences are true, we have that $\alpha(C)$ is the 
parity of the size of the set of inputs accepted by circuit~$C$. These are examples
of \emph{universal axioms} for parity.

\begin{definition}
A \emph{universal axiom} is a universal $L_{\PV(A)}$-sentence,
that is, a sentence $\forall \vec x R(\vec x; A)$ in the form of a $\PV(A)$-formula preceded by 
a block of universal quantifiers, which we think of as expressing
some property of the oracle~$A$.
We will often call these simply \emph{axioms}, and may
omit the quantifier block when writing them.
We say the axiom \emph{holds} for an oracle~$\alpha$,
or \emph{is an axiom for $\alpha$}, if 
 $\forall \vec x R(\vec x; \alpha)$ is true.

For a partial oracle $\alpha$, a \emph{witness that the axiom fails} for $\alpha$,
also called a \emph{failure} of the axiom in $\alpha$, is a witness $\vec t$ 
that $\alpha \vDash \exists \vec x \neg R(\vec x; A)$.
\end{definition}

Note that for a partial oracle $\alpha$ it is possible for an axiom not to
hold for $\alpha$, but for there to be no witness that it fails,
simply because $\alpha$ is not defined in the relevant places.

A standard way to give evidence that classes may be separate is to show
that they are separate relative to some oracle. 
The symbol $A$ introduced above will usually not play
this role; we typically use it as a ``fake'' oracle tape that we use to model interactions
with an adversary. There are some subtleties to relativizing 
problems involving evaluating circuits, which we address at the beginning
of Section~\ref{sec:bounded}.
%In the relativized setting, we assume that access to an oracle, labelled~$B$,
%is a basic computational primitive. All Turing machines come with a tape for $B$,
%circuits may now include gates for membership of a string in $B$, and the circuit-evaluation
%function
%$\eval(C)$ evaluates such $B$-gates in $C$ by making a matching oracle query to this tape.

%%%%%%%%%%%%%%%%%%%%%%%%%%%%%%%%%%%%%%
\section{Inspector-Adversary games} \label{sec:SO_setup}

In this section $Q(x,y)$ is a TFNP problem and $\Gamma$ is a finite list of universal axioms.

\begin{definition} \label{def:IA_game}
The \emph{$(\Gamma, Q, x)$-game} is played between an Inspector, who
wants to find~$y$ such that~$Q(x,y)$,
and an Adversary, who claims to know a function $A$ satisfying
all axioms in $\Gamma$. 
At round~$i$, the Inspector produces a string~$q_i$ and 
asks the Adversary ``what is $A(q_i)$?''
The Adversary gives a reply $r_i$, and must give the same reply if the same
query reappears later.

The Inspector wins as soon as she produces either
a string $y$ such that $Q(x,y)$, or
a witness that some axiom in $\Gamma$ fails
in the sparse oracle $\alpha := \{ (q_0, r_0), \dots, (q_{i-1}, r_{i-1}) \}$
formed from the Adversary's replies so far.
\end{definition}

\begin{definition} \label{def:Gamma_protocol}
We say that $Q$ is \emph{solvable over $\Gamma$} if there is a 
polynomial-time strategy for the Inspector to win the $(\Gamma, Q, x)$-game. 
Formally, such a strategy is a polynomial-time algorithm, taking
input $x$ and equipped with an oracle tape labelled~$A$, such that: for
any $x$ and for any oracle used to answer queries to~$A$, the algorithm either
outputs $y$ such that $Q(x,y)$, or outputs a witness that some axiom in $\Gamma$
fails in $\alpha$, where $\alpha$ is the sparse oracle formed from the queries made to~$A$ and replies received
during the run of the algorithm, as in Definition~\ref{def:IA_game}.
\end{definition}

The Inspector's strategy in Definition~\ref{def:Gamma_protocol}
is uniform, in the sense that it is given by a fixed polynomial-time machine
with one parameter~$x$. It will also be helpful for us to have a less-uniform
(but still succinct)
notion of such a strategy, as a self-contained object which can be coded by a string.
We could use circuits with oracle gates for $A$, but we prefer the following.

\begin{definition}
A \emph{protocol} is a pair $(M, 1^t)$ of a description of a machine $M$ with an oracle
tape labelled~$A$, and a time bound $t$ encoded in unary.
\end{definition}

We may think of $M$ as starting with some string on its tape, which is
hard-coded into the description of~$M$.
We could now restate Definition~\ref{def:Gamma_protocol} as:
$Q$ is solvable over $\Gamma$ if there is a polynomial-time function~$f$
such that, for every $x$, $f(x)$ is a protocol which is a winning strategy for the Inspector
in the $(\Gamma, Q, x)$-game.

We can now give the main definition of this section.

\begin{definition} \label{def:Gamma_TFNP}
$\loc\Gamma$ is the  $\NP$ search problem:
given a protocol $(M, 1^t)$,
find a run of machine $M$ for time $t$,  coded as a
list $\alpha$ of queries made to $A$
and replies received, such that the output of the machine
does not witness the failure of any axiom of~$\Gamma$ on $\alpha$.

Equivalently: given a protocol, find a sparse
oracle on which the Inspector, using this protocol, does not find any failure of~$\Gamma$.
\end{definition}

Note that this problem is easy for protocols which make very few queries.
In particular, if a protocol makes no queries at all, then it is unlikely to find a failure of~$\Gamma$,
since there is no support in $\alpha$ for the failure to be defined on
(an exception would be if $\Gamma$ is, say, ``$0{=}1$'').
However we will usually be in an adversarial situation where the 
Inspector does not skip any obvious queries she could use to find a failure.

The next two propositions show that if $\Gamma$ are axioms for some total function $\alpha$,
then $\loc \Gamma$ is a TFNP problem and is complete for the class
of TFNP problems $Q$ which are solvable over $\Gamma$.

\begin{proposition} \label{pro:loc_totality}
If $\Gamma$ holds for some total oracle $\alpha$ then $\loc \Gamma$ is
total.
\end{proposition}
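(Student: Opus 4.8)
The plan is to show that every instance of $\loc\Gamma$ has a solution by exhibiting one explicitly. Fix a protocol $(M, 1^t)$; we must find a run of $M$ for time $t$, coded as a list $\alpha$ of query/reply pairs, such that the machine's output does not witness the failure of any axiom of $\Gamma$ on $\alpha$. The natural candidate is the run in which all queries to the oracle tape $A$ are answered according to the genuine total oracle $\alpha$ for which $\Gamma$ holds. Concretely: simulate $M$ for $t$ steps, and whenever $M$ queries $A$ at a point $q$, answer with $\alpha(q)$; record the pair $(q, \alpha(q))$. After $t$ steps this produces a sparse oracle $\alpha_0 = \{(q_0, \alpha(q_0)), \dots, (q_{k-1}, \alpha(q_{k-1}))\}$ with $k \le t$, which is a consistent run of $M$ (consistent because $\alpha$ is a function, so repeated queries get repeated answers), and $M$ halts with some output string within the time bound.

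The key step is to verify that this output cannot witness the failure of any axiom of $\Gamma$ on $\alpha_0$. Suppose for contradiction that the output is a witness $\vec t$ that some axiom $\forall \vec x\, R(\vec x; A)$ of $\Gamma$ fails in $\alpha_0$; that is, $\alpha_0 \vDash \neg R(\vec t; A)$. By definition of $\vDash$ over a partial oracle, this means $\alpha_0$ is defined at every point queried during the computation of $\neg R(\vec t; \alpha_0)$, and that computation evaluates to true. But $\alpha_0$ is a restriction of the total oracle $\alpha$: every pair in $\alpha_0$ is of the form $(q, \alpha(q))$. Hence the computation of $R(\vec t; \cdot)$, which only queries points where $\alpha_0$ is defined, gives exactly the same answer whether we use $\alpha_0$ or $\alpha$, so $R(\vec t; \alpha)$ is false. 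This contradicts the assumption that $\Gamma$ holds for $\alpha$, i.e. that $\forall \vec x\, R(\vec x; \alpha)$ is true. Therefore the output is not such a witness, and $\alpha_0$ is a valid solution to this instance of $\loc\Gamma$. Since the instance was arbitrary, $\loc\Gamma$ is total.

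I do not expect any serious obstacle here; the only point requiring a little care is the bookkeeping around partial-versus-total oracle evaluation — specifically, the observation that if $\alpha_0 \subseteq \alpha$ and $\alpha_0$ is defined everywhere a quantifier-free $\PV(A)$-computation queries, then that computation returns the same value under $\alpha_0$ and under $\alpha$. This is immediate from the definitions of $\vDash$ in the Preliminaries, but it is the crux of why answering honestly defeats the Inspector. One should also note in passing that $\loc\Gamma$ is indeed an $\NP$ search problem with a polynomial solution bound: a run of $M$ for time $t$ makes at most $t$ queries, each of polynomial length, so $\alpha_0$ has length polynomial in the instance size, and checking that a proposed $\alpha_0$ codes a valid run whose output does not witness a failure of $\Gamma$ is polynomial-time (using that $\alpha \vDash R(\vec s; A)$ is polynomial-time in $(\vec s, \alpha)$ for sparse $\alpha$, as remarked earlier).
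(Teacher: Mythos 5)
Your argument is correct and is exactly the paper's proof: answer the protocol's queries using the genuine total oracle $\alpha$, and observe that the resulting sparse oracle is a restriction of $\alpha$, so any witnessed axiom failure on it would also be a failure on $\alpha$ itself, contradicting the hypothesis. The paper states this in two sentences; your version just makes the partial-versus-total evaluation bookkeeping explicit.
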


\begin{proof}
Run the protocol and use $\alpha$ to reply to queries to $A$.
The protocol will not find any failure, since there is none.
\end{proof}

\begin{proposition} \label{pro:solvable}
$Q$ is solvable over $\Gamma$ if and only if it is
reducible to $\loc\Gamma$.
\end{proposition}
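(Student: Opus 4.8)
The plan is to prove both directions of the equivalence by directly unpacking the definitions of ``solvable over $\Gamma$'', of the reduction relation $\le$, and of the problem $\loc\Gamma$.

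\medskip

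First I would prove the forward direction: if $Q$ is solvable over $\Gamma$ then $Q \le \loc\Gamma$. By the reformulation of Definition~\ref{def:Gamma_protocol} given just before Definition~\ref{def:Gamma_TFNP}, solvability gives a polynomial-time function $f$ such that for every instance $x$ of $Q$, the value $f(x) = (M_x, 1^{t_x})$ is a protocol that is a winning strategy for the Inspector in the $(\Gamma, Q, x)$-game. This $f$ is exactly the instance map of the reduction. For the solution map $g$: given $x$ and a solution $\alpha$ to $\loc\Gamma$ on instance $f(x)$ — that is, a run of $M_x$ for time $t_x$, coded as a list of query/reply pairs, on which the output of $M_x$ does \emph{not} witness the failure of any axiom of $\Gamma$ — we run $M_x$ on this run (this is polynomial time since $t_x$ is given in unary), read off its output, and since the protocol is a winning strategy, that output must be a string $y$ with $Q(x,y)$ (it cannot be a failure of $\Gamma$ on $\alpha$, by the defining property of a solution to $\loc\Gamma$). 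Set $g(x,\alpha) := y$. One small point to check here is that $M_x$ is deterministic, so the run is uniquely determined by $\alpha$ and there is no ambiguity in ``the output''; and that the winning condition of the game — Inspector wins iff she outputs $y$ with $Q(x,y)$ or outputs a failure of $\Gamma$ on the current $\alpha$ — is precisely the complement of the $\loc\Gamma$-solution condition. This gives $\loc\Gamma(f(x), \alpha) \rightarrow Q(x, g(x,\alpha))$, as required.

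\medskip

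For the converse, suppose $Q \le \loc\Gamma$ via polynomial-time $f, g$, so that $\loc\Gamma(f(u), \alpha) \rightarrow Q(u, g(u,\alpha))$ for all $u, \alpha$. I would describe the Inspector's strategy in the $(\Gamma, Q, x)$-game: on input $x$, compute the protocol $f(x) = (M, 1^t)$, then simulate $M$ for $t$ steps, forwarding each oracle query $q_i$ that $M$ makes to $A$ to the Adversary and feeding the reply $r_i$ back to $M$. This is polynomial time since $t$ is unary and $f$ is polynomial time. When the simulation halts, we have the sparse oracle $\alpha = \{(q_0,r_0),\dots,(q_{i-1},r_{i-1})\}$ built from the interaction, exactly as in Definitions~\ref{def:IA_game} and~\ref{def:Gamma_protocol}, and we have $M$'s output $o$. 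Either $o$ witnesses the failure of some axiom of $\Gamma$ on $\alpha$ — in which case the Inspector has already won — or $o$ does not, in which case $\alpha$ is by definition a solution to $\loc\Gamma$ on instance $f(x)$, so $\loc\Gamma(f(x), \alpha)$ holds, and the Inspector outputs $g(x, \alpha)$, which satisfies $Q(x, g(x,\alpha))$ by the reduction. In both cases the Inspector wins within polynomially many rounds, so $Q$ is solvable over $\Gamma$.

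\medskip

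I do not expect any serious obstacle: the proposition is essentially a definitional unwinding, and the ``main step'' is just being careful that the winning condition of the Inspector-Adversary game and the solution condition of $\loc\Gamma$ are exact complements, so that a run in which the Inspector fails to win is exactly a run that solves $\loc\Gamma$. The only mild subtlety worth a sentence is the move between the two formulations of ``solvable over $\Gamma$'' — the uniform machine-with-parameter-$x$ formulation of Definition~\ref{def:Gamma_protocol} versus the ``polynomial-time $f$ producing a protocol'' formulation — but this is already asserted in the text immediately after the definition of a protocol, so I would simply cite it and use whichever form is convenient in each direction.
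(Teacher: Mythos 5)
Your proposal is correct and follows essentially the same route as the paper: the forward direction is identical (use the protocol $f(x)$ as the $\loc\Gamma$ instance and simulate it on a solution $\alpha$ to read off $y$), and your converse simply spells out in detail what the paper compresses into ``it is enough that $\loc\Gamma$ is solvable over $\Gamma$'' composed with the reduction $Q \le \loc\Gamma$.
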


\begin{proof}
First suppose that $Q$ is solvable over $\Gamma$. By Definition~\ref{def:Gamma_protocol},
from any input~$x$ we can produce in polynomial time a protocol $\pi$ for 
the Inspector which is a winning strategy in the $(\Gamma, Q, x)$-game,
meaning that either it outputs $y$ such that $Q(x,y)$, or it outputs
a witness that some axiom fails in the oracle constructed during the protocol.
We take $\pi$ as our input to $\loc\Gamma$. 
Any solution $\alpha$ is then a sparse oracle on which $\pi$ 
does not find a witness that any axiom fails. Therefore $\pi$,
run on~$\alpha$, must output $y$ such that $Q(x,y)$,
and thus we can solve $Q$ by simulating $\pi$.

For the other direction it is enough that $\loc\Gamma$ is solvable
over $\Gamma$, which is immediate from the definitions.
\end{proof}

We finish the  section by discussing connections with proof 
complexity. 
The Prover-Adversary game introduced in~\cite{pudlak1995lie} is a kind
of interactive proof system, where a defendant (Adversary) claims that an
unsatisfiable propositional formula is true,
and a prosecutor (Prover) tries to catch them out in a simple contradiction,
by asking them to evaluate an adaptive sequence of formulas.
A proof in this setting is a winning strategy for the prosecutor,
which can formally be seen as a treelike proof in a certain system.
Our Inspector-Adversary game resembles a succinct version of this,
where a protocol serves as a succinct description of an exponential-size strategy
by a polynomial-time algorithm, and queries to $A$ can be used
to succinctly encode formulas (e.g. as QBFs). We can also
think of a protocol as a succinct treelike proof using axioms from $\Gamma$;
the proof of Lemma~\ref{lem:Gamma_to_leaf} below can be understood
as formally constructing such a proof (in a form of polynomial calculus).
An approach rather like this was used in~\cite{buss2004bounded} for an alternative proof of the witnessing theorems
for $S^i_2$ and $T^i_2$, in which a first-order proof
is first converted into a succinctly-described exponential-size propositional proof,
and then the witnessing algorithm works by traversing this object.

We can also understand the problems reducible to $\loc\Gamma$ 
as the problems that $\Gamma$ implies are provably total, over a certain
bounded arithmetic theory. We record this as a lemma, but we will
not use it in anything that follows. Let $T$ be the first-order
theory consisting of every universal sentence in the language $L_{\PV(A)}$
which is true for every oracle $\alpha$. This is an ``enriched'' version
of the theory $\PV$ of~\cite{cook1975feasibly}; compare the use of the
theory $\forall \PV (\NN)$ for handling TFNP reductions in~\cite{kol2022approx}.

\begin{lemma} \label{lem:Gamma_over_PV}
Let $R \in \TFNP$. Then $R \le \loc\Gamma$ if and only if $R$ is provably
total in $T+ \Gamma$, that is, if and only if
$T + \Gamma \vdash \forall x \exists y R(x,y)$.
\end{lemma}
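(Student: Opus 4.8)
The plan is to prove the two directions separately, using Proposition~\ref{pro:solvable} to reduce the ``$\le \loc\Gamma$'' side to the notion of solvability over $\Gamma$, which is exactly the kind of statement one can extract a proof from, or conversely witness from a proof.

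\emph{From reducibility to provable totality.} Suppose $R \le \loc\Gamma$. By Proposition~\ref{pro:solvable}, $R$ is solvable over $\Gamma$: there is a polynomial-time Inspector strategy, given by a $\PV(A)$-machine, that on input $x$ and against any oracle answering $A$-queries either outputs $y$ with $R(x,y)$ or outputs a witness that some axiom of $\Gamma$ fails. First I would unwind this into a statement of bounded arithmetic. Let $q(x)$ be the polynomial bound on the running time, hence on the number of queries made and on the length of the sparse oracle $\alpha$ constructed during the run. The key observation is that the statement ``the Inspector strategy, run on input $x$ with query-answers given by an oracle, produces something valid'' can be formalized as a universal $L_{\PV(A)}$-sentence — because simulating a fixed polynomial-time oracle machine for a polynomial number of steps, reading off its queries and replies into a string $\alpha$, and checking whether its output is a solution to $R$ or a failure of $\Gamma$, is all polynomial-time in $\PV(A)$. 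So this correctness statement is in $T$ (it is true for every oracle $\alpha$, since the strategy is winning against every oracle). Reasoning in $T + \Gamma$: fix $x$, and let $A$ be the oracle symbol; since $\Gamma$ holds for $A$, the Inspector's output is never a failure of $\Gamma$; by the correctness statement in $T$, the output must therefore be a $y$ with $R(x,y)$; and this $y$ is a $\PV(A)$-term in $x$, hence its existence is provable. Thus $T + \Gamma \vdash \forall x \exists y\, R(x,y)$. The one point to be careful about is that $R$ itself is a $\PV$-relation (no oracle), so ``$R(x,y)$'' makes sense as a $\PV(A)$-formula trivially, and the bound $y \in [2^{p(|x|)}]$ is built in.

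\emph{From provable totality to reducibility.} Suppose $T + \Gamma \vdash \forall x \exists y\, R(x,y)$. The plan is to apply a witnessing theorem for the theory $T + \Gamma$. Since $T$ is a universal theory in $L_{\PV(A)}$ (all true universal sentences) and $\Gamma$ is a finite set of universal sentences, $T + \Gamma$ is a universal theory, and one has a Herbrand-style / KPT-style witnessing result: from a proof of $\forall x \exists y\, R(x,y)$ in a universal theory, one extracts finitely many $\PV(A)$-terms $t_1(x), t_2(x, w_1), \dots, t_k(x, w_1, \dots, w_{k-1})$ and a student–teacher (interactive) computation such that, against any oracle, running these terms adaptively — where the ``counterexamples'' $w_i$ fed back are failures of axioms of $T + \Gamma$ (equivalently: failures of $\Gamma$, since the universal sentences of $T$ hold for every oracle and so never produce a counterexample) — one eventually lands on a $y = t_j(\dots)$ satisfying $R(x,y)$. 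This interactive term-computation is precisely a polynomial-time Inspector strategy in the $(\Gamma, R, x)$-game: the Inspector evaluates the terms, making $A$-queries as needed; whenever the current candidate is a failure of some $\Gamma$-axiom, she halts with that failure; otherwise, by the witnessing guarantee, some candidate is a genuine solution and she halts with it. (A small packaging point: the student–teacher protocol is phrased with an adversarial teacher providing counterexamples, whereas the Inspector game has the Inspector herself detect failures from the oracle's replies — but these match, because a ``counterexample to $\Gamma$'' is exactly a witness of the failure of a $\Gamma$-axiom in the sparse oracle built from the replies so far, which the Inspector can check in polynomial time.) Hence $R$ is solvable over $\Gamma$, and by Proposition~\ref{pro:solvable}, $R \le \loc\Gamma$.

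\emph{Main obstacle.} The routine parts are the formalizations; the real content is the witnessing theorem for $T + \Gamma$ and, more precisely, matching its combinatorics exactly to the $(\Gamma, Q, x)$-game. The delicate point is that a standard KPT witnessing argument produces a \emph{bounded} number of rounds with counterexamples to the \emph{axioms of the theory}, and one must check that (i) counterexamples to the universal sentences comprising $T$ can be dispensed with (they never arise, as those sentences are true for all oracles), so that all counterexamples are failures of $\Gamma$; (ii) the whole interactive computation, including reconstructing the sparse oracle $\alpha$ from the query–reply trace and checking for $\Gamma$-failures, runs in polynomial time and fits the ``protocol'' format; and (iii) the bound $y \in [2^{p(|x|)}]$ in the definition of the $\TFNP$ problem $R$ is respected, which it is because the witnessing terms are $\PV(A)$-terms of polynomially-bounded output length and one can truncate or reject out-of-range outputs. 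Once these bookkeeping matters are handled, the equivalence follows; and indeed this is exactly the kind of correspondence between bounded-arithmetic provable totality and TFNP reductions already used in the literature (cf.\ the treatment of $\APVN$ in~\cite{kol2022approx}), so no new proof-theoretic machinery is needed beyond citing the appropriate witnessing result.
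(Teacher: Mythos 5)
Your proposal is correct and follows essentially the same route as the paper: the forward direction formalizes the Inspector strategy's correctness as a true universal $L_{\PV(A)}$-sentence (hence in $T$) and uses $\Gamma$ to exclude the failure branch, and the converse applies Herbrand-style witnessing to the universal theory $T+\Gamma$, with the same key observation that counterexamples to the sentences of $T$ never arise since they hold for every oracle. The only cosmetic difference is that you package the witnessing as a multi-round KPT student--teacher protocol, whereas the paper collapses the Herbrand disjunction into a single $L_{\PV(A)}$-term via definition by cases; both yield the same Inspector protocol.
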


\begin{proof}
Suppose that $R \le \loc\Gamma$.
Then $T$ contains the theory $\PV(A)$, a relativized version of $\PV$,
which is enough to prove that any polynomial-time protocol querying~$A$
has a computation. The assumption $\Gamma$ naturally guarantees that these
replies do not falsify any axiom of $\Gamma$. So this establishes
that $T + \Gamma$ proves that $\loc\Gamma$ is total.
Now observe that $R \le \loc\Gamma$
can be written as a true universal $L_{\PV}$ sentence
describing the property of the reduction functions. Therefore this sentence is in $T$,
and we conclude that $T + \Gamma$ also proves that $R$ is total.

For the other direction, suppose 
$T + \Gamma \vdash \forall x \exists y R(x,y)$. By appealing
to compactness and then moving  things to the right, we get that
\[
\forall x, \
 \exists \vec u \, \neg \sigma(\vec u)
 \vee
 \exists \vec v \, \neg \tau(\vec v)
 \vee
 \exists y \, R(x,y)
\]
is provable in pure logic,
where $\forall \vec u \, \sigma(\vec u)$ is the conjunction of some sentences in $T$
and $\forall \vec v \, \tau(\vec v)$ is the conjunction of some sentences in $\Gamma$.
By Herbrand's theorem, and observing that we are able to do definition by cases
in an $L_{\PV(A)}$-term, we can find an $L_{\PV(A)}$-term $t(x)$ such that, for every $x$ and every oracle $\alpha$, the output of $t(x)$ is a triple $(\vec u, \vec v, y)$ satisfying
 either $\neg \sigma(\vec u)$, or $\neg \tau (\vec v)$,
or $R(x,y)$. It is impossible that~$\neg \sigma(\vec u)$, since $\forall \vec u \, \sigma(\vec u)$
is true for every oracle. Therefore $t(x)$ can be read as a protocol for the Inspector,
which queries $A$ and either finds a failure of $\Gamma$ or a solution to $R$.
Hence $R$ is solvable over $\Gamma$.
\end{proof}

%%%%%%%%%%%%%%%%%%%%%%%%%%%%%%%%%%
\section{Parity} \label{sec:parity}
%%%%%%%%%%%%%%%%%%%%%%%%%%%%%%%%%%

Let $\Gamma_\oplus$ be the pair of axioms for parity that appeared as
an example in Section~\ref{sec:preliminaries}. We repeat them here,
this time suppressing the universal quantifiers over $x$ and $C$:
\begin{enumerate}
\item
$x \in \inp(C) \longrightarrow
A(C) = A(C_{\restrict x = 0}) \oplus A(C_{\restrict x = 1})$
\item
$\inp(C) = \emptyset \longrightarrow A(C) = \eval(C)$.
\end{enumerate}
Specifically these are axioms for the function, which we will call $\pty$,
mapping a circuit to the parity of the size of the set of accepting inputs.

\begin{theorem} \label{the:Gpty_PPA}
$\loc\Gpty$ is complete for $\PPA$.
\end{theorem}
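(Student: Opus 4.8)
The plan is to prove the two halves separately: first that $\loc\Gpty \le \textsc{Leaf}$, so that $\loc\Gpty$ (which is total by Proposition~\ref{pro:loc_totality}, since $\Gpty$ holds for the genuine parity oracle $\pty$) lies in $\PPA$; and second that $\textsc{Lonely}\le\loc\Gpty$, which, since $\textsc{Lonely}$ is complete for $\PPA$, gives $\PPA\le\loc\Gpty$. By Proposition~\ref{pro:solvable} the second reduction is the same thing as exhibiting a polynomial-time Inspector strategy that wins the $(\Gpty,\textsc{Lonely},x)$-game, so that is the form I would aim for.

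For $\loc\Gpty\le\textsc{Leaf}$: given a protocol $(M,1^t)$, I would build a single $\textsc{Leaf}$ instance whose non-trivial degree-one node encodes the unique run of $M$ in which every query $q_i$ — a circuit — is answered by its true parity $\pty(q_i)$. Since $\pty$ satisfies both axioms of $\Gpty$, the output of $M$ on that run is never a witnessed failure, hence is a solution of $\loc\Gpty$. The graph is put together by chaining, for $i=0,\dots,t{-}1$, copies of the standard reduction that computes the parity of the number of satisfying assignments of a circuit by a $\textsc{Leaf}$-style walk (pairing up assignments, carrying the running parity along the path), feeding the answer of stage $i$ into the simulation of $M$ that produces $q_{i+1}$. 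This is the computational content of the ``succinct treelike polynomial-calculus over $\FF_2$'' picture mentioned after Definition~\ref{def:Gamma_TFNP}, and I expect it to be routine modulo the bookkeeping of concatenating stages and interleaving $M$'s own steps.

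For $\textsc{Lonely}\le\loc\Gpty$ I would have the Inspector work, on an instance $\mu$ (a partial matching on $[2^n]$ with $\mu(0)=0$), with circuits $D_p$ indexed by binary prefixes $p$, where $D_p$ has input gates for the low $n-|p|$ bits of a node $i$ and computes ``the node whose high bits are $p$ and low bits are the inputs is a fixed point of $\mu$ and is not $0$''. Three facts drive the argument: $\pty(D_{\text{empty}})$ equals the number of lonely nodes minus one, hence is odd; $D_{p0}$ and $D_{p1}$ are precisely the two restrictions of $D_p$ by its top remaining input gate, so the first axiom of $\Gpty$ links $A(D_p),A(D_{p0}),A(D_{p1})$; and for a full prefix $p=j$ the circuit $D_j$ has no input gates and $\eval(D_j)=1$ iff $j$ is lonely and $j\neq0$, so the second axiom pins $A(D_j)$. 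The Inspector descends: at prefix $p$ it queries $A(D_p),A(D_{p0}),A(D_{p1})$; if the XOR relation fails it outputs that failure of the first axiom; otherwise it moves into whichever child the Adversary's replies make odd, keeping the invariant that the current claimed value is odd. Once the current block has only polynomially many nodes, the Inspector fully expands the corresponding circuit — polynomially many further queries — and compares the Adversary's values with the now-computable true values; a mismatch yields a witnessed failure of one of the two axioms, and otherwise the block genuinely contains a lonely node $\neq 0$, which the Inspector outputs.

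The main obstacle is making this descent correct against an arbitrary, possibly dishonest, Adversary — in particular one who claims the top circuit has even parity, so that there is no odd child to descend into. The point to get right is that such an Adversary is lying (``the number of lonely nodes is even'' is a theorem), that although the lie is not locally detectable at the top it cannot be propagated consistently: by the time the descent has narrowed to a polynomially-sized, and hence fully auditable, sub-block the inconsistency is exposed against a forced $\eval$-value; and that the ``$\neq 0$'' clause built into every $D_p$ prevents the Inspector from ever returning node $0$. I would also recheck the analogous robustness of the first reduction, where the corresponding worry — that the concatenated $\textsc{Leaf}$ walk stay on the intended branch — is handled automatically by the determinism of $M$.
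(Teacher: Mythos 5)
Your top-level decomposition is the same as the paper's (\textsc{Lonely} $\le \loc\Gpty$ and $\loc\Gpty \le \textsc{Leaf}$), but both reductions as sketched break down at exactly their cruxes. For $\loc\Gpty \le \textsc{Leaf}$: a \textsc{Leaf} instance cannot ``encode the unique run of $M$ in which every query is answered by its true parity,'' because the neighbourhood circuit cannot tell the true answer branch from the false one --- $\pty(q_i)$ is $\oplus\P$-hard. If you chain parity-accumulating walks as you describe, the segment for query $q_i$ consists of two disjoint paths (one per initial running-parity bit), only one of which is spliced into the previous stage; the other path's starting node, and every terminal node reached along a wrong-answer history, become extra leaves of the graph. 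A \textsc{Leaf} solver may return any of these, and a run of $M$ with wrong answers can perfectly well witness an axiom failure, so $g$ has nothing to map it to. The determinism of $M$ does not help: the branching is in the Adversary's answers, not in $M$. The paper's proof has to keep the \emph{entire} binary protocol tree (both answers at every query), translate each node's history into a product of affine forms over $\FF_2$ in variables $x_{\vec a} = \eval(C_{\restrict\vec a})$, and exhibit explicit matchings on the ``active monomials'' at non-leaf, inconsistent, and incorrect nodes so that the only unmatched monomials besides the root sit at good leaves. That accounting is the whole content of the lemma and is absent from your sketch.

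For \textsc{Lonely} $\le \loc\Gpty$: your own ``main obstacle'' is fatal as stated. If the Adversary claims $A(D_{\text{empty}})=0$, your invariant ``the current claimed value is odd'' never gets established, and the lie \emph{can} be propagated consistently against any polynomial-time Inspector: the Adversary answers every query with the true parity except on one root-to-leaf path of dyadic blocks, and each time both children of the currently corrupted block are queried it pushes the flip into whichever child your (deterministic) strategy will not descend into. The two claimed child-values are equal, so you cannot see which one is flipped, and after polynomially many queries the corrupted singleton is still unqueried, so no axiom-2 violation is ever witnessed. The paper circumvents this by never needing to refute ``claimed even, truly odd'' for a set whose oddness is not locally certifiable: it first enforces $A([2^n])=0$ (there the claimed-odd descent bottoms out at a singleton forced to $1$ by axiom~2), then writes $[2^n]=U\,\dot\cup\,L\,\dot\cup\,R$ (unmatched nodes, left ends, right ends) and splits into cases. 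If $A(U)=0$ it enforces $A(U\setminus\{0\})=1$ and the claimed-odd descent wins either way at the bottom (a lonely node or an axiom-2 failure); if $A(L)\neq A(R)$ it binary-searches over the index $i$ of the prefix sets $L_i=\{x\in L: x\le i\}$ and $R_i=\{x\in R: N(x)\le i\}$, where at the switch point the discrepancy is certified locally by a single matched pair or yields a lonely node. That use of the matching structure is what your circuits $D_p$ never exploit.
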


We show the two directions of this as Lemmas~\ref{lem:lonely_to_Gamma} and~\ref{lem:Gamma_to_leaf} 
below, using two standard $\PPA$-complete problems
{\sc Lonely} and {\sc Leaf} introduced in Definition~\ref{def:standard_TFNP}.
Later, in Theorem~\ref{the:robust_PPA}, we will show that this 
construction of $\PPA$ from $\pty$ does
 not depend on our particular choice
$\Gpty$ of axioms for $\pty$.

To help us describe complicated protocols we introduce a
notion of when a formula is \emph{enforceable} against the Adversary.
Let $R(\vec y)$ be a $\PV(A)$ formula and~$\vec s$ a tuple of strings.
Suppose we are  the Inspector in the $(\Gamma, Q, x)$ game.
We say that $R(\vec s)$ is \emph{enforceable} over $\Gamma$ if
we can force the Adversary in polynomial time either to say that 
$R(\vec s)$ is true, or to lose the game by violating some axiom.
More formally, this means that we have a polynomial-time\footnote{
``Polynomial time'' here strictly means in the length of $x$, but usually
means in the length of $\vec s$ if our argument could take place in any game and does not use $x$,
as in the next lemma.}
 substrategy of the following kind.
We start by asking the Adversary to evaluate $R(\vec s)$,
which may take several individual queries to $A$.
 If the Adversary's replies  evaluate $R(\vec s)$ to true, we are done.
Otherwise, our substrategy must provide us with a series of queries
that inevitably lead the Adversary to falsify some axiom from $\Gamma$, in a way that
gives us a concrete witness to its failure.

\begin{lemma}[well-definedness] \label{lem:parity_well_defined}
For any circuits $C$ and $D$ with the same inputs and computing the same function, 
it is enforceable over $\Gamma_\oplus$ that $A(C) = A(D)$.
\end{lemma}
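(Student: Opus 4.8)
The plan is to induct on the structure of the circuits, peeling off input gates one at a time and using the two axioms of $\Gpty$ to reduce the claim to the base case of circuits with no inputs. Concretely, suppose $C$ and $D$ have input gates $x_1, \dots, x_n$ and compute the same Boolean function. First I would handle the case $n=0$: here both $C$ and $D$ have no inputs and compute the same (constant) function, so by axiom~2 the Adversary must answer $A(C) = \eval(C)$ and $A(D) = \eval(D)$, which may take two queries; since $\eval(C) = \eval(D)$ (they compute the same function, and there are no inputs, so this is a polynomial-time checkable fact about the given circuits), the Adversary is forced to agree that $A(C) = A(D)$, or else we extract a witness that axiom~2 fails. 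So the base case is just: ask for $A(C)$, ask for $A(D)$, and check against $\eval$.

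For the inductive step, I would use axiom~1 to split on a shared input gate, say $x_n$. Querying $A(C)$, $A(C_{\restrict x_n=0})$, $A(C_{\restrict x_n=1})$ and checking axiom~1 forces the Adversary to either reveal $A(C) = A(C_{\restrict x_n = 0}) \oplus A(C_{\restrict x_n = 1})$ or to lose; so the predicate $A(C) = A(C_{\restrict x_n=0}) \oplus A(C_{\restrict x_n=1})$ is enforceable. Likewise $A(D) = A(D_{\restrict x_n=0}) \oplus A(D_{\restrict x_n=1})$ is enforceable. Now $C_{\restrict x_n = b}$ and $D_{\restrict x_n = b}$ are circuits on the inputs $x_1, \dots, x_{n-1}$ computing the same function (the restriction of the original function to $x_n = b$), for each $b \in \{0,1\}$. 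By the induction hypothesis, $A(C_{\restrict x_n=b}) = A(D_{\restrict x_n=b})$ is enforceable over $\Gpty$ for $b = 0$ and $b=1$. Combining these enforceable facts by running the corresponding substrategies in sequence (the composition of polynomially many enforceable substrategies is enforceable), and doing the arithmetic mod~2, the Adversary is forced to assert $A(C) = A(D)$ or violate an axiom.

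One point needing care is the complexity bookkeeping: unwinding the recursion naively would split on each of the $n$ input gates and, branching into two subcircuits at each level, give $2^n$ leaves — exponential. The fix is that we do not need to agree on $A$ at all $2^n$ fully-restricted subcircuits independently; instead, I would restructure the argument so that at each level we only ever follow the pair $(C', D')$ of "matched" subcircuits obtained by the same sequence of restrictions, so there are only $n+1$ levels and at each level a bounded number of queries, giving a polynomial-time substrategy. More precisely, I would phrase the induction as: "for all pairs $(C', D')$ on the same input set computing the same function, $A(C') = A(D')$ is enforceable, by a substrategy of size polynomial in $|C'| + |D'|$," and verify that the recursive call is applied to subcircuits only polynomially larger in total and with one fewer input. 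Writing $T(n)$ for the size of the substrategy on circuits with $n$ inputs, the recursion is $T(n) \le T(n-1) + O(\poly)$ on each of two branches, but since both branches are resolved by (copies of) the same smaller substrategy applied to circuits of bounded size, this is $T(n) = O(n \cdot \poly)$.

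The main obstacle is exactly this complexity control — making sure the "enforceable" substrategy is genuinely polynomial-time and not a disguised exponential tree — together with the mild subtlety that the circuits are not simplified after restriction (as the preliminaries emphasize), so "$C_{\restrict x_n=b}$ and $D_{\restrict x_n=b}$ compute the same function" must be argued semantically rather than assumed to be syntactically detectable; but since we only ever appeal to the induction hypothesis (which is about functions computed, not syntax) and to $\eval$ at the leaves (which ignores syntax), this causes no real trouble. I expect the write-up to spend most of its effort stating the induction hypothesis with the right uniformity/size clause and checking that composition of enforceable substrategies preserves polynomial size.
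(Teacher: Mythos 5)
Your overall shape matches the paper's proof: descend through the input gates one at a time, using axiom~1 of $\Gpty$ at each level and axiom~2 at the fully restricted leaves. But the step where you try to control the blow-up is exactly where the proposal goes wrong. Running the induction-hypothesis substrategies for both $b=0$ and $b=1$ gives the recurrence $T(n) \le 2\,T(n-1) + O(\poly)$, which is exponential, and your claimed repair --- that ``both branches are resolved by (copies of) the same smaller substrategy applied to circuits of bounded size'' --- is false: the $b=0$ branch adaptively queries restrictions of $C_{\restrict x_n=0}$ and $D_{\restrict x_n=0}$, while the $b=1$ branch queries restrictions of the different circuits $C_{\restrict x_n=1}$ and $D_{\restrict x_n=1}$, so the two substrategies are genuinely different query sequences and cannot be shared or reused. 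Saying you will ``only ever follow the pair of matched subcircuits obtained by the same sequence of restrictions'' does not close the gap either, because you never specify which single sequence of restrictions to follow, and enforcing agreement along one arbitrary path does not enforce $A(C)=A(D)$ at the root.

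The missing idea --- which is the entire content of the paper's one-line ``binary search'' --- is to turn the argument around and \emph{chase a disagreement} rather than enforce agreement everywhere. Suppose the Adversary asserts $A(C) \neq A(D)$ (otherwise you are done). Enforce the axiom-1 identities $A(C) = A(C_{\restrict x_n=0}) \oplus A(C_{\restrict x_n=1})$ and likewise for $D$; since the Adversary claims $A(C)\neq A(D)$, parity arithmetic forces some single $b$ with $A(C_{\restrict x_n=b}) \neq A(D_{\restrict x_n=b})$ among the replies, and you recurse only on that branch. After $n$ steps you hold a complete restriction $\vec a$ on which the Adversary asserts $A(C_{\restrict \vec x = \vec a}) \neq A(D_{\restrict \vec x = \vec a})$, yet $\eval(C_{\restrict \vec x = \vec a}) = \eval(D_{\restrict \vec x = \vec a})$, so one of the two axiom-2 instances is witnessed to fail. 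This gives $T(n) = T(n-1) + O(1)$ queries, i.e.\ $O(n)$ in total. Your side remarks --- that circuits are not simplified after restriction and that ``computes the same function'' is used only semantically --- are correct and harmless, exactly as you say.
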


\begin{proof}
List $\inp(C)$ as $x_1, \dots, x_n$.
If the Adversary tells us that $A(C) \neq A(D)$, then by binary search
using axiom~1 of $\Gpty$
we can find a complete restriction $\vec a$ for which the Adversary tells us that
$A(C_{\restrict \vec x = \vec a}) \neq A(D_{\restrict \vec x = \vec a})$.
But by assumption $\eval(C_{\restrict \vec x = \vec a}) = \eval(D_{\restrict \vec x = \vec a})$,
so this contradicts axiom~2 of $\Gpty$.
\end{proof}

In the light of  Lemma~\ref{lem:parity_well_defined} it is not important to specify
which particular circuit we send to the Adversary when we make a query.
So for a set $X$, implicitly defined by a circuit $C$, we
allow the notation $A(X)$ for a query instead of $A(C)$.
In particular we write $A([2^n])$ for $A(C)$, where $C$ is
the circuit which has $n$ inputs but ignores them and always accepts;
and, for $i \in [2^n]$, we write 
$A(\{i\})$ for $A(C_{\restrict \vec x = \vec a})$, where $C$
is as before and $\vec a$ is the binary expansion of $i$. By Axiom~2
of $\Gpty$ the Adversary should say $A(\{i\})=1$ in this case,
since $\eval(C_{\restrict \vec x = \vec a})=1$.

\begin{lemma} \label{lem:parity_disjoint}
For disjoint  sets $X$ and $Y$ 
(assumed to be defined by circuits with the same inputs)
it is enforceable  over $\Gamma_\oplus$ that 
$A(X \cup Y) = A(X) \oplus A(Y)$.
\end{lemma}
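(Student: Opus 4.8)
The plan is to adapt the binary-search argument of Lemma~\ref{lem:parity_well_defined}, now running it on three circuits simultaneously and tracking the three-way discrepancy $A(X\cup Y)\oplus A(X)\oplus A(Y)$. Write $C_X,C_Y$ for the circuits defining $X$ and $Y$, with common input gates $x_1,\dots,x_n$, and let $E$ be the circuit $C_X\vee C_Y$ obtained by joining them with a fresh top $\bigvee$ gate; then $E$ computes $X\cup Y$ and still has inputs $x_1,\dots,x_n$. By Lemma~\ref{lem:parity_well_defined} it is enforceable that $A(X\cup Y)=A(E)$ (equivalently, we are free to send $E$ as the circuit for the query $A(X\cup Y)$), so it suffices to enforce $A(E)=A(C_X)\oplus A(C_Y)$.

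First I would query $A(E),A(C_X),A(C_Y)$; if these satisfy the equation we are done, so assume the reported discrepancy $d(\emptyset):=A(E)\oplus A(C_X)\oplus A(C_Y)$ equals $1$. I then maintain, at a restriction $\vec a$ of $x_1,\dots,x_k$, the invariant that the Adversary has committed to $d(\vec a):=A(E_{\restrict \vec x=\vec a})\oplus A((C_X)_{\restrict \vec x=\vec a})\oplus A((C_Y)_{\restrict \vec x=\vec a})=1$, using that hard-wiring an input gate commutes with the top $\bigvee$, so $E_{\restrict\vec x=\vec a}$ is literally $(C_X)_{\restrict\vec x=\vec a}\vee(C_Y)_{\restrict\vec x=\vec a}$. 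To extend $\vec a$ by one bit I query the six relevant values at $x_{k+1}=0$ and $x_{k+1}=1$ and check axiom~1 of $\Gpty$ at gate $x_{k+1}$ for each of the three current circuits; if some instance fails I output that failure, and otherwise $d(\vec a)=d(\vec a 0)\oplus d(\vec a 1)$, so some child has discrepancy $1$ and I descend there. After $n$ steps I reach a total assignment $\vec a$ with $d(\vec a)=1$ and three input-free circuits; I check axiom~2 of $\Gpty$ for each, and if any fails I output it. If none fails then each $A$ value equals the corresponding $\eval$; but $\eval(E_{\restrict\vec x=\vec a})=\eval((C_X)_{\restrict\vec x=\vec a})\vee\eval((C_Y)_{\restrict\vec x=\vec a})$, and since $X\cap Y=\emptyset$ the point coded by $\vec a$ lies in at most one of $X,Y$, so this $\bigvee$ agrees with $\oplus$ and $d(\vec a)=0$, contradicting the invariant. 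Hence some axiom instance must have been violated, the substrategy always produces a concrete failure, and it uses only $O(n)$ queries.

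I do not expect a genuine obstacle here; the two points that need attention are that restriction commutes with the extra top $\bigvee$ gate (which makes the simultaneous binary search well-formed) and that disjointness is exactly what turns the final $\bigvee$ into $\oplus$, and both are immediate from the circuit conventions of Section~\ref{sec:preliminaries}. One could instead try to reduce directly to Lemma~\ref{lem:parity_well_defined} via the circuit $D(x,z)=(\neg z\wedge C_X(x))\vee(z\wedge C_Y(x))$, using axiom~1 at gate $z$ together with the enforceable equalities $A(D_{\restrict z=0})=A(C_X)$ and $A(D_{\restrict z=1})=A(C_Y)$; but then relating $A(D)$ back to $A(X\cup Y)$ would require an extra invariance-under-relabelling principle that is not among our axioms, so the direct three-circuit binary search above is the cleaner route.
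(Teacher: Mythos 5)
Your proof is correct and follows essentially the same route as the paper: form the circuit $C_X \vee C_Y$ for $X \cup Y$, binary-search on the three-way discrepancy using axiom~1, and derive a contradiction at a total assignment via axiom~2 and disjointness. The extra details you supply (the invariant $d(\vec a)=1$, commutation of restriction with the top $\vee$-gate) are exactly what the paper's terser "by binary search" is implicitly relying on.
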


\begin{proof}
Suppose $C$ and $D$ are circuits defining $X$ and $Y$. Let us write $C \vee D$
for the circuit defining $X \cup Y$ by combining the outputs of $C$ and $D$ with a $\vee$-gate.
If $A(C \vee D) \neq A(C) \oplus A(D)$ then by binary search we can find bits $\vec a$
assigning a value to every input gate
such that $A((C \vee D)_{\restrict \vec a}) \neq A(C_{\restrict \vec a}) \oplus
A( D_{\restrict \vec a})$
(where we are suppressing the names of the inputs).
Then $\eval((C \vee D)_{\restrict \vec a}) \neq \eval(C_{\restrict \vec a}) \oplus
\eval( D_{\restrict \vec a})$ which is impossible since $X$ and $Y$ are disjoint.
\end{proof}

\begin{lemma} \label{lem:lonely_to_Gamma}
{\sc Lonely} $\le \loc\Gamma_\oplus$.
\end{lemma}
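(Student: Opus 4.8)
The plan is to show that {\sc Lonely} is solvable over $\Gamma_\oplus$ and then invoke Proposition~\ref{pro:solvable}. So I would describe a polynomial-time Inspector strategy in the $(\Gamma_\oplus, \textsc{Lonely}, x)$-game, where the instance $x$ is a circuit describing a partial matching $M$ on $[2^n]$ leaving node $0$ unmatched; the Inspector must output either another unmatched node or a witnessed failure of one of the two parity axioms. The key idea is that the edges of the matching partition the matched nodes into pairs, so the parity of the whole set $[2^n]$ (which the parity function assigns $A([2^n]) = 2^n \bmod 2 = 0$) should equal the parity of the set of unmatched nodes; since node $0$ is unmatched, if it were the \emph{only} unmatched node that parity would be $1$, a contradiction.

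The main work is to make this counting argument \emph{enforceable} against an untrustworthy Adversary using only the machinery of Lemmas~\ref{lem:parity_well_defined}, \ref{lem:parity_disjoint} and the conventions about $A(\{i\})$ and $A([2^n])$ established after Lemma~\ref{lem:parity_well_defined}. First I would set up the relevant sets by circuits: the set $U$ of unmatched nodes ($i$ with $M(i)=i$, or however the matching encodes ``unmatched''), and for the matched part I would exploit that $M$ gives a canonical way to split each edge, e.g.\ the set $U'$ of nodes $i$ with $i < M(i)$ (the ``lower'' endpoint of each matched pair) and the set $U''$ of nodes with $i > M(i)$. These three sets are pairwise disjoint and their union is $[2^n]$, and $U'$ and $U''$ are in polynomial-time computable bijection via $M$. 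Using Lemma~\ref{lem:parity_disjoint} (iterated, or via a disjoint-union-of-many-sets version, which follows by induction from the two-set case) it is enforceable that $A([2^n]) = A(U) \oplus A(U') \oplus A(U'')$. Using well-definedness (Lemma~\ref{lem:parity_well_defined}) applied to the circuit for $U''$ versus the circuit that computes $U''$ by ``apply $M$ then test membership in $U'$'', it is enforceable that $A(U'') = A(U')$, hence $A(U') \oplus A(U'') = 0$ is enforceable. Combined with the convention that the Adversary must answer $A([2^n]) = 0$ (enforceable by binary search against axiom~2, as in the proof of Lemma~\ref{lem:parity_well_defined}), we enforce $A(U) = 0$.

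Finally, the Inspector asks about $U$ in a more refined way: she forces the Adversary to commit to $A(U \setminus \{0\}) $ and uses $A(U) = A(\{0\}) \oplus A(U\setminus\{0\})$ (enforceable by Lemma~\ref{lem:parity_disjoint}, since $0 \in U$) together with $A(\{0\}) = 1$ (the convention from axiom~2) to deduce that $A(U \setminus \{0\}) = 1$ is enforced. If $U = \{0\}$ then $U \setminus \{0\} = \emptyset$, and by axiom~2 the Adversary must answer $A(\emptyset) = \eval(\text{reject}) = 0$, giving an immediate, witnessed contradiction with the value $1$ just enforced — but in that case the Inspector has no \emph{node} to output. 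The resolution is that the Inspector does not first check whether $U = \{0\}$; instead she runs the enforcement and, at whichever point the Adversary's answers become inconsistent with some axiom, she outputs that witnessed failure; and if the Adversary survives all of this, the accumulated answers certify $A(U \setminus \{0\}) = 1$, in particular $U \setminus \{0\} \neq \emptyset$ — but to turn this into an \emph{explicit} unmatched node $\neq 0$ she does a further binary search using axiom~1 on the circuit for $U \setminus \{0\}$ to extract a concrete $i \neq 0$ with (the Adversary claiming) $A(\{i\})=1$; since $i \in U$, $i$ is an unmatched node, and it is output as the solution to {\sc Lonely}. The main obstacle to watch is bookkeeping: making sure every appeal to ``enforceable'' is genuinely polynomial-time and that the final binary-search extraction of a witness node is set up so that the Adversary either hands over a legitimate unmatched $i \neq 0$ or is caught violating axiom~1 or~2 along the way; this is routine given Lemmas~\ref{lem:parity_well_defined} and~\ref{lem:parity_disjoint}, but needs to be stated carefully.
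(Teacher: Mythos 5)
Your overall skeleton is right (partition $[2^n]$ into the unmatched nodes $U$ and the left/right endpoints $U',U''$; enforce $A([2^n])=0$; conclude $A(U)=0$ and hence $A(U\setminus\{0\})=1$; binary-search to extract a concrete unmatched node), and your final witness-extraction step is exactly what the paper does. But there is a genuine gap at the step where you claim that $A(U'')=A(U')$ is enforceable by Lemma~\ref{lem:parity_well_defined}. That lemma only forces the Adversary to give the same answer on two circuits that compute the \emph{same Boolean function on the same inputs}, i.e.\ that define the same set. The circuit $x\mapsto [M(x)\in U']$ accepts exactly the set $U''$ (not $U'$), so well-definedness only tells you $A$ agrees on two presentations of $U''$ --- a triviality. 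What you actually need is that two \emph{disjoint, different} sets related by a polynomial-time bijection have equal parity, and no axiom of $\Gamma_\oplus$ gives you this in one step: a binary search on $A(U')\neq A(U'')$ descends along input bits to two unrelated singletons, each with parity $1$, and produces no axiom violation. Making this bijection-invariance enforceable is the real content of the lemma.

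The paper's proof handles it with a telescoping (hybrid) argument: it defines $L_i=\{x\in L: x\le i\}$ and $R_i=\{x\in R: N(x)\le i\}$, which grow in exact lockstep (each increment $i\to i+1$ adds one element to both sets, or to neither, or reveals that $i+1$ is unmatched and hence a solution). Each single step of parity preservation \emph{is} enforceable via the disjoint-union lemma with a singleton, so if $A(L)\neq A(R)$ a binary search over $i$ locates a step where the Adversary violates an axiom. Note also that the paper does not enforce $A(U)=0$ unconditionally as you do; it case-splits on whether $A(U)=0$ or $A(L)\neq A(R)$, running your ``extract a node from $U\setminus\{0\}$'' argument in the first case and the hybrid argument in the second. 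To repair your proof you would need to replace the invalid well-definedness appeal with this (or an equivalent) step-by-step enforcement of $A(U')=A(U'')$.
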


\begin{proof}
This is adapted from  Lemma~18 of~\cite{buss2015collapsing}.
By Proposition~\ref{pro:solvable} it is enough to show that {\sc Lonely} is solvable over $\Gamma_\oplus$,
by describing a winning strategy for the Inspector. 
We are given an instance of {\sc Lonely} in the form of a circuit
defining a matching on~$[2^n]$. As in~\cite{beame1995relative}
we take the circuit as defining a function $N$ on $[2^n]$
and consider distinct nodes~$u$ and~$v$ as matched precisely when
$N(u)=v$, $N(v)=u$, and neither node is~$0$.
Our task is to find an unmatched node different from~$0$.

We first enforce that $A([2^n])=0$. We can do this because if $A([2^n])=1$,
we can use binary search to find $i$ such that $A(\{2i, 2i+1\})=1$.
But we must have $A(\{2i\})=1$ and $A(\{2i+1\})=1$, so this gives a contradiction.

We partition $[2^n]$ into sets $U := \{ x : x$ is unmatched$\}$,
$L := \{ x \notin U : N(x) > x \}$ and $R := \{ x \notin U: N(x) < x \}$
consisting respectively of unmatched elements, left-hand ends of matchings, and right-hand ends of matchings.
We query $A(U)$, $A(L)$ and $A(R)$.
Since $[2^n]$ is the disjoint union of these sets and $A([2^n])=0$, we must get that
either $A(U)=0$ or $A(L) \cup A(R)=1$, which implies $A(L) \neq A(R)$.

If $A(U)=0$, since we know $0$ is in $U$ we can enforce
$A(U \setminus \{ 0 \})=1$. At this point we can use binary 
search to force the Adversary to show us another element of~$U$,
which is a solution to {\sc Lonely}.

Otherwise $A(L) \neq A(R)$. In this case we define 
$L_i = \{ x \in L : x \le i \}$
%\quad \textrm{and} \quad
and $R_i = \{ x \in R : N(x) \le i \}$.
We have $A(L_0)=A(R_0)=0$ since both sets are empty, 
and $A(L_{2^n})=A(L) \neq A(R) = A(R_{2^n})$.
Using binary search we find $i \in [2^n-1]$ with
$A(L_i)=A(R_i)$ but $A(L_{i+1}) \neq A(R_{i+1})$.

There are now three cases. If $i+1 \in U$ then we have found  a solution to {\sc Lonely}.
If $i+1 \in L$ then $i+1$ and $N(i+1)$ are matched with $i+1 < N(i+1)$,
so $L_{i+1} = L_i \ \dot\cup \ \{ i+1\}$ and $R_{i+1} = R_i \ \dot\cup  \ \{ N(i+1) \}$.
Since $A(L_i)=A(R_i)$ we should now be able to enforce that $A(L_{i+1}) = A(R_{i+1})$, a contradiction.
In the last case we have $i+1 \in R$,
so $N(i+1)$ and $i+1$ are matched with $N(i+1) < i+1$, so $i+1$ is already in $R_i$. 
Thus $L_{i+1} = L_i$ and $R_{i+1}=R_i$, giving a contradiction.
\end{proof}

\begin{lemma} \label{lem:Gamma_to_leaf}
$\loc\Gamma_\oplus \le$ {\sc Leaf}.
\end{lemma}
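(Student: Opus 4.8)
The plan is to give reduction functions $f, g$ that convert an instance of $\loc\Gamma_\oplus$ --- i.e.\ a protocol $(M, 1^t)$ --- into an instance of {\sc Leaf}, so that any node of degree~$1$ other than the designated start node translates back into a valid solution of $\loc\Gamma_\oplus$, namely a sparse oracle $\alpha$ on which the protocol, when run, fails to witness the failure of any axiom of $\Gamma_\oplus$. The natural graph to build is the tree of all possible runs of $M$ for time~$t$, branching according to the Adversary's answers; but to get a bounded-degree graph and to arrange that leaves correspond exactly to non-failing oracles, I would instead build a graph out of \emph{partial restrictions} of circuits, reflecting the binary-search structure that the protocol's queries impose. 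Concretely, a node of the {\sc Leaf} graph will be a partial sparse oracle $\alpha$ obtained by following the protocol's query sequence through some prefix of a computation, together with a ``parity certificate'' tracking the committed parities; the edges connect $\alpha$ to its one-query extensions. Since the protocol is deterministic, at each stage there is exactly one pending query, and the Adversary has exactly two choices for the answer bit, so each internal node has degree at most~$2$, and the start node (empty oracle) has degree~$1$.

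The key point --- and this parallels the remark in Section~\ref{sec:SO_setup} that a protocol is a succinct treelike proof using the axioms of $\Gamma$, in a form of polynomial calculus --- is to use the axioms of $\Gamma_\oplus$ to \emph{orient} the tree so that exactly the non-failing branches are leaves. I would maintain, along any path, a linear system over $\mathbb{F}_2$ recording the equations $A(C) = A(C_{\restrict x=0}) \oplus A(C_{\restrict x=1})$ (axiom~1) and $A(C) = \eval(C)$ (axiom~2) that are ``active'' given the queries made so far. A run of the protocol reaches a genuine failure of $\Gamma_\oplus$ precisely when these committed answers become inconsistent with the linear system; such a branch should be a \emph{dead end} (degree~$1$) that we do \emph{not} count as a solution, whereas a completed consistent run --- which, since the protocol is a purported winning strategy that found no failure, must have halted --- is the kind of leaf we want. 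The standard trick (as in the reductions of {\sc Lonely}/{\sc Leaf}-type problems, e.g.\ in \cite{beame1995relative, buss2015collapsing}) is to make each failing branch's terminal node have degree~$2$ by pairing it with a canonical ``undo'' move, so that only the genuinely-stuck-but-consistent leaves and the start node remain degree~$1$; then $g$ maps any returned degree-$1$ node $\ne 0$ to the sparse oracle $\alpha$ read off that node, which by construction makes the protocol run to completion without witnessing any axiom failure, hence is a solution to $\loc\Gamma_\oplus$.

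I expect the main obstacle to be the bookkeeping that makes the bound-degree and the ``failure $\leftrightarrow$ non-leaf'' correspondence airtight: the protocol may make queries in an arbitrary order (not a clean binary search), queries may repeat, and a single query answer can simultaneously be forced by several previously-committed equations, so the parity certificate attached to each node must be canonical and updatable in polynomial time, and the pairing of failing terminal nodes with undo-moves must be an involution. The circuit realizing the {\sc Leaf} graph must, given a node $\alpha$ (a sparse oracle plus certificate), recompute the protocol's transcript deterministically, detect whether $\alpha$ is internal/terminal/failing, and output the (at most two) neighbors --- all in polynomial time in $|(M,1^t)|$, which is fine since $t$ is unary. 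Once this is set up, totality transfers via Proposition~\ref{pro:loc_totality} together with the fact (Lemma~\ref{lem:parity_well_defined}) that $\pty$ is a genuine total oracle satisfying $\Gamma_\oplus$, and correctness of $g$ is immediate from the construction. The remaining details --- encoding the graph so that ``every circuit gives rise to a degree-$\le 2$ graph'' in the sense required by {\sc Leaf}, and verifying the involution --- are routine but need care.
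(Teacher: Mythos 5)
Your proposal takes the run-tree of the protocol as the {\sc Leaf} graph, with nodes being partial oracles and edges connecting a node to its one-query extensions. There are two genuine gaps here, one structural and one fundamental. Structurally, a node of a binary run-tree has an edge to its parent \emph{and} to its two children, so internal nodes have degree~$3$, not degree at most~$2$; the run-tree itself is never a legal {\sc Leaf} instance. More fundamentally, no local ``undo-move'' gadget can make this work, because the required parity invariant simply does not hold at the level of runs: the number of good leaves (runs witnessing no failure) need not be odd. For instance, a protocol that queries $A(C)$ once for a circuit $C$ with an input gate and then halts has two good leaves, whichever bit the Adversary answers. Since {\sc Leaf} only guarantees a second odd-degree node by a global parity count, a construction in which good leaves are the degree-$1$ nodes and everything else has even degree would force the number of good leaves to be odd, which is false in general. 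Pairing up the failing leaves to give them degree~$2$ runs into the same obstruction, since their number can also be odd. Your ``parity certificate'' and $\mathbb{F}_2$ linear system are used only locally, to classify nodes as failing or not; they never produce the global mod-$2$ invariant that the reduction needs.

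The paper's proof gets the parity from a different place entirely: the vertices of the {\sc Leaf} graph are not runs but occurrences of \emph{active monomials} in $\mathbb{F}_2$-polynomials $p_u$ attached to the nodes $u$ of the run-tree, where $p_u$ is the product encoding the negation of the Adversary's assertions along the path to $u$, expanded without cancellation, and a monomial is active if it evaluates to~$1$ under the true assignment $x_{\vec a} \mapsto \eval(C_{\restrict \vec a})$. The algebraic identity $p_u = p_v + p_w$ (for the two extensions $v,w$ of $u$) yields a perfect matching of active monomial occurrences at each internal node; the axioms of $\Gamma_\oplus$ guarantee that at inconsistent or incorrect leaves every active monomial occurs an even number of times, so those leaves can be perfectly matched internally; and the root carries the single active monomial~$1$, which is the distinguished degree-$1$ node. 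Only at good leaves can unmatched (degree-$1$) monomials survive. This is the key idea your proposal is missing, and it is where the semantics of parity actually enters the reduction.
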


The proof of this lemma is inspired by Theorem~26 of~\cite{buss2015collapsing},
which concerns a theory extending $\PV$ with the ability to count mod~2,
which is essentially the same as our notion of a game over $\Gamma_\oplus$,
via Lemma~\ref{lem:Gamma_over_PV}.
It is shown in~\cite{buss2015collapsing} that
 if a TFNP problem is provably total in this theory, then its 
 propositional translation has low-degree treelike refutations in polynomial
calculus, and hence by~\cite{buss1996proof}  has low-degree
Nullstellensatz refutations;
see also Lemma C3 of~\cite{de2020lifting}. It was shown in~\cite{goos2019adventures} that 
the existence of a sufficiently uniform low-degree Nullstellensatz refutation is equivalent to the
original problem being in $\PPA$.  We give a self-contained proof.

\begin{proof}
%It is convenient to prove a more general, relativized version of the lemma,
%in which every machine and circuit has access to an oracle set $Z$, as this 
%matches better with similar constructions in the literature.
We are given a protocol $(M, 1^t)$ for the Inspector. We think of this protocol
as defining a binary tree $T$ of depth $t$, where each node is labelled with the replies made 
by the Adversary so far, and where each branch stops as soon as the Inspector wins and otherwise
carries on until the protocol is finished.
We want to find a leaf whose label does not violate any axiom of $\Gamma_\oplus$.
%Note that whether or not a label violates axiom 1 has nothing to do with~$Z$, but
%for axiom 2 this does depend on $Z$, since the axiom uses evaluations of circuits which may call~$Z$.

By appealing to the existence of a universal circuit, we may assume that
there is one fixed circuit~$C$, with a polynomial number $n$ of input
gates, such that every
query to the Adversary in the tree is about a restriction of~$C$.
%We may also assume that the Inspector does not directly query the oracle $Z$,
%but can only do it indirectly by querying the Adversary. We do not change the winning
%conditions of the game, which use the $\eval$ function and thus must query $Z$; we can think of these
%as being decided by a separate referee.
 For each input~$\vec a$ to~$C$ we introduce a formal variable
$x_{\vec a}$. We define an assignment $\beta$
by $\beta(x_{\vec a}) = \eval( C_{\vec a})$. %(which uses  $Z$). 
We can now write every query made by the Inspector as a formal sum of a 
set of variables, where the set is polynomial-time computable %(not using $Z$)
from the history of the game so far.
Precisely, we write the query $A(C_{\restrict \vec e})$ as $\sum \{ x_{\vec a} : \vec a$ extends~$\vec e \}$.

Each node $u$ in $T$ is now labelled with a conjuction $K_u$ of assertions made so far by the Adversary.
$K_u$ has
the form $S_1 = b_1 \, \wedge \, \dots \, \wedge  \, S_k = b_k$ where $k \le t$,
each~$S_i$ is a sum of variables and each $b_i$ is $0$ or $1$.
We replace $K_u$ with the polynomial
$p_u := \prod_{i=1}^k (S_i +(1 - b_i))$ over $\FF_2$, so that the equality $p_u = 0$
asserts the \emph{negation} of $K_u$. 
We expand $p_u$ as a sum of monomials, without doing any cancellation.
We label the root of the tree with the polynomial $1$
(that is, the empty product).
We say that a monomial is \emph{active} if it evaluates to $1$ under
the assignment~$\beta$.

Our goal now is to define an instance of {\sc Leaf}, 
in the form of a graph~$G$
whose vertex-set
consists of all occurrences of active monomials in $T$. The distinguished 
$G$-leaf in the instance 
(that is, the leaf at node $0$ in the definition of {\sc Leaf} in 
Definition~\ref{def:standard_TFNP})
will be the monomial $1$ at the root of~$T$, and we will define
the graph so that, if we find any other $G$-leaf, it must be on a leaf of $T$ 
which corresponds to a solution of $\loc\Gamma_\oplus$. 
We emphasize that formally the input to {\sc Leaf} is not~$G$ itself but rather a circuit describing~$G$, where the
circuit is constructible in polynomial time from the protocol $(M,1^t)$. 
We will not give details of how this circuit is constructed but it will 
be clear that this can be done, since $G$ is defined in a very local way from~$T$.

We divide the leaf nodes of $T$ into three classes.
For a leaf node $u$, if $K_u$ falsifies
axiom 1, then $u$ is \emph{inconsistent}; if $K_u$ falsifies axiom 2,
then $u$ is \emph{incorrect}; otherwise $u$ is \emph{good}. 
Now for each non-leaf, inconsistent or incorrect node $u$ of $T$ we will define a matching $G_u$. 
For non-leaf nodes~$u$, $G_u$ will be a matching on all the active monomials 
in $p_u$, $p_v$ and $p_w$, where~$v$ and~$w$ are the parents of $u$.
For inconsistent or incorrect nodes~$u$, $G_u$ will be a matching on all active monomials in~$p_u$.
For good nodes~$u$, $G_u$ will not be a matching but will be a 
degenerate graph 
whose vertices are the active monomials in $p_u$, and which does not have any edges.
We take $G$ to be the graph whose vertices are all active monomial occurrences in $T$,
and whose edges are the discrete union of the edges in $G_u$ over all $u \in T$ (we allow repeated edges).
Thus in $G$, with the exception of the active monomial~$1$
at the root of $T$ and the active monomals at good leaves~$u$, every vertex has degree~$2$,
because it is matched in precisely two of the graphs $G_v$
over $v \in T$. 
Thus any solution to this instance of {\sc Leaf} will be a monomial occurrence
at a good node~$u$, which in particular finds us a good~$u$, as required.

We now describe the graphs $G_u$.
First suppose $u$ is a non-leaf node. Then by construction $p_u$ is the sum $p_v + p_w$, where~$v$ and~$w$ are the parents of $u$.
Any monomial occurrence in~$p_v$ (or $p_w$)
either appears also in~$p_u$ or is cancelled against
an occurrence of the same monomial in~$p_w$ (or $p_v$).
Thus there is complete matching~$G_u$ of all occurrences of active monomials at the three nodes $u$, $v$ and $w$
which can be described by a small circuit.

Suppose $u$ is an inconsistent leaf. Then there exist sums $S_i$ and $S_j$
such that~$K_u$ contains either three assertions $S_i = 0$, $S_j=1$ and $S_i + S_j = 0$
or three assertions $S_i = 1$, $S_j=1$ and $S_i + S_j = 1$. We consider the first case;
the second one is similar.
We have that $p_u$ is a product 
$(S_i +1) S_j (S_i + S_j +1) q$
for some $q$. We can rearrange this as
$
(S_i +1) S_i \cdot S_j q
\ 
+
\ 
(S_j +1) S_j \cdot (S_i+1) q.
$
If we expand $(S_i+1)S_i$ into a sum of monomials, every active monomial
appears an even number of times, and in particular we can 
construct a circuit describing a complete matching on this sum which matches each active monomial with another occurrence of the same monomial.
We can do the same for $(S_j +1) S_j$, and from these two matchings we can define
a matching~$G_u$ of the same kind on the whole expression~$p_u$. 

Suppose $u$ is an incorrect leaf. Then for some single variable $x_{\vec a}$, either
$\beta(x_{\vec a})=0$ and $K_u$ asserts $x_{\vec a} = 1$, 
or $\beta(x_{\vec a})=1$ and $K_u$ asserts $x_{\vec a} = 0$.
In the first case, $p_u$ has a factor $x_{\vec a}$ which is $0$ under~$\beta$.
Thus $p_u$ has no active monomials, and there is nothing to do. 
In the second case, $p_u$ has a factor $x_{\vec a} + 1$ which 
becomes $1+1$ under~$\beta$, allowing us to define a circuit
describing a complete matching~$G_u$ on the 
active monomials of~$p_u$. 
This completes the construction of~$G$.
\end{proof}

\subsection{Other choices of axioms} \label{sec:robustness}

We would like to show that if you apply our machinery to the  decision class $\pty \P$,
then the TFNP subclass $\PPA$ arises in a canonical way as the result.
So far we have not really shown this;
we rather showed in Theorem~\ref{the:Gpty_PPA} that $\PPA$ arises from a particular 
(although very natural)
choice of axiomatization~$\Gamma_\oplus$ for the function~$\pty$.
We now show something much closer to what we want:
\begin{itemize}
\item
Let $\Gamma$ be the axiomatization of $\pty$ arising from any recursive definition of~$\pty$
(of a certain form).
Then $\loc\Gamma$ is complete for $\PPA$.
\end{itemize}
This is shown formally below as Theorem~\ref{the:robust_PPA}. We first
describe the kind of recursive definition which we allow.
% -- that as long
%as we stay inside a reasonable class of axiomatizations of~$\pty$, 
%we will end up with the same class $\PPA$, regardless of which axioms we choose.

\begin{definition}
A \emph{$k$-query black-box recursive definition of $\pty$} is an
axiom of the form $\forall C, A(C) = \theta(C ; A)$ which holds for $\pty$, where 
$\theta$ is a $\PV(A)$-formula of a restricted form which we call a 
\emph{$k$-query black-box evaluator.}

Formally, a $k$-query black-box evaluator is a polynomial-time
oracle machine which  has only limited access to $C$ and $A$.
 Specifically it
can query $\inp(C)$  and then adaptively make at most $k$ queries of the form
\begin{enumerate}
\item
query $A(C_{\restrict \rho})$ where $\rho$ is a restriction which sets
exactly one input gate, or
\item
ask for $\eval(C_{\restrict \rho})$ under a restriction $\rho$ to all input gates.
\end{enumerate}
It has no other access to $A$ or $C$. It finally outputs $0$ or $1$.
\end{definition}

Recall that ``holds for $\pty$'' means simply that the sentence
$\forall C, \pty(C) = \theta(C ; \pty)$ is true.
As an example of a 2-query black-box recursive definition of $\pty$,
let $\theta$ be the procedure:  if $C$
has no inputs  return $\eval(C)$; otherwise return 
$A(C_{\restrict x =0}) \oplus A(C_{\restrict x=1})$, where $x$
is the first input gate of $C$. 

\begin{theorem} \label{the:robust_PPA}
Let $\Gamma$ be a black-box recursive definition of $\pty$ with constantly
many queries.
Then $\loc\Gamma \equiv \loc \Gpty$ and thus $\loc\Gamma$ is complete for $\PPA$.
\end{theorem}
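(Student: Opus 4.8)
The plan is to prove $\loc\Gamma \equiv \loc\Gpty$ by showing reductions in both directions, using Proposition~\ref{pro:solvable}: it suffices to show that $\loc\Gpty$ is solvable over $\Gamma$ and that $\loc\Gamma$ is solvable over $\Gpty$. Equivalently, I want to show that the two axiom systems are mutually ``enforceable'': playing the Inspector in a game over one system, I can force the Adversary to behave consistently with the other. So the heart of the argument is: (a) in a game over $\Gamma$, the two axioms of $\Gpty$ are enforceable; and (b) in a game over $\Gpty$, the single axiom $A(C) = \theta(C;A)$ of $\Gamma$ is enforceable. Given (a) and (b), any protocol solving $\loc\Gpty$ can be simulated by an Inspector playing over $\Gamma$ — whenever the protocol queries $A$ on some restriction of the universal circuit, we run the enforcement subroutine for the relevant $\Gpty$-consistency, so from the $\Gamma$-Adversary's answers we reconstruct an oracle that $\Gpty$-consistent answers would have given; a witnessed failure of $\Gpty$ then yields (via the enforcement) a witnessed failure of $\Gamma$. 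Symmetrically for the other direction.

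For direction (a): axiom~2 of $\Gpty$, namely $\inp(C)=\emptyset \to A(C) = \eval(C)$, should be enforceable over $\Gamma$ because $\theta$ holds for $\pty$ and a $k$-query evaluator applied to an input-free circuit $C$ can only query $\eval$ of $C$ itself (there are no single-gate restrictions available), so $\theta(C;A)$ is forced to equal $\eval(C)$ regardless of the oracle; asking the Adversary to evaluate $A(C) = \theta(C;A)$ and using the $\Gamma$-axiom pins down $A(C) = \eval(C)$. Axiom~1, the recursion $A(C) = A(C_{\restrict x=0}) \oplus A(C_{\restrict x=1})$, is the harder case. The idea is to use the well-definedness machinery (in the style of Lemma~\ref{lem:parity_well_defined}, but now proved over $\Gamma$ rather than over $\Gpty$): first establish that $A$ respects logical equivalence of circuits over $\Gamma$ — this follows by an induction on circuit size using the recursive structure of $\theta$, since $\theta$ bottoms out at $\eval$ on total restrictions. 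Once $A$ is (enforceably) well-defined as a function of the Boolean function computed, and once I know (by unwinding $\theta$ down to the leaves, which is a depth-$O(k \cdot n)$ recursion tree, hence polynomial-size) that $A(C)$ is enforceably equal to the parity of accepting assignments, axiom~1 follows because parity is additive over the split $C = C_{\restrict x=0} \sqcup C_{\restrict x=1}$ (viewing the restricted circuits on the remaining inputs), just as in Lemma~\ref{lem:parity_disjoint}.

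For direction (b): over $\Gpty$ I already have full control — by Lemmas~\ref{lem:parity_well_defined}, \ref{lem:parity_disjoint} and the observations following them, I can enforce that $A$ computes the true parity function on any circuit I present, well-definedly. Since $\theta$ holds for the genuine parity function $\pty$, and since $\theta$ makes only $k = O(1)$ queries each of which is either an $\eval$ on a total restriction (directly computable, no Adversary needed) or an $A(C_{\restrict \rho})$ query on a single-gate restriction (whose correct value I enforce via $\Gpty$), I can force the Adversary to produce exactly $\theta(C;\pty) = \pty(C)$, hence to satisfy $A(C) = \theta(C;A)$. Thus the $\Gamma$-axiom is enforceable over $\Gpty$.

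I expect the main obstacle to be direction (a), specifically establishing enforceable well-definedness of $A$ over an arbitrary black-box recursive definition $\Gamma$, since unlike $\Gpty$ we do not get to pick a convenient syntactic split: $\theta$ may recurse in an unusual pattern, interleave $\eval$-queries, and its recursion tree must be shown to terminate in polynomially many levels with leaves that are input-free restrictions of $C$. The key bookkeeping point is that each of the $k$ queries either strictly decreases the number of free input gates (a single-gate restriction) or is terminal (an $\eval$), so the recursion tree has depth at most $n$ and branching at most $k$ per level contributes only polynomially — wait, $k^n$ is not polynomial; the right bound is that along any path at most $n$ restriction-queries occur, and the tree of restrictions is a subtree of the $2^n$-leaf full restriction tree, so the set of circuits ever queried is a polynomial-size antichain-closed family, and a straightforward induction on $n$ (base case input-free, forced by $\eval$; inductive step, $A(C)$ forced in terms of $A$ on smaller circuits, all enforceable) pins everything down. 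Once that induction goes through, additivity and well-definedness are routine, and both reductions follow. The conclusion that $\loc\Gamma$ is complete for $\PPA$ is then immediate from $\loc\Gamma \equiv \loc\Gpty$ and Theorem~\ref{the:Gpty_PPA}.
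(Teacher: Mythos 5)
Your high-level architecture (mutual simulation of the two games, converting a witnessed failure of one axiom system into a witnessed failure of the other) matches the paper's, and your treatment of axiom~2 of $\Gpty$ over $\Gamma$ is fine: an input-free circuit admits no single-gate restrictions, so $\theta$ can only query $\eval$ and its output is forced. But both of your main directions rest on a claim that is false: that the \emph{true} parity value $\pty(C)$ is enforceable in polynomial time. In direction (b) you assert that over $\Gpty$ you ``can enforce that $A$ computes the true parity function on any circuit,'' but Lemmas~\ref{lem:parity_well_defined} and~\ref{lem:parity_disjoint} only give local consistency and well-definedness, not correctness: a $\Gpty$-Adversary may answer $A(C_{\restrict x=0})=A(C_{\restrict x=1})=0$ on a large circuit whose true restricted parities are both $1$, and no polynomial-time Inspector strategy can produce a witnessed failure from this (she would have to follow the lying branch of an exponentially deep binary tree without knowing which branch that is; if correctness were enforceable, $\loc\Gpty$ would be essentially trivial). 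In direction (a) you try to enforce $A(C)=\pty(C)$ over $\Gamma$ by unwinding the recursion of $\theta$ to the leaves; you correctly notice the $k^n$ blow-up, but your fix --- that the queried circuits form ``a polynomial-size antichain-closed family'' --- is unsubstantiated and false in general: the tree of iterated single-gate restrictions has exponentially many nodes, and nothing in the black-box definition forces $\theta$ to reuse restrictions.

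The missing idea is the paper's Lemma~\ref{lem:parity_consistency}, a fooling argument: since $\theta$ makes only $k=O(1)$ queries of the restricted form, on a circuit with at least $n_k$ inputs the transcript of $\theta(C;\alpha)$ is also realized as the transcript of $\theta(Z;\pty)$ for some genuine circuit $Z$ of prescribed parity $b$, provided the Adversary's answers meet a minimal local consistency condition. Lemma~\ref{lem:robust_parity_queries} then converts a failure $\alpha(C)\neq\theta(C;\alpha)$ of $\Gamma$ into a badly-behaved restriction of $C$ (a failure of $\Gpty$), and conversely converts a (locally minimal) badly-behaved $D$ into a restriction where $\Gamma$ fails, \emph{without ever certifying any true parity value}: for large circuits the fooling lemma does the work, and for circuits with fewer than $n_k$ inputs one brute-forces over all (constantly many) restrictions. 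Some such local, indistinguishability-based argument is needed; the global ``pin down the true parity'' strategy cannot be carried out in polynomial time.
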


Theorem~\ref{the:robust_PPA} is a consequence of 
Lemma~\ref{lem:robust_parity_queries} below,
which lets us convert failures of $\Gpty$ into failures of $\Gamma$ and vice versa.
 We first state a simple
lemma about when a small number of parity queries get ``consistent'' replies.

\begin{lemma} \label{lem:parity_consistency}
Let $\theta$ be a $k$-query black-box evaluator. There is a constant $n_k$ such that the following is true, for 
any oracle~$\alpha$.
Suppose $C$ is a circuit with at least $n_k$ inputs, and 
the queries made in the computation of $\theta(C; \alpha)$ satisfy the minimal consistency condition that:
there is $b \in \{0,1\}$ such that,
whenever both 
$\alpha(C_{\restrict x =0})$
and 
$\alpha(C_{\restrict x =1})$ 
were queried for $x \in \inp(C)$, 
then $\alpha(C_{\restrict x =0}) \oplus \alpha(C_{\restrict x =1}) = b$.
Then there is a circuit $Z$ with $\inp(Z)=\inp(C)$ such that $\pty(Z) = b$
and $\theta(Z; \pty)$ has the same output as $\theta(C; \alpha)$.
\end{lemma}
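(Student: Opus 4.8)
The plan is to exploit the fact that a $k$-query black-box evaluator $\theta$ only inspects $\inp(C)$ and then makes at most $k$ queries, each of which is either a single-bit-restriction parity query $\alpha(C_{\restrict x=b})$ or a full-restriction evaluation query $\eval(C_{\restrict\rho})$; so the entire computation of $\theta(C;\alpha)$ is determined by a bounded amount of "local data", namely $\inp(C)$, the $\le k$ answers received, and the branching structure of the machine. Set $n_k$ large enough that whenever $C$ has at least $n_k$ inputs there are strictly more than $2k$ full restrictions $\rho$ of $C$ to all inputs; for instance $n_k=\lceil\log_2(2k{+}1)\rceil$ works, since then $2^{n_k}>2k$. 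The idea is to build $Z$ by starting from a "neutral" circuit $Z_0$ on $\inp(C)$ and then patching it at a controlled set of at most $2k$ full restrictions so that (i) all answers that $\theta$ would receive on $Z$ when the oracle is the \emph{true} parity function $\pty$ agree with the answers that $\theta$ received on $C$ from $\alpha$, forcing $\theta(Z;\pty)=\theta(C;\alpha)$, and (ii) the global parity $\pty(Z)=b$.

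First I would fix the run of $\theta(C;\alpha)$ and list the queries it makes in order, $Q_1,\dots,Q_m$ with $m\le k$, together with the answers $a_1,\dots,a_m$. Each parity query $Q_j=\alpha(C_{\restrict x_j=b_j})$ "touches" the two halves of the cube of inputs, but what matters is only the value returned; each evaluation query $Q_j=\eval(C_{\restrict\rho_j})$ pins down $\beta$ on a single full restriction $\rho_j$, and here we are forced to keep $\eval(Z_{\restrict\rho_j})=\eval(C_{\restrict\rho_j})=a_j$. Let $F$ be the set of full restrictions $\rho_j$ arising from evaluation queries; $|F|\le k$. Now I construct $Z$ as follows: on the full restrictions in $F$, make $\eval(Z_{\restrict\rho})$ equal the required value $a_j$; this fixes at most $k$ bits of the truth-table of $Z$. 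On the remaining $\ge n_k - \text{(stuff)}$, sorry, on the remaining $2^n-|F|>2k-|F|\ge k$ full restrictions (using $n\ge n_k$), we have freedom to set the truth-table however we like. We use this freedom to do two things: adjust the \emph{total} parity of $Z$ so that $\pty(Z)=b$ (one free bit suffices), and — this is the subtle point — arrange that when $\theta$ runs on $Z$ with oracle $\pty$, each parity query $\pty(Z_{\restrict x_j=b_j})$ returns exactly $a_j$. Since we also need $\theta$ to \emph{follow the same branch}, so that it makes exactly the queries $Q_1,\dots,Q_m$, this has to be done inductively along the run.

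Here is how I would organize that induction, which is the heart of the argument. Process the queries $Q_1,\dots,Q_m$ in order. Maintain the invariant that the partial truth-table of $Z$ committed so far is consistent and involves at most $2j$ committed full restrictions after processing $Q_1,\dots,Q_j$. For an evaluation query $Q_j=\eval(C_{\restrict\rho_j})=a_j$: if $\rho_j$ is already committed, the value is forced and by consistency it equals $a_j$ (here we need that the earlier commitments we made never contradicted an evaluation answer — they won't, by construction); otherwise commit $Z$ on $\rho_j$ to $a_j$ (one new commitment). For a parity query $Q_j=\pty(Z_{\restrict x_j=b_j})=\sum_{\rho\succ x_j=b_j}\eval(Z_{\restrict\rho})$: we need this sum to equal $a_j$; the committed restrictions inside the half-cube $\{x_j=b_j\}$ contribute a fixed partial sum $s_j$, and since $C$ has $\ge n_k$ inputs each half-cube has $2^{n-1}\ge 2k$ full restrictions, of which at most $k$ can be committed, so there is an uncommitted $\rho$ in that half-cube; commit $Z$ on that $\rho$ to $a_j\oplus s_j$ (one new commitment) so the half-cube sum becomes $a_j$. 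Here is where the minimal-consistency hypothesis enters: across all parity queries, the value $b$ is "compatible" in the sense that whenever two such queries $\pty(Z_{\restrict x=0})$ and $\pty(Z_{\restrict x=1})$ are both made for the same $x$ we must have their sum equal $\pty(Z)$, and the hypothesis guarantees $a\oplus a'=b$ for the original queries, so we are free to aim for $\pty(Z)=b$ — without the hypothesis the required answers could be jointly unsatisfiable. After all $m\le k$ queries we have at most $2k<2^n$ committed restrictions; adjust $Z$ on one further uncommitted restriction to make the overall parity $\pty(Z)=b$ (this does not disturb any half-cube sum we need, as long as we pick a restriction not inside any half-cube that was queried — again possible since we have $>2k$ uncommitted restrictions; if every half-cube was touched one can instead flip two restrictions inside a single touched half-cube, preserving that half's sum while toggling global parity, but note each parity-query commitment used only \emph{one} restriction per half-cube so a second uncommitted one there still exists). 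Fill in the remaining entries of the truth-table of $Z$ arbitrarily, and let $Z$ be any circuit with this truth-table and $\inp(Z)=\inp(C)$. Then $\theta(Z;\pty)$ makes exactly the queries $Q_1,\dots,Q_m$, receives the same answers $a_1,\dots,a_m$, hence follows the same branch and outputs the same bit as $\theta(C;\alpha)$, and $\pty(Z)=b$, as required.

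The main obstacle — and the place the hypothesis is really needed — is precisely the compatibility bookkeeping in the induction: the $k$ parity answers, the $k$ evaluation answers, and the target global parity $b$ must be simultaneously realizable by \emph{some} truth-table, and the only structural constraint linking parity queries on the same input variable is exactly the one the minimal-consistency condition controls. Everything else is a counting argument ensuring enough uncommitted full restrictions survive, which is why $n_k$ only needs to be logarithmic in $k$. I would also remark that the circuit $Z$ is not claimed to be polynomial-size or uniformly constructible — the lemma is purely about the existence of $Z$ for a fixed oracle $\alpha$ — so we are free to take $Z$ to literally carry its truth-table, sidestepping any circuit-construction issues.
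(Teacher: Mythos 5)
Your overall strategy -- realize the $\le 2k$ query constraints plus the global constraint $\pty(Z)=b$ as conditions on the truth table of $Z$, exploit the fact that a circuit with $n\ge n_k$ inputs has far more than $2k$ truth-table entries, and observe that the minimal-consistency hypothesis is exactly what makes the one unavoidable linear dependency (two complementary half-cubes summing to the whole cube) consistent with the target $b$ -- is the same as the paper's, which dismisses the lemma in two sentences as ``clear''. However, the specific greedy mechanism you describe does not work as stated. When you handle a parity query on the half-cube $H_j=\{x_j=b_j\}$ by committing one fresh bit so that the committed partial sum in $H_j$ becomes $a_j$, nothing prevents a \emph{later} commitment -- made to fix a different half-cube $H_{j'}$ whose intersection with $H_j$ is a quarter-cube -- from landing inside $H_j$ and flipping its parity back. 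Your invariant tracks only \emph{how many} bits are committed, not where they lie relative to previously satisfied constraints, and since every half-cube contains half of the cube you cannot in general place the new bit outside all previously handled half-cubes (certainly not when both $\{x=0\}$ and $\{x=1\}$ have been queried for some variable $x$). For instance, with queries $\pty(C_{\restrict x_1=0})=1$ and $\pty(C_{\restrict x_2=0})=1$, fixing the first half-cube with a point having $x_2=1$ and then the second with a point having $x_1=0$ leaves the first half-cube with parity $0$. The closing instruction to ``fill in the remaining entries of the truth-table arbitrarily'' has the same defect: an odd number of arbitrary $1$s inside a constrained half-cube destroys its parity (those entries must be set to $0$, or at least parity-neutrally).

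The clean repair keeps your framework but replaces the greedy bookkeeping with a consistency argument for the whole system at once. View the requirements as linear equations over $\FF_2$ in the $2^n$ truth-table bits, with constraint vectors consisting of $\le k$ point indicators (eval queries), $\le k$ half-cube indicators (parity queries), and the all-ones vector (for $\pty(Z)=b$). Any $\FF_2$-dependency among these must have empty point part, because a nonzero combination of half-cube indicators and the all-ones vector has support either $0$ or exactly $2^{n-1}>k$; and for the combination to vanish the half-cubes must pair up into complementary pairs, with the all-ones vector included iff the number of pairs is odd. The corresponding right-hand sides then sum to zero precisely because each queried complementary pair XORs to $b$ by hypothesis. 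Hence the system is solvable, and any solution yields the desired $Z$ (written as a circuit of whatever size). Equivalently and more combinatorially: partition the cube into the $2^{|V|}$ subcubes determined by the $\le k$ parity-queried variables, note that each subcube has more than $k$ points and so its internal parity can be chosen freely after the eval bits are fixed, and solve the resulting small system on subcube parities.
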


\begin{proof}
It is clear that, for large enough $n_k$, we can define a set $Z$ 
with $\pty(Z)=b$, whose restricted
 parities match  the oracle replies made in the computation, and which 
also contains or omits elements to match the queries made to $\eval(C_{\restrict \rho})$. 
We then simply write $Z$ as a circuit --- we do not care about its size.
\end{proof}

For the rest of this section we will say that
 a circuit $C$ is \emph{well-behaved} with respect to an oracle~$\alpha$
if $\alpha(C) = \alpha(C_{\restrict x=0}) \oplus \alpha (C_{\restrict x = 1})$
for every $x \in \inp(C)$,
and $\alpha(C) = \eval(C)$ if $C$ has no input gates. 
Otherwise we call $C$ \emph{badly-behaved}.
(Notationally, it is convenient to associate well-behavedness with the circuit,
rather than with the oracle, because we will discuss several circuits
but one fixed oracle.)

\begin{lemma} \label{lem:robust_parity_queries}
Let $\Gamma$ be a $k$-query black-box recursive definition of $\pty$
of the form $\forall C, A(C) = \theta(C ; A)$,
for some $k \in \NN$.
 Let $\alpha$ be any oracle.
Then given any circuit~$C$ with $\alpha(C) \neq \theta(C ; \alpha)$ we can 
find in polynomial  time a restriction $C_{\restrict \rho}$ of $C$ which is
badly-behaved.
Conversely given any circuit $D$ which is badly-behaved we can find 
in polynomial time a restriction $D_{\restrict \rho}$ of $D$ 
such that $\alpha(D_{\restrict \rho}) \neq \theta(D_{\restrict \rho} ; \alpha)$.
\end{lemma}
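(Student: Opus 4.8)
The plan is to treat the two directions separately, in both cases using the fact that a $k$-query black-box evaluator $\theta$ reads very little of $C$ and $A$, together with Lemma~\ref{lem:parity_consistency} as the key technical engine.

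For the first direction, suppose $\alpha(C) \neq \theta(C;\alpha)$. Run $\theta(C;\alpha)$ and collect the at most $k$ queries it makes, together with $\alpha$'s answers and the values $\eval(C_{\restrict \rho})$ it reads. If some pair of queries $\alpha(C_{\restrict x=0})$, $\alpha(C_{\restrict x=1})$ was made with $\alpha(C_{\restrict x=0}) \oplus \alpha(C_{\restrict x=1}) \neq b$ for the ``majority'' value $b$ coming out of the run — more precisely, if the minimal consistency condition of Lemma~\ref{lem:parity_consistency} fails — then one of these restrictions $C_{\restrict x=c}$ is already badly-behaved (the parity-split axiom fails there), or continuing by binary search below it using axiom~1 and bottoming out at axiom~2 locates a badly-behaved restriction; either way we get one in polynomial time. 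If instead the minimal consistency condition holds (with some value $b$), then since $C$ has at least $n_k$ inputs — and if it has fewer we can first pad $C$ with dummy inputs, which changes nothing — Lemma~\ref{lem:parity_consistency} gives a circuit $Z$ with $\inp(Z)=\inp(C)$, $\pty(Z)=b$, and $\theta(Z;\pty)$ having the same output as $\theta(C;\alpha)$. Since $\Gamma$ holds for $\pty$, we have $\pty(Z) = \theta(Z;\pty) = \theta(C;\alpha)$, so $b = \theta(C;\alpha) \neq \alpha(C)$. Thus $\alpha(C) \neq b$, while every queried pair splits to $b$. Now I would argue that $C$ itself, or a restriction of it reached by binary search, must be badly-behaved: starting from $C$, whose value disagrees with the common split-value, walk down using axiom~1, at each step choosing a child whose $\alpha$-value, xored with its sibling's value, does not equal the value it ``should'' have, until we either hit an axiom-1 failure or reach a leaf where $\alpha$ disagrees with $\eval$, i.e.\ an axiom-2 failure.

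For the converse, suppose $D$ is badly-behaved with respect to $\alpha$, and we want a restriction $D_{\restrict \rho}$ with $\alpha(D_{\restrict \rho}) \neq \theta(D_{\restrict \rho};\alpha)$. Consider the subtree of restrictions of $D$ obtained by repeatedly restricting single input gates. If every restriction $D_{\restrict \rho}$ (including $D$) satisfied $\alpha(D_{\restrict \rho}) = \theta(D_{\restrict \rho};\alpha)$, I claim $D$ would be well-behaved, contradicting the hypothesis — so some restriction does the job, and we must find it in polynomial time. The idea is to run $\theta(D;\alpha)$: if its output disagrees with $\alpha(D)$ we are done with $\rho = \emptyset$. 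Otherwise, $\theta$'s computation examined at most $k$ queries; by the restricted form of the evaluator each is either a single-gate-restriction parity query $\alpha(D_{\restrict x=c})$ or a full-restriction evaluation. Since $D$ is badly-behaved, some axiom is violated at $D$ directly (then $\rho=\emptyset$ already witnesses $\alpha(D)\neq\theta(D;\alpha)$ is not quite what we need, but a one-step recursion on $\theta$ using the correctness of $\theta$ at $\pty$ gives it), or the badness is ``hidden'' from $\theta$ because $\theta$ only probed a few places. In the latter case I would descend: pick a child $D_{\restrict x=c}$ that is still badly-behaved — such a child exists because if both children of a badly-behaved circuit were well-behaved then by axiom-1 reasoning at the parent the parent's only possible failure is the axiom-1 equation itself, which is checked by a single query $\theta$ makes, a contradiction — and recurse. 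This descent has depth at most $|\inp(D)|$, giving a polynomial-time procedure ending at a restriction where $\alpha$ and $\theta$ provably disagree.

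The main obstacle I anticipate is the bookkeeping in the converse direction: pinning down exactly why ``$D$ badly-behaved but $\alpha(D_{\restrict\rho}) = \theta(D_{\restrict\rho};\alpha)$ for all $\rho$'' is contradictory, which amounts to showing that $\theta$-correctness everywhere forces the explicit parity recursion $\alpha(C) = \alpha(C_{\restrict x=0})\oplus\alpha(C_{\restrict x=1})$ and the base case $\alpha(C)=\eval(C)$ to hold — this is essentially an induction on circuit size using Lemma~\ref{lem:parity_consistency} to control what $\theta$ can ``see'', and getting the quantifiers and the constant $n_k$-sized base cases lined up correctly is the delicate part. The forward direction is comparatively routine once Lemma~\ref{lem:parity_consistency} is in hand.
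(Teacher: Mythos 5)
Your overall architecture (case on whether the consistency condition of Lemma~\ref{lem:parity_consistency} holds, descend to minimal badly-behaved circuits) matches the paper's, but there are two genuine gaps. First, in the forward direction your treatment of circuits with fewer than $n_k$ inputs by ``padding with dummy inputs'' does not work: $\alpha$ is an \emph{arbitrary} oracle on circuit encodings, so the padded circuit $C'$ and its restrictions receive $\alpha$-values completely unrelated to those of $C$; the hypothesis $\alpha(C)\neq\theta(C;\alpha)$ tells you nothing about $C'$, and a badly-behaved restriction of $C'$ is not a restriction of $C$. The paper instead exploits that a circuit with fewer than $n_k$ inputs has only constantly many restrictions: query all of them, and observe that if every one were well-behaved then, by induction from the leaves up, $\alpha$ would agree with $\pty$ on all of them, forcing $\alpha(C)=\pty(C)=\theta(C;\pty)=\theta(C;\alpha)$, a contradiction.

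The more serious gap is in the converse direction, precisely at the point you flag as ``bookkeeping.'' Your descent relies on the claim that a badly-behaved $D$ whose badness is invisible to $\theta$ must have a badly-behaved one-gate restriction, justified by saying the failing axiom-1 equation ``is checked by a single query $\theta$ makes.'' This is false: $\theta$ is a $k$-query \emph{adaptive} black-box evaluator that chooses its own queries and need never touch the pair $D_{\restrict x=0},D_{\restrict x=1}$ witnessing the failure. The critical case is exactly a \emph{locally minimal} badly-behaved $D$ (every $D_{\restrict y=b}$ well-behaved) with at least $n_k$ inputs; here your descent has nowhere to go. The paper's resolution is a substantive idea, not bookkeeping: well-behavedness of all the children forces, by expanding $\alpha(D_{\restrict y=0})\oplus\alpha(D_{\restrict y=1})$ through a second variable $x$, that \emph{every} sibling pair XORs to the same value $b=\alpha(D_{\restrict x=0})\oplus\alpha(D_{\restrict x=1})\neq\alpha(D)$. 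Hence whatever pairs $\theta$ happens to query satisfy the consistency condition with this $b$, Lemma~\ref{lem:parity_consistency} produces $Z$ with $\pty(Z)=b$ and $\theta(Z;\pty)=\theta(D;\alpha)$, and correctness of $\theta$ on $\pty$ yields $\theta(D;\alpha)=b\neq\alpha(D)$, so $\rho=\emptyset$ works. (The sub-$n_k$ case of this direction again needs brute force over all restrictions to find a globally minimal badly-behaved one, which your proposal also omits.) Without the uniform-split observation, the converse direction does not go through.
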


\begin{proof}
Suppose first we have $C$ with $\alpha(C) \neq \theta(C ; \alpha)$.
If $A$ has fewer than $n_k$ inputs, query $\alpha(C_{\restrict \rho})$ for every 
restriction $\rho$, of any size. One of the $C_{\restrict \rho}$ must be badly-behaved, since otherwise we
would inductively have $\alpha(C_{\restrict \rho})=\pty(C_{\restrict \rho})$ for every~$\rho$,
which implies $\pty A \neq \theta(A; \pty)$, contradicting the assumption
that $\Gamma$ holds for $\pty$.

Otherwise $C$ has $n_k$ or more inputs, and we claim in this case
that $C$ itself is badly-behaved.
For a contradiction, suppose it is well-behaved.
Consider the run of the algorithm $\theta(C; \alpha)$. 
By the well-behavedness assumption, this computation meets
the consistency condition in Lemma~\ref{lem:parity_consistency},
with $b = \alpha(C)$.
Therefore by the lemma there is a circuit $Z$ such that $\pty(Z) = \alpha(C)$ and $\theta(Z; \pty) = \theta(C; \alpha)$.
It follows that 
$\pty(Z) = \theta(Z ; \pty)$, again contradicting that $\Gamma$ holds for $\pty$.

The other direction uses a similar idea. 
We are given $D$ that is badly-behaved. 
We may assume without loss
of generality that, for every $y \in  \inp(D)$, both $\alpha(D_{\restrict y=0})$
and $\alpha(D_{\restrict y = 1})$ are well-behaved. This is because otherwise
we could replace $D$ with one of these restrictions, and then repeat the argument;
in other words, we can find a  ``locally minimal'' badly-behaved $D$ 
in this sense in polynomial time.

If $D$ then has fewer than $n_k$ inputs, we query $\alpha(D_{\restrict \rho})$ for
every possible restriction~$\rho$. We will necessarily find some $\rho$ such that $C := D_{\restrict \rho}$ 
is badly-behaved but every further restriction of $C$, by any number of inputs, is well-behaved;
in this sense $C$ is ``globally minimal''.
It follows that for every non-trivial restriction~$\sigma$, 
we have $\alpha(C_{\restrict \sigma})=\pty (C_{\restrict \sigma})$,
and thus that $\theta(C; \alpha) = \theta(C; \pty)$.
Also $\theta(C; \pty) = \pty C$, since $\Gamma$ holds for $\pty$.
Therefore $\theta(C; \alpha) = \pty C$.
On the other hand, 
since $C$ is badly-behaved, it has an input gate $y$
such that $\alpha(C) \neq \alpha(C_{\restrict y=0}) \oplus \alpha(C_{\restrict y=1})
= \pty (C_{\restrict y=0}) \oplus \pty (C_{\restrict y=1}) = \pty C$.
Hence $\alpha(C) \neq \theta(C ; \alpha)$.
 
Otherwise $D$ has $n_k$ or more inputs, and we will show that in this
case the local minimality of $D$ already implies $\alpha(D) \neq \theta(D; \alpha)$. 
Let input gate $x$ witness the bad behaviour of $B$, so that 
$\alpha(D) \neq \alpha(D_{\restrict x=0}) \oplus \alpha(D_{\restrict x = 1})$.
We first observe that for every input gate $y$ different from $x$, we have
\begin{equation} \label{eq:ok_behaved}
\alpha(D_{\restrict y=0}) \oplus \alpha(D_{\restrict y = 1})
=
\alpha(D_{\restrict x=0}) \oplus \alpha(D_{\restrict x = 1}).
\end{equation}
This is because by well-behavedness of $D_{\restrict y=0}$
and $D_{\restrict y = 1}$ we can expand the left-hand sum
as 
\[
\alpha(D_{\restrict y=0, x=0}) \oplus \alpha(D_{\restrict y = 0, x=1})
\oplus
\alpha(D_{\restrict y=1, x=0}) \oplus \alpha(D_{\restrict y = 1, x=1})
\]
which is the same as what we can get expanding the right-hand sum, 
since restricting first $y$ and then $x$ gives the same circuit as restricting
first $x$ and then $y$. Now consider a run of $\theta(D; \alpha)$.
If we set $b = \alpha(D_{\restrict x=0}) \oplus \alpha(D_{\restrict x = 1})$,
then by observation~(\ref{eq:ok_behaved}) this meets
the consistency condition in Lemma~\ref{lem:parity_consistency}.
Thus we can find $Z$ with $\pty(Z) = b \neq \alpha(D)$
and $\theta(Z; \pty) = \theta(D; \alpha)$.
By the assumption that $\Gamma$ holds for $\pty$
we have $\pty(Z) = \theta(Z ; \pty)$, and therefore
$\alpha(D) \neq \theta(D; \alpha)$ as required.
\end{proof}

%%%%%%%%%%%%%%%%%%%%%%%%%%%%%%%%%%
\section{PSPACE} \label{sec:pspace}
%%%%%%%%%%%%%%%%%%%%%%%%%%%%%%%%%%

\subsection{TQBF and PSPACE computation}

A QBF $F$ consists of a circuit $C$ preceded by a (possibly empty) string of universal or existential
Boolean quantifiers, each associated with a distinct input gate of $C$. We will
write QBFs as $\forall x \exists y C$, $\exists {x_1} \exists {x_2} C$ etc.
where as before $x, y, x_1, x_2, \dots$ name input gates, and we interpret
for example $\forall x \dots$  as ``for both assignments to input gate~$x$ \dots''
If $F = \sigma C$ is a QBF, where $\sigma$ represents the string of quantifiers,
we take $\inp(F)$ to be the input gates of $C$ which are not bound in $\sigma$,
and for $x \in \inp(F)$ write $F_{\restrict x=b}$ for the QBF $\sigma (C_{\restrict x=b})$.
If $\inp(C) = \emptyset$ we say $F$ is \emph{closed}.
We can stratify QBFs into classes $\Sigma^q_k$ and $\Pi^q_k$ in the usual way,
based on the outermost quantifier and the number of alternations of quantifier blocks.

Let $\Gqbf$ consist of the following axioms, quantifying over closed QBFs,
which hold for the function TQBF which evaluates closed QBFs as true or false:
\begin{enumerate}
\item
$A(\forall x F) \longleftrightarrow A(F_{\restrict x = 0}) \wedge A(F_{\restrict x = 1})$
\item
$A(\exists x F) \longleftrightarrow A(F_{\restrict x = 0}) \vee A(F_{\restrict x = 1})$
\item
if $F$ is quantifier-free then $A(F) = \eval(F)$
\end{enumerate}
where  $\eval(F)$ in 3 means that we evaluate the circuit in $F$, which has no
inputs since $F$ is closed.

\begin{lemma}[well-definedness] \label{lem:QBF_well_defined}
Let $\sigma C$ and $\sigma D$ be closed QBFs with a common quantifier
prefix~$\sigma$, where $C$ and $D$ have the same inputs and compute the same function.
Then it is enforceable over $\Gqbf$ that $A(\sigma C) = A(\sigma D)$.

Furthermore if we take $\sigma' C'$ to be the dual of $\sigma C$,
with $\forall / \exists$ quantifiers swapped and with the output of $C$ negated,
then it is enforceable that $A(\sigma' C') \neq A(\sigma C)$.
\end{lemma}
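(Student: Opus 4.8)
The plan is to follow the pattern of the proof of Lemma~\ref{lem:parity_well_defined}, replacing the binary search over a complete restriction by a single top-down descent through the quantifier prefix, applying one of the three axioms of $\Gqbf$ at each quantifier.

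\emph{First part.} Write $\sigma = Q_1 x_1 \cdots Q_n x_n$ with each $Q_i \in \{\forall,\exists\}$; since $C$ and $D$ have the same inputs, $\sigma C$ and $\sigma D$ are both legitimate closed QBFs, and for a restriction $\rho$ of an initial segment $x_1,\dots,x_i$ of the bound gates we may form the closed QBFs $\sigma_{>i}\,C_{\restrict\rho}$ and $\sigma_{>i}\,D_{\restrict\rho}$, where $\sigma_{>i}=Q_{i+1}x_{i+1}\cdots Q_n x_n$. We first query $A(\sigma C)$ and $A(\sigma D)$; if the replies agree we are done. Otherwise we maintain the invariant that the Adversary has asserted $A(\sigma_{>i}\,C_{\restrict\rho}) \neq A(\sigma_{>i}\,D_{\restrict\rho})$. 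To go from $i$ to $i+1$ we query the Adversary on the two children obtained by setting $x_{i+1}$ to $0$ and to $1$, on both the $C$-side and the $D$-side. By axiom~1 (if $Q_{i+1}=\forall$) or axiom~2 (if $Q_{i+1}=\exists$) the Adversary is forced to report that each parent equals the conjunction, respectively disjunction, of its two children, else it has violated that axiom and we have won. Since the two parents were claimed to differ, so do the two conjunctions (resp.\ disjunctions), hence there is a value $b$ at which the children differ; we set $x_{i+1}:=b$ and continue. After $n$ steps $\rho$ is complete, $C_{\restrict\rho}$ and $D_{\restrict\rho}$ are input-free circuits computing the same bit, yet the Adversary has claimed $A(C_{\restrict\rho}) \neq A(D_{\restrict\rho})$, so by axiom~3 one reply disagrees with $\eval$ and we win. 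The procedure uses $O(n)$ queries and is polynomial time.

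\emph{Second part (duality).} We run the same descent, now starting from the contrary assumption that the Adversary claims $A(\sigma' C') = A(\sigma C)$, and maintaining the invariant that the two current restricted QBFs — which stay duals of one another, since restricting a gate commutes with negating the matrix and swapping quantifiers — receive equal replies. At a gate where $\sigma$ has $\forall$ (so $\sigma'$ has $\exists$) we apply axiom~1 on the $\sigma$-side and axiom~2 on the $\sigma'$-side: one parent is forced to be the $\wedge$ of its children and the other the $\vee$, and a one-line case check on the common claimed value of the parents shows that there is always a choice of $b$ keeping the two children equal. The case where $\sigma$ has $\exists$ is symmetric. After $n$ steps we reach input-free circuits $C_{\restrict\rho}$ and $C'_{\restrict\rho}$ with $C'_{\restrict\rho}$ the output-negation of $C_{\restrict\rho}$, so $\eval$ assigns them opposite bits, contradicting axiom~3 together with the maintained equality of replies.

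I do not foresee a genuine obstacle: this is a routine adaptation of Lemma~\ref{lem:parity_well_defined}, and the claims that the direction of disagreement (resp.\ the agreement) is preserved at each quantifier node are each a single-line case check. The step carrying the most substance is the observation in the second part that restriction commutes with dualization, so that the two sides of the descent remain duals all the way down to the quantifier-free matrix where axiom~3 can be invoked.
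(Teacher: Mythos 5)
Your proof is correct and is exactly the argument the paper intends: the paper's own proof is the one-line remark that the lemma follows ``by binary search, along the same lines as Lemma~\ref{lem:parity_well_defined}'', and your top-down descent through the quantifier prefix (preserving a disagreement in the first part, preserving agreement between duals in the second, and invoking axiom~3 at the quantifier-free matrix) is precisely that binary search spelled out.
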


\begin{proof}
This is easily proved by binary search, along the same lines as Lemma~\ref{lem:parity_well_defined}.
\end{proof}

Consider the function that takes as input a triple $(C,w,t)$ where $C$ is a circuit with
the same number $n$ of input gates and output gates, $w$ is an $n$-bit string and $t$ is a number
(written in binary); the output is the string $C^t(w)$, that is, the result of applying $C$
iteratively $t$ times to $w$. This function captures the natural definition of PSPACE,
and we will call it simply PSPACE. 
In this section we will write $\eval(C,z)$ for the string obtained
by evaluating a circuit $C$ on a string~$z$.

We take 
$\Gps$ to be the following two axioms for PSPACE, 
which are implicitly universally quantified over $C$, $w$ and $t$:
\begin{enumerate}
\item
$A(C,w,0) = w$ 
\item
$A(C,w,t+1) = \eval(C, A(C,w,t))$.
\end{enumerate}
If we write $A(C,w,t+1)$ as $C^t(w)$ these become the expected equations
 $C^0(w)=w$ and $C^{t+1}(w) = C(C^t(w))$.

\begin{lemma}[well-definedness] \label{lem:pspace_well_defined}
For any circuits $C$ and $D$ computing the same function, 
it is enforceable over $\Gps$ that $A(C,w,t) = A(D,w,t)$.
\end{lemma}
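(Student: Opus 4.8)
The plan is to follow the same binary-search template as Lemma~\ref{lem:parity_well_defined}, now searching over the time parameter $t$ rather than over a restriction of a circuit. Fix circuits $C$ and $D$ computing the same function, together with $w$ and $t$. Acting as the Inspector, I first query $A(C,w,t)$ and $A(D,w,t)$; if the two replies agree then $A(C,w,t)=A(D,w,t)$ has been enforced and we are done, so assume the replies disagree.

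Next I query $A(C,w,0)$ and $A(D,w,0)$. Axiom~1 of $\Gps$ requires both replies to equal $w$, so if either does not I already have a witnessed failure of axiom~1 and win; hence I may assume the Adversary has committed to $A(C,w,0)=A(D,w,0)=w$. I now have an index ($i=0$) where the two sequences agree and an index ($i=t$) where they disagree, so a standard binary search over $\{0,\dots,t\}$ — at each step probing a midpoint $m$ by querying $A(C,w,m)$ and $A(D,w,m)$ and comparing the replies — lets me locate, using $O(|t|)$ queries, an index $i<t$ together with committed replies $v:=A(C,w,i)=A(D,w,i)$ and $r:=A(C,w,i+1)$, $r':=A(D,w,i+1)$ with $r\neq r'$. (The binary search maintains the invariant that both current endpoints have been queried, so at the moment the interval has length one all of $v$, $r$, $r'$ are already on the table.)

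Finally I check these replies against axiom~2 at the instances $(C,w,i)$ and $(D,w,i)$. If $r\neq\eval(C,v)$ I have a witnessed failure of axiom~2 for $(C,w,i)$; if $r'\neq\eval(D,v)$ I have one for $(D,w,i)$. Otherwise $r=\eval(C,v)$ and $r'=\eval(D,v)$, but since $C$ and $D$ compute the same function $\eval(C,v)=\eval(D,v)$, forcing $r=r'$ and contradicting the outcome of the binary search. So in every case I find a failure of $\Gps$, and therefore $A(C,w,t)=A(D,w,t)$ is enforceable. All queries and comparisons are polynomial time in the input: $t$ is given in binary, so the search makes only $O(|t|)$ queries, and each reply is an $n$-bit string where $n$ is the common width of $C$ and $D$.

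I do not expect a real obstacle here; the only point that needs a little care is making the ``switch point'' step of the binary search precise — extracting from one index of agreement and one of disagreement a pair of adjacent indices $i$, $i+1$ at which agreement flips to disagreement — and verifying that exactly the committed replies needed for the axiom-2 contradiction ($v$ at $i$, and $r$, $r'$ at $i+1$) are guaranteed to have been queried by the time the search halts.
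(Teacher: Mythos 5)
Your proposal is correct and follows essentially the same route as the paper's proof: establish agreement at $t=0$ via axiom~1, binary search over the time parameter to find adjacent indices where agreement flips to disagreement, and derive a witnessed failure of axiom~2 using the fact that $\eval(C,v)=\eval(D,v)$. The paper's version is just a two-sentence sketch of the same argument.
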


\begin{proof}
We have $A(C,w,0)=A(D,w,0)$, since otherwise axiom 1 fails right away.
If $A(C,w,t) \neq A(D,w,t)$ then by binary search we can find~$i$
such that $A(C,w,i) = A(D,w,i)$ and $A(C,w,i+1) \neq A(D,w,i+1)$,
contradicting axiom~2.
\end{proof}

\begin{theorem} \label{the:pspace_equivalence}
$\loc\Gps \equiv \loc\Gqbf$.
\end{theorem}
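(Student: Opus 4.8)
The plan is to establish the equivalence by reductions in both directions, each of which works by simulating one kind of oracle using the other. The key observation is that a PSPACE-computation instance $(C,w,t)$ and the problem of evaluating a closed QBF are mutually polynomial-time interreducible in a way that interacts well with the ``enforceable'' machinery already developed (Lemmas~\ref{lem:QBF_well_defined} and~\ref{lem:pspace_well_defined}).

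For the direction $\loc\Gps \le \loc\Gqbf$, I would use Proposition~\ref{pro:solvable} and show that any $Q$ solvable over $\Gps$ is solvable over $\Gqbf$: given a protocol for the Inspector that queries a PSPACE oracle, I replace each query $A(C,w,t)$ by a block of queries to a TQBF oracle. The point is that each bit of $C^t(w)$ can be expressed as the truth value of a closed QBF of size polynomial in $|C| + \log t$, via the standard Savitch-style/TQBF encoding of iterated circuit application (``there is a midpoint string $z$ such that $z = C^{t/2}(w)$ and $C^t(w) = C^{t/2}(z)$'', unwound $\log t$ times with alternating quantifiers, plus a universal quantifier to fold the two halves into one recursive call). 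When the simulated Inspector would detect a failure of $\Gps$ — that is, inconsistent answers about $A(C,w,0)$, or about $A(C,w,t{+}1)$ versus $\eval(C,A(C,w,t))$ — I need to convert this into a genuine failure of $\Gqbf$ on the TQBF side. Here I use well-definedness (Lemma~\ref{lem:QBF_well_defined}) to enforce that the oracle's answers on the encoding QBFs behave consistently, and then a binary-search/unwinding argument to locate an actual violation of axioms 1–3 of $\Gqbf$: if the TQBF oracle claims both that the encoding QBF for the $i$-th step is true and that the encoding QBF for the $(i{+}1)$-th step is false, descending through the quantifier structure must hit a quantifier-free formula whose claimed value disagrees with $\eval$.

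For the converse $\loc\Gqbf \le \loc\Gps$, I go the other way: a query $A(F)$ for a closed QBF $F$ is simulated by a PSPACE computation that evaluates $F$. Concretely, there is a fixed circuit $C_{\mathrm{eval}}$ and a polynomial $t$ such that $C_{\mathrm{eval}}^{t(|F|)}$ applied to an encoding of $F$ produces the truth value of $F$ (this is just the statement that TQBF $\in$ PSPACE, made uniform at the circuit level — one evaluates $F$ by a standard stack-based recursion whose configuration fits in polynomial space). So each TQBF query becomes one PSPACE query. Again I must handle failure-detection: if the simulated Inspector finds a violation of a $\Gqbf$ axiom (say the oracle says $A(\forall x G) = 1$ but $A(G_{\restrict x=0}) = 0$), I need to produce a violation of $\Gps$. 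I do this by also querying the intermediate configurations $A(C_{\mathrm{eval}}, \cdot, i)$ of the evaluation computations for the relevant QBFs, using well-definedness (Lemma~\ref{lem:pspace_well_defined}) to pin them down, and then locating the step where the claimed iterate disagrees with applying $C_{\mathrm{eval}}$ once — which is exactly a failure of axiom 2 of $\Gps$ (or axiom 1, at the base).

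The main obstacle I expect is the bookkeeping in the failure-translation, i.e.\ showing that a short certificate of ``the Adversary contradicted $\Gps$'' can be turned in polynomial time into a short certificate of ``the Adversary contradicted $\Gqbf$'' and vice versa, without the Inspector's running time blowing up. This is where one genuinely needs the enforceability lemmas: rather than hoping the Adversary answers the encoding queries consistently, the Inspector actively forces consistency (or extracts a failure along the way), so that the only remaining way the simulation can go wrong is a failure that directly maps to the target axiom set. The QBF encoding of $C^t(w)$ with its $\log t$ levels of alternation is the part that requires the most care, since the recursion must be spelled out as an honest QBF whose sub-instances after restriction are again of the expected syntactic form, so that the Inspector's binary search down the quantifier tree lands on quantifier-free formulas and hence on axiom~3 of $\Gqbf$.
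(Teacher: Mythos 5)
Your overall architecture --- mutual simulation, with a Savitch/Reach-style QBF encoding of iterated circuit application in one direction and a recursive PSPACE evaluator for QBFs in the other, with all the work concentrated in translating failures --- is the same as the paper's, and you are right that the failure translation is the crux. But the mechanism you give for it would fail as stated. In the direction $\loc\Gqbf \le \loc\Gps$: when the Adversary claims $A(\forall x\, G)=\top$ and $A(G_{\restrict x=0})=\bot$, you propose to binary-search the intermediate configurations of the evaluator's run for ``the step where the claimed iterate disagrees with applying $C_{\mathrm{eval}}$ once.'' A binary search needs, at each stage, a polynomial-time test telling the Inspector which half contains the violation; here that test would require her to know the \emph{true} configuration at the midpoint, which is exactly the PSPACE computation she cannot perform. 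Nor does Lemma~\ref{lem:pspace_well_defined} do the job you assign it: it relates two circuits computing the same function on the same input and time bound, whereas you need to relate the run on $\forall x\,G$ to the run on $G_{\restrict x=0}$, which have different inputs and lengths. The missing idea, which the paper supplies, is to build the evaluator $M_k$ so that the steps $2^{k-1}{+}1,\dots,2^k$ of its run on $\forall y\,F$ literally constitute a run of $M_{k-1}$ on $F_{\restrict y=1}$. Then the Adversary has committed to two computations of the same machine on the same input with different outputs, and binary search comparing \emph{his own answers} at corresponding time steps finds the first divergence, which is a concrete failure of axiom~2 of $\Gps$.

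A parallel gap sits in the direction $\loc\Gps \le \loc\Gqbf$. Querying each bit of $C^t(w)$ as a closed QBF does let you assemble a reply string, but when the replies for times $t$ and $t{+}1$ violate axiom~2 of $\Gps$, the two sets of claims rest on $\Reach$ formulas chained along the binary decompositions of $t$ and $t{+}1$, which are different; ``descending through the quantifier structure'' from this pair of conflicting claims does not by itself bottom out in a quantifier-free contradiction. The paper first forces the Adversary to commit to $\forall u\,\exists !\, v\,\Reach_i(u,v)$ for every $i$ and then extracts explicit configuration witnesses along the decomposition $t=2^{k_1}+\dots+2^{k_r}$; these uniqueness commitments are what guarantee that two claimed paths of equal total length ending at different configurations collapse, under binary search, to a violation the Inspector can exhibit. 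Your sketch omits both the witness extraction and the uniqueness commitments, and without them the contradiction-finding procedure has nothing to terminate on.
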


The proof amounts to giving a ``feasibly witnessable" version
of the standard argument that TQBF is PSPACE 
complete~\cite{stockmeyer1973word}. We prove each direction 
as a separate lemma.

\begin{lemma}
$\loc\Gqbf \le \loc\Gps$.
\end{lemma}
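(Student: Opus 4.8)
The plan is to show $\loc\Gqbf \le \loc\Gps$ by the classical observation that a PSPACE computation can be verified by a polynomial-depth alternating (in fact, for verification purposes, polynomial-length) predicate, so a true-QBF oracle can answer queries about iterated circuit applications; we must make this argument \emph{feasibly witnessable}, i.e.\ able to convert a failure of $\Gqbf$ on the simulated QBF oracle into a failure of $\Gps$ on the real oracle. By Proposition~\ref{pro:solvable} it suffices to give a polynomial-time Inspector strategy that solves $\loc\Gqbf$ over $\Gps$. So suppose we are given a protocol $(M,1^t)$, an instance of $\loc\Gqbf$; the Inspector in the $\Gps$-game must run $M$, but $M$ expects an oracle $B$ satisfying $\Gqbf$ (the TQBF axioms), whereas our real oracle $A$ satisfies the PSPACE iteration axioms. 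The key step is a translation: each closed QBF $G$ that $M$ queries is encoded as a circuit-iteration query $(C_G, w_G, t_G)$ to $A$, together with a small routine that reads off $B(G)$ from $A(C_G,w_G,t_G)$, so that \emph{if $A$ obeys $\Gps$} then the simulated $B$ obeys $\Gqbf$. Concretely, I would fix a single ``master'' circuit $C_\ast$ implementing one step of the standard polynomial-space recursive TQBF evaluation (a stack-based procedure maintaining a partial assignment and a phase bit), so that $C_\ast^{T}(w_G)$ reaches a halting configuration recording the truth value of $G$, where $T$ is a fixed polynomial in the size of $G$; then $B(G) := $ the truth bit extracted from $A(C_\ast, w_G, T)$.

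The heart of the argument is the failure-conversion. A failure of $\Gqbf$ in the simulated oracle $B$ is a witness that one of axioms~1--3 of $\Gqbf$ fails on some closed QBFs — i.e.\ a QBF $G$ with, say, $B(\forall x F) \ne B(F_{\restrict x=0}) \wedge B(F_{\restrict x=1})$, or $B(F) \ne \eval(F)$ for quantifier-free $F$. I would then argue that such a discrepancy among the extracted values forces a discrepancy among the underlying $A$-values $A(C_\ast, w_{(\cdot)}, T)$ for these related QBFs. Since in the honest PSPACE run the configuration trace for $\forall x F$ contains, as sub-traces, exactly the runs for $F_{\restrict x=0}$ and $F_{\restrict x=1}$ (this is the point of choosing a stack-based iterator: the computation is literally a concatenation/nesting of the sub-computations), a mismatch in extracted truth values means that somewhere along the length-$T$ iteration the step relation $A(C_\ast, w, i{+}1) = \eval(C_\ast, A(C_\ast, w, i))$ must be violated, or the base case $A(C_\ast, w, 0) = w$ is violated. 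By a binary search over $i \le T$, using additional queries to $A$ (and invoking Lemma~\ref{lem:pspace_well_defined} to move freely between equivalent presentations of $C_\ast$ if needed), the Inspector finds a concrete $i$ exhibiting a failure of axiom~1 or~2 of $\Gps$ — which is a valid output solving $\loc\Gps$. If instead $M$ halts with a genuine solution to $\loc\Gqbf$ (a run not witnessing any $\Gqbf$-failure), that run together with our simulation is likewise a run witnessing no $\Gps$-failure, because we never manufactured any $A$-reply ourselves — every $A$-reply came from the real oracle in the $\Gps$-game — so the Inspector can just halt successfully.

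The main obstacle I anticipate is bookkeeping the translation carefully enough that the binary search actually lands on an axiom violation rather than getting ``stuck'': one has to be sure that the length bound $T$ is genuinely fixed (polynomial in $|G|$, uniform across the protocol) and that the extraction map and the nesting of sub-traces are exact, so that ``extracted truth values disagree with the $\Gqbf$ rule'' provably implies ``the honest step relation was violated somewhere in the length-$T$ window.'' There is also a mild subtlety that the QBFs $M$ queries may be exponentially large as written, but since $M$ runs in polynomial time it only ever writes polynomial-size QBFs, and by the universal-circuit trick (as in Lemma~\ref{lem:Gamma_to_leaf}) and Lemma~\ref{lem:QBF_well_defined}, we may assume all queried QBFs are restrictions of one fixed polynomial-size QBF, bounding everything uniformly. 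I would handle the three $\Gqbf$ axioms in parallel cases, but they are essentially the same argument: a local disagreement in extracted values propagates down to a local disagreement in the iteration, found by binary search. The converse inclusion $\loc\Gps \le \loc\Gqbf$ is stated as the next lemma and uses the dual direction — encoding an iterated-circuit query as a QBF asserting reachability in $T$ steps, in the usual Savitch/Stockmeyer style — and I would prove Theorem~\ref{the:pspace_equivalence} by combining the two lemmas.
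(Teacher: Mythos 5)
Your proposal follows essentially the same route as the paper: translate each QBF query into a query about an iterated application of a fixed circuit implementing the standard recursive TQBF evaluator, and convert a witnessed failure of $\Gqbf$ into a witnessed failure of $\Gps$ by observing that the evaluator's run on a quantified formula contains the runs on its two instantiations as nested sub-traces, then binary-searching for the first step where the adversary's two accounts of the same sub-computation diverge. One concrete error: you assert twice that the iteration length $T$ is polynomial in the size of $G$. It is not --- the recursive evaluator on a QBF with $k$ quantifiers runs for roughly $2^k$ basic steps (were $T$ polynomial, TQBF would be in P). This does not break the reduction, because the time bound $t$ in the $\Gps$ axioms is written in binary, so exponentially long iterations are legitimate queries, and your binary search over $i \le T$ costs only $\log T = \mathrm{poly}(|G|)$ queries; but the bound as stated is wrong and should be corrected, since you flag it as a point the argument depends on.
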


\begin{proof}
We are given an instance of $\loc\Gqbf$, in the form of
a protocol $\pi$ for the Inspector asking queries to a $\Gqbf$ adversary $A$.
We will describe a strategy for solving this instance in a game
against a $\Gps$ adversary.

We turn QBF queries into PSPACE queries in
the standard way. That is, we recursively define machines $M_k$ to evaluate
QBFs with $k$ quantifiers, either by using two calls to $M_{k-1}$ if $k>0$,
or by simply evaluating the circuit if $k=0$. Taking this circuit
evaluation as the basic step, $M_k$ runs for $2^k$ steps. 

Now we simulate the protocol $\pi$. Each time it makes a query
$A(F)$, where $F$ has~$k$ quantifiers, we query
the result of running $M_k$ on $F$. If in the simulation
we never encounter a failure of $\Gqbf$, then we have solved our
instance of $\loc\Gqbf$.
Otherwise suppose we encounter a failure. For example, 
suppose that
while simulating queries $A(\forall y F)$,
$A(F_{\restrict y=0})$, $A(F_{\restrict y=1})$
the $\Gps$ adversary asserted
$M_k(\forall y F) = \top$,
$M_{k-1}(F_{\restrict y=0})= \top$
and
$M_{k-1}(F_{\restrict y=1})= \bot$.

From the way the machines are defined,
the sequence of steps from $2^{k-1}{+}1$ to $2^k$ of the computation of $M_k(\forall y F)$
resembles a computation of $M_{k-1}(F_{\restrict y=1})$, and since
the final output is~$\top$ this subcomputation must output $\top$,
by the properties of the transition function of~$M_k$ at step~$2^k$. 
Therefore the $\Gps$ adversary is asserting that he knows two
computations of $M_{k-1}(F_{\restrict y=1})$, one with output
$\bot$ and one with output $\top$. We can use binary search to find
the first place these computations differ, which is a failure of $\Gps$.
Other failures of $\Gqbf$ are handled similarly.
\end{proof}

\begin{lemma}
$\loc\Gps \le \loc\Gqbf$.
\end{lemma}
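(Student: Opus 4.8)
The plan is to give the converse direction of the feasibly-witnessable Savitch-style argument, reducing a PSPACE-iteration adversary to a TQBF adversary. Given an instance of $\loc\Gps$, namely a protocol $\pi$ for the Inspector asking queries $A(C,w,t)$ about iterating a circuit $C$, I would first recall the standard fact that the predicate ``$C^{2^j}(w) = z$'' can be expressed by a QBF $\Phi_j(C,w,z)$ of size polynomial in $|C|$ and $j$, built recursively: $\Phi_{j}(C,w,z)$ says $\exists m \, \forall (u,v) \,[(u,v)=(w,m)\vee(u,v)=(m,z)]\to\Phi_{j-1}(C,u,v)$, with $\Phi_0(C,w,z)$ asserting $\eval(C,w)=z$ (this last is just a quantifier-free circuit once $w,z$ are plugged in as constants). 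More generally, for arbitrary binary $t$ one takes a conjunction over the bits of $t$ of such doubling formulas, or just pads $t$ up to a power of two; I would express ``$C^t(w)=z$'' by a single QBF $\Psi(C,w,t,z)$ of polynomial size.

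The simulation then goes as follows. We run $\pi$ against a $\Gqbf$ adversary. Whenever $\pi$ asks a $\Gps$-query $A(C,w,t)$, we do not yet know the answer $z$, so instead we determine it bit by bit: for each output bit position, we ask the $\Gqbf$ adversary to evaluate the closed QBF asserting that the corresponding bit of $C^t(w)$ is $1$ (this is $\Psi$ with the bit of $z$ fixed and $z$ otherwise existentially quantified, or more simply $\exists z\,[\Psi(C,w,t,z)\wedge z_i=1]$). Collecting the bits gives a claimed value $z$, and we feed $z$ back to $\pi$ as the answer to $A(C,w,t)$. If $\pi$ never detects a $\Gps$-failure in these answers, we have solved the $\loc\Gqbf$ instance and are done, since $\pi$ was a winning Inspector strategy. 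The real work is the other case: $\pi$ outputs a witness that $\Gps$ fails — either axiom~1 ($z \neq w$ for $t=0$) or axiom~2 ($z' \neq \eval(C,z)$ where $z,z'$ are the claimed values of $A(C,w,t),A(C,w,t+1)$). In either case I must convert this into a genuine $\Gqbf$-failure, which I do by an enforceability argument: using Lemma~\ref{lem:QBF_well_defined} and the recursive structure of $\Psi$, I unwind the QBF queries along the branch that computed $z$. Concretely, the adversary's bit-answers committed him to a QBF value $A(\Psi(C,w,t,z))=\top$; by the $\Gqbf$ axioms and binary search through the quantifier blocks of $\Psi$ — picking an existential witness $m$ at each level, then descending into whichever of the two halves the adversary still asserts — we drive him down to a quantifier-free instance $\Phi_0(C,w',v')$ for various intermediate values, and combining these with axiom~3 of $\Gqbf$ we extract that the adversary has asserted something false about $\eval(C,\cdot)$ at some concrete point. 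This contradicts the purported $\Gps$-failure (e.g. it shows the claimed $z'$ really does equal $\eval(C,z)$, contradicting what $\pi$ reported), so in fact the $\Gps$-failure never arises, or rather whenever it is reported we can instead exhibit a $\Gqbf$-failure.

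The main obstacle I expect is bookkeeping in this last step: making precise how a reported violation of $\Gps$ axiom~2, together with the adversary's committed QBF-answers, is mechanically turned into a short sequence of $\Gqbf$-queries ending in a concrete failure of axiom~3 — in particular handling the case split on which of the two recursive halves of $\Psi_j$ the adversary lies about, and making sure the binary search stays polynomial-time given that $\Psi$ has polynomially many quantifier blocks. I would organize this by proving a small sub-claim: ``from any circuit $C$, strings $w,z$, and number $j$, it is enforceable over $\Gqbf$ that $A(\Psi_j(C,w,z))$ holds iff $C^{2^j}(w)=z$,'' with the iff understood as: if the adversary claims $\top$ we can force him to exhibit the halving point and recurse, ending either with confirmation via axiom~3 or a failure. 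Given that sub-claim, combining it at $t$ and $t{+}1$ and comparing with $\eval(C,z)$ closes the argument cleanly. The enforceability framework and Lemma~\ref{lem:QBF_well_defined} do most of the heavy lifting, so no essentially new idea beyond the standard TQBF-completeness proof is needed — only care that every step is feasible and produces an explicit witness.
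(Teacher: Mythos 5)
Your high-level plan is the right one, but the step you dismiss as ``bookkeeping'' is in fact the essential difficulty, and as described it does not go through. The problem is in the unwinding of a false assertion $A(\Psi_j(C,w,z))=\top$. After you extract the existential witness $m$, the universal quantifier (via axiom~1 of $\Gqbf$) forces the Adversary to assert \emph{both} halves $\Psi_{j-1}(C,w,m)$ and $\Psi_{j-1}(C,m,z)$; at least one of these is a lie, but the Inspector cannot tell which without knowing the true intermediate configuration $C^{2^{j-1}}(w)$ --- which is exactly the PSPACE-hard quantity she cannot compute. ``Descending into whichever half the adversary still asserts'' does not disambiguate, since he asserts both, and descending into the true half reaches the bottom without exposing any axiom violation. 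For the same reason your sub-claim --- that $A(\Psi_j(C,w,z))\leftrightarrow C^{2^j}(w)=z$ is \emph{enforceable} --- is too strong: enforceability would let a polynomial-time Inspector pin a PSPACE predicate to its true value against an adversary who only ever needs to maintain local consistency, and a consistently lying adversary on a single false $\Psi_j$ can never be caught this way. (A smaller issue: extracting the answer bit-by-bit via $\exists z[\Psi\wedge z_i=1]$ does not commit the Adversary to a single $z$, since different bits may be witnessed by different strings; one should instead extract one witness $z$ from $\exists z\,\Psi(C,w,t,z)$ by binary search.)

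The paper's proof supplies the missing scaffolding: before answering any query, the Inspector makes the Adversary assert the \emph{uniqueness} statements $\forall u\,\exists!\,v\,\Reach_i(u,v)$ for each $i$ up to $\log t$, which can be enforced inductively from $i=0$ upward (each inductive step is a feasibly checkable local implication). The answers to $A(C,w,t)$ and $A(C,w,t{+}1)$ are then obtained as witnesses extracted from the Adversary along a path of reachability assertions. Crucially, when $\pi$ later reports a $\Gps$-failure, the contradiction is derived not by comparing the Adversary's claims to the ground truth but by comparing \emph{two of his own asserted paths} of the same total length from the same start to different configurations: a binary search finds the level $i$ and point $u$ where he asserts $\Reach_i(u,z)$ for two different $z$, violating his uniqueness assertion. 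Your argument needs this (or an equivalent device) to close; without it the final step fails.
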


\begin{proof}
We are given an instance of $\loc\Gps$, in the form of
a protocol $\pi$ for the Inspector asking queries to a $\Gps$ adversary $A$.
We will describe a strategy for solving this instance in a game
against a $\Gqbf$ adversary.

We simulate $\pi$. Suppose we get to a query to
$A(C, w, t)$, where $C$ has $n$ inputs and outputs. 
Letting
$k=\log t$, we let $\Reach_0(u,v), \dots,
\Reach_k(u,v)$ be the usual QBFs from the proof that QBF
is PSPACE-hard, where $\Reach_i(u,v)$ is intended to
express (by recursively using $\Reach_{i-1}$)
that $v$ can be reached from $u$ by $2^k$ iterations
of applying $C$.
Then for each $i$ we ask the $\Gqbf$ adversary whether 
$\forall u \exists ! v \Reach_i(u,v)$ is true, by formalizing
it as a suitable QBF. For $i=0$ he must say it is true,
since otherwise it is easy to force him into a contradiction 
directly using $C$; then inductively once he says it is
true for $i$ he must also say it is true for~$i+1$, 
since otherwise again it is straightforward to find
a contradiction.

We find $k_1 > \dots > k_r$ such that 
$t = 2^{k_1} + \dots + 2^{k_r}$, and ask the $\Gqbf$
adversary for the configuration of the machine at each time
$2^{k_1}$, $2^{k_1} + 2^{k_2}$, \dots, $t$ by asking
first for $v_1$ such that $\Reach_{k_1}(w,v_1)$, then 
for $v_2$ such that $\Reach_{k_2}(v_1, v_2)$ and so on; the 
$\Gqbf$ adversary must give us such witnesses or he would
contradict his assertion $\forall u \exists  v \Reach_{k_i}(u,v)$.
Finally we reply to the original query $A(C, w, t)$
with~$v_r$,  corresponding to time~$t$.

Now suppose that while running the simulation
we encounter a failure of $\Gps$. This means that
we responded to some query $A(C,w,t)$ with some $v_r$
as described above, but on the query
$A(C, w, t+1)$, we carried out the same procedure and 
get a result $v'$ which is different from 
the expected answer $C(v_r)$.
At this point the $\Gqbf$ adversary is describing
 two different paths through
the space of configurations, of the same total length
but leading to two different configurations.
From this it is  straightforward to set up
a binary search that will find some $u$ and $i$
such the Adversary must simultaneously
assert $\Reach_i(u,z)$ for two different strings $z$,
contradicting his earlier uniqueness assertion.
\end{proof}

%%%%%%%%%%%%%%%%%%%%%%%%%%%%%%%%%%%%%%%%%%%
\subsection{PLS and a polynomial hierarchy inside TFNP} \label{sec:PLS_in_PSPACE}

\begin{definition}
For $k \in \NN$, we take $\Gqbf^k$ to be the same as $\Gqbf$, except that we only
require that the axioms hold for QBFs in $\Sigma^q_k \cup \Pi^q_k$.
% (it is not
%necessary, but we could add an axiom that $A(F)=\bot$ for any $F$ not in this set).
\end{definition}

The problems
$\loc\Gqbf^k$ form a kind of analogue of the polynomial hierarchy inside
TFNP.  %We will return to them in Section~\ref{sec:hierarchy}.
We show here that {\sc Localopt}, the complete problem for PLS from
Definition~\ref{def:standard_TFNP}, is reducible to $\loc \Gqbf^1$. In 
Proposition~\ref{pro:QBF1_from_PNP} in the next section
we will prove the converse direction, establishing 
that $\loc \Gqbf^1$ is complete for~$\PLS$.\footnote{
This raises an interesting possibility.
It is known that NP is randomly reducible to parity~\cite{valiant1986np}.
We show that {\sc Localopt} is solvable over the game in which
the Adversary claims to answer NP queries. Is it possible
to show that, if we allow some randomness, it is solvable
over the game in which the Adversary claims to answer parity queries?
That is, could we show in this way that $\PLS$ is randomly reducible in some sense to $\PPA$?
}
We will show much more about this hierarchy in Sections~\ref{sec:dt}
and~\ref{sec:bounded}.

\begin{proposition} \label{pro:QBF1}
%$\loc \Gqbf^0 \equiv \FP$ and $\loc \Gqbf^k \le \loc \Gqbf^{k+1}$ for \mbox{$k \in \NN$}. Furthermore
$\textsc{Localopt} \le \loc \Gqbf^1$ and  hence $\PLS \le \loc \Gqbf^1$.
\end{proposition}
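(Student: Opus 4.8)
The plan is to show that \textsc{Localopt} is solvable over $\Gqbf^1$, which by Proposition~\ref{pro:solvable} gives the reduction $\textsc{Localopt} \le \loc\Gqbf^1$; the further claim $\PLS \le \loc\Gqbf^1$ then follows since \textsc{Localopt} is complete for $\PLS$. We are given an instance of \textsc{Localopt}: circuits for a set $S \subseteq [2^n]$ with $2^n{-}1 \in S$ and a function $N$ on $[2^n]$, and we must find $a \in S$ with $N(a) \notin S$ or $N(a) \ge a$. The key idea is that the Adversary, claiming to answer $\Sigma^q_1$/$\Pi^q_1$ queries, is being asked to evaluate existential statements about the set $S$; since the totality of \textsc{Localopt} is provable using an NP oracle (one can ask ``does there exist an element of $S$ below the current value that $N$ maps into $S$ and strictly downward?''), a single layer of quantification suffices.

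First I would set up the basic enforceable facts, analogous to Lemmas~\ref{lem:parity_well_defined}--\ref{lem:parity_disjoint}: using well-definedness (Lemma~\ref{lem:QBF_well_defined}) we may freely query $A$ on any $\Sigma^q_1$ QBF $\exists x\, \varphi(x)$ describing a predicate on an $n$-bit search space, and if the Adversary says it is true, binary search down the quantifier using axiom~2 of $\Gqbf$ forces him to exhibit a concrete witness $x_0$ with $\varphi(x_0)$; dually, if he says a $\Pi^q_1$ statement $\forall x\,\varphi(x)$ is false he must exhibit a counterexample. Then the strategy is the standard descent: start from $a_0 := 2^n{-}1 \in S$. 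At stage $i$, with current $a_i \in S$, ask the Adversary whether the $\Sigma^q_1$ QBF ``$\exists b\,[\,b \in S \wedge N(b) \in S \wedge N(b) < a_i\,]$'' is true. If he says no, then in particular $N(a_i) \notin S$ or $N(a_i) \ge a_i$ — but we should be careful here, so instead I would phrase the query directly about $a_i$: ask whether ``$N(a_i) \in S \wedge N(a_i) < a_i$'' holds (this is quantifier-free, hence forced correct by axiom~3), and if not, output $a_i$ as a \textsc{Localopt} solution. If yes, extract via binary search a witness and continue, but to keep the process polynomial we must descend through exponentially many values at once: ask for $\min\{\,b : b \in S, N(b) \in S, N(b) < a_i\,\}$-type statements, or more simply ask whether ``$\exists b < a_i\,[\,b \in S \wedge (N(b) \notin S \vee N(b) \ge b)\,]$'' is true — if the Adversary says yes he must give us such a $b$, which is a solution; if he says no, we have enforced that every $b < a_i$ in $S$ satisfies $N(b) \in S$ and $N(b) < b$, and in particular we can now use a single further $\Sigma^q_1$ query about whether some specific strictly-decreasing target is reachable to jump down, guaranteeing termination in polynomially many rounds.

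The main obstacle I anticipate is ensuring the descent terminates in polynomially many rounds while only using $\Sigma^q_1 \cup \Pi^q_1$ queries — a naive one-step-at-a-time descent could take $2^n$ rounds. The clean fix, which I would carry out, is to have the Inspector ask a single $\Sigma^q_1$ question at the very start: ``does there exist $a \in S$ with $N(a) \notin S$ or $N(a) \ge a$?'' Since the totality of \textsc{Localopt} is a true statement, if the Adversary's answers are to avoid an immediate contradiction he must say yes (if he says no, we note this is a $\Pi^q_1$ statement about $S$ he is implicitly claiming, and we can refute it: picking $a = 2^n{-}1$, follow $N$ downward — either we find a solution directly, contradicting his ``no'', or $S$ contains an infinite descending sequence, impossible in $[2^n]$; formally, ask him for the minimum element of $S$, which he must provide, and observe $N$ of it violates his claim). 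Once he commits to ``yes'', binary search on the leading existential quantifier of that $\Sigma^q_1$ QBF, using axiom~2 of $\Gqbf$ and the quantifier-free correctness axiom~3, forces him to hand over an explicit $a$, and checking $N(a) \notin S \vee N(a) \ge a$ is a quantifier-free computation we verify directly. This keeps everything in $\Sigma^q_1 \cup \Pi^q_1$, uses only polynomially many queries, and either produces a genuine \textsc{Localopt} solution or a witnessed failure of a $\Gqbf^1$ axiom, as required.
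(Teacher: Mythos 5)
Your final strategy is correct and is essentially the paper's own proof: the heart of both arguments is a binary search, via $\Sigma^q_1$ interval queries, for the (claimed) minimal element $a$ of $S$, after which either $a$ is a genuine local optimum or $N(a)$ is a concrete element of an interval the Adversary asserted was disjoint from $S$, which your enforceability machinery chases down to a witnessed violation of axiom 2 or 3. Your opening query ``does a local optimum exist?'' is redundant --- the paper skips it and goes straight to the minimum search, which already handles both outcomes --- but it is harmless.
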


\begin{proof}
Suppose our instance of $\textsc{Localopt}$ lives on the interval $[2^n]$ and
is given by a set $S$ and a neighbourhood function~$N$. 
We know that $2^n-1 \in S$ and we want
to find a ``local minimum'' $a \in S$ such that either $N(a) \notin S$ or $N(a) \ge a$. 

We first ask the Adversary ``is there an element of $S$ in the interval $[0,2^n)$?'' Formally,
we query $A(\exists x_1 \dots \exists x_n C)$ where $C$ is the circuit defining~$S$.
We may assume the Adversary replies $\top$, since otherwise using axiom 2 we could enforce in turn
$A(\exists x_2 \dots \exists x_n C_{\restrict x_1 = 1}) = \bot$, \dots, $A(C_{\restrict x_1=1, \dots, x_n=1})=\bot$
and the last of these violates axiom 3, since $2^n-1 \in S$ so $\eval(C_{\restrict x_1=1, \dots, x_n=1}) = \top$.

By asking similar queries, we imitate how we would look for a globally minimum element of $S$ by doing a binary search 
over $[2^n]$. We will eventually find $a$ such that $a \in S$ and the Adversary has asserted
that there is no element of $S$ in the interval~$[0,a)$; typically this will not be a single assertion,
rather it will be that $[0,a)$ is the union of intervals that the Adversary asserted were empty during the binary search.
If $N(a) \notin S$ or $N(a) \ge a$, we have solved our instance of $\textsc{Localopt}$. 
Otherwise $N(a)<a$ and $N(a) \in S$, so we have found a member of $S$ in an interval
which the Adversary asserted was empty. Using this we can force the Adversary
into an immediate contradiction.
\end{proof}

We mention here an example of what can go wrong if we choose our axioms poorly.
Consider an arbitrary TFNP problem, say {\sc Pigeon}. 
We can write a polynomial-time formula $\phi(F)$ which is true precisely
of QBFs $F$ which have the form ``there is a collision in the function 
$[2^n] \rightarrow [2^n-1]$ defined by $C$'' for some circuit $C$.
Such an $F$ is always a true QBF, so we could take $\Gamma$ to be $\Gqbf^1$
plus the axiom $\forall F, \phi(F) \rightarrow A(F)$ and $\Gamma$
would still be a valid set of axioms for TQBF on $\Sigma^q_1 \cup \Pi^q_1$. 

However, over this $\Gamma$ the problem {\sc Pigeon} is now solvable.
Given a circuit~$C$, the Inspector forms the QBF $F$ above
expressing that $C$ has a collision, and queries~$A(F)$. The Adversary
must either answer ``true, there is a collision'' or lose the game
by violating the new axiom. After that the Inspector can query
the bits of the collision, and the Adversary is forced to either reveal
suitable bits, or lose. In this was we get that 
{\sc Pigeon} $\le \loc\Gamma$, so in particular $\loc\Gamma$ is different
from $\loc\Gqbf^1$, which is in PLS.
We note that this trick relies on axioms being able to access the internal
structure of $F$, which is what we disallow in Lemma~\ref{lem:robust_parity_queries}.

%%%%%%%%%%%%%%%%%%%%%%%%%%%%%%%%%%%%%%%%%%%%%%%%%%
\subsection{PPA, PPP and a class for $\#$P}

In Proposition~\ref{pro:QBF1} we showed that $\PLS \le \loc\Gps$.
We show now that the same is true for the other classical TFNP classes
$\PPA$ and $\PPP$, and hence automatically for their subclasses
PPADS and PPAD.

We will need to ask the Adversary some questions about counting.
We could do this directly, by asking about a natural PSPACE machine that counts,
but instead we use
the machinery we have developed.
We define axioms $\Gsh$
for $\# \P$, that is, for full counting. Again $A(C)$ will 
be a string rather than a single bit, and is intended
to express the number of satisfying assignments of $C$.
We take:
\begin{enumerate}
\item
$x \in \inp(C) \longrightarrow
A(C) = A(C_{\restrict x = 0}) + A(C_{\restrict x = 1})$
\item
$\inp(C) = \emptyset \longrightarrow A(C) = \eval(C)$.
\end{enumerate}
These have analogous well-behavedness properties as were
shown for $\Gpty$ in Lemmas~\ref{lem:parity_well_defined}
and~\ref{lem:parity_disjoint}, by essentially the same proofs.

\begin{theorem}
$\loc\Gsh \le \loc\Gps$.
\end{theorem}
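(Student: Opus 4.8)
The plan is to reduce $\loc\Gsh$ to $\loc\Gps$ by simulating, against a $\Gps$ adversary, any protocol that queries a $\Gsh$ adversary. By Proposition~\ref{pro:solvable} it suffices to show that $\loc\Gsh$ is solvable over $\Gps$, i.e.\ to give a polynomial-time Inspector strategy in the $(\Gps, \loc\Gsh, \pi)$-game for an arbitrary input protocol $\pi$. The Inspector simulates $\pi$ step by step; whenever $\pi$ issues a counting query $A(C)$ for a circuit $C$ with $n$ input gates, the Inspector must produce, using only $\Gps$-queries, a number that behaves like $\#\mathrm{SAT}(C)$.

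The key observation is that $\#\mathrm{SAT}$ is computable by a natural PSPACE iteration: define a circuit $D_C$ which, on a configuration encoding ``partial assignment $\rho$ together with a running count,'' advances the standard recursive counting procedure for $C$ one basic step. Iterating $D_C$ for a number of steps $t$ that is polynomial in $2^n$ (hence encoded in binary, polynomially long) starting from the empty-assignment configuration produces, in its final configuration, the value $\#\mathrm{SAT}(C)$. So the Inspector replies to $A(C)$ by issuing the single $\Gps$-query $A(D_C, w_0, t)$, where $w_0$ is the initial configuration, and reads off the count component of the answer. More precisely, to make the recursion structure accessible to binary search, the Inspector should query the configurations at a suitable sequence of intermediate times — just as in the proof of the lemma $\loc\Gps \le \loc\Gqbf$ above, where $\Reach_i$ configurations are queried at times $2^{k_1}, 2^{k_1}+2^{k_2}, \dots$ — so that the ``sub-counts'' for $C_{\restrict x=0}$ and $C_{\restrict x=1}$ appear explicitly as configurations the adversary has committed to.

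Next I would show that any failure of $\Gsh$ that the simulated protocol $\pi$ detects can be converted in polynomial time into a failure of $\Gps$. There are two axioms of $\Gsh$ to worry about. If $\pi$ catches the $\Gsh$ adversary violating axiom~2, i.e.\ committing to a value $A(C) \neq \eval(C)$ for an input-free $C$, then the Inspector has in hand a $\Gps$ configuration that the $\Gps$ adversary claims is the endpoint of the iteration of $D_C$, but whose count component is wrong; since the correct final configuration is forced by axiom~2 of $\Gps$ applied at the last step (the transition function of $D_C$ on an input-free circuit simply writes $\eval(C)$), a short binary search back through the committed intermediate configurations, using axiom~2 of $\Gps$, locates a single step where the adversary's successor configuration disagrees with $\eval(D_C, \cdot)$, which is the required failure. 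If $\pi$ catches a violation of axiom~1, i.e.\ $A(C) \neq A(C_{\restrict x=0}) + A(C_{\restrict x=1})$, then the $\Gps$ adversary has committed (via the queries made when the Inspector answered $A(C)$, $A(C_{\restrict x=0})$, $A(C_{\restrict x=1})$) to configurations whose count components are internally inconsistent — the iteration that computes $\#\mathrm{SAT}(C)$ runs, in a designated block of steps, a sub-iteration computing $\#\mathrm{SAT}(C_{\restrict x=b})$ and then adds — so again a binary search through the configurations committed by the $\Gps$ adversary, using axiom~2 of $\Gps$ and well-definedness (Lemma~\ref{lem:pspace_well_defined}, so that it does not matter which circuit computing $D_C$ or $D_{C_{\restrict x=b}}$ is queried), pinpoints a step whose stated successor is not $\eval(D, \cdot)$ of the stated predecessor. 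In either case the Inspector wins the $\Gps$-game, and conversely if no $\Gps$-failure ever arises then no $\Gsh$-failure arose in the simulation, so $\pi$ produced a genuine solution; that is exactly what solvability over $\Gps$ requires.

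The main obstacle I expect is bookkeeping: choosing the encoding of configurations of $D_C$ so that (a) the recursive counting tree for $C$ is laid out in time in a way that makes the blocks corresponding to $C_{\restrict x=0}$ and $C_{\restrict x=1}$ — and more generally to every subrestriction queried by $\pi$ — contiguous and polynomial-time locatable, so the binary searches above can be set up, and (b) all the circuits $D_C$, the initial configurations, and the relevant time stamps are computable in polynomial time from the protocol $\pi$ and the history so far. None of this is conceptually hard — it is the standard ``$\#\mathrm{SAT} \in \mathrm{PSPACE}$'' construction made feasibly witnessable, exactly parallel to the feasibly witnessable ``$\mathrm{TQBF}$ is $\mathrm{PSPACE}$-complete'' argument used for Theorem~\ref{the:pspace_equivalence} — but it is where the care is needed. (In fact, since the axioms of $\Gsh$ differ from those of $\Gpty$ only in using $+$ in place of $\oplus$, one could alternatively route through $\loc\Gqbf$: note $\#\mathrm{SAT}(C)$ is recoverable by polynomially many $\mathrm{TQBF}$ queries of the form ``is the $j$-th bit of the count of accepting assignments below threshold $m$ equal to $1$?'', and then invoke $\loc\Gqbf \equiv \loc\Gps$; but the direct PSPACE-iteration argument is cleaner and I would present that.)
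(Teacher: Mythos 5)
Your proposal is correct and takes essentially the same approach as the paper: both answer each counting query $A(C)$ by asking the $\Gps$ adversary about the iterated transition function of a polynomial-space machine that enumerates the inputs of $C$ while maintaining a running count, and both convert any resulting violation of the $\Gsh$ axioms into a $\Gps$ failure by binary search over the configurations the adversary has committed to. The paper uses a flat linear enumeration with a counter (sending $(D,0,2^n)$) rather than your recursive-tree layout, but this is only a bookkeeping difference.
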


\begin{proof}
We show how to use a $\Gps$ adversary to simulate a protocol $\pi$
which queries a $\Gsh$ adversary. Suppose $A(C)$ is queried in $\pi$,
where $C$ has $n$ inputs.
We let $D$ be a circuit giving the transition function of the
space~$2n$ machine which steps through the inputs 
to~$C$ and increments a counter whenever $C$ accepts the input.
We send $(D, 0, 2^n)$ to the $\Gps$ adversary, read off
the counter in the configuration they reply with, and use this number
as the  reply to $A(C)$. If we also get queries to 
$A(C_{\restrict x = 0})$ and $A(C_{\restrict x = 1})$,
and give replies using this procedure which do not sum to
our reply to $A(C)$, it is straightforward to use binary search
to find an error in one of these PSPACE computations.
\end{proof}

\begin{theorem} \label{the:PPA_Gsh}
$\PPA \le \loc\Gsh$.
\end{theorem}

\begin{proof}
This follows from Theorem~\ref{the:Gpty_PPA}, 
that $\loc\Gpty$ is complete for $\PPA$,
and the trivial simulation $\loc\Gpty \le \loc\Gsh$.
\end{proof}

\begin{theorem} \label{the:PPT_Gsh}
$\PPP \le \loc\Gsh$.
\end{theorem}

\begin{proof}
We sketch a proof that {\sc Pigeon} is solvable over $\Gsh$.
We have a function $f:[2^n] \longrightarrow [2^n-1]$ and
want to find a collision. We set $P_i := \{ x \in [2^n] :
f(x) \le i \}$.
We can enforce that $A(P_0)=0$ and $A(P_{2^n-1})=[2^n]$,
as in the proof of Lemma~\ref{lem:lonely_to_Gamma}.
Then we use binary search to find $i$ such
that $A(P_i) \le i$ and $A(P_{i+1}) > i+1$,
so that there must be a collision at hole~$i$.
\end{proof}

Let us also observe for the record:

\begin{theorem}
$\PLS \le \loc\Gsh$.
\end{theorem}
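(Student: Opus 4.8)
The plan is to show that {\sc Localopt} is solvable over $\Gsh$, which by Proposition~\ref{pro:solvable} suffices to conclude $\PLS \le \loc\Gsh$. The natural route is to mimic the strategy used for Proposition~\ref{pro:QBF1}, where we reduced {\sc Localopt} to $\loc\Gqbf^1$ by doing a binary search for a globally minimum element of $S$ using the Adversary's answers to ``is $S$ nonempty on this subinterval?''. Here we have a richer tool: instead of a Boolean emptiness query we can ask for the \emph{size} $A(S \cap I)$ of the intersection of $S$ with an interval $I$, using the counting axioms $\Gsh$ and the well-definedness and disjointness properties recorded after the definition of $\Gsh$ (the analogues of Lemmas~\ref{lem:parity_well_defined} and~\ref{lem:parity_disjoint}).

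First I would enforce the boundary conditions. Since $2^n-1 \in S$, axiom~2 of $\Gsh$ forces $A(\{2^n-1\}) = 1$, and more generally we can enforce $A([2^n]) \ge 1$: by binary search using axiom~1, from $A([2^n]) = 0$ we would reach a singleton set whose stated count disagrees with its $\eval$, violating axiom~2. Next, using the additivity from axiom~1 (binary search exactly as in Lemma~\ref{lem:lonely_to_Gamma}), we do a binary search over $[2^n]$ to locate the least $a$ such that the Adversary has been forced to assert $A(S \cap [0,a)) = 0$ while $A(S \cap [0,a+1)) \ge 1$ — equivalently, so that $A(\{a\} \cap S) = 1$, which by axiom~2 forces $a \in S$. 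Concretely at each step we query $A(S \cap I)$ on the current candidate interval $I$; additivity and the disjoint-union enforceability let us split $I$ in half consistently, and the boundary facts $A(S \cap \emptyset) = 0$ and $A(S \cap [0,2^n)) = A([2^n]) \ge 1$ anchor the search. This produces $a \in S$ together with a ``certificate'' (a collection of the Adversary's size-zero assertions on subintervals partitioning $[0,a)$) that $S$ has no element below $a$.

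Then I would check $N(a)$: if $N(a) \notin S$ or $N(a) \ge a$, then $a$ is a valid solution to {\sc Localopt} and we output it. Otherwise $N(a) < a$ and $N(a) \in S$; but $N(a)$ lies in some subinterval $I' \subseteq [0,a)$ on which the Adversary asserted $A(S \cap I') = 0$, while axiom~2 applied to the singleton $\{N(a)\}$ forces $A(\{N(a)\}) = 1$, and using additivity (axiom~1, binary search within $I'$) we force $A(S \cap I') \ge 1$, contradicting the earlier assertion and giving an explicit failure of $\Gsh$ in the Adversary's replies.

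The main obstacle — though it is largely bookkeeping rather than a genuine difficulty — is handling the ``intersection with an interval'' sets cleanly as circuits and making sure the binary search is carried out in a way that keeps the Adversary's partial replies mutually consistent, so that the final contradiction is a bona fide witnessed failure of a $\Gsh$ axiom on the sparse oracle built during the game. This is exactly the kind of argument already executed in the proof of Lemma~\ref{lem:lonely_to_Gamma} for $\Gpty$ and in Proposition~\ref{pro:QBF1} for $\Gqbf^1$, so I would simply observe that the same technique applies, with counts in place of parities, and that the disjointness and well-definedness properties stated after the definition of $\Gsh$ supply everything needed. (Alternatively, one could derive the result immediately from $\PLS \le \loc\Gqbf^1$ of Proposition~\ref{pro:QBF1} together with a simulation $\loc\Gqbf^1 \le \loc\Gsh$, but the direct argument above is cleaner and self-contained.)
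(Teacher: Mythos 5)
Your argument is correct, but it is not the route the paper takes. The paper's proof is a one-line composition: it observes that $\loc\Gqbf^1$ is solvable over $\Gsh$ (an NP query ``is $S\cap I$ nonempty?'' is simulated by the counting query $A(S\cap I)$ and a check whether the answer is zero, with discrepancies resolved by the same binary-search-to-a-singleton argument), and then invokes $\PLS \le \loc\Gqbf^1$ from Proposition~\ref{pro:QBF1}. You instead inline this composition and give a direct winning strategy for the Inspector on {\sc Localopt} over $\Gsh$: binary search by interval-halving on the counts $A(S\cap I)$, anchored by $A(S)\ge 1$ (forced via $2^n-1\in S$ and axiom~2), producing $a\in S$ together with a partition of $[0,a)$ into intervals the Adversary asserted have count zero, and then deriving a witnessed failure of additivity if $N(a)<a$ lands in one of those intervals. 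This is sound, and the one subtlety --- that the ``certificate'' is a collection of zero-count assertions on subintervals rather than a single assertion, and that the contradiction must be localized by a further binary search inside the offending subinterval --- is handled the same way as in Proposition~\ref{pro:QBF1} and Lemma~\ref{lem:lonely_to_Gamma}, as you note. The paper's version is shorter and reuses Proposition~\ref{pro:QBF1} as a black box; yours is self-contained and makes explicit what ``simulating NP queries by counting in the obvious way'' means. Your closing parenthetical describes exactly the paper's proof.
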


\begin{proof}
We have that $\loc\Gqbf^1$ is solvable over $\Gsh$,
by using calls to counting to simulate NP queries in the obvious way.
\end{proof}

\sloppypar It is an interesting question whether every problem 
in the sequence $\loc\Gqbf^1$, $\loc\Gqbf^2$, \dots,
our analogue of the polynomial hierarchy,
is reducible to $\loc\Gsh$. Proving this would amount
to carrying out a  proof of Toda's theorem in the game
over $\Gsh$~\cite{toda1991pp}. We do not see any immediate obstacle
to this; in particular the derandomization trick in Toda's theorem
should give a way  to query statements about counting how many
formulas have odd parity, which we can use to express many parts of the proof.
This would also be interesting for propositional proof complexity,
as it would presumably give a new, purely algebraic system equivalent
to constant-depth Frege with counting gates (TC$^0$-Frege).
We note that Toda's theorem is formalized in a kind of bounded arithmetic 
in~\cite{buss2015collapsing}. The theory used there corresponds, in our setting,
to something like APPROX with parities.

%%%%%%%%%%%%%%%%%%%%%%%%%%%%%%%%%%%%%%%%%%%%%%%%%%%%%%%%%%
\section{TF$\mathbf{\Sigma^p_2}$ and counterexample reducibility} \label{sec:counterexamples}

We move on to our second main construction.
A \emph{$\TFS$ search problem} is specified by a ternary polynomial-time relation
$R(x,y,z)$ and polynomials $p$, $q$ such that for every $x$, there is~$y \in [2^{p(|x|)}]$ such 
that $R(x,y,z)$ holds for all $z \in [2^{q(|x|)}]$.
It represents the search problem of finding such a~$y$, given~$x$.
As with $\TFNP$, we will refer to this problem as $R$ and 
will often not write the bounds on $y$ and $z$ explicitly.
We will usually call~$x$ the \emph{input} or \emph{instance}
and $y$ the \emph{solution};
we will sometimes call $z$ the \emph{counterexample},
especially when $\neg R(x,y,z)$ holds.

We introduce the following notion of reducibility
between two $\TFS$ problems.

\begin{definition} \label{def:tfs2_tfs2}
Let $Q(u,v,w)$ and $R(x,y,z)$ be two $\TFS$ problems.
We say that $Q$ is \emph{counterexample reducible} to $R$ if there 
are polynomial-time functions $f,g,h$ such that for all $u,y,w$ we have
\[
R(f(u), y, h(u,y,w)) \rightarrow Q(u, g(u,y), w).
\]
Clearly this property is transitive, and 
we will write it as $Q \le_c R$. If there are reductions
in both directions then we say $Q$ and $R$ are equivalent, $Q \equiv_c R$.
\end{definition}

The name ``counterexample reducible'' comes from the following interpretation of this definition.
Suppose we are given an input $u$ to $Q$ and want to find a 
solution~$v$. 
We send $f(u)$ as input to an 
Adversary who claims to be able to solve $R$. The Adversary responds that $y$ is a solution to $R$
on this input, that is, they make the coNP assertion that $\forall z \ R(f(u), y, z)$. 
We choose to believe them for now, and make our own
coNP assertion, that $g(u,y)$ 
is a solution to $Q$. 
Now suppose someone comes along and tells us that our coNP assertion
was false, and proves it with a counterexample~$z$ such 
that $\neg R(x, g(u,y), z)$. We can then point to the 
Adversary, and show them a counterexample $h(u,y,w)$
to their coNP assertion about~$R$.

One can extend this to a similar notion of reducibility
between $\mathrm{TF}\Sigma_k$ problems for $k \ge 3$,
or ask about appropriate general notions of reducibility between 
$\mathrm{TF}\Sigma_k$ and $\mathrm{TF}\Sigma_j$.
We will not go down that road here, but note that the
problems $\GI_k$ discussed in the next section can be thought
of as long chains of reductions of the first kind; and
see~\cite{thapen2011higher} for discussion of a particular case of the second question.

We collect together some $\TFS$ and TFNP problems which we will discuss below.

\begin{definition}~\label{def:some_TFS_problems}
\begin{enumerate}
\item
{\sc $\P^\NP$ computation} is the $\TFS$ problem:
given a machine $M$ with an oracle tape for $\NP$,
and a unary time $1^t$,  find a \emph{correct}
computation~$w$ of $M$ running for time $t$. 
Specifically, $w$ lists all queries to the oracle, with YES/NO replies; 
every YES reply must come with a witness that it is correct;
and the $\coNP$ assertion is that there is no witness that any NO reply is false.
\item
{\sc Least number} is the $\TFS$ problem:
given a circuit defining a set $S$ and a string
$x$ in $S$, find the lexicographically least string in~$S$.
That is, output some $y \in S$ with the $\coNP$ assertion
that there is no $z<y$ in~$S$.
\item
{\sc Empty} is the $\TFS$ problem:
given a circuit defining a function~$f$
from a set $[2^n-1]$ of pigeons to a set  
$[2^n]$ of holes, find an empty hole. That is,
find a hole $y$ such that, for all pigeons $x$,
$f(x) \neq y$.
\item
{\sc Weak pigeon} is the $\TFNP$ problem:
given a circuit defining a function $f$ from a set~$[2^{n}]$ of pigeons
to a set~$[2^{n-1}]$ of holes, find distinct pigeons $x, x'$
such that $f(x)=f(x')$.
\item
{\sc Dual weak pigeon} is the $\TFS$ problem:
given a circuit defining a function~$f$
from a set  $[2^n]$ of pigeons to a set  
$[2^{n+1}]$ of holes, find an empty hole.
\item
{\sc Retraction weak pigeon} is the $\TFNP$ problem:
given circuits defining functions $f$ from a set $[2^n]$
of pigeons to a set $[2^{n+1}]$ of holes, and $g$ mapping
in the other direction from holes to pigeons,
find a hole $y$ such that $f(g(y)) \neq y$. 
\end{enumerate} 
\end{definition}

The name {\sc Empty} is
from~\cite{kleinberg2021total}. 
Its weak version {\sc Dual weak pigeon} appears more in the literature,
and in particular was studied intensively, with {\sc Retraction weak pigeon},
in a series of papers formalizing approximate counting in bounded 
arithmetic~\cite{jerabek:dual-wphp, jerabek:apc1, jerabek:apc2}.
It is also sometimes called the \emph{surjective} 
or \emph{onto weak pigeonhole principle}, \emph{range avoidance}, or {\sc 1-Empty}
~\cite{ren2022range, kleinberg2021total}.
Notice that {\sc Retraction weak pigeon} becomes {\sc Left pigeon}
%(see Definition~\ref{def:standard_TFNP}) 
if we replace $[2^{n+1}]$ with $[2^n+1]$
and swap the names of $f$ and $g$.

Our first application of counterexample reduction adapts a result 
of~\cite{buss:axiomatizations}
connecting $\Sigma^b_1$-induction with~$\P^\NP$ computations.

\begin{theorem} \label{the:PNP_LNP_equivalent}
{\sc $\P^\NP$ computation} and {\sc Least Number}
are equivalent under counterexample reductions.
\end{theorem}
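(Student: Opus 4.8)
The plan is to prove the two directions of $\equiv_c$ separately.

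For $\textsc{Least Number} \le_c \textsc{$\P^\NP$ computation}$: given an instance $(C,x)$ of {\sc Least Number}, where $C$ defines a set $S \ni x$, we build a $\P^\NP$ machine $M$ that performs the usual binary search for the lexicographically least element of $S$. Concretely, $M$ maintains an interval $[a,b]$ known to contain a member of $S$ (starting with $a=0$, $b=x$), and at each step queries the $\NP$ oracle with ``does $S$ contain an element in $[a,m]$?'' (formalized as a satisfiability question for the circuit $C$ restricted to that range). A YES reply comes with a witnessing element of $S$ in the range, so $M$ can recurse into $[a,m]$; a NO reply lets $M$ recurse into $[m{+}1,b]$. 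After $n$ steps $M$ outputs the unique element $y$ it has pinned down. This runs in time $t = \poly(n)$, and the function $f$ sends $(C,x)$ to $(M,1^t)$. For the backward translations: given a solution $w$ of {\sc $\P^\NP$ computation} on input $f(C,x)$ — a correct computation of $M$ — the function $g$ reads off the output $y$ of that computation and returns it as the claimed answer to {\sc Least Number}. Given a counterexample $z<y$ with $z \in S$, the function $h$ must produce a counterexample to the correctness of $w$: since $z \in S$ and $z < y$, the binary search must at some point have received a NO reply to a query ``is there an element of $S$ in $[a,m]$?'' with $z \in [a,m]$; $h$ locates this query in $w$ (a polynomial-time scan) and outputs the position of that NO reply together with $z$, which is exactly a witness that a NO reply in $w$ was false. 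This establishes $R(f(u),y,h(u,y,w)) \to Q(u,g(u,y),w)$.

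For the reverse direction $\textsc{$\P^\NP$ computation} \le_c \textsc{Least Number}$: given $(M,1^t)$, the idea is to encode the entire computation-plus-oracle-answers object as a string and exhibit it as the lexicographically least element of a cleverly designed set $S$. Following the {\sc $\P^\NP$ computation}-style bookkeeping, a candidate ``computation'' is a string $w$ listing, step by step, the configurations of $M$ together with the $\NP$ queries made and a YES/NO tag for each, where every YES tag is accompanied by a purported witness. Let $S$ be the set of strings $w$ such that: $w$ syntactically encodes a run of $M$ for $t$ steps that is locally consistent with $M$'s transition function; every YES-tagged query in $w$ has a valid attached witness; and — crucially — among all NO-tagged queries, either there is none, or $w$ additionally contains a ``complaint'' slot naming one NO-tagged query and a witness refuting it. Membership in $S$ is polynomial-time checkable, so $S$ is given by a circuit. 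One must then show $S$ is nonempty and that its lexicographically least member is automatically a \emph{correct} computation (no false NO reply, and hence no complaint needed); this is arranged by ordering the encoding so that, given any $w \in S$ that does contain a valid complaint, one can produce a lexicographically smaller $w' \in S$ in which the complained-about NO is flipped to YES (inserting the refuting witness) and the rest of the computation is recomputed — a standard ``least counterexample'' style argument. The functions are then: $f$ maps $(M,1^t)$ to $(C_S, x_0)$ where $x_0$ is some easily-found element of $S$ (e.g.\ the honest computation where NO replies are believed and, if this produces an incorrect NO, the complaint slot is filled using the first such query — one needs $x_0\in S$ computable in polynomial time, which requires care, or alternatively one takes $x_0$ to be a maximal ``dummy'' string and argues nonemptiness differently); $g$ reads the computation out of the returned least element of $S$, stripping the (now necessarily empty) complaint slot; and $h$ turns a counterexample $z<g(\dots)$ with $z\in S$ into a counterexample to correctness of the output computation — again by locating a discrepancy between $z$ and the claimed-least $y$ and extracting the offending oracle answer.

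The main obstacle is the reverse direction, specifically arranging the encoding of computations so that \emph{lexicographic minimality enforces correctness}. The delicate points are: (i) ensuring $S$ is nonempty with a polynomial-time-findable witness $x_0$ to serve as the seed input to {\sc Least Number}; (ii) choosing the bit-layout of $w$ so that ``flip a false NO to YES and recompute the tail'' strictly decreases the string — this typically means putting the oracle-reply tags in the high-order bits, in query order, and the configurations/witnesses in low-order bits, so that changing an early NO to YES dominates any downstream change; and (iii) verifying that the complaint mechanism cannot be gamed — i.e.\ that the least element genuinely has an empty complaint slot rather than a spuriously-satisfied one — which follows if we insist that a filled complaint slot strictly costs high-order bits compared to the same computation with that NO already flipped. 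Once the encoding is pinned down, the polynomial-time translation functions $f,g,h$ and the verification of $R(f(u),y,h(u,y,w)) \to Q(u,g(u,y),w)$ are routine. This is essentially a counterexample-reducibility repackaging of the Buss--Krají\v{c}ek-style equivalence between $\Sigma^b_1$-induction (least-number principle) and $\P^\NP$ computation cited as~\cite{buss:axiomatizations}.
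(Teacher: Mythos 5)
Your first direction (reducing {\sc Least Number} to {\sc $\P^\NP$ computation} via binary search, and converting a smaller element $z$ of $S$ into a refutation of the NO reply whose interval contained $z$) is correct and is exactly the paper's argument. The problem is the second direction, where your construction is over-engineered in a way that introduces real gaps — gaps you yourself flag as ``delicate points'' but do not close.

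The key simplification you are missing is that counterexample reducibility does \emph{not} require the extremal element of $S$ to be provably, or even actually, a correct computation. You only need: given the Adversary's claimed extremal element $y$ and a witness $z$ that some NO reply in $y$ is false, produce in polynomial time an element of $S$ that beats $y$ in the ordering. This makes the complaint-slot machinery unnecessary, and as written it is also broken: your definition of $S$ requires every string containing a NO-tagged query to carry a valid complaint refuting one of its own NO answers, which excludes precisely the correct computations you want the least element to be; if instead the complaint is optional, membership in $S$ enforces nothing and the least element need not be correct — your point (iii) is then not a technicality but an unresolved hole. The paper's route: let $S$ be the set of \emph{precomputations} — runs of $M$ in which YES replies carry witnesses but NO replies are simply taken on faith — with the YES/NO bits placed first in the encoding. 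This $S$ is polynomial-time recognizable and nonempty, with the all-NO run as an explicitly computable seed (resolving your point (i)). Reduce to the \emph{greatest}-number principle (trivially interchangeable with {\sc Least Number} by complementation, which sidesteps your sign-of-the-encoding worries in point (ii)): if the Adversary claims $w$ is the greatest precomputation and someone exhibits a witness $z$ to a falsely-answered NO query in $w$, then $h$ outputs $w'$ = ($w$ up to that query, that NO flipped to YES with witness $z$, all later queries answered NO). Then $w' \in S$ and $w' > w$, which is the required counterexample to the Adversary's assertion. Note also that in your last sentence you have $h$ running in the wrong direction (from a smaller element of $S$ to a refuted oracle reply); for this direction of the equivalence $h$ must map a refuted NO reply to a more extreme element of $S$, as above.
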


\begin{proof}
One direction is simple. Given an instance $(C, x)$ of 
{\sc Least Number}, we define an $\NP$-oracle
machine which finds the least element $y$ of $S$ by binary
search and use this machine as the input to {\sc $\P^\NP$ computation}.
Any counterexample $z$ that $y$ is not least will contradict
one of the NO replies made by the machine, just as in the
proof of Proposition~\ref{pro:QBF1}.

For the other direction, let $(M,1^t)$ be our instance
of {\sc $\P^\NP$ computation}. 
We will show a reduction to the ``greatest number principle".
Consider the following kind of object, which we call
a \emph{precomputation}. 
We simulate $M$ for time $t$;
each time we encounter an NP oracle query, we can either write down YES, in which case we must also write a witness; or
we write NO (which may be incorrect); then we
carry on, using the bit we wrote as the oracle reply. 
We write all the YES/NO bits at the start of the string, and
may assume there exactly $t$ of them, and write any witnesses
at the end. Whether a string is a precomputation is decidable in polynomial time, 
and the set of precomputations is nonempty, since it contains
the string consisting simply of $t$ many NOs.

Now suppose an adversary gives us a string 
$w$ and asserts that it is the lexicographically
greatest precomputation. The only way that $w$ can
fail to be a correct solution to {\sc $\P^\NP$ computation}
is if there is a counterexample in the form of a witness $z$
to an NP query to which $w$ answered NO. Given such 
a~$z$, we construct
a new computation~$w'$ which is the same as~$w$ up to this
query, then flips this NO in~$w$ to~YES, giving $z$
as a witness; then answers every later query simply with~NO.
Then~$w'$ is a precomputation which is lexicographically greater
than $w$, so contradicts the adversary's assertion.
\end{proof}

Any TFNP problem $Q(u,v)$ can also be seen as a $\TFS$ problem in
which the counterexample variable $w$ is not used. 
Between two such problems, counterexample reducibility 
collapses to many-one reducibility.
We also get from this a natural definition of counterexample reducibility of a TFNP
problem to a $\TFS$ problem.\footnote{
See~\cite{kol2022approx} or Proposition~\ref{pro:APPROX_PLS} below for a similar, 
but more complicated, notion of ``$\PLS$ counterexample reducibility''. }

\begin{definition} \label{def:counterexample_TFNP}
A $\TFNP$ problem $Q(u,v)$ is \emph{counterexample reducible}
to a $\TFS$ problem $R(x,y,z)$, written $Q \le_c R$,
if there are polynomial time functions $f,g,h$ such that for all $u,y$ we have
\[
R(f(u), y, h(u,y)) \rightarrow Q(u, g(u,y)).
\]
\end{definition}

We show now that the set of TFNP classes counterexample reducible
to a $\TFS$ problem has a natural
complete problem (under usual many-one reducibility),
which is something like a projection of the $\TFS$ class
onto TFNP, and we will see that some well-known TFNP
problems arise in this way.
The next definition and lemma are
 this section's analogues of Definition~\ref{def:Gamma_TFNP} and 
Proposition~\ref{pro:solvable}.

\begin{definition}
Let $R(x,y,z)$ be a $\TFS$ problem. 
We define {\sc Checkable} $R$ to be the following $\TFNP$ problem:
given a pair $(x,C)$ of an input to $R$ and a circuit $C$,
find $y$ such that $R(x,y, C(y))$ holds.
\end{definition}

One could also call this a \emph{Herbrandized} 
version of the $\TFS$ problem; see for example~\cite{hanika2004herbrandizing}
or the discussion of the Herbrandized ordering principle in~\cite{bkt:fragments}.

\begin{lemma} \label{lem:counterexample_iff_checkable}
Let $R(x,y,z)$ be a $\TFS$ problem and let $Q(u,v)$ be a $\TFNP$ problem.
Then $Q \le_c R$ if and only if $Q \le {\textsc{Checkable}~} R$.
\end{lemma}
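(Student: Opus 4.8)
The plan is to prove both directions of the equivalence directly from the definitions, treating the circuit $C$ in an instance $(x,C)$ of $\textsc{Checkable}~R$ exactly as the ``counterexample-producing function'' $h$ that appears in Definition~\ref{def:counterexample_TFNP}, only now hard-wired into the instance rather than computed uniformly.

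First I would prove the left-to-right direction. Suppose $Q \le_c R$ via polynomial-time functions $f,g,h$, so that for all $u,y$ we have $R(f(u),y,h(u,y)) \rightarrow Q(u,g(u,y))$. Given an instance $u$ of $Q$, I map it to the instance $(f(u), C_u)$ of $\textsc{Checkable}~R$, where $C_u$ is the circuit computing $y \mapsto h(u,y)$; this circuit is constructible in polynomial time from $u$ since $h$ is polynomial-time. Now any solution $y$ to $(f(u),C_u)$ satisfies $R(f(u), y, C_u(y))$, i.e.\ $R(f(u), y, h(u,y))$, and hence by the reduction property $Q(u, g(u,y))$ holds. So $g(u, \cdot)$ applied to the solution recovers a solution to $Q$, giving $Q \le \textsc{Checkable}~R$.

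Conversely, suppose $Q \le \textsc{Checkable}~R$, say via polynomial-time functions $F$ and $G$, where $F(u) = (x_u, C_u)$ is an instance of $\textsc{Checkable}~R$ and, for any solution $y$ of it, $G(u,y)$ is a solution of $Q$. I then set $f(u) := x_u$, take $g(u,y) := G(u,y)$, and take $h(u,y) := C_u(y)$, which is polynomial-time computable since $C_u$ is produced in polynomial time by $F$ and can be evaluated in polynomial time. To check that these witness $Q \le_c R$: if $R(f(u), y, h(u,y))$ holds, i.e.\ $R(x_u, y, C_u(y))$ holds, then $y$ is by definition a solution to the $\textsc{Checkable}~R$ instance $(x_u,C_u) = F(u)$, so $G(u,y) = g(u,y)$ is a solution to $Q$, i.e.\ $Q(u,g(u,y))$. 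This establishes the implication in Definition~\ref{def:counterexample_TFNP}.

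I do not expect any real obstacle here; the lemma is essentially a bookkeeping unfolding of the two definitions, with the only mildly delicate point being to make sure the circuit $C$ in a $\textsc{Checkable}~R$ instance and the function $h$ in a counterexample reduction are genuinely interchangeable, which works because both are required only to be polynomial-time and the problem $\textsc{Checkable}~R$ already takes $C$ as explicit input (so there is no uniformity mismatch). One should note in passing that $\textsc{Checkable}~R$ is indeed in $\TFNP$: given $(x,C)$, the relation ``$R(x,y,C(y))$'' is polynomial-time in $(x,C,y)$ since $R$ is polynomial-time and $C$ can be evaluated in polynomial time, and totality follows because a genuine solution $y_0$ to $R$ on input $x$ (which exists by totality of $R$) satisfies $R(x,y_0,z)$ for all $z$, in particular for $z = C(y_0)$.
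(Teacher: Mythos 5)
Your proof is correct and follows essentially the same route as the paper: in one direction you hard-wire $h(u,\cdot)$ into a circuit $C_u$ and pass $(f(u),C_u)$ to $\textsc{Checkable}~R$, and in the other you read the circuit produced by the reduction back off as the counterexample function $h$. The added remark verifying that $\textsc{Checkable}~R$ is in $\TFNP$ is a sensible (if not strictly required) supplement.
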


\begin{proof}
First suppose $Q \le_c R$ with reduction functions $f,g,h$.
Given an input $u$ to $Q$, let $C$ be a circuit
computing the function $y \mapsto h(u,y)$ on inputs~$y$
of appropriate size (given by the bound on sizes of the solutions
to $R$ on input $f(u)$). We send $(f(u), C)$ as input to ${\textsc{Checkable}~}R$.
If $y$ is a solution to this, we have $R(f(u), y, C(y))$,
hence $R(f(u), y, h(u,y))$, hence by the reduction $Q(u, g(u,y))$.

The other direction is similar. Suppose the reduction 
$Q \le {\textsc{Checkable}~} R$ holds, that is,
we have functions $f_1, f_2, g$, where $f_2(u)$ is a circuit which we write as $C_u$, such that
$R(f_1(u), y, C_u(y)) \rightarrow Q(u, g(y))$. Then the triple $f_1,g,h$ 
is a counterexample reduction $Q \le_c R$,
where $h :  u,y \mapsto C_u(y)$.
\end{proof}

%Naturally we could explore this further, and discuss possible natural
%definitions of reductions between TF$\Sigma^p_i$ and TF$\Sigma^p_j$,
%for any $i$ and $j$. We avoid going down this road; see some
%relevant material in \dots where we consider the general notion
%of a reduction game and what happens when we have long chains of reductions.
%We make one observation. 

\begin{lemma} \label{lem:TFS_reduction_to_checkable}
Let $Q$ and $R$ be $\TFS$ problems with 
$Q \le_c R$. Then $\chk Q \le \chk R$.
\end{lemma}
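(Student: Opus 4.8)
The plan is to chase the counterexample reduction $Q \le_c R$ through the definition of {\sc Checkable}, combining it with the two reduction functions witnessing $Q \le_c R$ to produce a many-one reduction $\chk Q \le \chk R$. Let $Q(u,v,w)$ and $R(x,y,z)$ be the two $\TFS$ problems, and let $f,g,h$ be polynomial-time functions witnessing $Q \le_c R$, so that for all $u,y,w$,
\[
R(f(u), y, h(u,y,w)) \rightarrow Q(u, g(u,y), w).
\]
An instance of $\chk Q$ is a pair $(u, D)$ where $D$ is a circuit, and a solution is a $v$ with $Q(u,v,D(v))$. I want to turn this into an instance $(f(u), C)$ of $\chk R$, where $C$ is a circuit I construct from $u$ and $D$, such that any solution $y$ to the $\chk R$ instance yields a solution $v = g(u,y)$ to the original.

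First I would figure out what circuit $C$ to feed to $\chk R$. A solution $y$ to $(f(u),C)$ satisfies $R(f(u), y, C(y))$. To apply the counterexample reduction I want $C(y)$ to equal $h(u,y,w)$ for the "right" $w$; and the right $w$ should be the counterexample $D(g(u,y))$ that $\chk Q$ would test against the candidate solution $g(u,y)$. So I would set $C$ to be the circuit computing $y \mapsto h\big(u,\, y,\, D(g(u,y))\big)$, which is polynomial-time computable and hence has a polynomial-size circuit, constructible in polynomial time from $u$ and $D$ (here I use that $g$ and $h$ are polynomial-time and that $D$ is given as part of the input, so its encoding can be plugged into a universal-circuit construction). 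Then given a solution $y$ to $(f(u), C)$ in $\chk R$, we have $R(f(u), y, h(u, y, D(g(u,y))))$, and applying the displayed implication with $w := D(g(u,y))$ gives $Q(u, g(u,y), D(g(u,y)))$, i.e. $v := g(u,y)$ is a solution to $(u,D)$ in $\chk Q$. The reduction functions for $\chk Q \le \chk R$ are therefore $f_1 : (u,D) \mapsto (f(u), C)$ and $g_1 : (u,D), y \mapsto g(u,y)$, both polynomial-time.

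The one point needing a little care is making sure the input/output size bounds line up: the circuit $C$ must take inputs $y$ of exactly the size prescribed for solutions to $R$ on input $f(u)$, and produce outputs of the size prescribed for counterexamples; likewise $D$ takes inputs of the solution-size for $Q$ on input $u$. All of these are polynomially bounded and determined by the respective $\TFS$ specifications, so one pads or truncates as needed — this is the only mildly fiddly part, and I expect it to be the "main obstacle" only in the bookkeeping sense, not a genuine difficulty. I would also remark that this is exactly parallel to Lemma~\ref{lem:counterexample_iff_checkable}: indeed one can derive the statement more slickly by noting $Q \le_c R$ implies $Q \le \chk R$ (by that lemma, reading $Q$ also as a $\TFS$ problem in the trivial way when $Q$ is TFNP — but here $Q$ is genuinely $\TFS$), though for a general $\TFS$ problem $Q$ the direct construction above is the cleanest route and I would present it that way.
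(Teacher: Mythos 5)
Your construction is exactly the paper's: the paper also sends $(u,C)$ to $(f(u),D)$ where $D$ computes $y \mapsto h(u,y,C(g(u,y)))$ (with the roles of the letters $C$ and $D$ swapped relative to your notation), and your verification just spells out the "work through the definitions" step that the paper leaves implicit. Correct, and essentially identical in approach.
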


\begin{proof}
Given an input $(u,C)$ to {\sc Checkable} $Q$, construct
the input $(f(u), D)$ to {\sc Checkable} $R$, where
$D$ is the circuit mapping $y \mapsto h(u,y,C(g(u,y)))$, and work through the definitions. 
\end{proof}

Item 1 in the next theorem is implicit in~\cite{buss:axiomatizations, buss1994application}.
A form of item 3 is widely used in the bounded arithmetic literature 
for the low-complexity consequences of the
dual weak pigeonhole principle (see e.g.~\cite{thapen2002model, jerabek:apc2}).

\begin{theorem}~\nopagebreak \label{the:TFS_equivalences}
\begin{enumerate}\nopagebreak
\item\nopagebreak
$\chk${\sc $\P^\NP$ computation} and 
$\chk${\sc least number} 
are both equivalent to 
{\sc Localopt}, and so are complete for $\PLS$.
\item
$\chk${\sc empty} is 
equivalent to {\sc Left pigeon}, and so is
complete for $\PPADS$.
\item
$\chk${\sc dual weak pigeon}
is equivalent to {\sc Retraction weak pigeon}.
\end{enumerate}
\end{theorem}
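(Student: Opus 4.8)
The plan is to prove each of the three equivalences by exhibiting explicit many-one reductions in both directions, working directly from the definitions of the problems in Definitions~\ref{def:standard_TFNP} and~\ref{def:some_TFS_problems} and the definition of {\sc Checkable}~$R$. In each case the circuit $C$ supplied as part of a {\sc Checkable} instance plays the role of the adversarial counterexample function, and the point is that the $\coNP$ assertion ``$y$ is a solution'' is replaced by the single concrete check ``$R(x,y,C(y))$'', which is exactly the kind of local condition appearing in the target $\TFNP$ problems.

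First I would handle item~2, which is the cleanest and sets the pattern. An instance of $\chk${\sc empty} is a pair $(f, C)$ where $f : [2^n-1] \to [2^n]$ and $C$ maps a candidate hole $y$ to a pigeon $C(y)$; a solution is a hole $y$ with $f(C(y)) \neq y$, i.e.\ the claimed witnessing pigeon $C(y)$ does not in fact map to $y$. But this is exactly an instance of {\sc Retraction weak pigeon} / {\sc Left pigeon}: set $g := C$ (the backward map from holes to pigeons), and we are looking for $y$ with $f(g(y)) \neq y$, which is the {\sc Left pigeon} solution condition once we match up the index sets ($[2^n-1]$ versus $[2^n+1]$ is an irrelevant cosmetic difference, as the excerpt already notes for {\sc Retraction weak pigeon} vs {\sc Left pigeon}). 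So the two problems are literally the same problem up to renaming, giving $\equiv$ and hence completeness for $\PPADS$ by the known completeness of {\sc Left pigeon}. Item~3 is identical in spirit: $\chk${\sc dual weak pigeon} has instances $(f, C)$ with $f:[2^n]\to[2^{n+1}]$, and a solution is $y$ with $f(C(y))\neq y$ — precisely {\sc Retraction weak pigeon} with $g=C$. Both directions are immediate.

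For item~1, the two $\TFS$ problems {\sc $\P^\NP$ computation} and {\sc least number} are already known to be counterexample-equivalent (Theorem~\ref{the:PNP_LNP_equivalent}), so by Lemma~\ref{lem:TFS_reduction_to_checkable} their Herbrandized versions are $\equiv$; thus it suffices to show $\chk${\sc least number} $\equiv$ {\sc Localopt}. An instance of $\chk${\sc least number} is $(C_S, x, C)$ where $C_S$ defines a set $S \ni x$ and $C$ maps each $y$ to a claimed smaller element $C(y) \in S$; a solution is $y \in S$ such that $C(y)$ fails to witness $y$ is non-minimal, i.e.\ either $C(y) \notin S$ or $C(y) \geq y$. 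This is visibly a {\sc Localopt} instance: take the neighbourhood function $N := C$, take the same set $S$ (the condition $2^n-1 \in S$ is arranged by adjoining a top element as in the proof of Lemma~\ref{lem:lonely_to_Gamma}, or by taking $x$ itself to be the distinguished element), and a local optimum $a \in S$ with $N(a) \notin S$ or $N(a) \geq a$ is exactly our solution. Conversely, given a {\sc Localopt} instance $(S, N)$ on $[2^n]$ with $2^n-1 \in S$, feed $(C_S,\, 2^n-1,\, N)$ to $\chk${\sc least number}: a returned $y \in S$ with $N(y) \notin S$ or $N(y) \geq y$ is a local optimum. So $\chk${\sc least number} $\equiv$ {\sc Localopt}, which is complete for $\PLS$ by Definition~\ref{def:standard_TFNP}, and combining with the equivalence $\chk${\sc $\P^\NP$ computation} $\equiv \chk${\sc least number} finishes item~1.

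I do not expect a serious obstacle here: every direction is a syntactic matching of definitions plus, in item~1, one appeal to the already-proved Theorem~\ref{the:PNP_LNP_equivalent} via Lemma~\ref{lem:TFS_reduction_to_checkable}. The only points requiring a little care are (i) the bookkeeping around the distinguished element / top element so that the $\TFS$ problem's ``input $x \in S$'' and {\sc Localopt}'s ``$2^n-1 \in S$'' line up (handled by a trivial padding as in Lemma~\ref{lem:lonely_to_Gamma}), and (ii) confirming that the cosmetic differences in domain sizes between {\sc Empty}/{\sc dual weak pigeon} and {\sc Left pigeon}/{\sc Retraction weak pigeon} really are cosmetic, which the excerpt has already flagged. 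So the write-up is essentially three short ``unwind the definitions'' arguments, the $\PLS$ case being the longest because it routes through two equivalences.
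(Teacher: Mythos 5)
Your proposal is correct and follows essentially the same route as the paper: items 2 and 3 are handled by observing that the checkable problems are syntactically identical to {\sc Left pigeon} and {\sc Retraction weak pigeon} after renaming $f$, $g$, $C$ and swapping which set is called ``pigeons'', and item 1 unwinds $\chk${\sc least number} into {\sc Localopt} and then invokes Theorem~\ref{the:PNP_LNP_equivalent} together with Lemma~\ref{lem:TFS_reduction_to_checkable} for the $\P^\NP$ case. The only cosmetic differences are your extra remarks on the distinguished-element bookkeeping, which the paper leaves implicit.
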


\begin{proof}
For 1, $\chk${\sc least number} takes as input
a set $S$ (as a circuit), an element $x \in S$ and
a circuit $C$. A solution is $y \in S$ such that 
$C(y)$ is not a counterexample to the claim that $y$
is a minimal element of $S$; in other words, such that either 
$C(y) \notin S$ or $C(y) \ge y$. Thus the problem is 
just {\sc Localopt}.
The claim for $\chk${\sc $\P^\NP$ computation}
follows by Theorem~\ref{the:PNP_LNP_equivalent}
and Lemma~\ref{lem:TFS_reduction_to_checkable}.

For 2, $\chk${\sc empty} takes as input a function~$f  : [2^n-1] \rightarrow [2^n]$ (as a circuit), and a circuit~$C$.
A solution is $y \in [2^n]$ such that, if $C(y) \in [2^n-1]$, then $f(C(y)) \neq y$.
If we call $[2^n]$ the set of pigeons and $[2^n-1]$ the set of holes (that is, the opposite 
of how they are named in {\sc empty}), rename $f$ to $g$ and
rename $C$ to $f$, we see that this is {\sc Left pigeon}. 

Item 3 is similar.
\end{proof}

\begin{corollary} \label{cor:empty_separation}
Relative to some oracle,
there is no counterexample reduction in
either direction between {\sc $\P^\NP$ computation}
and {\sc Empty}.
\end{corollary}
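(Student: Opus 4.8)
The plan is to deduce the corollary from the known relativized incomparability of $\PLS$ and $\PPADS$, via the completeness results just established. The key observation is that, for any oracle $O$, a counterexample reduction {\sc $\P^\NP$ computation} $\le_c$ {\sc Empty} relative to $O$ would give, by Lemma~\ref{lem:TFS_reduction_to_checkable}, a many-one reduction $\chk${\sc $\P^\NP$ computation} $\le$ $\chk${\sc Empty}; by item~1 of Theorem~\ref{the:TFS_equivalences} the first problem is complete for $\PLS$, and by item~2 the second is equivalent to {\sc Left pigeon} and hence complete for $\PPADS$, so we would conclude $\PLS^O \subseteq \PPADS^O$. Symmetrically, a counterexample reduction {\sc Empty} $\le_c$ {\sc $\P^\NP$ computation} relative to $O$ would force $\PPADS^O \subseteq \PLS^O$.

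All the ingredients here relativize in the obvious way: the definitions of the $\TFS$ and $\TFNP$ problems involved, the notion of counterexample reducibility, Lemmas~\ref{lem:counterexample_iff_checkable} and~\ref{lem:TFS_reduction_to_checkable}, and the explicit reductions in the proof of Theorem~\ref{the:TFS_equivalences}, all go through verbatim once ``polynomial time'' is read as ``polynomial time with oracle $O$'' (with the $\NP$ queries of {\sc $\P^\NP$ computation} becoming $\NP^O$ queries). It therefore suffices to invoke the standard fact, recalled in Section~\ref{sec:preliminaries}, that relative to a suitable oracle $O$ the only inclusions among the classes of Definition~\ref{def:standard_TFNP} are the trivial ones, so in particular $\PLS^O \not\subseteq \PPADS^O$ and $\PPADS^O \not\subseteq \PLS^O$~\cite{beame1995relative, morioka2001classification, goos2022separations}. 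Relative to this $O$, by the previous paragraph neither counterexample reduction between {\sc $\P^\NP$ computation} and {\sc Empty} can exist, which is the claim.

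Since the argument is just an assembly of results already proved or cited, there is essentially no new mathematical content to produce. The one point needing a little care --- and the main thing I would verify explicitly --- is that relativization is harmless for the circuit-evaluation problems {\sc Empty}, {\sc Left pigeon} and {\sc Localopt}, whose instances are circuits that under relativization may carry $O$-gates; this is exactly the subtlety the paper flags and handles at the start of Section~\ref{sec:bounded}, and I expect it to be the only, and mild, obstacle.
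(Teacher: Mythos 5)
Your proposal is correct and follows essentially the same route as the paper: convert a hypothetical counterexample reduction into a many-one reduction between the checkable versions via Lemma~\ref{lem:TFS_reduction_to_checkable}, identify those with PLS- and PPADS-complete problems via Theorem~\ref{the:TFS_equivalences}, and invoke the known relativized separations in both directions, noting that relativization (oracle gates in circuits, relativized NP queries) is handled exactly as the paper does.
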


\begin{proof}
In the relativized setting we add tapes for an oracle
$B$ to every Turing machine and allow gates for $B$ in every circuit;
in particular, the $\NP$ queries in {\sc $\P^\NP$ computation},
and the circuit used in the definition of the checkable version of a problem,
can query~$B$.
Then everything in this section 
relativizes, so the corollary follows from known
relativized separations between PLS and 
PPADS~\cite{morioka2001classification, goos2022separations}
using Lemma~\ref{lem:TFS_reduction_to_checkable}.
\end{proof}

We showed in Proposition~\ref{pro:QBF1} that {\sc Localopt} $\le \loc\Gqbf^1$. 
We can now show that the converse direction also holds.

\begin{proposition} \label{pro:QBF1_from_PNP}
$\loc\Gqbf^1 \equiv$ {\sc Localopt} and thus $\loc\Gqbf^1$ 
is complete for $\PLS$.
\end{proposition}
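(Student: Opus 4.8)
The plan is to prove the remaining direction $\loc\Gqbf^1 \le \textsc{Localopt}$, since Proposition~\ref{pro:QBF1} already gave $\textsc{Localopt} \le \loc\Gqbf^1$, and $\PLS$-completeness then follows because $\textsc{Localopt}$ is $\PLS$-complete and $\loc\Gqbf^1$ is readily seen to lie in $\PLS$ (indeed it reduces to $\chk\textsc{$\P^\NP$ computation}$, which by Theorem~\ref{the:TFS_equivalences}(1) is complete for $\PLS$). The cleanest route is to factor the argument through the $\TFS$ machinery of Section~\ref{sec:counterexamples}: I would show $\loc\Gqbf^1 \le \chk\textsc{$\P^\NP$ computation}$ and then invoke Theorem~\ref{the:TFS_equivalences}(1). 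Concretely, a $\Gqbf^1$-Adversary is answering queries about $\Sigma^q_1$ and $\Pi^q_1$ closed QBFs; such a query is an $\NP$ (or $\coNP$) question about whether a quantifier-free circuit with a block of existential (resp.\ universal) quantifiers is satisfiable. So a $\Gqbf^1$ query can be simulated by a single query to an $\NP$ oracle, together with the bookkeeping that axiom~3 (quantifier-free QBFs evaluate correctly) is enforced locally and axioms~1 and~2 relate a $\Sigma^q_1$/$\Pi^q_1$ query to its one-variable restrictions, which after stripping one quantifier are again $\Sigma^q_1$ or become quantifier-free — either way answerable by the $\NP$ oracle or by direct evaluation.

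The key steps, in order, are as follows. First, given a protocol $(M,1^t)$ that is an input to $\loc\Gqbf^1$, I construct a $\P^\NP$ machine $M'$ that simulates $M$ step by step; whenever $M$ issues a query $A(F)$ for a closed $\Sigma^q_1$ QBF $F = \exists \vec x\, D$, the machine $M'$ asks its $\NP$ oracle ``is $D$ satisfiable?'', records the YES/NO answer (with a satisfying assignment as the witness on a YES), and feeds that bit back to $M$ as the Adversary's reply; $\Pi^q_1$ queries are handled dually by negating. Second, I have to argue that a \emph{correct} $\P^\NP$ computation of $M'$ — i.e.\ a solution to $\textsc{$\P^\NP$ computation}$, where every NO reply is genuinely unsatisfiable — yields, when read back through $M$, a run of $M$ in which the Adversary's replies are globally consistent with a genuine $\Gqbf^1$ oracle (namely TQBF restricted to $\Sigma^q_1 \cup \Pi^q_1$). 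Hence $M$'s output on this run cannot witness the failure of any axiom of $\Gqbf^1$, so it is a valid solution to $\loc\Gqbf^1$. Third, passing to the checkable versions via Lemma~\ref{lem:TFS_reduction_to_checkable} (or directly, since a solution to $\loc\Gqbf^1$ is obtained from a solution to $\chk\textsc{$\P^\NP$ computation}$ by a polynomial-time map that threads the checking circuit $C$ through the simulation), I conclude $\loc\Gqbf^1 \le \chk\textsc{$\P^\NP$ computation}$, and Theorem~\ref{the:TFS_equivalences}(1) finishes the job.

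The main obstacle I expect is the second step: matching up the \emph{local} notion of winning in the Inspector--Adversary game — the Inspector wins only by exhibiting a \emph{witnessed} failure of some specific axiom in the sparse oracle built from the replies so far — with the \emph{correctness} notion for a $\P^\NP$ computation, where a NO reply is bad precisely when someone supplies a counterexample (a satisfying assignment) refuting it. The subtlety is that $\Gqbf^1$ has three axiom schemes and the recursion in axioms~1 and~2 peels off one quantifier at a time, so a single $\Sigma^q_1$ query does not correspond to a single atomic $\NP$ fact but to a short tree of them; I need to check that a localized inconsistency among the Adversary's $\Gqbf^1$-replies (say, asserting $A(\exists x F) = \bot$ while $A(F_{\restrict x=0}) = \top$) can always be converted, in polynomial time, into either a violation of an $\NP$ NO-reply made by $M'$ (hence a counterexample in the $\TFS$ sense) or a directly-checkable arithmetical error — and conversely that the checking circuit $C$ supplied to $\chk\textsc{$\P^\NP$ computation}$ can be built to supply exactly the counterexamples the Inspector would need. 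This is the feasibly-witnessable content of ``$\Sigma^q_1$-evaluation is an $\NP$ oracle call,'' and it is essentially the $\TFS$-level restatement of the fact, implicit in~\cite{buss:axiomatizations, buss1994application}, that $\Sigma^b_1$-induction corresponds to $\P^\NP$; the bookkeeping is routine but needs to be done carefully to keep everything polynomial-time and to ensure the reduction outputs an honest solution to $\loc\Gqbf^1$ rather than merely a run.
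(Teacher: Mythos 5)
Your proposal is correct and follows essentially the same route as the paper: reduce $\loc\Gqbf^1$ to (the checkable version of) {\sc $\P^\NP$ computation} by turning each $\Sigma^q_1/\Pi^q_1$ query into a single $\NP$ oracle query, and then, exactly as you identify in your ``main obstacle'' paragraph, convert any witnessed failure of a $\Gqbf^1$ axiom in the run into a counterexample to one of the recorded NO replies (using the YES-witness from the offending sibling query). The paper organizes this as the same three-case analysis you describe, and then invokes Lemma~\ref{lem:counterexample_iff_checkable} and Theorem~\ref{the:TFS_equivalences}(1) just as you propose.
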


\begin{proof}
By Proposition~\ref{pro:QBF1} it is enough to show that $\loc\Gqbf^1 \le$ {\sc Localopt}.
We will show this by proving that $\loc\Gqbf^1$ is counterexample reducible
to $\P^\NP$ {\sc computation}, and appealing to Lemma~\ref{lem:counterexample_iff_checkable}
and item~1 of Theorem~\ref{the:TFS_equivalences}.

As our instance of $\loc\Gqbf^1$ we are given a protocol $(M,1^t)$
in which the Inspector makes queries to a
$\Gqbf^1$ adversary,
and the task is to find a run of this protocol on which 
the Inspector does not find any failure of the $\Gqbf^1$ axioms.
This naturally gives us an instance of 
$\P^\NP${\sc computation} in which we turn each query
``$A(F)$=?'' made by the protocol into corresponding NP query.
Recall that a solution $w$ to this instance is 
a computation of $M$, in which all YES oracle
replies come with a witness; and a counterexample to $w$
is any $z$ which witnesses that one of the NO replies was false.
We have three cases.

In the first case, the run of the protocol 
contained in $w$ does not witness any failure
 of $\Gqbf$. In this case we have a solution
to our initial instance of $\loc\Gqbf^1$.
Otherwise, the run contains some failure of $\Gqbf$, 
and we need to use this to find a counterexample.

There are two different kinds of failure.
In the second case, $w$ asserts that $A(\forall x F)=\top$
and, say, that $A(F_{\restrict x=0})=\bot$, for $F \in \Pi^q_1$
(or, dually, that $A(\exists xF) = \bot$ and $A(F_{\restrict x=0})=\top$,
for $F \in \Sigma^q_1$).
Two things must have happened in $w$:
we made the NP query ``is there an assignment to 
all quantifiers in $\forall x F$ which falsifies it"
and got the answer NO, that is, that $A(\forall x F) = \top$; 
and we made the NP query ``is there an assignment to 
all quantifiers in $F_{\restrict x=0}$ which falsifies it"
and got the answer YES, and recorded a witness to this in~$w$.
From this witness we can construct the required counterexample to 
the NO reply.

In the third case, the run contains a failure
in which we asserted $A(\exists x F) = \top$,
where $\exists x F$ is $\Sigma^q_1$,
and also both $A(F_{\restrict x=0})=\bot$
and $A(F_{\restrict x=1})=\bot$ (or the dual of these).
As in the previous case, these correspond respectively
to a YES reply to an NP query in $w$, with a witness, and two NO replies.
Using the witness we can construct a counterexample
to one of the NO replies.
\end{proof}

Finally we mention that, by 
starting with $\chk${\sc $\P^\NP$ computation} and
incorporating the checking function 
into the computation, we get the following  characterization of PLS,
similar to the Inspector-Adversary framework in Section~\ref{sec:SO_setup}:

\begin{proposition}
The following $\TFNP$ problem {\sc Locally-consistent $\P^\NP$ computation}
is complete for $\PLS$: given a $\P^\NP$ machine $M$, find a computation~$w$
of~$M$ such that the output of $w$  is  not a counterexample
to any NO reply recorded in~$w$.
\end{proposition}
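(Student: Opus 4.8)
The plan is to show that {\sc Locally-consistent $\P^\NP$ computation} is equivalent to $\chk${\sc $\P^\NP$ computation}, which by item~1 of Theorem~\ref{the:TFS_equivalences} is complete for $\PLS$. The two problems differ only superficially: in $\chk${\sc $\P^\NP$ computation} we are given a $\P^\NP$ machine $M$ together with a separate checking circuit $C$, and we must output a computation $w$ of $M$ whose output $y$ satisfies $R(x,y,C(y))$, that is, $C(y)$ is not a witness refuting any NO reply in $w$; whereas in {\sc Locally-consistent $\P^\NP$ computation} the ``checking'' is folded into the machine itself and we simply ask that $w$'s output not refute any NO reply recorded in $w$. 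The key observation is that in both cases the $\coNP$ part of the $\TFS$ problem {\sc $\P^\NP$ computation} has been Herbrandized in essentially the same way, by designating a single string (either $C(y)$ or a string computed inside the machine) as the candidate counterexample.

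First I would give the reduction from {\sc Locally-consistent $\P^\NP$ computation} to $\chk${\sc $\P^\NP$ computation}. Given a $\P^\NP$ machine $M$, I would let $C$ be the trivial circuit that, given an output $y$ of $M$'s computation, extracts from $y$ the string that $M$ itself would use as its counterexample candidate (this is part of the syntactic structure of the output $w$, so it is polynomial-time computable). Sending $(M, C)$ to $\chk${\sc $\P^\NP$ computation} and taking any solution $w$, the condition $R(x, w, C(w))$ says exactly that this designated string does not refute any NO reply in $w$, which is the solution condition for {\sc Locally-consistent $\P^\NP$ computation}. Conversely, given an instance $(M, C)$ of $\chk${\sc $\P^\NP$ computation}, I would build a new $\P^\NP$ machine $M'$ that first simulates $M$, obtains its computation $w$ and output $y$, then evaluates $C(y)$ and appends it to the computation as the designated counterexample string; $M'$ makes exactly the same NP queries as $M$, so a locally-consistent computation of $M'$ yields a computation $w$ of $M$ with $C(y)$ not refuting any NO reply, i.e.\ a solution to $\chk${\sc $\P^\NP$ computation}. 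Both directions are plainly polynomial-time, and totality is inherited.

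The main thing to be careful about — and the only real obstacle — is the bookkeeping around what counts as ``the output'' and ``a NO reply recorded in $w$''. In $\chk${\sc $\P^\NP$ computation} the relation $R$ from Definition~\ref{def:some_TFS_problems} already packages the YES-witnesses into $w$ and makes the $\coNP$ assertion be ``no $z$ refutes a NO reply''; here the checking circuit $C$ is applied to the \emph{solution} $y$, which in the $\TFS$ formulation is the whole computation string. So I should make sure that when I fold $C$ into $M'$ I feed it the right object and that the single designated counterexample produced by $M'$ lines up with the single counterexample slot that the $\chk$ construction provides (recall $C$ outputs one string, and a solution to $\chk R$ needs $R(x,y,C(y))$ for that one string). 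Once the correspondence between ``the string $M'$ writes down'' and ``$C(y)$'' is pinned down, the equivalence is immediate, and the completeness for $\PLS$ follows from Theorem~\ref{the:TFS_equivalences}(1) and transitivity of many-one reducibility. I would keep the write-up short, treating the two reductions as essentially unwinding the definitions.
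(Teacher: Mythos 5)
Your proposal is correct and is exactly the argument the paper intends: the paper gives no explicit proof, only the one-line remark that the problem is obtained from $\chk${\sc $\P^\NP$ computation} (complete for $\PLS$ by Theorem~\ref{the:TFS_equivalences}(1)) ``by incorporating the checking function into the computation,'' and your two reductions --- taking $C$ to be the output-extraction circuit in one direction, and folding $C$ into a machine $M'$ that appends $C(w)$ as its output in the other --- are precisely that unwinding, with the $C(y)$-versus-$C(w)$ bookkeeping correctly identified and resolved.
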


%%%%%%%%%%%%%%%%%%%%%%%%%%%%%%%%%%%%%%%%%%%%%%
\section{Ramsey and approximate counting} \label{sec:approx_counting}
%%%%%%%%%%%%%%%%%%%%%%%%%%%%%%%%%%%%%%%%%%%%%%

As discussed in the introduction, 
in Section~\ref{sec:approx}
we give a simplified definition of the approximate-counting TFNP class APPROX
from~\cite{kol2022approx}, using the machinery of counterexample reductions.
In Sections~\ref{sec:long_choice} and~\ref{sec:short_choice}
we give a direct reduction of {\sc Ramsey} to it, and compare it with weakened
versions of the counting problems {\sc Long choice} and {\sc Short choice}
from~\cite{pasarkar2023extremal}.
In Section~\ref{sec:direct_rWPHP_to_PH} we give an illustrative
example of how a very weak form of approximate counting
(namely {\sc Retraction weak pigeon}) lies inside the
TFNP subclasses arising from the polynomial hierarchy.

\subsection{A new definition of APPROX} \label{sec:approx}

The theory $\APC_2$ of~\cite{jerabek:apc2}
can be defined as the theory $T^1_2$ together with a
sentence expressing that the following $\TFS$ problem is total:

\begin{definition}
{\sc $\P^\NP$ retraction weak pigeon},
or {\sc RWPP$_2$}, is the $\TFS$ problem:
given descriptions of $\P^\NP$ machines for functions $F$ and $G$,
where $F$ maps pigeons  $[2^{n}]$
to holes $[2^{n+1}]$ and $G$ maps holes
back to pigeons, find a hole $y \in [2^{n+1}]$ and a 
computation~$w$
showing $F(G(y)) \neq y$.
\end{definition}

Precisely, a solution $(y,w)$ records a run $w$ of the combined $\P^\NP$ machine
that first runs $G$ on $y$, then runs~$F$ on the output
of $G$, and is such that in~$w$ the final output of $F$ is different from $y$.
It records it in the sense of the problem $\P^\NP$ {\sc computation}
in Definition~\ref{def:some_TFS_problems}, listing
witnesses to YES replies and asserting that 
there is no counterexample to any NO reply.

$\APPROX$ is defined in~\cite{kol2022approx} as the class
of $\TFNP$ problems $Q(u,v)$ which are \emph{$\PLS$ counterexample 
reducible} to $\RWPP_2$. This
notion of reducibility is similar to counterexample reducibility
in Definition~\ref{def:counterexample_TFNP} above, 
with the difference
that the function~$g$ taking a solution $(y,w)$ of
$\RWPP_2$ 
to a solution $v$ of $Q$ is
 computed by solving a PLS problem, rather than just by a polynomial-time algorithm.

\begin{theorem} [\cite{kol2022approx}]
$\APPROX$ contains precisely the $\TFNP$ search problems
which are provably total in~$\APC_2$, closed under
many-one reducibility.
\end{theorem}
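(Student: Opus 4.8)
The plan is to prove the two inclusions separately, reducing everything to the combinatorics of counterexample reducibility established above, together with the characterization of $\APC_2$ via the dual weak pigeonhole principle from \cite{jerabek:apc2}. Throughout, recall that $\APPROX$ is defined as the class of TFNP problems that are $\PLS$ counterexample reducible to $\RWPP_2$, and that $\APC_2 = T^1_2 + \text{``}\RWPP_2\text{ is total''}$. The rough shape of the argument mirrors Lemma~\ref{lem:Gamma_over_PV}: a TFNP problem is provably total in a theory of this form exactly when it admits the corresponding feasible reduction.

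For the direction ``$\APPROX \subseteq$ provably total in $\APC_2$'': suppose $Q \le_c^{\PLS} \RWPP_2$ via reduction functions $f, g, h$, where $g$ is computed by solving a $\PLS$ instance. First I would argue that $\APC_2$ proves the totality of $\RWPP_2$ (immediate from the definition of the theory) and of every $\PLS$ problem (this is the classical Buss--Kraj\'\i\v{c}ek--type fact that $T^1_2$, hence $\APC_2$, proves $\PLS$ problems total \cite{buss1994application}). Then, working in $\APC_2$, given an input $u$ to $Q$: form $f(u)$, apply totality of $\RWPP_2$ to get a solution $(y,w)$, feed the relevant $\PLS$ instance to the $\PLS$ totality principle to compute $v = g(u,y,w)$, and verify using the counterexample-reduction property that $Q(u,v)$ holds --- the verification is a provable-in-$\PV$ consequence of $\RWPP_2(f(u),y,h(u,y,v),w)$, which itself follows because $h(u,y,v)$ is not a genuine counterexample (if it were, $(y,w)$ would not have been a solution). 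Since all of this bookkeeping is $\Sigma^b_1$-definable and the needed facts are universal $L_{\PV}$-consequences, $\APC_2 \vdash \forall u \exists v\, Q(u,v)$. Closure under many-one reducibility on the $\APPROX$ side matches the fact that provable totality is closed under $\le$ once we adjoin all true universal $L_{\PV}$ sentences (the footnote after Definition~\ref{def:Gamma_TFNP}).

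For the converse, ``provably total in $\APC_2$ $\Rightarrow$ $\APPROX$'': suppose $\APC_2 \vdash \forall u \exists v\, Q(u,v)$ for a TFNP problem $Q$. The standard witnessing machinery for $T^1_2$ augmented by a total $\TFS$ axiom --- this is exactly the content of Je\v{r}\'abek's analysis of $\APC_2$ in \cite{jerabek:apc2}, or can be obtained by a Buss-style witnessing argument --- tells us that $v$ can be computed from $u$ by a $\P^\NP$-style (equivalently, $\PLS$-style) computation that makes a single ``oracle call'' to an instance of $\RWPP_2$, whose input is computed from $u$ in polynomial time, and where a counterexample to the returned solution can be mapped back, in polynomial time, to a counterexample to $Q$'s output. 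Unwinding the definitions, this is precisely a $\PLS$ counterexample reduction $Q \le_c^{\PLS} \RWPP_2$: the $\PLS$ part absorbs the $T^1_2$-reasoning, the single query corresponds to $f$, and the counterexample-translation is $h$. Hence $Q \in \APPROX$.

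The main obstacle is the converse direction, specifically extracting the clean single-query $\PLS$ counterexample reduction from a proof in $\APC_2$. A proof may use the totality of $\RWPP_2$ many times, and may interleave these uses with $\Sigma^b_1$-induction; one must collapse the multiple uses into a single one (using that $\RWPP_2$ is closed under a form of ``product'', or that iterated weak pigeon reduces to weak pigeon with a polynomial blow-up of the gap, which is exactly the point of working with $\RWPP_2$ rather than ordinary pigeon) and push the induction into the $\PLS$ component. This is routine in bounded arithmetic but genuinely requires the Herbrand/witnessing analysis; I would cite \cite{jerabek:apc2} and \cite{kol2022approx} for the technical core rather than reproving it, and devote the written proof mainly to checking that Je\v{r}\'abek's ``approximate counting'' apparatus yields exactly the reducibility notion in Definition~\ref{def:counterexample_TFNP} as upgraded to the $\PLS$ setting, which is the genuinely new bookkeeping here.
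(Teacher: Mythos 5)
The paper offers no proof of this theorem at all: it is imported verbatim from \cite{kol2022approx}, as the citation in the theorem header indicates, and the surrounding text only uses it as a black box (e.g.\ in Propositions~\ref{pro:approx_list} and~\ref{pro:approx_in_Gqbf3}). Your sketch is a correct reconstruction of the standard argument --- the easy direction formalizes the $\PLS$ counterexample reduction inside $\APC_2$ using totality of $\RWPP_2$ and of $\PLS$ in $T^1_2$, and the hard direction is the witnessing analysis, with the collapse of multiple weak-pigeonhole applications via amplification correctly identified as the crux --- and, like the paper, you delegate that technical core to \cite{jerabek:apc2} and \cite{kol2022approx}, so this is essentially the same (citation-level) treatment.
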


We simplify the definition by showing that counterexample reducibility by
itself is enough.

\begin{proposition} \label{pro:APPROX_PLS}
A $\TFNP$ problem is in  $\APPROX$ if and only 
if it is counterexample reducible to $\RWPP_2$.
Thus $\chk\RWPP_2$ is complete for $\APPROX$.
\end{proposition}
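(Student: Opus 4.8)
The plan is to show the two notions of reducibility to $\RWPP_2$ coincide on TFNP problems, from which the completeness of $\chk\RWPP_2$ follows immediately by Lemma~\ref{lem:counterexample_iff_checkable}. One direction is trivial: any counterexample reduction is a special case of a $\PLS$ counterexample reduction, taking the PLS problem used to compute $g$ to be a dummy that immediately outputs its intended value; so $\APPROX$ contains everything counterexample reducible to $\RWPP_2$. The substance is the converse: if $Q(u,v)$ is $\PLS$ counterexample reducible to $\RWPP_2$, then $Q$ is (plainly) counterexample reducible to $\RWPP_2$. The key observation is that $\RWPP_2$ is itself a $\P^\NP$-style problem whose solutions carry certified computations, and that {\sc Localopt}, being complete for $\PLS$, is counterexample reducible to {\sc $\P^\NP$ computation} by item~1 of Theorem~\ref{the:TFS_equivalences} (and Proposition~\ref{pro:QBF1_from_PNP}). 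So the strategy is to \emph{fold} the PLS computation that produces $g$ into a single combined $\TFS$ problem, and then show that combined problem is counterexample reducible back to $\RWPP_2$.

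Concretely, I would proceed as follows. First, unwind the definition of $\PLS$ counterexample reducibility: we have polynomial-time $f$, $h$, and a polynomial-time description $u \mapsto P_u$ of a {\sc Localopt} instance, together with a polynomial-time map taking $u$, a solution $(y,w)$ of $\RWPP_2$ on $f(u)$, and a solution of $P_{u,y,w}$, to a value $v$ with $Q(u,v)$; moreover a counterexample to ``$v$ solves $Q$'' yields (via $h$) a counterexample to $(y,w)$ solving $\RWPP_2$ on $f(u)$. Second, I would use the fact (Theorem~\ref{the:TFS_equivalences}(1) together with Theorem~\ref{the:PNP_LNP_equivalent}) that {\sc Localopt} $\le_c$ {\sc $\P^\NP$ computation}, so that the PLS search for $v$ can be replaced by a single $\P^\NP$ computation whose correct solutions (certified computations with witnessed YES-replies, no false NO-replies) directly yield $v$, and whose counterexamples feed back through the chain. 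Third — the heart of the argument — I would argue that {\sc $\P^\NP$ computation} is counterexample reducible to $\RWPP_2$. This is essentially the statement that $\APC_2 \supseteq T^1_2$, i.e. that a $\P^\NP$ computation can be obtained from an instance of the dual weak pigeonhole principle for $\P^\NP$ machines; the reduction is the familiar trick of using $G,F$ with $G$ trivial (or $F,G$ encoding the computation's query structure) so that an empty hole / retraction failure pins down a correct oracle-answer certificate. Having established {\sc $\P^\NP$ computation} $\le_c \RWPP_2$, I compose: $Q \le_c$ (combined $\P^\NP$ computation) $\le_c \RWPP_2$, using transitivity of $\le_c$ (Definition~\ref{def:tfs2_tfs2}, extended to the TFNP-to-$\TFS$ case in the obvious way), to conclude $Q \le_c \RWPP_2$. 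Finally, Lemma~\ref{lem:counterexample_iff_checkable} gives $Q \le \chk\RWPP_2$, and since $\chk\RWPP_2 \in \TFNP$ and is itself trivially $\le_c \RWPP_2$ hence in $\APPROX$, it is complete.

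The main obstacle I expect is the third step: showing cleanly that {\sc $\P^\NP$ computation} $\le_c \RWPP_2$, or equivalently that the PLS-strength ``post-processing'' in the original definition of $\APPROX$ can be absorbed into $\RWPP_2$ without extra power. The delicate point is bookkeeping: a solution to $\RWPP_2$ is a witnessed $\P^\NP$ computation of the \emph{combined $F{\circ}G$} machine with a coNP assertion (no counterexample to any NO-reply), and I need to set up the $F,G$ so that (a) every correct such computation encodes both a {\sc Localopt} local optimum for $P_u$ \emph{and} enough certified $\NP$-oracle data to recover $v$ with $Q(u,v)$, and (b) a counterexample $z$ to $v$ propagates backward through $h$ and the $\P^\NP$-computation-to-$\RWPP_2$ reduction to an honest counterexample $h'(u,y,w,z)$ to $(y,w)$. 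Most of this is routine once the right combined machine is written down — it is exactly the kind of ``feasibly witnessable'' repackaging done in Theorems~\ref{the:PNP_LNP_equivalent} and~\ref{the:TFS_equivalences} — but getting the quantifier structure and the direction of the counterexample propagation exactly right, and confirming it is genuinely polynomial-time and does not secretly need an extra $\P^\NP$ call, is where care is needed. I would lean on the remark in the excerpt that item~3 of Theorem~\ref{the:TFS_equivalences} is ``widely used in the bounded arithmetic literature'' for precisely this kind of low-complexity consequence of the dual weak pigeonhole principle, and cite \cite{jerabek:apc2, thapen2002model} for the underlying construction.
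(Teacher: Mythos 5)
The easy direction (a counterexample reduction is a degenerate $\PLS$ counterexample reduction) and the final completeness step via Lemma~\ref{lem:counterexample_iff_checkable} match the paper. The hard direction, however, is organized around a composition that does not go through. You propose $Q \le_c (\text{combined } \P^\NP \text{ computation}) \le_c \RWPP_2$, but the first link cannot be literally correct: if $Q$ were counterexample reducible to a pure {\sc $\P^\NP$ computation} problem it would lie in $\PLS$, whereas the original reduction's target $(F,G)$ is an $\RWPP_2$ instance whose solution is guaranteed only by the weak pigeonhole principle, not by the mere existence of a computation. What you actually have after unwinding the definition is a \emph{two-stage adaptive} procedure --- first solve the given $\RWPP_2$ instance to get $(y,w)$, then solve a $\PLS$ (equivalently $\P^\NP$-computation) problem whose instance depends on $(y,w)$ --- and counterexample reducibility is a many-one notion with no mechanism for chaining two adaptive $\TFS$ calls. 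Establishing your step three, ${\textsc{$\P^\NP$ computation}} \le_c \RWPP_2$ (which the paper never proves directly, and which does not follow from ``$\APC_2 \supseteq T^1_2$'', an inclusion that holds by definition), would give you a \emph{second} $\RWPP_2$ instance, and you would still face the unaddressed problem of merging the two into one.

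The paper's proof avoids all of this with a single local modification: since $G$ is already a $\P^\NP$ machine, replace it by a machine $G'$ that on input $y$ first simulates $F(G(y))$, recording the computation $w$, then runs the binary search that solves the $\PLS$ problem $P$ on $(y,w)$ to obtain $s$, and only then outputs. The pair $(F,G')$ is a single $\RWPP_2$ instance on the same pigeons and holes; a solution $(y,w')$ contains $w$ and $s$, the fact that $s$ solves $P$ on $(y,w)$ is a polynomial-time property needing no $\NP$ query, and any counterexample produced by the $\PLS$ post-processing is a witness against a NO reply that is already recorded inside $w'$. Your closing paragraph (``set up $F,G$ so that every correct computation encodes a {\sc Localopt} local optimum for $P_u$'') is in fact gesturing at exactly this inlining, so the fix is to drop the composition scaffolding and carry out that inlining as the whole proof.
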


\begin{proof}
Clearly counterexample reducibility implies PLS
counterexample reducibility, so it is enough to show the other
direction.
Suppose $Q(u,v)$ PLS counterexample reduces to an
instance $(F,G)$ of
$\RWPP_2$, so there is a PLS problem
$P((y,w), s)$ such that, for any solution $(y,w)$
of this instance of $\RWPP_2$ and any solution~$s$ of $P$ on input~$(y,w)$,
in polynomial time we can construct from~$s$
either a solution $v$ of $Q$ or a counterexample~$z$
to $(y,w)$. Let $G'$ be a $\P^\NP$ machine that does the following,
on input~$y$.
First it simulates the computation of $F(G(y))$, 
getting some output $y'$ and recording the computation
as a string $w$; then it uses binary search to solve
the PLS problem $P$ on input $(y,w)$, getting some solution~$s$;
%it confirms that $P((y,w), z)$ holds;
then it outputs simply $y'$. 
We claim that $Q(u,v)$  counterexample reduces to the
instance $(F,G')$ of $\RWPP_2$, which is the original
instance with $G$ replaced by $G'$.

Let $(y,w')$ be a solution of this new instance. We
must construct in polynomial time either a solution $v$
of $Q$ or a counterexample to some NO reply in $w'$.
We recover~$w$ and $s$ (as described in the definition of $G'$ above) from $w'$. 
It must be that~$s$ is the solution to $P$ on input $(y,w)$, since
this is a polynomial-time property that does not involve
any NP query. Therefore we can compute
from $s$ either a solution of $Q$ or a counterexample
to a NO reply in $w$; but such a NO reply is also present
in $w'$, since $w$ is a record of a simulation carried out in~$w'$.
\end{proof}

We now observe that APPROX ``works'' as a class, in that it
contains the counting-related problems which we would expect.

\begin{definition} ~ 
\begin{enumerate} 
\item
{\sc Ramsey} is the $\TFNP$ problem:
given a circuit defining a graph on~$[2^n]$, find 
either a clique or independent set of size
at least~$n/2$.
\item
{\sc Tournament} is the $\TFS$ problem:
given a circuit defining a tournament on~$[2^n]$, find a dominating set
of size at most~$n$.
\item
{\sc Min} is the $\TFS$ problem:
given a circuit defining a binary relation on~$[2^n]$, find either a witness that it
is not a total order, or  a least element.
\end{enumerate}
\end{definition}

The name {\sc Min} is from~\cite{chiari1998witnessing}.
The problem {\sc Checkable min} is similar to the TFNP
problem {\sc Herbrandized ordering principle} (HOP) studied
in~\cite{bkt:fragments}.

\begin{proposition}[\cite{pudlak1990ramsey,jerabek:apc2, bkt:fragments}] \label{pro:approx_list}
{\sc Ramsey}, {\sc Checkable tournament}, {\sc Checkable Min},
{\sc Localopt} and {\sc Weak pigeon} are in $\APPROX$.
\end{proposition}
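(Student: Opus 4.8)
By Proposition~\ref{pro:APPROX_PLS} it is enough, for each problem, to give a counterexample reduction to $\RWPP_2$; and for the two checkable problems it suffices, by Lemma~\ref{lem:TFS_reduction_to_checkable}, to give a counterexample reduction of the underlying $\TFS$ problem ({\sc Tournament}, {\sc Min}) to $\RWPP_2$. Since $\RWPP_2$ is itself a dual-weak-pigeonhole problem phrased with $\P^\NP$ machines, most of these reductions are short; the new ingredient is the reduction for {\sc Ramsey}.

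For {\sc Localopt}, and hence all of $\PLS$: the plan is to first show {\sc $\P^\NP$ computation} $\le_c \RWPP_2$ as $\TFS$ problems. Given $(M,1^t)$, take the $\RWPP_2$ instance with $G(y) = y \bmod 2^n$ (for $n$ large) and $F$ the $\P^\NP$ machine that, on input $h$, simulates $M$ for $t$ steps, recording the run and attaching witnesses to its YES oracle replies, and then outputs $h{+}1$. Then $F(G(y)) = (y \bmod 2^n){+}1 \ne y$, so every $y$ paired with a correct computation $w$ is a solution, and the recorded run $w_M$ of $M$ inside $w$ solves {\sc $\P^\NP$ computation} unless some NO reply in $w_M$ is false; since the NO replies of $w_M$ also occur in $w$, any witness $z$ refuting such a reply is a counterexample to $w$. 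So the reduction functions are $g(\cdot,(y,w)) = w_M$ and $h(\cdot,(y,w),z) = z$. Lemma~\ref{lem:TFS_reduction_to_checkable} and item~1 of Theorem~\ref{the:TFS_equivalences} then give {\sc Localopt} $\equiv \chk${\sc $\P^\NP$ computation} $\le \chk\RWPP_2$, so {\sc Localopt} $\in \APPROX$ and $\PLS \subseteq \APPROX$.

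For {\sc Weak pigeon}: given $\phi:[2^N]\to[2^{N-1}]$, build the $\RWPP_2$ instance (with $n = N{-}1$) whose $G$ is $\phi$ and whose $F$, on a hole $h$, first asks whether $h$ has a $\phi$-preimage at all, and if so runs a self-checking binary search that always keeps a \emph{certified} $\phi$-preimage of $h$ (updating it only against witnessed YES replies), finally outputting that certified preimage; on a NO answer to the first query $F$ outputs $0$. For any solution $(y,w)$, the output $c^* = F(\phi(y))$ is then a genuine $\phi$-preimage of $\phi(y)$ with $c^* \ne y$, so $(c^*,y)$ is a collision --- unless the first query was answered NO, in which case $y$ itself, being a preimage of $\phi(y)$, refutes that answer; this is precisely the counterexample we plant in the checkable version, which drives the solver out of that branch. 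So {\sc Weak pigeon} $\le_c \RWPP_2$. (This is essentially the observation that the dual weak pigeonhole principle for $\P^\NP$ functions implies the ordinary one for polynomial-time functions; it also follows from~\cite{jerabek:apc2}.) For {\sc Checkable tournament} and {\sc Checkable Min} I would appeal to~\cite{jerabek:apc2} and~\cite{bkt:fragments}: the tournament principle is proved in $\APC_2$ by repeatedly passing to the out-neighbourhood of a vertex whose out-degree is at least half of the current live set --- a chain of approximate-counting steps that translates into a counterexample reduction of {\sc Tournament} to $\RWPP_2$ in the same style as the {\sc Ramsey} argument below --- while {\sc Checkable Min} is a Herbrandized ordering principle, placed inside $\APPROX$ in~\cite{bkt:fragments}.

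The substantial case is {\sc Ramsey}, where the plan is to realize the textbook greedy bound $R(k,k) \le 4^k$ directly as a single counterexample reduction to $\RWPP_2$. Given a graph on $[2^n]$, one builds a sequence $v_1, v_2, \dots$ by setting $S_1 = [2^n]$, $v_i = \min S_i$, splitting $S_i \setminus \{v_i\}$ into the neighbours and the non-neighbours of $v_i$, letting $S_{i+1}$ be whichever part has size at least $\tfrac12(|S_i|-1)$, and recording the direction $d_i \in \{\mathrm{up},\mathrm{down}\}$; since the live set loses at most half its size each step it is still non-empty after $n - O(1)$ steps, so the $v_i$ form a set whose edges are all dictated by the $d_i$, and the majority direction yields a clique or independent set of size $\ge n/2$. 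Every step is elementary except ``one of the two parts has size $\ge \tfrac12(|S_i|-1)$'', which is a single instance of the dual weak pigeonhole principle, so the plan is to hard-wire the whole $n$-step recursion into $\P^\NP$ machines $F,G$ in such a way that these comparisons become queries whose failure exhibits an alleged injection of the larger part back into the smaller one, to be refuted by two elements with a common image --- a $\RWPP_2$-style failure. A solution to the resulting instance then either certifies all the cardinality comparisons along the run, so the recursion goes through and $g$ reads off a homogeneous set, or pinpoints one step where a comparison fails, which is the counterexample that $h$ returns. I expect this encoding to be the main obstacle: compressing the $n$ successive approximate-counting steps into one $\RWPP_2$ instance and decoding a solution cleanly, while ensuring (i) that the small additive slack in each ``larger part'' step does not exhaust the $n - O(1)$ steps available --- comfortable, since $2^n$ dwarfs $n$ --- and (ii) that a dishonest solution is turned, in polynomial time, into a concrete witness pinning down exactly which cardinality claim was violated, which is why the comparisons must be posed in retraction-pigeonhole form rather than as bare size statements. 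Once the bookkeeping for a single approximate-counting step is set up --- essentially the content already used for {\sc Weak pigeon} --- iterating it along the greedy construction should be routine.
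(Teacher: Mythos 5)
Your treatment of \textsc{Localopt} and \textsc{Weak pigeon} is sound. For \textsc{Localopt} you take a different route from the paper: you embed the simulation of the $\P^\NP$ machine into the function $F$ of an $\RWPP_2$ instance whose retraction property fails everywhere (so every correct run is a solution), which is the same trick the paper uses inside the proof of Proposition~\ref{pro:APPROX_PLS}; the paper instead goes \textsc{Localopt} $\equiv \chk$\textsc{Least number} $\le_c$ \textsc{Min} $\le_c$ \textsc{Tournament} and then leans on the $\APC_2$ formalization of the tournament principle. Your route is arguably more self-contained. For \textsc{Weak pigeon} your construction ($G=f$, $F$ a certified binary search for a preimage) is exactly the paper's, and no amplification is needed. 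For \textsc{Checkable tournament} and \textsc{Checkable Min} your appeal to \cite{jerabek:apc2, bkt:fragments} matches what the paper does (it explicitly states that for \textsc{Tournament} only the indirect argument through $\APC_2$ is known, and reduces \textsc{Min} to \textsc{Tournament}).

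The genuine gap is \textsc{Ramsey}, which is the one case the paper singles out as its contribution (a direct combinatorial reduction rather than an appeal to logic). Your plan --- maintain a live set $S_i$ that halves at each step and certify each comparison ``one of the two parts has size $\ge\tfrac12(|S_i|-1)$'' as a separate retraction-pigeonhole instance, then somehow compress $n$ such instances into one $\RWPP_2$ query --- is exactly the part you flag as ``the main obstacle'' and do not resolve, and it does not resolve routinely: $\RWPP_2$ hands you a \emph{single} pair $(F,G)$ and a \emph{single} failure, whereas accumulating $n$ approximate cardinality claims with controlled error is precisely the nontrivial content of $\APC_2$. The missing idea is that one should not count at all. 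In the greedy construction the sequence of branch directions $b_0,\dots,b_{m-1}$ \emph{compresses} the current vertex into a short bit string, and the construction can be run in reverse: a $\P^\NP$ machine $F$ maps a bit string $\vec b$ of length $\le n-1$ to the last element of the \emph{canonical} (lexicographically greedy) $\vec b$-sequence, and a $\P^\NP$ machine $G$ maps a vertex $y$ back to the bits determined by where $y$ falls at each step. These form a retraction between $\sim 2^n-1$ pigeons and $2^{n+1}$ holes, so one application of $\RWPP_2$ to $(F,G)$ either yields a long homogeneous-by-levels sequence (from which the majority colour gives the clique or independent set) or a counterexample to some NO reply in the binary searches. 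This is how the paper proceeds (via the intermediate problem \textsc{Weak long choice}, Lemma~\ref{lem:ramsey_to_wlc} and Proposition~\ref{pro:wlc_to_rwpp2}); without this compression/decompression pair your reduction for \textsc{Ramsey} is not actually constructed.
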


\begin{proof}
For {\sc Ramsey} we prove this below by an explicit reduction via the problem 
{\sc Weak long choice} in Lemma~\ref{lem:ramsey_to_wlc} and Proposition~\ref{pro:wlc_to_rwpp2},
adapting an argument from~\cite{pudlak1990ramsey}. 
For {\sc Checkable tournament} we only know an indirect proof through bounded
arithmetic: it is shown in~\cite{jerabek:apc2} that the standard counting proof
of the tournament principle is formalizable in the theory $\APC_2$ 
(similarly for the Ramsey theorem), and the reduction 
follows by Proposition~\ref{pro:APPROX_PLS} above.

For {\sc Checkable Min} we sketch a counterexample reduction from 
{\sc Min} to {\sc Tournament}. Given a purported total order $R(x,y)$,
define a tournament by: $x$ beats~$y$ if and only if~$R(x,y)$. Let $S$
be a purported small dominating set. Let $t$ be the $R$-minimal
element of $S$, which we can find by brute force. If $R(z,t)$ for any $z$
(that is, if $z$ is not an $R$-least element), then either $z$ is not dominated
by any element of~$S$, or we have found a violation of a transitivity
or antisymmetry condition for~$R$. (See~\cite{bkt:fragments},
which also contains an argument  for partial
orderings, due to Je\v{r}\'{a}bek.)

Since {\sc Localopt} is equivalent to {\sc Checkable least number principle},
for {\sc Localopt} it is enough to observe that {\sc Least number principle}
is counterexample reducible to {\sc Min}, which is straightforward.

For {\sc Weak pigeon}, given a function $f:[2^n] \rightarrow [2^{n-1}]$
in which we want to find a collision, we set $G$ in $\RWPP_2$ to be just $f$,
and  set $F$ to be the natural $\P^\NP$ machine which searches for an inverse of $f$.
\end{proof}

The next two propositions indicate some limits on what is in APPROX.

\begin{proposition}\label{pro:approx_in_Gqbf3}
{\sc Checkable $\RWPP_2$}, and thus every problem in $\APPROX$,
lies inside $\loc\Gqbf^3$. 
\end{proposition}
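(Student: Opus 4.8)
The plan is to simulate a solution algorithm for $\chk\RWPP_2$ using a $\Gqbf^3$ adversary, appealing to Lemma~\ref{lem:counterexample_iff_checkable} (so it suffices to handle $\chk\RWPP_2$ itself) and then checking that the reduction is many-one into $\loc\Gqbf^3$ via Proposition~\ref{pro:solvable}. First I would unpack $\chk\RWPP_2$: an instance is a pair of $\P^\NP$ machines $F : [2^n] \to [2^{n+1}]$, $G : [2^{n+1}] \to [2^n]$, together with a checking circuit $C$; a solution is a hole $y$ together with a computation $w$ witnessing $F(G(y)) \neq y$, such that $C(y)$ is not a valid counterexample to $w$ (i.e.\ $C(y)$ does not witness that some NO-reply recorded in $w$ is wrong). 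Since $w$ records a $\P^\NP$ computation, the only possible counterexample is a witness to a NP query answered NO; so a solution is really: a hole $y$ such that running the combined machine $F \circ G$ on $y$ produces output $\neq y$, and such that $C(y)$ fails to be a witness for any of the NO-queries made along the way.

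The core of the argument is that a $\Gqbf^3$ adversary lets us reliably simulate $\P^\NP$ computations \emph{and} lets us count. For the $\P^\NP$ simulation: each NP oracle query ``is there a witness $\zeta$ with $\phi(\zeta)$?'' is a $\Sigma^q_1$ QBF query $A(\exists \zeta\, \phi)$; by the binary-search arguments already used in Proposition~\ref{pro:QBF1} and Proposition~\ref{pro:QBF1_from_PNP}, a $\Gqbf^1$ adversary (a fortiori $\Gqbf^3$) either gives us, for each YES answer, an explicit witness, or is caught in a failure of $\Gqbf$. So we can faithfully evaluate $F(G(y))$ for any particular $y$ against the adversary. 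The counting step is where we need the extra quantifier depth. Following the standard proof that the (surjective) weak pigeonhole principle is provable from approximate counting --- essentially the argument of Je\v{r}\'abek formalized in $\APC_2$ --- we want to assert that the image of $F$ cannot cover all of $[2^{n+1}]$. Counting the size of a set defined by an NP predicate (``$y$ is in the image of $F$'', i.e.\ $\exists x\, F(x) = y$, which is itself $\Sigma^q_1$ once $F$'s computation is Herbrandized) up to a factor of $2$ can be done by a $\Pi^q_3$ / $\Sigma^q_3$ statement via the standard approximate-counting-by-hashing trick: ``there is a hash function $h$ such that for at least half the buckets $b$, there is a unique pre-image'' is $\Sigma^q_3$, and the dual lower-bound statement is $\Pi^q_3$. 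Asking the $\Gqbf^3$ adversary these questions, and using binary search on the hashing parameters (as in the $\Gsh$-style arguments and in the discussion of Toda's theorem in Section~\ref{sec:pspace}), forces the adversary to concede an explicit $y$ with $\exists x\, F(x) = y$ false --- i.e.\ an empty hole --- or to be caught violating $\Gqbf^3$. From that empty hole $y$, running $G$ then $F$ against the adversary gives a computation $w$ showing $F(G(y)) \neq y$ (since $y$ is not in the image of $F$ at all), and we output $(y, w)$; if the checking circuit $C$ produced a genuine counterexample we would have an explicit witness that $y$ \emph{is} in the image, contradicting emptiness, so in fact $C(y)$ cannot be a counterexample and we have a genuine solution to $\chk\RWPP_2$.

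The main obstacle I expect is getting the quantifier complexity bookkeeping to land exactly at level $3$ rather than $4$: one has to be careful that the Herbrandization of the $\P^\NP$ machines $F$ and $G$ contributes only one quantifier alternation (the existential over the NP-witness, absorbed into the circuit by treating the recorded computation as part of the input), and that the approximate-counting assertions about $|\mathrm{im}(F)|$ genuinely sit in $\Sigma^q_3 \cup \Pi^q_3$ --- this is where the ``pairs'' trick (amplifying the gap by squaring, as in the usual proof of the Ramsey theorem from approximate counting) is used to push the hashing argument through with only three alternations, and where I would need to match the counting done in $\APC_2$ in \cite{jerabek:apc2}. Once the complexity is pinned down, the remaining steps --- that each individual query made by the Inspector is a $\Sigma^q_k \cup \Pi^q_k$ QBF for $k \le 3$, that every ``force a contradiction'' sub-step is the kind of binary search already licensed by Lemma~\ref{lem:QBF_well_defined}, and that the whole strategy is uniform polynomial time in the input --- are routine, and then Proposition~\ref{pro:solvable} gives $\chk\RWPP_2 \le \loc\Gqbf^3$, whence by Proposition~\ref{pro:APPROX_PLS} every problem in $\APPROX$ lies in $\loc\Gqbf^3$.
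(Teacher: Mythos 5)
Your overall architecture (reduce to showing $\chk\RWPP_2$ is solvable over $\Gqbf^3$, simulate the $\P^\NP$ machines by $\Sigma^q_1$ queries, and invoke Proposition~\ref{pro:solvable}) is sensible, but it is not what the paper does: the paper's proof is a two-line indirect argument, observing that the $\P^\NP$ weak pigeonhole principle, hence everything provable in $\APC_2$, is provable in $T^3_2$ (by Je\v{r}\'{a}bek's work), and then applying Corollary~\ref{cor:BA_hierarchy}. The paper explicitly flags that a direct reduction ``similar to that in Proposition~\ref{pro:rwpp_to_qbf2}'' might be possible but does not attempt it. So you are attempting the harder, direct route.

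The genuine gap is in your central counting step. You propose to force the adversary to concede an empty hole of $F$ via ``approximate-counting-by-hashing,'' claiming the relevant assertions sit in $\Sigma^q_3\cup\Pi^q_3$ and that ``binary search on the hashing parameters'' yields a concrete failure of $\Gqbf^3$ or an explicit $y$ outside the image. No such strategy is given, and I do not believe the tool is the right one: in $\APC_1/\APC_2$ the hashing formalization of approximate counting is \emph{derived from} the weak pigeonhole principle, it does not prove it. The statement you actually need --- that a $\P^\NP$ function from $[2^n]$ to $[2^{n+1}]$ misses a hole, in a feasibly witnessable form --- is proved by the Paris--Wilkie--Woods amplification-plus-diagonalization argument, which is exactly what Proposition~\ref{pro:rwpp_to_qbf2} implements for polynomial-time $f,g$ using $\Sigma^q_2$ queries; for $\P^\NP$ machines $F,G$ the ``good labelling'' predicate and the branch-search queries each go up one level, landing in $\Sigma^q_3$. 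That is the missing idea: you should run the amplification/diagonalization of Lemma~\ref{lem:amplification} and Proposition~\ref{pro:rwpp_to_qbf2} one quantifier level higher, not a hashing argument. A secondary slip: a counterexample $C(y,w)$ to your output does not witness that $y$ is in the image of $F$ --- it witnesses that some internal NO reply recorded in $w$ was wrong --- so your closing sentence does not follow as stated; the correct move is that any such counterexample lets the Inspector exhibit a failure of a $\Sigma^q_1$ axiom directly, so she wins either way.
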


\begin{proof}
It may be possible to prove this by a direct reduction similar to that in
Proposition~\ref{pro:rwpp_to_qbf2}, but we give an indirect argument through
bounded arithmetic. Namely,
the $\P^\NP$ weak pigeonhole principle, and thus everything provable in~$\APC_2$,
is provable in the bounded arithmetic theory $T^3_2$ (see~\cite{jerabek:apc2}).
The proposition follows by Corollary~\ref{cor:BA_hierarchy} below.
\end{proof}

\begin{proposition} [\cite{kol2022approx}] \label{pro:CPLS_not_in_approx}
In the relativized setting, $\loc\Gqbf^2$ is not in $\APPROX$. 
\end{proposition}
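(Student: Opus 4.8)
The plan is to exhibit a relativized oracle separating $\loc\Gqbf^2$ from $\APPROX$. By Proposition~\ref{pro:APPROX_PLS}, $\APPROX$ has complete problem $\chk\RWPP_2$, so it suffices to produce an oracle relative to which some fixed problem in $\loc\Gqbf^2$ is not reducible to $\chk\RWPP_2$. A natural candidate for the hard problem is $\loc\Gqbf^2$ itself (or equivalently, a natural complete problem for it, which can be taken to be a form of the ordering/choice principle $\GI_2$ from~\cite{skelley2011provably}, by the results promised in Section~9), and this is the route I would take: first identify a single combinatorial problem $P \in \loc\Gqbf^2$ which is known (or provable here by a direct argument) to require a $\Sigma^q_2$-level axiom, and then argue that $\RWPP_2$, whose solutions are certified by $\P^\NP$ computations, is too weak to solve it in a relativized world.

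The key steps, in order, would be: (1) Recall that counterexample reducibility of a $\TFNP$ problem $Q$ to $\RWPP_2$ is, by Lemma~\ref{lem:counterexample_iff_checkable}, the same as a many-one reduction $Q \le \chk\RWPP_2$; so the whole question becomes a question about ordinary $\TFNP$ reductions, and standard oracle-separation machinery applies. (2) Observe that $\chk\RWPP_2$ sits at the level of $\P^\NP$-computations plus one weak-pigeon counting step, so — using the decision-tree/query model of Section~\ref{sec:dt} and the characterization of $\APPROX$-provable problems inside $T^3_2$ — any problem in $\APPROX$ has totality provable using only approximate counting; by the already-cited result of~\cite{kol2022approx} itself (and the bounded-arithmetic unprovability results of~\cite{ajtai1994complexity, pitassi1993exponential, krajivcek1995exponential, jerabek:apc2}), the second-level hierarchy problem $\loc\Gqbf^2$ is \emph{not} of this form. (3) Make this precise by transferring the known bounded-arithmetic unprovability statement — that the relativized $\Sigma^b_1$-consequences of $T^2_2$ are not contained in those of $\APC_2$ — into the $\TFNP$ world via Lemma~\ref{lem:Gamma_over_PV} and the correspondence in Section~9, obtaining the desired relativized non-reduction. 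In fact, since the statement is attributed to~\cite{kol2022approx}, the cleanest write-up simply cites that paper: it is shown there that $\CPLS$ (equivalently $\loc\Gqbf^2$, by Section~9) is not $\PLS$-counterexample reducible to $\RWPP_2$, hence a fortiori not counterexample reducible to it, hence by Proposition~\ref{pro:APPROX_PLS} not in $\APPROX$.

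The main obstacle is step (2)–(3): one must match up the $\TFNP$ problem $\loc\Gqbf^2$ introduced here with the problem $\CPLS$ (or $\GI_2$) for which the unprovability/non-reducibility is actually established in~\cite{kol2022approx, skelley2011provably}, and this matching is exactly the content of Section~9, which in turn rests on the bounded-arithmetic characterizations of~\cite{skelley2011provably, kol2011so, beckmann2017np}. So the proof is essentially a citation plus an application of the equivalences proved elsewhere in the paper. Concretely I would write:

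\begin{proof}
By Proposition~\ref{pro:APPROX_PLS}, a $\TFNP$ problem lies in $\APPROX$ if and only if it is counterexample reducible to $\RWPP_2$, which by Lemma~\ref{lem:counterexample_iff_checkable} is equivalent to being many-one reducible to $\chk\RWPP_2$. It was shown in~\cite{kol2022approx} that, relative to a suitable oracle, $\CPLS$ is not $\PLS$-counterexample reducible to $\RWPP_2$, and hence in particular not counterexample reducible to it. By the results of Section~\ref{sec:bounded} (relying on~\cite{skelley2011provably, kol2011so, beckmann2017np}), $\CPLS$ is equivalent to $\loc\Gqbf^2$ in the relativized (decision-tree) setting. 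Combining these, $\loc\Gqbf^2$ is not counterexample reducible to $\RWPP_2$ relative to this oracle, and so by Proposition~\ref{pro:APPROX_PLS} it is not in $\APPROX$.
\end{proof}
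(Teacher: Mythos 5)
Your final write-up is correct and is essentially the paper's own proof: cite \cite{kol2022approx} for the fact that $\CPLS$ is not in $\APPROX$ (in its original PLS-counterexample-reducibility formulation, which is implied by the failure of plain counterexample reducibility since the latter is the stronger notion), and combine this with the equivalence $\loc\Gqbf^2 \equiv \CPLS$ established in Section~\ref{sec:T_k_2}. The exploratory remarks about constructing a fresh oracle separation are unnecessary, but the proof you actually give matches the paper.
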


\begin{proof}
It is shown in~\cite{kol2022approx} that the problem $\CPLS$ is not in
APPROX.  We observe in Section~\ref{sec:T_k_2} below
 that $\loc\Gqbf^2 \equiv \CPLS$.
\end{proof}

Finally we include here a technical
``amplification'' construction for the weak pigeonhole principle,
which we will use in the rest of this section.
In bounded arithmetic this was introduced in~\cite{pww:primes}
and was used in a weaker form, which works in weaker theories, in~\cite{thapen2002model}.
It is similar to the constructions used to amplify hash functions and pseudorandom generators.

\begin{lemma} \label{lem:amplification}
{\sc Retraction weak pigeon} is equivalent to ``weaker'' versions
in which there is a larger gap between the number of pigeons and the number of 
holes. In particular, it is equivalent to the following $\TFNP$ problem:
given functions $F:[2^n] \rightarrow [2^{2n}]$ and 
$G:[2^{2n}] \rightarrow [2^n]$, find $y \in [2^{2n}]$ such that
$F(G(y)) \neq y$. 
\end{lemma}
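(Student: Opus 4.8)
The plan is to prove the equivalence by two reductions, with all the real work in the direction from the standard problem to the amplified "weaker" version. One direction is trivial: given an instance $F : [2^n] \to [2^{2n}]$, $G : [2^{2n}] \to [2^n]$ of the amplified problem, we can view it directly as an instance of {\sc Retraction weak pigeon} with the roles of $[2^{n+1}]$ replaced by the even larger hole set $[2^{2n}]$; since a larger gap only makes the retraction property easier to violate, any solution transfers back verbatim. (Formally one pads or reindexes so the hole set has the exact size demanded by the definition, but this is routine.) So the content is the other direction: reduce the amplified problem to {\sc Retraction weak pigeon} with gap just one, i.e.\ show that an adversary claiming a retraction pair with only a $2^n$-vs-$2^{n+1}$ gap can be used to beat a claimed retraction pair with a $2^n$-vs-$2^{2n}$ gap.

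The key step is the standard amplification/composition trick. Given $F:[2^n]\to[2^{2n}]$ and $G:[2^{2n}]\to[2^n]$, write a hole in $[2^{2n}]$ as a pair $(y_1,y_2)$ with $y_1,y_2 \in [2^n]$. Build a "weak" instance $F':[2^n]\to[2^{n+1}]$, $G':[2^{n+1}]\to[2^n]$ as follows. First, identify $[2^{n+1}]$ with $[2^n]\times\{0,1\}$, i.e.\ a pigeon in the $(n{+}1)$-pigeon set is a pair $(a,b)$ with $a\in[2^n]$, $b\in\{0,1\}$ (so there are $2^{n+1}$ of them), and the hole set is $[2^n]$. Actually it is cleaner to go the other way: set up the gap-one problem on pigeon set $[2^{n+1}]$ and hole set $[2^{n}]$? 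No — {\sc Retraction weak pigeon} has \emph{more} pigeons than holes, so pigeons $[2^{n+1}]$, holes $[2^n]$ is wrong orientation. The correct orientation of the gap-one instance we will construct has pigeons $[2^m]$ and holes $[2^{m+1}]$ for a suitable $m$, built by iterating the map $y\mapsto y_1$ (first half of the pair) so that one application of the gap-one retraction "halves the deficit," and after $n$ iterations we have amplified the gap from one to roughly $n$, hence to a ratio $2^n$ against $2^{2n}$ after rescaling. Concretely: define $F_i, G_i$ by induction on $i \le n$, where $(F_0,G_0)=(F,G)$ acts between $[2^n]$ and $[2^{2n}]$, and each step replaces a pair of coordinates by a single coordinate using the gap-one inverse, so that $(F_n,G_n)$ is a gap-one retraction pair between some $[2^{k}]$ and $[2^{k+1}]$; feed that to {\sc Retraction weak pigeon}. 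Given a solution $y$ there with $F_n(G_n(y))\neq y$, unwind the definitions of $F_i,G_i$ through the $n$ levels: at each level the failure at that level either produces a failure at the level below (giving a counterexample we propagate down) or directly exhibits $y'$ with $F(G(y'))\neq y'$ in the original amplified instance. Since each unwinding step is polynomial-time and there are only $n$ of them, the whole reduction is polynomial time.

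The main obstacle is bookkeeping the composition so that the deficit genuinely grows additively across the $n$ levels while the functions stay polynomial-size circuits, and so that the "unwinding" map is genuinely polynomial-time rather than branching exponentially — i.e.\ that tracing a single failure down through the levels follows one path, not a tree. The standard resolution, as in~\cite{pww:primes} and the weaker form in~\cite{thapen2002model}, is to arrange the composition so that a violation $F_{i+1}(G_{i+1}(y))\neq y$ at level $i{+}1$ is \emph{detected} by a single evaluation that either certifies a violation at level $i$ on one explicitly computable input, or halts with an original-instance violation; this keeps the recursion linear. I would cite those references for the core identity and give the $n$-fold iteration explicitly, checking only that the circuit sizes remain polynomial (they do: each level adds a fixed amount of wiring around the previous circuits) and that totality is preserved at each level. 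The final statement — equivalence with the specific $[2^n]$-vs-$[2^{2n}]$ form — then follows by taking $m=n$ levels and reindexing, noting also the easy converse direction mentioned above.
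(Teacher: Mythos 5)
There is a genuine gap: you never prove the hard direction of the equivalence. Write $R$ for \textsc{Retraction weak pigeon} (pigeons $[2^n]$, holes $[2^{n+1}]$) and $R'$ for the amplified version (pigeons $[2^n]$, holes $[2^{2n}]$). The easy direction is $R' \le R$: pad the pigeon set of an amplified instance up to $[2^{2n-1}]$ and a solution transfers verbatim, exactly as in your first paragraph. The content of the lemma is $R \le R'$: given a gap-one instance $(f,g)$ you must \emph{construct} a big-gap instance whose solutions pull back. Your ``key step'' goes the wrong way: you start from the amplified pair $(F,G)$, try to shrink it to a gap-one pair $(F_n,G_n)$, feed that to $R$, and unwind --- that is again a reduction of $R'$ to $R$, i.e.\ the trivial direction restated, and ``replacing a pair of coordinates by a single coordinate using the gap-one inverse'' is not an actual construction (there is no such inverse in hand; the only sensible way to view a big-gap instance as a gap-one one is the padding you already gave). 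Your own remark that the iteration ``amplifies the gap from one to roughly $n$'' gives it away: that composition goes from gap-one to big-gap, so the composed instance must be handed to the \emph{amplified} problem, not to $R$.

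The paper's argument for the hard direction is: from $(f,g)$ define $f_i:[2^{n+i}]\to[2^{n+i+1}]$ by applying $f$ to the top $n$ bits and fixing the rest, similarly $g_i$; set $F=f_{n-1}\circ\cdots\circ f_0$ and $G=g_0\circ\cdots\circ g_{n-1}$, a legitimate instance of $R'$. Given $y$ with $F(G(y))\neq y$, trace $y$ downward through the $g_i$ and $G(y)$ back upward through the $f_i$; the traces agree at level $0$ and disagree at level $n$, so the first level where they diverge yields $f(g(y'))\neq y'$ for the relevant top bits $y'$. This unwinding is a single linear pass, which also answers the ``branching'' worry in your last paragraph. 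A further slip: \textsc{Retraction weak pigeon} has \emph{fewer} pigeons than holes (it is \textsc{Weak pigeon} that has more pigeons than holes), though you do eventually land on the right orientation.
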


\begin{proof}
We prove the harder direction.
Suppose we are given an instance $f,g$ of {\sc Retraction weak pigeon},
with pigeons $[2^n]$ and holes $[2^{n+1}]$. 
For $i<n$ let $f_i$ be the function $[2^{n+i}] \rightarrow [2^{n+i+1}]$
given by applying $f$ to the most significant $n$ bits of the input and 
leaving the remaining bits unchanged, and define $g_i$ similarly.
Define $F:x \mapsto f_{n-1}( \dots (f_0(x)))$
and $G:y \mapsto g_0( \dots (g_{n-1}(y)))$.
Suppose $F(G(y)) \neq y$.
Setting $y_n = y$ and $z_0 = G(y)$,
let $y_i = g_i( \dots (g_{n-1}(y)))$
and $z_i = f_{i-1}( \dots (f_0(z_i)))$, tracing the
path of $y$ backwards and forwards through $G$ and $F$.
Then there must be some $i$ such that $y_i = z_i$
but $y_{i+1} \neq z_{i+1}$, from which we can
recover a solution to our instance of {\sc Retraction weak pigeon}.
\end{proof}

\subsection{Long choice} \label{sec:long_choice}

\begin{definition} \hspace{-5pt}\footnote{
The equivalent definition in \cite{pasarkar2023extremal} is of a \emph{subcertificate},
which is a sequence of distinct strings $a_0, \dots, a_k$, with $k \le m+1$, 
such that $P_i(a_0, \dots, a_i, a_j) = P_i(a_0, \dots, a_i, a_{j'})$
for  all distinct $j, j' >i$ (for $j,j' \le k$ and $i \le m{-}1$).
Thus $a_0, \dots, a_k$ is a subcertificate if and only if, in our notation,
$k \le m+1$ and $\vec a$ is a $b_0, \dots b_{k-1}$-sequence for~$\vec P$,
where each $b_i := P_i(a_0, \dots, a_i, a_k)$.
We have changed the notation because we will need to use the values $\vec b$.
}
 \label{def:b-sequence} 
Let  $P_0, \dots, P_{m-1}$ be $0/1$-valued functions where each $P_i$ is of the form
$P_i(x_0, \dots, x_i, z)$ with $i+2$ arguments.
Let $b_0, \dots, b_{\ell-1} \in \{0,1\}$ with $\ell \le m$.
We say that
a sequence of $k$ distinct strings $a_0, \dots, a_{k-1}$ 
is a \emph{$\vec b$-sequence for $\vec P$}  if we have
\mbox{$P_i(a_0, \dots, a_i, a_j) = b_i$} whenever~$i < j$
(over all $i<\ell$ and $j<k$).
\end{definition}

\begin{lemma}[\cite{pasarkar2023extremal}] \label{lem:long_choice_total}
Let $P_0, \dots, P_{n-2}$ be $0/1$-valued functions, where each
$P_i$ takes $i+2$ arguments from $[2^n]$.
Then there exist sequences $b_0, \dots, b_{n-2} \in \{0,1\}$
and $a_0, \dots, a_n \in [2^n]$ such that $\vec a$ is a $\vec b$-sequence for $\vec P$. 
\end{lemma}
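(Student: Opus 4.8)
The plan is to build the sequence $a_0, \dots, a_n$ greedily, one element at a time, while maintaining a shrinking ``live'' set from which all future elements will be drawn; the bits $b_i$ are determined along the way by a majority vote. More precisely, I would define a decreasing chain of nonempty sets $[2^n] = S_0 \supseteq S_1 \supseteq \dots \supseteq S_{n-1}$, together with elements $a_0, \dots, a_{n-1}$ and bits $b_0, \dots, b_{n-2}$, as follows. Suppose $a_0, \dots, a_i$ and $b_0, \dots, b_{i-1}$ and $S_{i}$ have been chosen so that $a_0, \dots, a_i \notin S_i$, $|S_i| \geq 2^{n} - i \cdot (\text{something})$ — actually the clean bookkeeping is $|S_i| \ge 2^{n-i}$ or similar — and for every $j \le i$ with $j>$ the relevant index and every $x \in S_i$ we have $P_{j}(a_0,\dots,a_j,x) = b_j$. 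To take the next step: look at the function $x \mapsto P_i(a_0,\dots,a_i,x)$ on $S_i \setminus \{a_0,\dots,a_i\}$; by pigeonhole one of the two preimages has at least half the elements, call that bit $b_i$ and that preimage $S_{i+1}$; pick any $a_{i+1} \in S_{i+1}$ (and remove it, or just note it lies in $S_{i+1}$ and future sets will avoid it).

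First I would fix the size bookkeeping so that the induction actually runs for the required $n-1$ choice-steps and still leaves enough room to pick $a_n$. Starting from $|S_0| = 2^n$, each step at worst halves the set and removes one used element, so after $i$ steps $|S_i| \ge 2^{n-i} - i$ (the removed-elements term is lower order); in particular $|S_{n-1}| \ge 2^1 - (n-1)$, which is not obviously positive, so I'd instead be slightly more careful: remove used elements up front by working inside $S_i$ minus the at most $i+1$ already-chosen points, and observe $2^{n-i} - (i+1) \ge 1$ comfortably for $i \le n-1$ once $n$ is not tiny, with the small cases checked by hand. Then $S_{n-1}$ is nonempty, and picking any $a_{n-1} \in S_{n-1}$ and then any further $a_n$ from the remaining elements of $S_{n-1}$ (which has size $\ge 2$ by the same estimate, or one does the split one more time purely for sizing) gives the sequence. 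The key point is that by construction, for each $i \le n-2$ and each $j$ with $i < j \le n$, the element $a_j$ was chosen from $S_{i+1}$, which is exactly the set on which $P_i(a_0,\dots,a_i,\cdot)$ is constantly $b_i$; hence $P_i(a_0,\dots,a_i,a_j) = b_i$, which is the defining condition of a $\vec b$-sequence. Distinctness of the $a_j$ is immediate since each new element is chosen to differ from all previous ones (or is automatically distinct because it lies in a set from which previous choices were deleted).

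The main obstacle, such as it is, is purely the counting: making the halving-plus-deletion recursion leave a provably nonempty (indeed size-$\ge 2$) set after $n-1$ rounds, and handling the off-by-one and small-$n$ edge cases cleanly — the combinatorial heart (majority vote fixes $b_i$, restrict to the majority side, the condition $P_i(\dots) = b_i$ then holds for all later indices automatically) is straightforward. I would present the bound as $|S_i| \ge 2^{n-i}$ by doing the majority split \emph{before} deleting the single new element and absorbing deletions into a separate running count, or equivalently by noting that we only need $n-1$ genuine splits and one spare element, so $2^n$ halved $n-1$ times is $2$, which already suffices; the deleted elements total at most $n+1 \ll 2$ only fails for trivial $n$, so the statement is checked directly for, say, $n \le 3$ and the estimate used for $n \ge 4$. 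This is essentially the argument of~\cite{pasarkar2023extremal}, reorganized around an explicit invariant.
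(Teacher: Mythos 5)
Your proposal is correct and is essentially the paper's proof: maintain the live set $S(A_j,B_j)$ of candidates consistent with the choices so far, pick $a_j$ from it, split the remainder by the value of $P_j(a_0,\dots,a_j,\cdot)$, and keep the majority side. The size bookkeeping you worry about is actually clean with no small-$n$ cases: the invariant $|S(A_j,B_j)| \ge 2^{n-j}$ is preserved because removing the single element $a_j$ and then taking the larger half gives at least $\lceil(2^{n-j}-1)/2\rceil = 2^{n-j-1}$, so deletions never accumulate and the final set has size at least $2$, supplying both $a_{n-1}$ and $a_n$.
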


\begin{proof} 
Using induction on~$j = 0, \dots, n{-}1$ we construct sequences $B_j := b_0, \dots, b_{j-1}$ and
 $A_j := a_0, \dots, a_{j-1}$, 
where:
all elements of $A_j$ are distinct;
$A_j$ is a $B_j$ sequence for $\vec P$;
and 
$|S(A_j, B_j)| \ge 2^{n-j}$,
where
\[
S(A_j, B_j) :=
\big\{ z \in [2^n] \setminus \{ a_0, \dots, a_{j-1} \} : 
	\forall i \! < \! j \ P_{i}(a_0, \dots, a_i, z) = b_i \big\}
	\]
is the set of strings $z$ such that $A_j^\frown z$ is a $B_j$-sequence
for $\vec P$, where $^\frown$ indicates sequence concatenation.
	
This is true for $j{=}0$ with the empty sequences. Suppose we
have it for~$j {<} n{-}1$. Then $S(A_j, B_j)$ is nonempty, and we choose 
$a_j \in S(A_j, B_j)$ arbitrarily.
Then $S(A_j, B_j)$ decomposes into three disjoint sets
$\{a_j\}$,
$\{ z \in S(A_j, B_j) \setminus \{ a_j \} : P_j(A_j^\frown a_j, z) = 0 \}$
and 
$\{ z \in S(A_j, B_j) \setminus \{ a_j \} : P_j(A_j^\frown a_j, z) = 1\}$.
These last two sets are precisely $S(A_j^\frown a_j, B_j^\frown  0)$ and
$S(A_j^\frown  a_j, B_j^\frown  1)$. One of these
must have size at least $2^{n-j-1}$ and we set $b_j$ to $0$ or $1$ accordingly.

Thus we obtain sequences $B_{n-1}{=} b_0, \dots, b_{n-2}$ and 
$A_{n-1} {=} a_0, \dots, a_{n-2}$ such that $|S(A_{n-1}, B_{n-1})| \ge 2$.
We set the last two elements $a_{n-1}$ and $a_{n}$ to be any two members of $S(A_{n-1}, B_{n-1})$.
\end{proof}

We now define {\sc Long choice} and our weakened version of it.
It follows from the previous lemma that these two problems are total. 

\begin{definition} ~
\begin{enumerate}
\item
{\sc Long choice}~\cite{pasarkar2023extremal} is the $\TFNP$ problem:
given circuits for functions
$P_0, \dots, P_{n-2}$, where $P_i$ takes $i{+}2$ arguments in $[2^n]$,
find sequences $b_0, \dots, b_{n-2} \in \{0,1\}$
and $a_0, \dots, a_n \in [2^n]$ such that $\vec a$ is a $\vec b$-sequence for~$\vec P$. 
%\item
%{\sc Weak long choice} is the $\TFNP$ problem:
%given circuits for functions
%$P_0, \dots, P_{n-2}$, where $P_i$ takes $i{+}2$ arguments in $[2^{n+1}]$,
%find sequences $b_0, \dots, b_{n-2} \in \{0,1\}$
%and $a_0, \dots, a_n \in [2^{n+1}]$ such that $\vec a$ is a $\vec b$-sequence for~$\vec P$. 
\item
{\sc Weak long choice} is the same as  {\sc Long choice}, except that
$[2^n]$ is replaced everywhere with~$[2^{n+1}]$.
\end{enumerate}
\end{definition}

\begin{lemma} \label{lem:ramsey_to_wlc}
{\sc Ramsey} is reducible to {\sc Weak long choice}.
\end{lemma}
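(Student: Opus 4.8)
The plan is to reduce {\sc Ramsey} to {\sc Weak long choice} by adapting the classical proof of the finite Ramsey theorem via the ``greedy path'' construction, which is exactly what the $\vec b$-sequences in Definition~\ref{def:b-sequence} encode. First I would take an instance of {\sc Ramsey}: a circuit defining a graph $H$ on $[2^n]$, where the task is to find a clique or independent set of size at least $n/2$. I would feed {\sc Weak long choice} a family of predicates $P_0, \dots, P_{m-1}$ (with $m$ chosen around $n$, so that the vertex set $[2^{m+1}]$ comfortably contains $[2^n]$, padding by ignoring high-order bits or by using a trivial predicate outside $[2^n]$) defined by
\[
P_i(x_0, \dots, x_i, z) \;=\; \text{the edge relation } H(x_i, z),
\]
i.e.\ $P_i$ depends only on its last two arguments and asks whether $z$ is adjacent to $x_i$ in $H$. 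The point is that if $a_0, \dots, a_{k-1}$ is a $\vec b$-sequence for this $\vec P$, then for each $i<j$ we have $H(a_i, a_j) = b_i$; that is, whether $a_i$ is adjacent to $a_j$ depends only on $i$, not on $j$. So along the sequence $a_0, a_1, \dots$ the ``forward adjacency type'' of each vertex is constant, governed by the bit $b_i$.

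Next I would extract a monochromatic set from the returned $\vec b$-sequence $a_0, \dots, a_n$ and bit-sequence $b_0, \dots, b_{n-2}$. Since there are roughly $n$ bits $b_0, \dots, b_{n-2}$, by pigeonhole at least $\lceil (n-1)/2 \rceil \ge n/2$ of them are equal, say to a value $b \in \{0,1\}$; let $I = \{i : b_i = b\}$ be the set of such indices, with $|I| \ge n/2$. I claim $\{a_i : i \in I\}$ is a clique (if $b=1$) or an independent set (if $b=0$): given $i < j$ both in $I$, we have $H(a_i,a_j) = b_i = b$, and since all $a_i$ are distinct (part of the definition of a $\vec b$-sequence) this is a genuine set of the right size. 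Note we need $I$ to have size $\ge n/2$ but we are allowed up to index $n-2$ for the bits and index $n$ for the $a$'s, so there is a small bookkeeping point about whether we get $n/2$ or $(n-1)/2$; this is absorbed by using {\sc Weak long choice} with parameter slightly larger than $n$, i.e.\ running on $[2^{n+1}]$ with enough predicates that the guaranteed monochromatic index set has size at least $n/2$. All of this — constructing the circuits for $\vec P$ and reading off the solution — is clearly polynomial time, so it is a valid many-one reduction.

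The main obstacle, such as it is, is really just the indexing and padding: {\sc Weak long choice} is stated for some parameter and returns a $\vec b$-sequence of a fixed length determined by that parameter, and I need to choose the parameter so that the pigeonhole step on the $b_i$'s yields a monochromatic index set of size at least $n/2$ and so that $[2^n]$ embeds into the larger domain $[2^{n+1}]$ that {\sc Weak long choice} uses. Concretely, if I run {\sc Weak long choice} with inner parameter $n$ (so its domain is $[2^{n+1}]$ and it produces $b_0, \dots, b_{n-2}$ and $a_0, \dots, a_n$), I should restrict attention to the $a_i$ lying in $[2^n]$ — but in fact it is cleaner to define the predicates on all of $[2^{n+1}]$ by declaring $H(x,z)$ to be, say, $0$ whenever $x$ or $z$ lies outside $[2^n]$, so that the returned sequence is automatically a valid witness and the monochromatic sub-sequence, once it has length $\ge n/2$, consists of vertices we can interpret inside $[2^n]$ (or we simply note that Ramsey on $[2^{n+1}]$ with this extended graph still solves the original instance after mapping back, since the extra vertices only make independent sets easier to find). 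The rest is routine; no deep idea beyond recognizing that $P_i$ = ``adjacent to the $i$-th chosen vertex'' turns a $\vec b$-sequence into precisely the homogeneous-path object from the standard Ramsey proof of~\cite{pudlak1990ramsey}.
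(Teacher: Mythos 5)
Your construction is the same as the paper's: take $P_i(x_0,\dots,x_i,z)$ to be the colour of $\{x_i,z\}$, so that a $\vec b$-sequence satisfies $H(a_i,a_j)=b_i$ for all $i<j$, and pigeonhole on the bits $b_i$ extracts a monochromatic set. The only issues are in your bookkeeping paragraph. First, no padding is needed: the parameters of {\sc Weak long choice} are arranged so that its domain $[2^{n+1}]$ matches the vertex set of the {\sc Ramsey} instance exactly --- you reduce {\sc Ramsey} on $[2^{n+1}]$ (target size $\lceil(n+1)/2\rceil$) to the {\sc Weak long choice} instance with functions $P_0,\dots,P_{n-2}$ over that same set $[2^{n+1}]$; this alignment is precisely why the weak variant enlarges the domain relative to the number of predicates. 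Second, the fallback you describe of padding the graph with extra vertices coloured $0$ would genuinely fail in the independent-set case: a large independent set in the padded graph may consist mostly of padding vertices and yield nothing of the required size in the original graph, so that route should be dropped rather than ``absorbed''. Finally, to hit the target size exactly, append the final element $a_n$ to your monochromatic index set $\{a_i : i\in I\}$ (every $i\in I$ satisfies $H(a_i,a_n)=b_i=b$, and $n\notin I$), which turns the count $\lceil(n-1)/2\rceil$ into $\lceil(n+1)/2\rceil$ and removes the off-by-one you flagged.
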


\begin{proof}
We adapt the standard proof of the finite Ramsey theorem, just
as in~\cite{pasarkar2023extremal}.
We are given a colouring of all unordered pairs
of nodes in $[2^{n+1}]$ with colours from $\{0,1\}$. 
Our goal is to find a set of  at least $\lceil (n{+}1)/2 \rceil$ nodes 
 all edges between which have the same colour.
We form an instance of {\sc Weak long choice}  by, for each $i<n{-}1$, defining
$P_i(x_0, \dots, x_i, z)$ to be the colour of $\{x_i, z\}$.
Let $b_0, \dots, b_{n-2}$ and $a_0, \dots, a_{n}$ be a solution,
which means that for each $i  < n{-}1$ all edges 
$\{a_i, a_{i+1} \}, \ldots, \{a_i, a_n\}$ have the same colour~$b_i$.
Either~$0$ or~$1$ must appear at least $\lceil (n-1)/2\rceil$ times in
the sequence $b_0, \dots, b_{n-2}$;
suppose for example that $b_{i_0}, \dots, b_{i_{k-1}}$ are all~$0$
where $k = \lceil (n{-}1)/2\rceil$. Then 
$\{ a_{i_0}, \dots, a_{i_{k-1}} \} \cup \{ a_{n} \}$ 
is a monochromatic set of size~$\lceil (n{+}1)/2 \rceil$.
\end{proof}

\begin{lemma} \label{lem:wpp_wlc}
{\sc Weak pigeon} is reducible to {\sc Weak long choice}.
\end{lemma}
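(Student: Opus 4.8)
The plan is to build, from a {\sc Weak pigeon} instance $f\colon [2^m]\to[2^{m-1}]$, a {\sc Weak long choice} instance whose every solution exposes a collision of $f$, and then to invoke Lemma~\ref{lem:counterexample_iff_checkable}-style bookkeeping (here just plain many-one reduction). The key observation is that with {\sc Weak long choice} parameter $n=m$ the sequence elements $a_0,\dots,a_m$ range over $[2^{n+1}]=[2^{m+1}]$ — twice the pigeon set — while we have exactly $m-1$ predicates $P_0,\dots,P_{m-2}$, one per bit position of $[2^{m-1}]$. I will use $P_i(x_0,\dots,x_i,z)$ to simply read off bit $i$ of $G(z)$, where $G\colon[2^{m+1}]\to[2^{m-1}]$ is a one-step amplification of $f$; the prefix arguments are ignored.

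First I would define $G$. Let $f_1\colon[2^{m+1}]\to[2^m]$ send $z=2u+v$ (with $u\in[2^m]$, $v\in\{0,1\}$) to $2f(u)+v$, and set $G:=f\circ f_1$. Then a collision of $G$ yields a collision of $f$ in polynomial time: if $G(z_1)=G(z_2)$ with $z_1\neq z_2$, then either $f_1(z_1)\neq f_1(z_2)$, so $f_1(z_1),f_1(z_2)$ collide under $f$; or $f_1(z_1)=f_1(z_2)$, which forces the low bits of $z_1,z_2$ to agree and hence the top-$m$-bit parts $u_1\neq u_2$ to collide under $f$. This is the standard amplification trick in the spirit of Lemma~\ref{lem:amplification}. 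Next I set up the {\sc Weak long choice} instance with $n=m$ and $P_i(x_0,\dots,x_i,z):=\text{bit }i\text{ of }G(z)$ for $i=0,\dots,m-2$; this is plainly a polynomial-time-constructible instance. Given any solution $(b_0,\dots,b_{m-2};\,a_0,\dots,a_m)$, the $\vec b$-sequence condition gives $P_i(a_0,\dots,a_i,a_j)=b_i$ whenever $i<j$. Taking $j=m-1$ and then $j=m$ — both exceed every $i\le m-2$ — every bit of $G(a_{m-1})$ and of $G(a_m)$ is pinned to the corresponding $b_i$, and since $G$ has range $[2^{m-1}]$ this means $G(a_{m-1})=G(a_m)=\sum_{i=0}^{m-2}b_i2^i$. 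As $a_{m-1}\neq a_m$, this is a collision of $G$, hence of $f$. The reduction functions (producing the $P_i$ from $f$, and producing the collision from $(\vec b,\vec a)$) are clearly polynomial time; the degenerate case $m=1$, where $f$ maps into a single hole, is handled directly by outputting $(0,1)$.

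The one place needing care — and what I expect to be the main obstacle — is the parameter bookkeeping around the \emph{gap}. {\sc Weak long choice} with parameter $n$ pins the value of an object only at the last two sequence elements $a_{n-1},a_n$, and only once that object has at most $n-1$ bits while living over a domain of size $2^{n+1}$, i.e. a factor-$4$ gap. Since $f$ itself has only a factor-$2$ gap, the naive choice $P_i=\text{bit }i\text{ of }f$ fails: the adversary has enough room to keep all pinned elements inside a single fiber of the ``extra bit'' and produce no $f$-collision. The single amplification step from $f$ to $G$ is exactly what converts the factor-$2$ gap into the factor-$4$ gap the argument requires; once that is in place the rest is routine.
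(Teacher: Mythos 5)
Your proposal is correct and follows essentially the same route as the paper: amplify the factor-$2$ gap of $f$ to a factor-$4$ gap (the paper invokes a "four times as many pigeons as holes" version via Lemma~\ref{lem:amplification}, you build $G=f\circ f_1$ explicitly), then let $P_i$ read off bit $i$ of the amplified function so that the last two sequence elements $a_{n-1},a_n$ are forced to collide. Your explicit handling of the parameter bookkeeping and the extraction of an $f$-collision from a $G$-collision matches what the paper leaves implicit.
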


\begin{proof}
It is easy to see the {\sc Weak pigeon} is equivalent to a
slightly amplified version in which there are four times as many
pigeons as holes, by a simple version of the argument in Lemma~\ref{lem:amplification}.
So suppose we are given an instance of that problem,
in which we have a function $f:[2^{n+1}] \rightarrow [2^{n-1}]$
and want to find a collision.
We use the same construction as the reduction from 
{\sc Weak pigeon} to {\sc Unary long choice}
in~\cite{pasarkar2023extremal}, 
setting $P_i(x_0, \dots, x_i, z)$ to be the $i$th bit of~$f(z)$
for each $i < n$. Then if $\vec b$, $\vec a$ is a solution
to this instance of {\sc Weak long choice}, we must have
$f(a_{n-1}) = f(a_n) = b$, where $b \in [2^{n-1}]$ has bits 
$b_0, \dots, b_{n-2}$.
\end{proof}

Finally in this subsection we show that {\sc Weak long choice} is in APPROX, that
is, that it is counterexample-reducible to $\RWPP_2$.
For this we will need to construct suitable $\P^{\NP}$ functions from~$[2^n]$ pigeons to $[2^{n+1}]$
holes and vice versa, to give as inputs to $\RWPP_2$.
To define these we adapt the proof of the Ramsey theorem using the weak 
pigeonhole principle in bounded arithmetic, from~\cite{pudlak1990ramsey}. 

We have functions $P_0, \dots, P_{n-2}$ over~$[2^{n+1}]$
as in Definition~\ref{def:b-sequence}.
%For technical reasons we add one more function $P_{n-1}$, which
%we can take to be always true.
Suppose we are given
bits $b_0, \dots, b_{m-1}$, for $m < n$, and want to construct a ``canonical''
$\vec b$-sequence $a_0, \dots, a_m$ for $\vec P$, of length~$m{+}1$, if it exists.
%(where we have not defined ``canonical'' yet). 
We have the following algorithm. First set $a_0 = 0$. Once
we have found $a_0, \dots, a_{j-1}$, set~$a_j$
to be the least $z>a_{j-1}$ 
such that $P_i(a_0, \dots, a_i, z) = b_i$ for each~$i<j$.
Either we eventually find a length-$(m{+}1)$ sequence $a_0, \dots, a_m$ in this
way, in which case we output $a_m$ (which we will show below
contains enough information to recover the computation), or at some point there
is no such~$z$, in which case we stop and output~$0$ as a kind of error code.
Let $F$ be the $\P^\NP$ machine that carries out this algorithm.
$F$ takes input $\vec b$ of length at most~$n{-}1$ and outputs either $0$ or $a_m \in [2^{n+1}]$ as described.
% For definiteness, we specify that when $F$ looks for the least~$z$ with some
% property, it does so by a natural binary search. 

We will say that a
sequence $a_0, \dots, a_m$ is \emph{canonical} if each~$a_j$
is the minimal $z>a_{j-1}$ possible over $a_0, \dots, a_{j-1}$, just as in the algorithm above.

We can also describe an algorithm that goes in the other direction.
It takes as input $y \in [2^{n+1}]$ and tries to find 
sequences $b_0, \dots, b_{m-1}$ and $a_0, \dots, a_m$, for some $m < n$,
where $\vec a$ is a canonical $\vec b$-sequence for $\vec P$ ending
with $a_m = y$. We use essentially the same process as above, with a small change:
once we have found $a_0, \dots, a_{j-1}$, we set $b_{j-1}$ to be 
$P_{j-1}(a_0, \dots, a_{j-1}, y)$, and then proceed as before.
That is, we set~$a_j$ to be the least $z>a_{j-1}$ such that 
$P_i(a_0, \dots, a_i,z) = b_i$ for each~$i<j$. As long as~$a_{j-1} < y$
such a $z$ exists with $z \le y$, since~$y$ itself satisfies the condition
by our choice of $b_0, \dots, b_{j-1}$. 
If we get~$a_j = y$ in this way, with $j < n$, we stop;
we have found the desired sequences $b_0, \dots, b_{j-1}$
and $a_0, \dots, a_j$, and we output $\vec b$. 
Otherwise, if we have run for $n-1$ steps 
and constructed $b_0, \dots, b_{n-2}$ and $a_0, \dots, a_{n-1}$
with~$a_{n-1}$ that is still strictly less than~$y$,
we also stop and output $\vec b$.
Let $G$ be the $\P^\NP$ machine that carries out this algorithm.
$G$ takes input $y \in [2^{n+1}]$ and outputs $\vec b$ 
of length~$n{-}1$ or less. 
%Similarly to $F$, it uses binary search when it needs to find the least~$z$.

\begin{proposition} \label{pro:wlc_to_rwpp2}
{\sc Weak long choice} is counterexample reducible to $\RWPP_2$
and thus is in $\APPROX$.
\end{proposition}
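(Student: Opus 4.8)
The plan is to exhibit an explicit counterexample reduction {\sc Weak long choice} $\le_c \RWPP_2$; membership in $\APPROX$ then follows from Proposition~\ref{pro:APPROX_PLS}. We take the two $\P^\NP$ machines $F$ and $G$ constructed above and, to match the format of $\RWPP_2$, encode each bit string $\vec b$ of length $\le n{-}1$ (there are $2^n{-}1$ of them) as an element of $[2^n]$ by a fixed injection, sending the one remaining pigeon to $0$ under $F$; this makes $F$ a map $[2^n]\to[2^{n+1}]$ and $G$ a map $[2^{n+1}]\to[2^n]$. One design point makes the whole argument go through: $F$ and $G$ must commit only to \emph{self-certified} values. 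So when $G$ looks for the least $z\in(a_{j-1},\hat y]$ with $P_i(a_0,\dots,a_i,z)=b_i$ for all $i<j$, it verifies this polynomial-time condition for whatever value its $\NP$-driven binary search returns and, if the check fails, falls back to $\hat y$ itself --- which always satisfies the condition, since $b_i$ was \emph{defined} as $P_i(a_0,\dots,a_i,\hat y)$; similarly $F$, once it has a YES answer with a verified witness to its top-level existence query at a step, falls back to that witness rather than ever erring. Consequently every $a_j$ that $G$ commits to genuinely satisfies its constraints and lies in $(a_{j-1},\hat y]$, and $F$ errs at a step only after a genuine NO to a top-level query there.

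Given a purported solution $(\hat y,w)$, where $w$ records the combined run computing $G(\hat y)=\vec b$ and then $F(\vec b)$, we read everything off $w$ and split on how $G$ halted. \emph{Second mode:} $G$ ran $n{-}1$ steps, producing $b_0,\dots,b_{n-2}$ and $a_0,\dots,a_{n-1}$ with $a_{n-1}<\hat y$. Then $g$ outputs $a_0,\dots,a_{n-1},\hat y$ together with $\vec b$. By self-certification, $a_0<\dots<a_{n-1}<\hat y$ are $n{+}1$ distinct elements of $[2^{n+1}]$, $P_i(a_0,\dots,a_i,a_j)=b_i$ holds for all $i<j\le n{-}1$, and $P_i(a_0,\dots,a_i,\hat y)=b_i$ holds for all $i\le n{-}2$ by the way $G$ defines the $b_i$; so this is a genuine $\vec b$-sequence for $\vec P$, i.e.\ a solution to {\sc Weak long choice}, and the conclusion uses no correctness assumption on any $\NP$ reply in $w$, so $h$ may be arbitrary here. \emph{First mode:} $G$ found a canonical $\vec b$-sequence $a_0,\dots,a_m=\hat y$ with $m<n$. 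Were $w$ honest, the canonical $\vec b$-sequence would have length $m{+}1$ and $F(\vec b)$ would rebuild exactly $a_0,\dots,a_m$, outputting $a_m=\hat y$ and contradicting that $w$ records $F(G(\hat y))\ne\hat y$. So $w$ contains a false NO; $h$ locates one by scanning the runs recorded for $G$ and for $F$, which agree at step $0$ (both set $a_0=0$) and first differ at some step $i\le m$: either $F$ erred there or the two committed values differ, and in the former case $G$'s committed $a_i$, in the latter the smaller of the two committed values, is a self-certified element lying in the search range used by the other machine at step $i$, and hence pins a false NO in that machine's binary search --- concretely, the first recorded query of that search having this element in range but answered NO. Here $h$ outputs that element with a pointer to the offending NO, and $g$ may be arbitrary; a malformed $w$, or an $F(\vec b)$ that errs, is absorbed here, since then either $(\hat y,w)$ is already not a valid solution record or this same analysis applies.

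This gives the triple $(f,g,h)$ witnessing a counterexample reduction, so {\sc Weak long choice} $\le_c \RWPP_2$, and by Proposition~\ref{pro:APPROX_PLS} the problem is in $\APPROX$. I expect the main obstacle to be the first-mode case: one has to be certain that \emph{every} valid $\RWPP_2$ solution record of that kind exposes a concrete false NO, rather than merely a discrepancy between the two recorded runs that could be explained away by honest non-minimality in the binary searches --- and this is exactly what is bought by forcing $F$ and $G$ to commit only to self-certified values, which turns that discrepancy into a genuine inconsistency that the polynomial-time $h$ can point to.
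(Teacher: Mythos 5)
Your proof is correct and follows essentially the same route as the paper: the same machines $F$ and $G$, the same pigeon/hole encoding (with the spare pigeon sent to $0$), and the same two-case split on how $G$ halts, with the same conclusions in each case. The only differences are cosmetic: your explicit self-certification-with-fallback is already automatic in the paper's construction (each committed $a_j$ arises from a witnessed YES in the binary search), and where you locate the false NO via the first differing committed value and a range argument, the paper observes more directly that the two runs make identical adaptive queries up to their first divergent reply, whose YES witness refutes the corresponding NO.
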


\begin{proof}
We are given functions
$P_0, \dots, P_{n-2}$ 
and want to find bits $b_0, \dots, b_{n-2}$ and 
strings $a_0, \dots, a_n$ in $[2^{n+1}]$ such 
that $\vec a$ is a
$\vec b$-sequence for $\vec P$.

We take the set of pigeons to be the set of binary strings $\vec b$ of length $n{-}1$
or less, which is bijective in polynomial time with the interval $[2^n{-}1]$.
We take the set of holes to be the interval $[2^{n+1}]$.
Our input to $\RWPP_2$ is the pair $(F,G)$, where $F$ maps pigeons
to holes and $G$ maps holes to pigeons.
We get back a solution $(y,w)$,
where $y$ is a hole and~$w$ is a computation of $F(G(y)) \neq y$, coming
with the assertion that no NO reply in~$w$ has a counterexample.
There are now two possibilities. 

The first is that, during the computation described in $w$, 
$G$ found sequences $b_0, \dots, b_{n-2}$ and 
$a_0, \dots, a_{n-1}$ with $a_{n-1} {<} y$
(this is the last case in the description of~$G$ above).
Then, by construction, $a_0, \dots, a_{n-1}, y$ is a 
$\vec b$-sequence for $\vec P$ of the correct length~$n$.
Thus we have a solution to 
our instance of {\sc Weak long choice} and we are done,
as this is a polynomial-time checkable property which does not involve 
any of the $\coNP$ assertions about the strings $a_j$ being minimal.

The other possibility is that $G$ found sequences $b_0, \dots, b_{m-1}$
and $a_0, \dots, a_m$ with $m < n$ and $a_m = y$.
Then the rest of $w$ describes $F$ running on input~$\vec b$.
In this computation  $F$
tries to reconstruct the canonical sequence $a_0, \dots, a_m$ from $\vec b$, and
in particular performs precisely the same searches for minimal strings~$z$
that $G$ did, making the same $\NP$ queries.
Therefore either $F$ finds precisely the same $a_0, \dots, a_m$ and outputs~$y$, which is impossible
since the output of $F$ in $w$ is different from $y$;
or it gets a different YES/NO answer than $G$ did to one of these $\NP$ queries, 
in which case we have a counterexample to~$w$ in the form of a YES witness to some
NP query to which $w$ recorded the answer NO.

We have cheated slightly by using $2^n{-}1$ pigeons, where $\RWPP_2$
strictly uses~$2^n$. However nothing changes if
we extend $F$ to map the missing pigeon to~$0$.
\end{proof}

%%%%%%%%%%%%%%%%%%%%%%%%%%%%%%%%%%%%%%%
\subsection{Short choice} \label{sec:short_choice}
%%%%%%%%%%%%%%%%%%%%%%%%%%%%%%%%%%%%%%%

\begin{definition} ~
\begin{enumerate}
\item
{\sc Short choice}~\cite{pasarkar2023extremal} is the $\TFS$ problem:
given circuits for functions
$P_0, \dots, P_{n-2}$, where $P_i$ takes $i{+}2$ arguments in $[2^n {-} 2]$,
find sequences $b_0, \dots, b_{k} \in \{0,1\}$
and $a_0, \dots, a_k \in [2^n{-}2]$, with $k \le n{-}2$, such that 
\begin{enumerate}
\item
$a_0, \dots, a_k$ is a $b_0, \dots, b_{k-1}$-sequence for~$\vec P$, and
\item
there is no $a_{k+1} \in [2^n{-}2]$  such that
$a_0, \dots, a_k,  a_{k+1}$ is a  $b_0, \dots, b_k$-sequence for~$\vec P$.
\end{enumerate}
\item
{\sc Weak short choice} is the same as {\sc Short choice},
except that $[2^n{-}2]$ is replaced everywhere with $[2^{n-1}{-}2]$.
\end{enumerate}
\end{definition}

Note that we could equivalently define short choice by keeping the size
of the universe as $[2^n{-}2]$ but replacing $n$ with $n+1$ everywhere
else, so that the instance is given by functions $P_0, \dots, P_{n-1}$
and the bound on the solutions is $k \le n{-}1$. 
Defining it that way shows there is an easy counterexample reduction
to {\sc Short choice}: given an instance
of {\sc Weak short choice} in this form, we can get an instance of 
{\sc Short choice} by ignoring the last function  $P_{n-1}$.

The next lemma justifies that these problems are in $\TFS$.

\begin{lemma}
{\sc Short choice} is total.
\end{lemma}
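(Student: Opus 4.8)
The plan is to show that for any instance $P_0, \dots, P_{n-2}$ of {\sc Short choice}, a valid solution $(\vec b, \vec a)$ exists, by running essentially the greedy construction from Lemma~\ref{lem:long_choice_total} and stopping at the first place it gets stuck. Concretely, I would build sequences $A_j = a_0, \dots, a_{j-1}$ and $B_j = b_0, \dots, b_{j-1}$ inductively, maintaining that the $a_i$ are distinct, that $A_j$ is a $B_j$-sequence for $\vec P$, and tracking the ``live set''
\[
S(A_j, B_j) := \big\{ z \in [2^n{-}2] \setminus \{ a_0, \dots, a_{j-1} \} : \forall i \! < \! j \ P_i(a_0, \dots, a_i, z) = b_i \big\}
\]
of strings $z$ for which $A_j{}^\frown z$ is still a $B_j$-sequence.

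First I would start with the empty sequences, so $S(\emptyset, \emptyset) = [2^n{-}2]$, which is nonempty. At stage $j$, if $S(A_j, B_j)$ is empty we stop; otherwise we pick any $a_j \in S(A_j, B_j)$ and split $S(A_j, B_j) \setminus \{a_j\}$ into the two sets on which $P_j(A_j{}^\frown a_j, z)$ is $0$ and $1$ respectively, which are exactly $S(A_j{}^\frown a_j, B_j{}^\frown 0)$ and $S(A_j{}^\frown a_j, B_j{}^\frown 1)$; we let $b_j$ be whichever of $0,1$ gives the larger (or just a nonempty one, if we only want \emph{some} valid stopping point — but see below) of these two sets, and continue. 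The key observation is that this process cannot run past stage $n-1$: each time we successfully pass from stage $j$ to stage $j{+}1$ we consume one element $a_j$ and, in the worst case, roughly halve the live set, but more simply, after $n-1$ successful stages we have chosen $n-1$ distinct elements $a_0, \dots, a_{n-2}$, and we would only have $P_0, \dots, P_{n-2}$ available to define the splits — the functions run out. So the process must halt at some stage $k+1 \le n-1$, i.e.\ with $k \le n-2$, at a point where $S(A_{k+1}, B_{k+1})$ is empty; unwinding, this says precisely that $a_0, \dots, a_k$ is a $b_0, \dots, b_{k-1}$-sequence for $\vec P$ and there is no $a_{k+1} \in [2^n{-}2]$ making $a_0, \dots, a_k, a_{k+1}$ a $b_0, \dots, b_k$-sequence, which is exactly conditions (a) and (b) in the definition of {\sc Short choice}.

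The one subtlety to get right — and the place most likely to need care — is the interaction between the size bound $k \le n-2$ on solutions and when the greedy process is actually forced to stop. If we are unlucky the process could in principle keep finding nonempty live sets all the way to stage $n-1$ (having picked $a_0, \dots, a_{n-2}$); at that point we would \emph{want} to output $k = n-2$ with the pair $(a_0, \dots, a_{n-2})$ and $(b_0, \dots, b_{n-3})$, so we need $S(A_{n-1}, B_{n-1})$ to be empty, which is \emph{not} automatic. The fix is to be careful that we are choosing $b_j$, not for maximal live set, but so that the \emph{complementary} branch — the one we don't take — is the one we'd be forced into; more cleanly, since {\sc Short choice} asks only for \emph{existence} of a valid $(\vec b, \vec a)$, I would argue by considering the full binary tree of all choices of $\vec b$ (and the canonical greedy $\vec a$ for each), and observe that along \emph{every} root-to-leaf branch the live set must eventually become empty — indeed it decreases by at least one at each step and we have at most $2^n - 2$ elements, but the point is that by the time the branch reaches length $n-1$ there are no more $P_i$'s to query, so formally there is nothing left to extend by and the ``no valid $a_{k+1}$'' clause becomes vacuously checkable. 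Making this vacuity argument precise — that at $k = n-2$ condition (b) holds because the condition ``$P_k(a_0,\dots,a_k,a_{k+1}) = b_k$'' with $k = n-2$ still refers to a legitimate function $P_{n-2}$, so one must genuinely exhibit emptiness of that live set — is the real content, and I expect the cleanest route is simply: run the greedy process, at each stage pick $b_j$ so that $S(A_{j}{}^\frown a_j, B_j{}^\frown b_j)$ is \emph{nonempty whenever possible}, and note that if we ever reach a stage where neither choice of $b_j$ gives a nonempty set then both $S(A_j{}^\frown a_j, B_j{}^\frown 0)$ and $S(A_j{}^\frown a_j, B_j{}^\frown 1)$ are empty, hence so is their union $S(A_j, B_j)\setminus\{a_j\}$, so in particular $S(A_{j}{}^\frown a_j, \cdot)$-type emptiness gives us (b) with $k = j$; and this is forced to happen by stage $n-2$ at the latest since there is no $P_{n-1}$. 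Pad with $0$'s or truncate as needed so that the output has the right shape, and totality follows.
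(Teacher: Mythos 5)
There is a genuine gap: you never establish that the greedy process reaches an \emph{empty} live set within $n-1$ stages, and your proposed ways of forcing this do not work. Your final claim --- that if you ``pick $b_j$ so that $S(A_j{}^\frown a_j, B_j{}^\frown b_j)$ is nonempty whenever possible'' then both branches must be empty ``by stage $n-2$ at the latest since there is no $P_{n-1}$'' --- is false. Running out of functions does not empty the live set; it merely means the process cannot be continued. If you arrive at stage $n-1$ with $S(A_{n-1},B_{n-1})\neq\emptyset$, then any $z$ in that set is a legitimate counterexample $a_{k+1}$ to condition (b) for $k=n-2$ (as you yourself note, the condition $P_{n-2}(a_0,\dots,a_{n-2},z)=b_{n-2}$ refers to a real function), so you have neither a solution nor a way to proceed. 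And with the ``keep it nonempty'' rule this situation really can occur: the live set starts at size $2^n-2$ and you only guarantee it loses one element per stage, so after $n-1$ stages it can still have roughly $2^n-n$ elements; your remark that it ``decreases by at least one at each step'' only bounds the number of stages by $2^n-2$, not by $n-1$. Your earlier suggestion of taking the \emph{larger} branch is the choice appropriate to {\sc Long choice} (where one wants the sequence to be extendable) and is exactly backwards here.

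The missing idea, which is what the paper's proof supplies, is to choose $b_j$ so that the live set shrinks as fast as possible, and to track this quantitatively via the invariant $|S(A_j,B_j)|\le 2^{n-j}-2$. This holds at $j=0$ since $|[2^n-2]|=2^n-2$; and if it holds at stage $j$ with $S(A_j,B_j)$ nonempty, then $|S(A_j,B_j)\setminus\{a_j\}|\le 2^{n-j}-3$, so at least one of the two parts $S(A_j{}^\frown a_j, B_j{}^\frown 0)$ and $S(A_j{}^\frown a_j, B_j{}^\frown 1)$ has size at most $\lfloor(2^{n-j}-3)/2\rfloor=2^{n-j-1}-2$; one chooses $b_j$ to select that part. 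The invariant forces $|S(A_{n-1},B_{n-1})|\le 2^1-2=0$, so the process stops with an empty live set at some stage $j\le n-1$, yielding a solution with $k=j-1\le n-2$. Note that this is precisely where the otherwise odd-looking universe size $2^n-2$ (rather than $2^n$) is used.
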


\begin{proof}
As in Lemma~\ref{lem:long_choice_total} we define
\[S(A_j, B_j) :=
\big\{ z \in [2^n{-}2] \setminus \{ a_0, \dots, a_{j-1} \} : 
	\forall i \! < \! j \ P_{i}(a_0, \dots, a_i, z) = b_i \big\}.
\]
We will inductively build sequences $A_j := a_0, \dots, a_{j-1}$
and $B_j := b_0, \dots, b_{j-1}$,
where: all elements of $A_j$ are distinct;
$A_j$ is a $B_j$ sequence for $\vec P$;
and  $|S(A_j, B_j)| \le 2^{n-j}{-}2$.
We will stop as soon as $S(A_j, B_j)$ becomes empty,
which must happen at $j=n{-}1$ or before. We take 
this $A_j$ and $B_j$ as our desired sequences.

The properties hold for the empty sequences $A_0$ and $B_0$.
Suppose they hold at stage $j<n{-}1$ and that $S(A_j, B_j)$ is nonempty.
Choose $a_j \in S(A_j, B_j)$. Then, as in the proof of
Lemma~\ref{lem:long_choice_total}, we can decompose $S(A_j,B_j)$
into $\{a_j\}$, $S(A_j^\frown a_j, B_j^\frown  0)$ and
$S(A_j^\frown  a_j, B_j^\frown  1)$. By the decomposition, one
of these last two sets must have size at most $2^{n-j-1}{-}2$.
We set $b_j$ accordingly.
\end{proof}

\begin{lemma}
{\sc Retraction weak pigeon} is counterexample reducible to {\sc Weak short choice}.
\end{lemma}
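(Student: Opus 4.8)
The plan is to reduce {\sc Retraction weak pigeon} to {\sc Weak short choice} by a counterexample reduction $f,g,h$, using the ``choice'' functions $P_i$ to encode the retraction pair $(f,g)$ in a way that forces the solution sequence to be so short that it directly exposes a collision. By Lemma~\ref{lem:amplification} we may start from an amplified instance, say functions $f:[2^{n-1}]\to[2^{2(n-1)}]$ and $g:[2^{2(n-1)}]\to[2^{n-1}]$ with the property that a solution is some $y$ with $f(g(y))\neq y$; more convenient is to work with pigeons $[2^{n-1}-2]$ and holes of some larger size, so that the sizes line up with the universe $[2^{n-1}-2]$ used in {\sc Weak short choice}. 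The universe of the {\sc Weak short choice} instance will be identified with the set of \emph{pigeons}, and the bits $b_0,\dots,b_k$ produced in a solution will be read as (a prefix of) the binary name of a \emph{hole}.

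The key idea, following the Ramsey/weak-pigeonhole interplay in~\cite{pudlak1990ramsey} and the design of {\sc Short choice} in~\cite{pasarkar2023extremal}, is to set $P_i(a_0,\dots,a_i,z)$ to be the $i$th bit of $f(z)$ — that is, $P_i$ ignores $a_0,\dots,a_{i-1}$ and depends only on $z$. Then a $\vec b$-sequence $a_0,\dots,a_k$ is exactly a set of $k+1$ distinct pigeons all of whose $f$-images agree with $\vec b$ on their first $k$ bits, i.e.\ all lie in the ``box'' of holes with prefix $b_0\cdots b_{k-1}$. A solution to {\sc Weak short choice} additionally asserts (the $\coNP$ part) that the sequence is \emph{maximal}: there is no further pigeon $a_{k+1}$ whose $f$-image has prefix $b_0\cdots b_k$. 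Since the number of pigeons is roughly half the number of holes, a simple counting argument — carried out exactly as in the totality proof of {\sc Short choice}, but now in reverse — shows that if every box of holes of prefix length $\ell$ contained \emph{at most one} pigeon, iterating would be consistent with the set of all pigeons being covered, which forces $k$ to be small; the point is that a genuinely maximal short $\vec b$-sequence must (because there are more pigeons available than a single hole can ``absorb'') contain two pigeons with the same $f$-image, which is our collision.

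Concretely, the reduction functions are: $f$ (the {\sc Short choice}-to-{\sc RWPP} reduction's $f$) builds the instance $P_0,\dots$ above from the given $f,g$; $g$ takes a solution $(\vec b,\vec a)$ and outputs a pair of indices $j<j'\le k$ with $f(a_j)=f(a_{j'})$ — to produce these we use the maximality clause together with the counting inequality $|S(A_j,B_j)|\le 2^{n-1-j}-2$ from the totality argument, which guarantees that two of the $a_j$ collide once $k$ has passed the ``capacity'' of a single hole (here the precomputed witness is entirely polynomial-time checkable from $f$ and does not touch the $\coNP$ assertion); and $h$ produces, from a failure of $\neg Q$ — i.e.\ from a witness that the pair we output was \emph{not} a collision — a counterexample $a_{k+1}$ to clause (b) of {\sc Weak short choice}, namely a pigeon extending the $\vec b$-box, which exists precisely because the claimed collision failed and so the box is not yet exhausted. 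One checks $\mathrm{ShortChoice}(f(u),(\vec b,\vec a),h(u,(\vec b,\vec a)))\to \mathrm{RWPP}(u,g(u,(\vec b,\vec a)))$ by tracing these definitions, as in Lemma~\ref{lem:wlc_to_rwpp2} and the other counterexample reductions in this section.

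The main obstacle is getting the arithmetic of the box sizes to line up so that the maximality (clause (b)) of a short choice solution is strong enough to \emph{force} a collision rather than merely to be consistent with one: this is where the $-2$'s in the definitions of {\sc Short choice} and the exact gap between $2^{n-1}-2$ pigeons and the hole set matter, and where one must use Lemma~\ref{lem:amplification} to pre-amplify {\sc Retraction weak pigeon} to a version with a comfortable multiplicative gap so that the counting slack survives the $n$-step induction. The rest — writing the $P_i$, and checking that all three reduction maps are polynomial time and that the $\coNP$ bookkeeping matches Definition~\ref{def:counterexample_TFNP} — is routine.
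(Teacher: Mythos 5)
Your setup is right: amplify {\sc Retraction weak pigeon} first, define $P_i(x_0,\dots,x_i,z)$ to be the $i$th bit of $f(z)$, and read a $\vec b$-sequence as a set of distinct pigeons whose $f$-images all lie in the box of holes with prefix $b_0\cdots b_{k-1}$. This is exactly how the paper begins. But your extraction step fails. You propose that the solution map output a pair $j<j'$ with $f(a_j)=f(a_{j'})$, forced by a counting argument from maximality. Two problems. First, that is the solution format for {\sc Weak pigeon}, not for {\sc Retraction weak pigeon}: the latter asks for a \emph{hole} $y$ with $f(g(y))\neq y$, where $g$ is the given retraction from holes back to pigeons. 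Second, no such collision need exist: $f$ maps the small set of pigeons into the much larger set of holes, so $f$ may perfectly well be injective, in which case a maximal $\vec b$-sequence exists with all $f$-images distinct and your map has nothing to output. Maximality only tells you that no \emph{other} pigeon maps into the box of holes with prefix $b_0\cdots b_k$; it says nothing about collisions among $a_0,\dots,a_k$.

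The missing idea is to use the retraction $g$ of the pigeonhole instance, which your extraction never touches. After amplifying to $f:[2^{n-2}]\rightarrow[2^{2n}]$ and $g:[2^{2n}]\rightarrow[2^{n-2}]$, the box $S$ of holes with prefix $b_0,\dots,b_k$ has size $2^{2n-k-1}\ge 2^{n+1}$ since $k\le n-2$, so by brute force one finds $y\in S$ with $y\notin\{f(a_0),\dots,f(a_k)\}$. Output $y$ as the candidate solution to {\sc Retraction weak pigeon} and $x:=g(y)$ as the candidate counterexample. If $f(x)=y$ then $x$ is a pigeon mapping into the box and is distinct from every $a_j$ (because $y\neq f(a_j)$), so $a_0,\dots,a_k,x$ is a $b_0,\dots,b_k$-sequence and $x$ genuinely violates clause (b) of {\sc Weak short choice}; contrapositively, if $x$ is not a counterexample then $f(g(y))\neq y$ and $y$ solves the instance. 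Note also that for a $\TFNP$ problem $Q$ the function $h$ of Definition~\ref{def:counterexample_TFNP} depends only on the instance and the purported {\sc Weak short choice} solution; there is no ``witness that the collision failed'' available as a third input, so your description of $h$ does not type-check either.
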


\begin{proof}
By the proof of Lemma~\ref{lem:amplification}, it is
enough to show this for an amplified
version of {\sc Retraction weak pigeon}, namely the problem:
given circuits for functions 
$f:[2^{n-2}] \rightarrow [2^{2n}]$ and 
$g:[2^{2n}] \rightarrow [2^{n-2}]$,
find $y \in [2^{2n}]$ such that $f(g(y))  \neq y$.
Here $[2^{n-2}]$ are the pigeons and $[2^{2n}]$ are the holes.
%(The particular number $2^{2n}$ is not so important; we just
%need something sufficiently larger than~$2^n$.)

We define  $P_0, \dots, P_{n-2}$ by taking
$P_i(x_0, \dots, x_i, z)$ to be  the $i$th bit of $f(z)$,
like in Lemma~\ref{lem:wpp_wlc}.
Strictly each $P_i$ should take arguments in $[2^{n-1}{-}2]$;
we  map any $z \ge 2^{n-2}$ to $0$.
Suppose we are given a claimed solution
$b_0, \dots, b_k$ and $a_0, \dots, a_k$
to {\sc Weak short choice}, with $k \le n-2$.
Let $S \subseteq [2^{2n}]$
be the set of holes whose first $k+1$ bits are
 $b_0, \dots, b_k$. Then $|S|=2^{2n-k-1} \ge 2^{n+1}$,
 so by brute-force search we can find $y \in S$ in polynomial time
 such that $y \notin \{ f(a_0), \dots, f(a_k) \}$
 (assuming $n$ is sufficiently large).
 
Let $x=g(y)$. If $f(x) \neq y$ then $y$ is a solution to our amplified instance of {\sc Retraction weak pigeon}
 and we are done.
Otherwise $f(x)=y$,
so $x \notin \{ a_0, \dots, a_k \}$.
Thus $a_0, \dots, a_k, x$ is a  $\vec b$-sequence for 
$\vec P$, so $x$ is a counterexample to the claimed solution
 to {\sc Weak short choice}.
\end{proof}

\begin{proposition} \label{pro:weak_short_approx}
{\sc Weak short choice} is counterexample reducible to $\RWPP_2$.
Therefore any $\TFNP$ problem counterexample reducible to {\sc Weak short choice}
is in $\APPROX$.
\end{proposition}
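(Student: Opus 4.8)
The plan is to imitate Proposition~\ref{pro:wlc_to_rwpp2}, replacing its long-choice construction with the ``dual'' greedy construction underlying the totality proof of {\sc Short choice}. Fix a {\sc Weak short choice} instance $P_0,\dots,P_{n-2}$ over $U:=[2^{n-1}{-}2]$. For a bit string $\vec c=c_0,\dots,c_{\ell-1}$ with $\ell\le n{-}1$ let its \emph{greedy sequence} be $a_0:=0$ and, for $j\ge 1$, $a_j:=$ the least $z>a_{j-1}$ with $P_i(a_0,\dots,a_i,z)=c_i$ for all $i<j$; call $\vec c$ \emph{alive} if this succeeds up to $j=\ell$. If some step fails, then --- using that a greedy sequence has no legal element in any gap between consecutive $a_j$'s --- the truncation $(a_0,\dots,a_{j-1})$, $(c_0,\dots,c_{j-1})$ is already a {\sc Weak short choice} solution with $k=j{-}1\le n{-}2$. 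For an alive $\vec c$ of length $\ell$ write $T(\vec c):=\{z>a_\ell : P_i(a_0,\dots,a_i,z)=c_i\text{ for }i<\ell\}$ and, for $b\in\{0,1\}$, let $\mu(\vec c,b):=$ the least $z\in T(\vec c)$ with $P_\ell(a_0,\dots,a_\ell,z)=b$, if any; if there is none then $(a_0,\dots,a_\ell)$, $(c_0,\dots,c_{\ell-1},b)$ is a {\sc Weak short choice} solution with $k=\ell\le n{-}2$.

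For the reduction I would take the ``holes'' of the $\RWPP_2$ instance to be the set $D$ of all $(\vec c,b)$ with $|\vec c|\le n{-}2$, together with one extra symbol $*$ (so $|D|=2^n{-}1$, a factor $\approx 2$ above $|U|$; the discrepancy with an exact power of two is absorbed by the padding argument at the end of Proposition~\ref{pro:wlc_to_rwpp2} and Lemma~\ref{lem:amplification}), and the ``pigeons'' to be $U$. Let $G$ be the $\P^\NP$ machine that on $(\vec c,b)$ runs the greedy computation for $\vec c\,b$ --- each ``least $z$'' step being a binary search with NP queries --- and outputs $\mu(\vec c,b)$, halting early with a recorded {\sc Weak short choice} solution if a stuck-or-dead event occurs; put $G(*):=0$. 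Let $F$ be the $\P^\NP$ machine that on $z\in U$ runs greedy ``towards $z$'': having found $a_0,\dots,a_{j-1}$, set $c_{j-1}:=P_{j-1}(a_0,\dots,a_{j-1},z)$, then find the least $z'>a_{j-1}$ meeting the constraints for $c_0,\dots,c_{j-1}$ (which exists, since $z$ qualifies, and equals $a_j$); on reaching $a_j=z$ output $(c_0\dots c_{j-2},\,c_{j-1})$, output $*$ if $z=0$, and output $*$ if $z$ is not reached within $n{-}1$ steps. The point of the construction is that if every NP reply is correct then $F(G(y))=y$ for all $y\in D$: for $y=(\vec c,b)$ the value $z=\mu(\vec c,b)$ lies in $T(\vec c)$, so greedy-towards-$z$ reproduces the same sequence $a_0,a_1,\dots$ as greedy-for-$\vec c\,b$ and reaches $z$ exactly at step $|\vec c|{+}1$, recovering $(\vec c,b)$.

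Hence, given a solution $(y,w)$ of this $\RWPP_2$ instance (where $w$ records the computation of $F(G(y))\ne y$ with all YES replies witnessed), one of two things has happened inside $w$: either $G$ hit a stuck-or-dead event and $w$ carries a {\sc Weak short choice} solution candidate, which is genuine unless a recorded NO reply is false --- and then a counterexample to the candidate is precisely a witness refuting that NO reply; or no such event occurred, in which case, since an honest run would give $F(G(y))=y$, some recorded NO reply must be false, and comparing the searches performed by $G$ and by $F$ inside $w$ exhibits, in polynomial time, a YES-witness for a query that $w$ answered NO. Either way we get the reduction triple $f,g,h$ of Definition~\ref{def:tfs2_tfs2}, exactly as in Propositions~\ref{pro:QBF1_from_PNP} and~\ref{pro:wlc_to_rwpp2}: $f$ builds $(F,G)$, $g$ outputs the candidate (or anything, in the second case), and $h$ --- given a purported counterexample, or unconditionally in the second case --- returns the false-NO witness; the closing sentence then follows by transitivity of counterexample reducibility and Proposition~\ref{pro:APPROX_PLS}. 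The main obstacle I anticipate is the interaction of the $\coNP$ bookkeeping with the fact that the greedy construction is only $\P^\NP$-computable: one must keep solution lengths within the bound $k\le n{-}2$ (this is why $D$ uses only prefixes of length $\le n{-}2$, and is what makes this genuinely the \emph{dual} of the long-choice case rather than a mirror of it), and one must arrange that any divergence between $G$'s and $F$'s recorded computations is localisable inside $w$ without further NP queries. The combinatorial reason a retraction failure must occur at all --- that the binary tree of alive greedy prefixes, were every branch to survive to depth $n{-}1$ with nonempty legal-next-choice sets throughout, would consume strictly more than $|U|$ distinct elements of $U$, mirroring the size decomposition in the totality proof --- is not logically needed to verify the (purely formal) counterexample reduction, but it is its conceptual content.
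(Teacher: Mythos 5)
Your proof is correct and follows essentially the same route as the paper's: the same assignment of universe elements to pigeons and bit strings to holes (the paper writes the instance as $(G,F)$ rather than renaming the machines), the same pair of greedy $\P^\NP$ machines built from the canonical-sequence construction, and the same two-case analysis of the returned solution, with the stuck case yielding the {\sc Weak short choice} candidate whose counterexamples refute either the final ``no extension exists'' NO reply or one of the minimality NO replies. The remaining differences are cosmetic --- pairs $(\vec c, b)$ plus $*$ versus strings of length at most $n{-}1$, and the same padding fix for the pigeon and hole counts at the end.
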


\begin{proof}
We are given circuits for functions $P_0, \dots, P_{n-2}$. 
For the reduction we will use the same $\P^\NP$ machines $F$ and $G$
as we used in the proof of Proposition~\ref{pro:wlc_to_rwpp2},
with the understanding that strings used as arguments for the functions
$P_i$ now range over $[2^{n-1}{-}2]$ rather than over $[2^{n+1}]$.
We will also swap the role of $F$ and $G$.

Precisely, we now take the set of pigeons to be the interval $[2^{n-1}{-}2]$
and the set of holes to be the set of binary strings of length $n{-}1$ or less,
which we identify with the interval $[2^n{-}1]$. Our input
to $\RWPP_2$ is now the pair $(G, F)$, where $G$ maps pigeons to holes
and $F$ maps holes to pigeons. We get back a solution $(\vec b, w)$,
where $\vec b$ is a hole $b_0, \dots, b_{m-1}$ with $m<n$ and
$w$ is a computation of $G(F(\vec b)) \neq \vec b$ 
coming with the assertion that no NO reply in $w$ has a counterexample. 
There are two possibilities.

The first is that $F$ output the error code~$0$. This means that in~$w$, 
$F$ found a canonical $b_0, \dots, b_{k}$-sequence $a_0, \dots, a_{k}$ for some $k < m$,
but could not find any $z>a_{k}$
with $P_i(a_0, \dots, a_i, z) = b_i$ for all $i \le k$. 
We take $b_0, \dots, b_{k}$ and $a_0, \dots, a_{k}$ as our solution to 
our instance of {\sc Weak short choice}. This satisfies the first
part of the definition of {\sc Weak short choice}, because a fortiori
$a_0, \dots, a_{k}$ is a $b_0, \dots, b_{k-1}$-sequence for $\vec P$.
Suppose that we are given a counterexample $a_{k+1}$ for the second part
of the definition, that is, $a_0, \dots, a_{k+1}$ is a $b_0, \dots, b_k$-sequence
for~$\vec P$. If $a_{k+1} > a_k$, then $a_{k+1}$ is a counterexample
to the NO reply in $w$ asserting that there is no $z>a_k$ with this property.
Otherwise $a_{k+1}$ lies inbetween $a_i$ and $a_{i+1}$ for some $i<k$,
and witnesses that $a_{i+1}$ was not the least possible choice
to extend $a_0, \dots, a_i$, so necessarily is a counterexample to some
NO reply made in $w$ during the search for $a_{i+1}$.

The second possibility is that $F$ finds $a_0, \dots, a_m$ which is 
a $b_0, \dots, b_{m-1}$-sequence for $\vec P$, and outputs $y = a_m$.
Then, as in the proof of Proposition~\ref{pro:wlc_to_rwpp2},
$G$ must fail to reconstruct $a_0, \dots, a_m$ and $b_0, \dots, b_{m-1}$
from~$y$, since we are given that it does not output~$\vec b$. 
The only way this can happen is that $G$ and $F$ at some point
received different replies to the same $\coNP$ query, which gives
us a counterexample to~$w$.

This is not quite an instance of $\RWPP_2$, since we have
$2^{n-1}{-}2$ rather than $2^n$ pigeons and $2^n{-}1$ rather than $2^n$ holes.
We extend $G$ to map the two extra pigeons to the extra hole,
and $F$ to map the extra hole to one of the extra pigeons.
\end{proof}

%%%%%%%%%%%%%%%%%%%%%%%%%%%%%%%%%%%%%%%%
\subsection{A direct reduction of {\sc Retraction weak pigeon} to {\sc Local-}$\Gqbf^2$}
\label{sec:direct_rWPHP_to_PH} 

Finally in this section we give another example of the search problems that 
live inside the $\loc\Gqbf^k$ hierarchy above the level of PLS, 
by showing that {\sc Retraction weak pigeon} is reducible to $\loc\Gqbf^2$.
The existence of such a reduction follows already from  Corollary~\ref{cor:BA_hierarchy}
and known proofs of the retraction weak pigeonhole principle in the theory $T^2_2$,
for example based on the proof in~\cite{maciel2000new},
as would a similar proposition for {\sc Weak pigeon}. 
However we include a self-contained proof because it is interesting to see what
an explicit ``combinatorial'' reduction, not going through logic, looks like. 
A less-direct
reduction is sketched in~\cite{krajivcek2007np}. 

\begin{proposition} \label{pro:rwpp_to_qbf2}
{\sc Retraction weak pigeon} is reducible to $\loc\Gqbf^2$.
\end{proposition}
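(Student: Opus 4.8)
The plan is to carry out the standard proof that the retraction weak pigeonhole principle follows from iterated ordinary pigeonhole, but inside the Inspector-Adversary game over $\Gqbf^2$. By Lemma~\ref{lem:amplification} it suffices to reduce the amplified form of {\sc Retraction weak pigeon} with a large gap, say $f:[2^n]\rightarrow[2^{2n}]$ and $g:[2^{2n}]\rightarrow[2^n]$, to $\loc\Gqbf^2$; equivalently, by Proposition~\ref{pro:solvable}, to describe a polynomial-time Inspector strategy that solves this problem in a game against a $\Gqbf^2$ adversary. The key resource is that $\Sigma^q_2$ queries let the Inspector ask the adversary questions of the form ``does there exist a hole $y$ in the interval $I$ such that for all pigeons $x$, $f(g(y))\neq y$ witnessed by$\ldots$'' — more precisely, statements about counting images under $f\circ g$, which are naturally $\Sigma^q_2$ because they assert existence of a hole together with a universally-quantified non-preimage condition. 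By the well-definedness Lemma~\ref{lem:QBF_well_defined} the Inspector does not need to worry about the exact syntactic form of the circuits she sends.

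The core combinatorial argument runs as follows. For a set $Y\subseteq[2^{2n}]$ of holes let $g(Y)\subseteq[2^n]$ be its image under $g$, and let $f(g(Y))$ be the image of that under $f$. If $f(g(y))=y$ for every $y$ in our target interval, then $f\circ g$ restricted to that interval is injective, so $|f(g(Y))|=|Y|$ for every subinterval $Y$; but $g(Y)$ has size at most $\min(|Y|,2^n)$, and $f$ cannot increase cardinality, so $|f(g(Y))|\le 2^n$. Taking $Y=[2^{2n}]$ gives $2^{2n}\le 2^n$, a contradiction. To make this feasibly witnessable in the game, the Inspector asks, via a $\Sigma^q_2$ query, whether there is a ``bad'' hole $y$ (one with $f(g(y))\neq y$) in the full interval $[2^{2n}]$; if the adversary says yes, binary search using axioms 1 and 2 of $\Gqbf^2$ forces him to produce an explicit such $y$, which is a solution. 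If the adversary says no, the Inspector instead formalizes — again as a $\Sigma^q_2$ (or $\Pi^q_2$) statement — the claim ``$g$ maps some specified interval $[a,b)$ of holes into a range of pigeons of size less than $b-a$,'' and does a binary-search / pigeonhole counting argument: splitting the hole-interval into two halves, at least one half must contain two holes mapped by $g$ to the same pigeon, OR the whole of $[2^{2n}]$ is mapped into $[2^n]$ which forces a collision $g(y)=g(y')$ for some $y\neq y'$; but $f(g(y))=f(g(y'))$ then cannot equal both $y$ and $y'$, contradicting the adversary's ``no'' answer and yielding a bad hole after all. Each step either produces a solution, or forces the adversary into a $\Gqbf^2$-axiom violation that the Inspector can exhibit concretely; iterating the halving takes $O(n)$ rounds, each polynomial, so the strategy is polynomial-time.

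The main obstacle I expect is bookkeeping the quantifier complexity: the statements being queried must genuinely stay within $\Sigma^q_2\cup\Pi^q_2$, since $\Gqbf^2$ only guarantees the axioms on those QBFs. Asserting ``interval $I$ of holes has image under $f\circ g$ of size $< |I|$'' naively reads as ``there exist two holes $y\neq y'$ in $I$ with $f(g(y))=f(g(y'))$,'' which is $\Sigma^q_1$ — fine — but chaining this with the injectivity consequence and performing the recursion may tempt one to nest an existential (a colliding pair, with its $f,g$-witnesses) inside a universal (over all holes in a sub-block), which is exactly $\Sigma^q_2$ and no worse, provided one is careful never to alternate a third time. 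The technical work is to phrase the recursive halving so that the inductive invariant the adversary is pinned to — something like ``$g$ restricted to block $B$ is injective and $f$ restricted to $g(B)$ is injective'' — is itself a $\Pi^q_2$ statement, so that its failure at some stage is a $\Sigma^q_2$ witness the Inspector can extract by binary search through the QBF axioms. Once the invariant is set up at the right level, the rest is a routine adaptation of the proofs of Lemma~\ref{lem:lonely_to_Gamma} and Proposition~\ref{pro:QBF1}, using Lemma~\ref{lem:QBF_well_defined} to move freely between equivalent circuit encodings.
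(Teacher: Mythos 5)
Your framing (amplify via Lemma~\ref{lem:amplification}, then give a polynomial-time Inspector strategy over $\Gqbf^2$ using Proposition~\ref{pro:solvable}) is the right setup, and the opening move --- query whether a bad hole exists and extract one by binary search if the Adversary says yes --- is fine. But the core of your argument, the step that actually does the counting when the Adversary claims $f(g(y))=y$ for all $y$, does not work. You propose to ``split the hole-interval into two halves; at least one half must contain two holes mapped by $g$ to the same pigeon.'' That inference is invalid: a $g$-collision in $I_0\cup I_1$ may have one endpoint in each half, and once the intervals shrink to size at most $2^n$ there need not be any collision inside them at all, so the binary search has no invariant it can maintain. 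The honest invariant would be ``the image of $I$ under $g$ has size less than $|I|$,'' but propagating that through a halving requires exact cardinality comparisons, which is precisely what $\Gqbf^2$ cannot supply (indeed {\sc Pigeon} is not reducible to any $\loc\Gqbf^k$ relative to an oracle, by Corollary~\ref{cor:hierarchy_separate}). In effect your strategy reduces the problem to ``find a collision of $g:[2^{2n}]\rightarrow[2^n]$,'' i.e.\ to {\sc Weak pigeon}, whose membership in $\loc\Gqbf^2$ is itself a nontrivial theorem of the same difficulty as the one being proved; invoking it here begs the question.

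The paper avoids counting altogether by adapting the Paris--Wilkie--Woods argument. Writing $f$ as a pair $f_0,f_1$ of $n$-bit outputs, one iterates $f$ along a depth-$n$ binary tree (so a root label $x$ induces a leaf labelling $w\mapsto F_w(x)$), and diagonalizes: define $H(w)$ to disagree with $F_w(w)$, so no root label can realize $H$ at every leaf. The function $g$ is then used to try to build exactly such a root label bottom-up. The feasibly witnessable invariant is ``branch $b$ admits a good labelling,'' a $\Pi^q_1$ property whose existential closure is $\Sigma^q_2$; a binary search over branches pins the Adversary to a branch $b$ where the induction step from $b$ to $b+1$ fails, and unwinding that failure yields strings $z_0,z_1,z$ with $z=g(z_0,z_1)$ but $(f_0(z),f_1(z))\neq(z_0,z_1)$, i.e.\ a solution to {\sc Retraction weak pigeon}. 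The lesson is that the quantifier bookkeeping you worry about in your last paragraph is not the main obstacle; the obstacle is that some substitute for cardinality reasoning (here, iteration plus diagonalization) is needed before any binary search can get off the ground.
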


\begin{proof}
We adapt the proof of the weak pigeonhole principle in the bounded arithmetic
theory $S^3_2$, based on~\cite{pww:primes}.
By Lemma~\ref{lem:amplification} 
we may assume that we only need to solve an instance of the amplified version
of {\sc Retraction weak pigeon},
that is, we have
$f:[2^{n}] \rightarrow [2^{2n}]$ and 
$g:[2^{2n}] \rightarrow [2^{n}]$
and want to find $y \in [2^{2n}]$ such that $f(g(y))  \neq y$.
We think of $f$ as pair of functions $f_0, f_1$ outputting two $n$-bit strings,
and of $g$ as a two-argument function $g(y_0, y_1)$ that combines two $n$-bit strings into one.
We first outline the bounded arithmetic proof, which is by a diagonalization argument.
Given a binary string $w = w_0, \dots, w_{m-1}$, 
write $F_w$ for the function $x \mapsto f_{w_{m-1}}(\dots (f_{w_0}(x)))$.

Consider the full binary tree of depth $n$, where we  label the root (at the top)
with some string $x$, and for every node with label $z$ we label its left and right
children with $f_0(z)$ and $f_1(z)$. Then $F_w(x)$ is the label of the node at address $w$.
We let $H$ be a diagonal function $[2^n] \rightarrow [2^n]$, 
with $H(w)=1$ if $F_w(w) = 0$ and $H(w)=0$ otherwise. 
By construction, there is no label $x$ for the root of the tree that will result in every leaf $w$
getting the label $H(w)$.

To complete the proof, we will try to use the function $g$ to construct exactly such a label~$x$.
%in such a way that
%when we fail that will give us our solution to {\sc Retraction weak pigeon}.
In bounded arithmetic this is set up as a length-$n$ backwards $\Sigma^b_3$ induction on $i$,
with inductive hypothesis:
for every node $w$ at level $i$ in the tree there is
a label $z$ for $w$ such that for every path $v$ from $w$ to a leaf $wv$,
we have $F_v(z)=H(wv)$.
This is true at the leaves (level~$n$)
 and we can use $g$ for the inductive step, 
constructing labels for level $i$ from labels for level $i+1$.
The only way for the inductive hypothesis to fail is if we encounter
a solution to {\sc Retraction weak pigeon}. Otherwise
we conclude the hypothesis for level~$0$, the root, which gives us 
a label~$x$ contradicting our diagonal construction.

This argument can be turned directly into a protocol for $\Gqbf^3$ that would solve the problem.
However, by results in~\cite{buss:axiomatizations} if a TFNP
problem is provably total in~$S^3_2$, that is, using an induction of the syntactic form described, 
then there is already a $\P^\NP$ machine any computation of which will 
provably solve the problem; such a machine is
essentially the same thing as a protocol for $\Gqbf^2$. We describe such a protocol explicitly.

Consider a branch $b$ in the tree. 
Define a \emph{good labelling} of $b$ as follows:
as we traverse $b$ from the root to a leaf, each time we go to a right-hand child
 we have a label for the left-hand child $w$ which gives rise, using $F$, to a labelling
of the subtree below $w$ in which every leaf $wv$ gets the label $H(wv)$, where $H$
is as above.
Being a good labelling is a $\Pi^q_1$ property. The empty labelling is
a good labelling for the leftmost branch, and any good labelling for the rightmost
branch leads to a contradiction, since we can use it to construct in polynomial
time using $g$ a label $x$ for the root such that for every leaf $v$ we have $F_v(x)=H(v)$
(or if the construction fails, it gives us a witness to {\sc Retraction weak pigeon}, as below).
So we can use binary search over branches $b$, asking $\Sigma^q_2$ queries,
to find $b$ such that the Adversary asserts that $b$ has a good labelling and $b+1$ does not.
We then try to construct a good labelling for $b+1$ by using $g$ to combine
labels from $b$; either we will force the Adversary into a contradiction, 
or we will find string $z_0, z_1, z$ such that $z=g(z_0, z_1)$ but 
$(f_0(z), f_1(z)) \neq (z_0, z_1)$, which is a witness  to {\sc Retraction weak pigeon}.
\end{proof}

%%%%%%%%%%%%%%%%%%%%%%%%%%%%%%%%%%%%%%%%%%%%%%%%
\section{Decision-tree classes and propositional proofs} \label{sec:dt}
%%%%%%%%%%%%%%%%%%%%%%%%%%%%%%%%%%%%%%%%%%%%%%%%

We show that some previously-studied TFNP
problems are characterized by the constant-depth Frege
and (unrestricted) Frege proof systems, in the sense of~\cite{goos2022separations}.
We will show in the next section that these are equivalent to our problems
$\loc\Gqbf^k$ and $\loc\Gqbf$.
We will need to work with a slightly different, non-uniform
version of relativized TFNP, which works well with propositional proof complexity.

\subsection{Decision-tree TFNP and characterizations}

The following definitions are from~\cite{goos2022separations},
following~\cite{hubacek2024tfnp}.

\begin{definition}
An (abstract) \emph{total search problem} is a total relation
$R \subseteq \{0,1\}^{r} \times \Out$,
where~$r \in \NN$ and $\Out$ is some finite set.
We think of strings $x \in \{0,1\}^r$ as \emph{inputs}
and elements of $\Out$ as possible \emph{solutions}.
For a family $(R_a)_{a \in \NN}$ of such problems we will
use the notation $R_a, r_a, \Out_a$.

A family $(R_a)_{a \in \NN}$ is in $\TFNP\dt$ if $r_a$ is at most 
quasipolynomial in~$a$ and, for each $y \in \Out_a$,
there is a decision tree $T_{a,y}$ of depth $\poly(\log a)$,
querying~$x$ and deciding whether $(x,y) \in R_a$.
\end{definition}

\begin{definition} \label{def:dt_reduction}
Let $R \subseteq \{0,1\}^r \times \Out_R$
and $S \subseteq \{0,1\}^s \times \Out_S$ be two total search problems.
A \emph{reduction} of $R$ to $S$ is a pair of functions
$f:\{0,1\}^r \rightarrow \{0,1\}^s$ and 
$g:\{0,1\}^r \times \Out_S \rightarrow \Out_R$
satisfying, for all $x \in \{0,1\}^r$ and all $y' \in \Out_S$,
\[
(f(x), y') \in S \longrightarrow (x, g(x,y')) \in R.
\]
The reduction has \emph{depth $d$} if each bit of $f(x)$
and all functions $x \mapsto g(x,y')$, for each $y' \in \Out_S$,
are computable by depth-$d$ decision trees, querying bits of~$x$.
\end{definition}

We will often  call something a search problem,
or a CNF, when strictly we should say it is a family of such things. It should
be clear from the context what is meant.

\begin{definition}
Let $R$ and $S$ be total search problem families in $\TFNP\dt$. 
We say $R$ \emph{is reducible to $S$}, or $R \le S$, if
for each $a$ there is $q$ quasipolynomial in $a$ such that 
there is a reduction of $R_a$ to $S_q$ of depth $\poly (\log a)$.
If also $S \le R$ we say that $R$ and $S$ are equivalent.
\end{definition}

We will call a CNF family $F=(F_a)_{a \in \NN}$ \emph{narrow}
if each $F_a$ has quasipolynomially in $a$ many clauses and variables,
and width at most $\poly(\log a)$.

\begin{definition}
For an unsatisfiable CNF $F$, $\Search(F)$ is the total search problem:
given a total assignment to the variables of $F$, find a false clause in~$F$.

For a family $F$ of unsatisfiable CNFs, $\Search(F)$ is the corresponding
family of search problems. If $F$ is a narrow family, then $\Search(F)$ is 
in $\TFNP\dt$.
\end{definition}

Now let $P$ be a propositional proof system and let $\mu_P$ be a measure of
proof size, appropriate for $P$;
for example $\mu_P(\pi)$ might be the number of symbols in $\pi$,
or, if $P$ is an algebraic system, the degree of~$\pi$.
For a CNF $F$ we write
$\mu_P(F) := \min \{ \mu_P(\pi) : \pi$ is a $P$-refutation of $F \}$.

\begin{definition} \label{def:characterization}
Let $R \in \TFNP\dt$. We say that the class of search problems reducible to $R$
is \emph{characterized by} proof system $P$ under measure $\mu_P$ if the following 
holds:
for every unsatisfiable narrow CNF family $F$, 
\[
\Search(F) \le R \ \longleftrightarrow \ \mu_P(F_a) \le \poly(\log a).
\]
We may also say that $R$ is characterized by $P$ under $\mu_P$, with the same meaning.
\end{definition}

The following characterizations (and more) are collected
together in~\cite{goos2022separations}.
These hold 
for a larger range of parameters, but we focus on the most important
case described in Definition~\ref{def:characterization} above, 
where the bound on the depth and on $\mu_P$ is of the form
$\poly(\log a)$.
The complexity classes below are the decision-tree versions of
the standard $\TFNP$ classes; see~\cite{goos2022separations}
for full definitions.

\begin{theorem}
We have the following pairs of $\TFNP\dt$ classes and proof
systems the characterize them, where the measure
 $\mu_P(\pi)$ is $\log(\mathrm{size}(\pi)) + \mathrm{degree}(\pi)$
 for the algebraic systems
 and $\log(\mathrm{size}(\pi)) + \mathrm{width}(\pi)$ for the clausal systems.
\begin{enumerate}
\item
$\FP\dt$ and treelike resolution \cite{lovasz95}
\item
$\PLS\dt$ and resolution \cite{bkt:fragments}
\item
$\PPA\dt$ and Nullstellensatz over $\mathbb{F}_2$ \cite{goos2019adventures}
\item
$\PPAD\dt$ and unary Nullstellensatz over $\NN$ \cite{goos2022separations}
\item
$\PPADS\dt$ and unary Sherali-Adams \cite{goos2022separations}
\end{enumerate}
\end{theorem}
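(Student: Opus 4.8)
The plan is to verify each of the five characterizations by appealing to the cited literature after fixing the common framework in which they live. Recall that each of the classes in the list is the closure, under the depth-$\poly(\log a)$ reductions of Definition~\ref{def:dt_reduction}, of a single complete problem: the trivial problem for $\FP\dt$, the decision-tree version of \textsc{Localopt} for $\PLS\dt$, of \textsc{Lonely} for $\PPA\dt$, of \textsc{Onto pigeon} for $\PPAD\dt$, and of \textsc{Left pigeon} for $\PPADS\dt$; each of these, up to the padding implicit in $\TFNP\dt$, is of the form $\Search(G)$ for an explicit narrow ``defining CNF'' family $G$. The characterization of Definition~\ref{def:characterization} then has two directions for an arbitrary narrow family $F$, and I would handle all five cases by the same pattern, which differs only in the choice of proof system $P$ and the local ``counting identity'' attached to $G$.

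For the direction $\Search(F)\le R \Rightarrow \mu_P(F_a)\le\poly(\log a)$ I would argue uniformly. A depth-$d$ reduction $(f,g)$ from $\Search(F_a)$ to $\Search(G_q)$ converts a $P$-refutation of $G_q$ into one of $F_a$ by substituting, for each variable of $G_q$, the depth-$d$ decision tree that computes it from the $F_a$-input --- encoded as a narrow CNF for the clausal systems and as a low-degree $0/1$ polynomial for the algebraic ones --- and then deriving each clause of $F_a$ from the pulled-back refutation using that $g$ sends $G_q$-violations to $F_a$-violations. One checks that width, treelike structure, Nullstellensatz degree and Sherali--Adams degree are each preserved up to $\poly(d)=\poly(\log a)$ under this substitution, and that $G_q$ itself has a $P$-refutation of the claimed complexity: a short (treelike) resolution refutation, or a constant-degree Nullstellensatz resp.\ unary Sherali--Adams certificate coming from the local identity that witnesses totality of the class. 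This is the ``only if'' content of \cite{lovasz95, bkt:fragments, goos2019adventures, goos2022separations}.

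For the converse one extracts the reduction from the refutation. A size-$s$ treelike resolution refutation of $F$ is literally a depth-$\log s$ decision tree solving $\Search(F)$, so $\Search(F)\in\FP\dt$ (\cite{lovasz95}). A width-$w$ resolution refutation is turned into a \textsc{Localopt} instance on the refutation DAG carrying a monotone potential, each clause being a $\poly(w)$-query object; this is the reduction direction of \cite{bkt:fragments}. A degree-$d$ Nullstellensatz certificate over $\FF_2$ is unwound into a \textsc{Lonely}-style mod-$2$ parity argument on the monomials occurring in the coefficient polynomials, exactly as in the self-contained construction of Lemma~\ref{lem:Gamma_to_leaf} above; the unary-$\NN$ and unary Sherali--Adams cases are the corresponding \textsc{Onto pigeon} and \textsc{Left pigeon} constructions of \cite{goos2022separations}, where nonnegativity resp.\ unarity of coefficients is precisely what makes the bookkeeping an ``end-of-line'' resp.\ ``sink-of-line'' argument rather than an unrestricted parity count.

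The main obstacle is not conceptual but a matter of parameter bookkeeping: keeping depth and width/degree within $\poly(\log a)$ and size quasipolynomial under the quasipolynomial blow-up that $\TFNP\dt$ reductions allow, and reconciling the normalization of the complete problems used in \cite{goos2022separations} with that of the defining CNFs. Since none of the five statements needs an idea beyond what is already in the references, I would present this as a collected theorem, giving the framework above and citing the five proofs.
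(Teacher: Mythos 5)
Your proposal is correct and matches the paper's treatment: the paper states this theorem purely as a collection of known results cited from the literature and gives no proof of its own, so your plan of fixing the common framework and citing the five references is exactly what is done. Your sketch of the two directions (pulling back a refutation of the defining CNF through a depth-$\poly(\log a)$ reduction, and extracting a reduction from a low-width/low-degree certificate) accurately reflects the arguments in the cited works and parallels the machinery the paper itself develops for its new characterizations (Lemma~\ref{lem:reduction_to_treelike} and Theorems~\ref{the:GI_characterize}, \ref{the:FCon_characterize}).
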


We can add two more characterizations,
proved in Corollaries~\ref{cor:LK_k_2} and~\ref{cor:Gqbf_characterize} below.

\begin{proposition} \label{pro:proof_characterizations}
Under 
the measure $\mu_P(\pi) = \log(\mathrm{size}(\pi))$,
\begin{enumerate}
\item
$\loc\Gqbf^{k+2}$ 
is characterized by $\LK_{k+\hf}$ for $k \in \NN$ \\
(equivalently $\loc\Gqbf^{k+1}$ is characterized by depth-$(k{+}\hf)$ Frege)
\item
$\loc\Gqbf$
is characterized by the Frege system.
\end{enumerate}
\end{proposition}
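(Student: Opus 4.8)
The statement to prove pairs the TFNP problems $\loc\Gqbf^{k}$ and $\loc\Gqbf$ with constant-depth Frege and Frege respectively, in the decision-tree setting, under the size measure. The strategy is to establish a tight two-way correspondence between (i) a decision-tree reduction $\Search(F)\le \loc\Gqbf^{k+2}$ and (ii) a quasipolynomial-size depth-$(k{+}\hf)$ Frege refutation of $F$, and likewise between $\Search(F)\le\loc\Gqbf$ and a quasipolynomial-size Frege refutation. For concreteness I would fix $\LK_{k+\hf}$ as the sequent-calculus presentation of depth-$(k{+}\hf)$ Frege, so that cuts are on formulas of depth $k$ with a bounded-fan-in top layer, matching the $\Sigma^q_{k+2}/\Pi^q_{k+2}$ stratification of the QBFs the $\Gqbf^{k+2}$ adversary is permitted to answer.

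First, for the direction ``small proof $\Rightarrow$ reduction to $\loc\Gamma$'': given a narrow CNF $F_a$ with a depth-$(k{+}\hf)$ refutation $\pi$ of quasipolynomial size, I want to build, uniformly from $\pi$, a protocol for the Inspector in the $(\Gqbf^{k+2},\Search(F_a),x)$-game, where $x$ ranges over assignments to the variables of $F_a$. The idea, exactly as in the informal ``succinct treelike proof'' reading of a protocol discussed after Proposition~\ref{pro:solvable}, is: each line of $\pi$ is a sequent of bounded-depth formulas in the $F_a$-variables, which I encode as a QBF $F$ by replacing each variable by its value (queried from the input $x$) and leaving the bounded quantifier structure; the Inspector walks down $\pi$ (treelike, or the treelike version obtained by the standard quasipolynomial blow-up of Krajíček/Reckhow if $\pi$ is dag-like --- this is the step where the $+\hf$ half-level matters, since the top fan-in must stay bounded through this transformation), at each step querying the $\Gqbf^{k+2}$ adversary for the truth value of the relevant QBFs and using the axioms of $\Gqbf$ to check the inference locally. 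The soundness of each $\LK$ rule, relative to the adversary's claimed valuation, is forced by a short binary search over the QBF structure exactly as in Lemmas~\ref{lem:parity_well_defined} and~\ref{lem:QBF_well_defined}; if the adversary is consistent all the way down, the endsequent is empty/false, so some initial clause of $F_a$ is false under $x$, which is a solution to $\Search(F_a)$. The depth of the resulting reduction is $\poly(\log a)$ because each individual local check is a shallow decision tree in $x$, and quasipolynomial size of $\pi$ translates to a quasipolynomial-size protocol.

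Second, for the converse ``reduction to $\loc\Gamma \Rightarrow$ small proof'': given a depth-$\poly(\log a)$ reduction $(f,g)$ of $\Search(F_a)$ to $\loc\Gqbf^{q}_{b}$ for some quasipolynomial $b$, I need to extract a quasipolynomial-size depth-$(k{+}\hf)$ refutation of $F_a$. Here I would first observe that $\loc\Gqbf^{q}$ itself has a canonical refutation: its defining CNF (the negation of ``this is a valid $\Gqbf^q$-run not witnessing any failure'') is refutable in depth-$(q{-}\hf{+}O(1))$ Frege of quasipolynomial size, because the totality argument (Proposition~\ref{pro:loc_totality}) --- walking the protocol tree and using that a total TQBF oracle exists --- is a shallow inductive argument whose inductive statements are the QBF truth predicates, which are $\Sigma^q_q/\Pi^q_q$ and hence expressible by depth-$q$ formulas with bounded top fan-in. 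This is essentially the content that is ``more-or-less implicit'' in Skelley--Thapen and Beckmann--Buss; the technical work is to check that the propositional translation of their bounded-arithmetic proof stays at the right depth and size, which amounts to showing that each of the $O(t)$ steps of the protocol-simulation contributes a bounded-depth, bounded-width chunk. Then I compose: pull back this refutation along the depth-$\poly(\log a)$ reduction $(f,g)$, using the standard fact that substituting depth-$d$ decision trees into a depth-$e$ Frege proof yields a depth-$(d{+}e{+}O(1))$ Frege proof with only a quasipolynomial size increase (this is the decision-tree analogue of the substitution lemma, and is where one must be careful that $d+e$ collapses back to $k+\hf$ --- which it does because both $d$ and the ``extra'' top layers are $O(\log a)$-depth but the relevant alternation count is what is bounded, and $\poly(\log a)$-depth decision trees contribute only to size, not to alternation depth, when merged correctly). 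For $\loc\Gqbf$ (the unrestricted version) the same argument runs with no bound on alternations, giving plain Frege; the only subtlety is that the QBF truth predicate over closed QBFs of arbitrary quantifier depth is expressed by a formula whose depth grows with the instance, but since Frege has no depth restriction this is harmless, and size stays quasipolynomial because the protocol runs in polynomial time.

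**Main obstacle.** The crux is the bookkeeping in the second direction: precisely matching the quantifier-alternation level $k+2$ of $\Gqbf^{k+2}$ to the depth $k{+}\hf$ of the Frege system, through (a) the treelike-ification of the refutation, (b) the translation of the protocol-totality argument into a propositional proof of the right depth, and (c) the substitution of $\poly(\log a)$-depth decision trees from the reduction without inflating the alternation depth past $k+\hf$. Each of (a)–(c) is individually standard, but the constant offsets have to line up exactly, and that is where I expect the real care --- and the appeal to the fine-grained depth accounting in \cite{skelley2011provably} and \cite{beckmann2017np} --- to be needed. I would organize the actual proof so that this offset-tracking happens once, in the lemma asserting that $\loc\Gqbf^{q}$'s canonical CNF has a depth-$(q{-}\hf{+}O(1))$ quasipolynomial refutation and conversely, and then derive both items of the proposition as corollaries (this is what Corollaries~\ref{cor:LK_k_2} and~\ref{cor:Gqbf_characterize} are set up to be).
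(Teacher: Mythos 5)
Your proposal takes a genuinely different route from the paper. The paper does not prove this proposition by a direct two-way translation between $\loc\Gqbf^{k+2}$ and $\LK_{k+\hf}$; instead it factors the statement through the intermediate problems $\GI_{k+2}$ and $\FCon$: Theorems~\ref{the:GI_characterize} and~\ref{the:FCon_characterize} characterize those problems by $\LK_{k+\hf}$ and Frege (leaning on the Skelley--Thapen reduction $\oneref(\PK^0_k)\le\GI_{k+2}$ and on the $U^1_2$-provability of $\FCon$'s totality plus the $U^1_2$-to-Frege translation), and Propositions~\ref{pro:GI_loc_k} and~\ref{pro:U_1_2_Gps} establish $\loc\Gqbf^{k}\equiv\GI_k$ and $\loc\Gqbf\equiv\FCon$ via the bounded-arithmetic theories $T^k_2$ and $U^1_2$; the proposition is then read off as Corollaries~\ref{cor:LK_k_2} and~\ref{cor:Gqbf_characterize}. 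Your ``reduction $\Rightarrow$ proof'' direction is structurally the same as the paper's (refute the target problem's own CNF at the right depth, then compose via Lemmas~\ref{lem:reduction_to_treelike} and~\ref{lem:decision_tree_to_half_LK}), except that you propose to refute $\CNF(\loc\Gqbf^{k+2})$ directly where the paper refutes $\CNF(\GI_{k+2})$ in Lemma~\ref{lem:GI_refutation}; that substitution is reasonable but the propositional consistency proof for the protocol-totality statement would have to be written out, since it is not in the paper. What your route would buy, if completed, is a self-contained characterization that bypasses $\GI_k$, $\FCon$ and the detour through $T^k_2$ and $U^1_2$.

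The concrete soft spot is in your ``proof $\Rightarrow$ reduction'' direction. In the decision-tree setting an instance of $\loc\Gqbf^{k+2}$ of size quasipolynomial in $a$ encodes a protocol whose branches make only $\poly(\log a)$ queries, whereas the refutation $\pi$ has quasipolynomially many lines and, even after tree-ification, height that can be quasipolynomial in $a$. So ``the Inspector walks down $\pi$'' line by line is not a legal protocol, and balancing a treelike $\LK_{k+\hf}$ proof to polylogarithmic height is exactly the step that threatens to push the cut complexity (and hence the quantifier level of the queries) past $k+2$. The walk has to be replaced by a binary search for the first false line of the (dag-like) proof, with queries of the form ``is some line with index in $[l,r]$ false?'', followed by a local soundness check of the rule deriving that line; one must then verify that these interval queries are still $\Sigma^q_{k+2}$ and that a concrete axiom failure can be extracted when the adversary's interval answers clash with his answers on individual lines. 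This is precisely the content of the reflection-to-game-induction machinery the paper imports from~\cite{skelley2011provably}, and it is the piece your plan would need to reprove in the Inspector--Adversary language rather than obtain by ``walking down'' the proof.
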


Here we expect that item 1 could also be expressed as:
$\loc\Gqbf^k$ is characterized by depth-$k$ Frege, under
a suitably-defined measure of the form $\log(\mathrm{size}) +
($fan-in at depth 1$)$.

These are not really new results, but come from translating similar
characterizations in bounded arithmetic into the propositional 
setting of $\TFNP\dt$.
Firstly, \cite{skelley2011provably} defines a hierarchy
of TFNP problems $\GI_k$ and shows that they capture
the TFNP problems provably total in the theory $T^k_2$.
We show that this means that, unsurprisingly,
$\GI_{k+2}$ is characterized by the system $\LK_{k+\hf}$
known to correspond to $T^{k+2}_2$, and we show later
that $\GI_k \equiv \loc\Gqbf^{k}$. 
Secondly, \cite{beckmann2017np} defines a TFNP problem
$\FCon$ and shows that it captures the TFNP problems
provably total in the theory~$U^1_2$. We show
this means that $\FCon$ is characterized by the system Frege
known to correspond to $U^1_2$, and show later that
$\loc\Gqbf \equiv \FCon$.

We will work with fragments of the sequent calculus system LK,
in which every line is a propositional sequent.
However we will treat it as a \emph{cedent calculus} or \emph{Tait calculus},
in which lines are \emph{cedents}, that is, sets of formulas written as comma-separated sequences,
which behave semantically as disjunctions. 
This is a cosmetic change -- see e.g.~\cite{buss1998introduction}.
If we do not put any limit on the logical depth of formulas in a proof,
this is equivalent to the Frege system, in which lines are
propositional formulas. If we limit formulas to depth $k \in \NN$
then depth-$k$ LK, or $\LK_k$, arguably corresponds to depth-$(k{+}1)$ Frege,
since a line in such an LK proof is a cedent of depth-$k$ formulas,
and so the naturally corresponding line in a Frege proof 
would be the disjunction of those formulas, with depth $k{+}1$.
 For example,
resolution can be seen as depth-$1$ Frege (as we can model a clause is a 
depth-$1$ disjunction) or as $\LK_0$ (as we can model a clause
as a cedent consisting of depth-$0$ formulas, that is, literals).
An LK refutation of a CNF $F$ is an
LK derivation of the empty sequent in which we can use
each clause of $F$ (considered as a cedent) as an axiom.

We define $\LK_{k+\hf}$ to be an extension of $\LK_k$
to allow small additional fan-in gates, instead of just literals, at the bottom level.
Precisely, an $\LK_{k+\hf}$ proof is an $\LK_{k+1}$ proof
in which all subformulas at depth 1 (that is, conjunctions or disjunctions
of literals) have size at most log of the size of the proof -- see e.g~\cite{krajicek2019proof}.
In particular, for proofs of size quasipolynomial in some size parameter~$a$,
the bottom fan-in will be $\poly(\log a)$.
So $\LK_{\hf}$ is $R(\log)$~\cite{krajicek2001weak} 
and we could take $\LK_{-\hf}$ to be
polylogarithmic-width resolution.

%%%%%%%%%%%%%%%%%%%%%%%%%%%%%%%%%%%%%%%
\subsection{Extension by shallow decision trees}
%%%%%%%%%%%%%%%%%%%%%%%%%%%%%%%%%%%%%%%

We show some technical results that will help us connect
$\LK_{k+\hf}$ to reasoning with shallow decision trees. 
These are more-or-less implicit in the way $\LK_{k+\hf}$ is 
standardly used in switching lemmas and in translations of bounded arithmetic.

If $b$ is a branch in a decision tree, then each edge in $b$
is labelled with the literal assigned to true along that edge. 
We naturally identify $b$ with the conjunction of these literals,
and  write $\neg b$ for the clause that is the negation of this conjunction.

\begin{definition}
Let $T$ be a decision tree querying variables from some set $\vec x$.
The \emph{extension axiom} for $T$ is a CNF defining a new variable 
$e_T$, whose value should depend on the value of $\vec x$ based
on the decision made by the tree. The CNF consists of the clauses 
$\neg b \vee e_T$ for each accepting branch $b$ of $T$,
and $\neg b \vee \neg e_T$ for each rejecting branch $b$ of $T$.
\end{definition}

\begin{definition}
Let $P$ be a propositional proof system. We define a new system called
\emph{$P$ extended by depth-$d$ decision trees}. Given a CNF $F$,
a refutation of $F$ in this system is a $P$-refutation of $F \wedge A$,
where $A$ is a collection of extension axioms defining new variables
by decision trees, of depth at most~$d$, over the original variables of $F$.
We define \emph{$P$ extended by shallow decision trees} to be this system,
where the maximum depth $d$ is 
the logarithm of the size of the refutation.
\end{definition}

\begin{lemma} \label{lem:decision_tree_to_half_LK}
For $k \in \NN$, the system $\LK_{k + \hf}$ is quasipolynomially equivalent
to the system $\LK_k$ extended by shallow decision trees.
\end{lemma}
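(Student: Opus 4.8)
The plan is to exhibit two quasipolynomial-size translations, both built on the elementary observation that a depth-$d$ decision tree $T$ is interchangeable, up to short bounded-depth proofs, with a DNF $D_T$ (the disjunction of the accepting branches, each a conjunction of at most $d$ literals) and with a CNF $C_T$ (the conjunction, over accepting branches $b$, of the clauses $\neg b$); and conversely that any conjunction or disjunction of at most $m$ literals is trivially computed by a decision tree of depth $m$. Here ``short bounded-depth proofs'' means: for a fixed $T$ the sequents expressing $D_T \leftrightarrow C_T$, the joint unsatisfiability of any two distinct branches of $T$, and the validity of $D_T \vee \overline{D_T}$, all have treelike $\LK_{\hf}$-derivations over the small terms and clauses involved, of size $2^{O(d)}$ and constant depth. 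Applied with $d = \poly(\log a)$ these are of quasipolynomial size, hence freely available inside $\LK_{k+\hf}$ for every $k$. I will use these as black boxes throughout.

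For the direction $\LK_{k+\hf} \le \LK_k$-extended-by-shallow-decision-trees, I would take an $\LK_{k+\hf}$-refutation $\pi$ of a narrow CNF $F$, so that every depth-$1$ subformula $D$ occurring in $\pi$ is a conjunction or disjunction of at most $\log|\pi| = \poly(\log a)$ literals. For each such $D$ fix the obvious decision tree $T_D$ of depth $\le \log|\pi|$ computing it, introduce its extension variable $e_{T_D}$, and replace every occurrence of $D$ in $\pi$ by the literal $e_{T_D}$ (and of $\bar D$ by $\neg e_{T_D}$). Every cedent of $\pi$ now becomes a cedent of formulas of depth $\le k$ over the original variables together with the new $e$'s. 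Every weakening, contraction, cut, or $\wedge/\vee$-introduction step of $\pi$ that does not create a depth-$1$ subformula translates verbatim into a valid $\LK_k$-step; a step that did create a depth-$1$ subformula $D$ becomes a short derivation of the translated conclusion from the premisses using the extension axiom of $T_D$ --- concretely, at most $\log|\pi|$ cuts walking down the branches of $T_D$. Adjoining the extension axioms of all the $T_D$ yields an $\LK_k$-refutation of $F$ extended by decision trees of depth $\le \log|\pi|$, which is still $\le$ the logarithm of the size of the new refutation, i.e.\ the extension is by shallow decision trees, and the whole object is of quasipolynomial size.

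For the converse, I would take an $\LK_k$-refutation $\pi$ of $F \wedge A$, where $A$ collects extension axioms for decision trees $T$ of depth $d \le \log|\pi| = \poly(\log a)$, and perform a \emph{context-sensitive} substitution: an occurrence of the literal $e_T$ sitting directly below a $\vee$-gate, or at the top level of a cedent, is replaced by the DNF $D_T$, while one sitting directly below a $\wedge$-gate is replaced by an equivalent CNF; dually for $\neg e_T$. Because each replacement puts a disjunction below a $\vee$-gate or a conjunction below a $\wedge$-gate, the newly created depth-$1$ subformulas merge into the gate immediately above them, so the substitution raises the depth of every formula by at most one and leaves every depth-$1$ fan-in bounded by $d = \poly(\log a)$; thus each line becomes an $\LK_{k+\hf}$-cedent (for $k=0$ the cedents merely acquire small conjunctive terms, landing in $\LK_{\hf}$). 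It then remains to repair the inferences: a step not touching any $e_T$ survives unchanged; a cut on $e_T$, or a $\wedge/\vee$-introduction having $e_T$ as a principal subformula, involves two different normal forms of $e_T$ in premiss and conclusion and is patched by inserting the short $D_T \leftrightarrow C_T$ (or $D_T,\overline{D_T}\vdash\bot$) derivations above; and each substituted extension clause $\neg b \vee e_T$ becomes a sequent containing both the term $b$ and all its negated literals, hence a short-provable tautology, so the axioms of $A$ can be derived outright and discarded. The result is an $\LK_{k+\hf}$-refutation of $F$, still of quasipolynomial size.

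I expect the main obstacle to be exactly the depth accounting in this second direction: verifying that the context-sensitive choice of the DNF versus the CNF normal form of $e_T$, together with bottom-gate merging, costs only the ``half'' level rather than a full one, and that the auxiliary decision-tree equivalences genuinely have the small size and constant depth needed so that none of the patches --- nor the derived extension axioms --- push the refutation outside $\LK_{k+\hf}$ or past quasipolynomial size.
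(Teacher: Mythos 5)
Your proposal is correct and follows essentially the same route as the paper's proof: in one direction, depth\nobreakdash-$1$ small\nobreakdash-fan\nobreakdash-in subformulas are replaced by extension variables for their canonical decision trees and the $\wedge/\vee$\nobreakdash-introduction and cut steps are repaired by walking down branches of the extension axioms; in the other, extension literals are replaced by the small DNF/CNF of accepting/rejecting branches (chosen to merge into the gate above, which is exactly how the paper controls the depth for $k>0$), the substituted extension clauses become branch tautologies, and cuts on $e_T$ are repaired using that any accepting and any rejecting branch contain complementary literals.
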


\begin{proof}
We show this first for $\LK_{0 + \hf}$, that is, Res(log).

Let $\Pi$ be an $\LK_{0 + \hf}$ refutation of a CNF $F$. 
For each conjunction $B = y_1 \wedge \dots \wedge y_s$ appearing 
in $\Pi$ we introduce an extension variable $e_B$ representing~$B$,
using the decision tree that queries $y_1, \dots, y_s$ in order,
rejecting as soon as any~$y_i$ is false and accepting if they are all true. 
We turn $\Pi$ into a new object $\Pi'$ by replacing each conjuction with its 
extension variable; we do not touch individual literals which are not conjunctions,
and we may assume that disjunctions have already been expanded into comma-separated
sequences of literals, appearing directly in the cedent.
We turn $\Pi'$ into an $\LK_0$ refutation using the extension axioms for the new variables,
 by filling in gaps between the lines
of $\Pi'$. 
To do this, we observe that the extension axioms for $e_B$
consists  of the clauses $\{ \neg y_1, \dots, \neg y_s, e_B \}$
(for the only accepting branch) and 
$\{ y_1, \neg e_B \}$, $\{ \neg y_1, y_2, \neg e_B \}$, \dots ,
$\{ \neg y_1, \dots, \neg y_{s-1}, y_s, \neg e_B \}$ (for the rejecting branches). 
From the rejecting-branch clauses, for each~$i$ we can derive $\{ y_i, \neg e_B \}$ 
by resolution. Hence from any cedent $\Gamma, e_B$ in $\Pi'$ we can derive
$\Gamma, y_i$ for each~$i$. Now suppose $\Pi$ contained a $\wedge$-introduction instance
deriving $\Gamma, B \wedge z$ from $\Gamma, B$ and $\Gamma, z$.
In $\Pi'$, we must derive $\Gamma', e_{B \wedge z}$ from $\Gamma', e_B$ and 
$\Gamma', z$, where $\Gamma'$ is whatever $\Gamma$ turned into in $\Pi'$.
To do this we derive each $\Gamma', y_i$ from $\Gamma', e_B$
and then resolve all of these, together with $\Gamma', z$,
against the accepting-branch clause $\{ \neg y_1, \dots, \neg y_s, \neg z, e_{B \wedge z} \}$
of the axiom for $e_{B \wedge z}$, to derive $\Gamma', e_{B \wedge z}$ as required.
Other rules of $\LK_{0 + \hf}$ are handled similarly.

For the other direction, suppose we have an $\LK_0$ refutation $\Pi$ of $F \wedge A$,
where $A$ is a set of shallow decision tree extension axioms.
For each extension variable $e_T$, in each cedent in $\Pi$
we replace every positive occurrence $e_T$ with 
the set of conjunctions $\{ b : b$ is an accepting branch in $T \}$
and each negative occurrence $\neg e_T$ with 
the set of conjunctions $\{ b : b$ is a rejecting branch in $T \}$.
This gives a new object $\Pi'$, in which we need to fill in some gaps to get
an $\LK_{0 + \hf}$ refutation of $F$.
Consider first a clause in $A$ of the form $\neg b \vee e_T$ for $b$ an accepting
branch of $T$. In $\Pi'$ this will become the cedent
$\neg b \cup \{ c : c$ is an accepting branch in $T \}$,
where $\neg b$ is a set of literals (representing a disjunction).
In particular this will have as a subcedent $\neg b \cup \{ b \}$,
which is derivable in $\LK_{0 + \hf}$ from propositional axioms.
Clauses in $A$ for rejecting branches are handled the same way.
For a resolution step in $\Pi$ on an extension variable $e_T$ 
we must show how to derive $\Gamma'$ from
$\Gamma', b_1, \dots, b_s$
and $\Gamma', c_1, \dots, c_t$ in $\LK_{0+\hf}$, where
$b_1, \dots, b_s$ and $c_1, \dots, c_t$ list respectively
the accepting and rejecting branches of $T$, written as conjunctions.
We can do this by a series of cuts, observing that for each pair
$b_i$ and $c_j$ there is a literal in $b_i$ whose negation is in $c_j$.

For $k>0$, we simulate $\LK_{k+\hf}$ in $\LK_k$ with decision trees
by replacing each depth-($k{+}\hf$) formula with a depth-$k$ formula
built  from extension variables. For the other direction,
we can replace an extension literal in a depth-$k$ formula
with either a narrow CNF or a narrow
DNF, whichever is needed to limit the depth.
\end{proof}

\begin{definition}
For $R$ in $\TFNP\dt$
we can write a CNF expressing that $R_a$ has no solution, namely
$
\CNF(R_a) :=
\bigwedge_{y \in \Out_a} \bigwedge \{ \neg b : b
\textrm{ is an accepting branch
of } T_{a,y} \}.
$
$\CNF(R)$ is a narrow and unsatisfiable family of CNFs.
\end{definition}

Similar lemmas to the following, showing that a
TFNP reduction $R \le S$ implies the existence of 
a propositional proof of the totality
of $R$ from the totality of $S$, appear 
in~\cite{buresh2004relativized, buss2012propositional}.

\begin{lemma} \label{lem:reduction_to_treelike}
Let $F$ be a family of unsatisfiable narrow CNFs
and let $S \in \TFNP\dt$. Suppose that
$\Search(F) \le S$.
Then there are height-$\poly(\log a)$ treelike
resolution derivations
$
F_a \wedge A \vdash \CNF(S_{q}),
$
where $F_a$ is in variables $\vec x$; $q$ is quasipolynomial in~$a$;
$\CNF(S_q)$ is in a distinct set of variables $\vec z$; and
$A$ is a set of decision-tree extension axioms defining each $z_i$
by a  $\poly(\log a)$-depth decision trees $T_i$ querying $\vec x$.
\end{lemma}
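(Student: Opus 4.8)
\emph{Plan.} The idea is to unwind the reduction $\Search(F_a) \le S_q$ into a shallow decision tree over $\vec x$ whose every branch falsifies an axiom, and to recall that such a tree is the same thing as a treelike resolution derivation of the corresponding height (the translation witnessing the pairing of $\FP\dt$ with treelike resolution). We produce one such derivation for each clause of $\CNF(S_q)$. Fix $a$ and let $(f,g)$ be a depth-$d$ reduction of $\Search(F_a)$ to $S_q$ as in Definition~\ref{def:dt_reduction}, with $d = \poly(\log a)$ and $q$ quasipolynomial in $a$; then $\log q = \poly(\log a)$, so every tree below has depth $\poly(\log a)$. Each bit $z_i$ of $f(\vec x)$ is computed by a depth-$d$ decision tree $T_i$ over $\vec x$ — these $T_i$ are exactly the trees defining the extension variables of $A$ — and each map $\vec x \mapsto g(\vec x,y)$, $y \in \Out_q$, is computed by a depth-$d$ decision tree. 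Recall $\CNF(S_q) = \bigwedge_y \bigwedge\{\neg b : b \textrm{ an accepting branch of } T_{q,y}\}$, where $T_{q,y}$ is a depth-$\poly(\log a)$ tree over $\vec z$ deciding whether $(z,y) \in S_q$.

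Fix a clause $\neg b$ of $\CNF(S_q)$, where $b$ is an accepting branch of $T_{q,y}$ of length $m \le \poly(\log a)$, along which the variables $z_{i_1}, \dots, z_{i_m}$ take values $\epsilon_1, \dots, \epsilon_m$. A treelike resolution derivation of $\neg b$ from $F_a \wedge A$ of height $h$ is, up to the standard translation, a depth-$h$ decision tree which on every branch falsifies a clause of $F_a \wedge A \wedge \bigwedge_j (z_{i_j}{=}\epsilon_j)$. We build such a tree as follows. First query $z_{i_1}, \dots, z_{i_m}$; if some answer disagrees with $\epsilon_j$, the corresponding unit clause is falsified, so assume they all agree. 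Next, for $j = 1, \dots, m$ in turn, run $T_{i_j}$ over $\vec x$ from the current state, reaching a branch $\beta_j$ of $T_{i_j}$ on which it outputs some bit $\delta_j$; if $\delta_j \neq \epsilon_j$ then, since $\beta_j$ has been followed and $z_{i_j} = \epsilon_j$, the extension-axiom clause of $A$ for $\beta_j$ is falsified, so assume $\delta_j = \epsilon_j$ for all $j$. Let $\rho$ be the partial assignment to $\vec x$ accumulated so far, of size $\le md = \poly(\log a)$. Every total $\vec x \supseteq \rho$ has $f(\vec x)$ following branch $b$ of $T_{q,y}$, hence $(f(\vec x),y) \in S_q$; so by the reduction property $g(\vec x,y)$ is a clause of $F_a$ false under $\vec x$. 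Finally run the depth-$d$ tree computing $\vec x \mapsto g(\vec x,y)$, learn $g(\vec x,y) = C$ at its leaf, query the (at most $\poly(\log a)$, by narrowness) variables of $C$ not yet set — they are forced to the values falsifying $C$ — and stop at the falsified clause $C$ of $F_a$. Every branch has length $md + d + \poly(\log a) = \poly(\log a)$ and ends at a falsified axiom, giving the required derivation of $\neg b$; repeating over all clauses of $\CNF(S_q)$ finishes.

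\emph{Main point of care.} There is no genuine obstacle; the content is bookkeeping. The two places to be careful are: (i) the depth accounting, which works only because $q$ is merely quasipolynomial, so that $\log q$, the depths of $T_{q,y}$, of the $T_i$ and of the $g$-trees, and the width of $F_a$ are all $\poly(\log a)$, and these are composed only $\poly(\log a)$ many times; and (ii) ensuring the object falsified at each leaf is literally a clause of $F_a \wedge A$ or one of the assumed units $z_{i_j}{=}\epsilon_j$ — in particular that a disagreement between a tree $T_{i_j}$ and the branch $b$ is a failure of an extension axiom of $A$, which is why the $z_{i_j}$ are queried up front — together with stating the decision-tree/treelike-resolution correspondence for derivations of a clause, not just for refutations.
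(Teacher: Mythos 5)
Your proof is correct and follows essentially the same route as the paper's: fix an accepting branch $b$ of $T_{q,y}$, run the extension trees for the $z$-variables of $b$ (falsifying an axiom of $A$ if they disagree with $b$), then run the tree for $g(\vec x, y)$ and query the resulting clause of $F_a$, which must be false by the definition of a reduction; this Prover/decision-tree protocol is then read off as a shallow treelike resolution derivation of $\neg b$. Your explicit remarks on the depth accounting and on the clause-derivation (rather than refutation) form of the decision-tree correspondence are points the paper leaves implicit, but the argument is the same.
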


\begin{proof}
The reduction $\Search(F) \le S$ is given by functions $f$ and $g$,
where $f$ maps assignments to $F_a$ to inputs to 
$S_q$, and $g$ maps possible solutions $y' \in \Out_{S_q}$ to
indices of clauses in $F_a$.
Suppose $S_q$ takes $s$ bits of input, which we may
write as $z_1, \dots, z_s$; similarly we
write the variables of $F_a$ as $x_1, \dots, x_r$. 
The functions above are 
computed by decision trees querying $\vec x$ variables, 
and in particular, by expanding out the definition of a reduction, for each $z_i$
there is a tree~$T_i$, 
such that, when each $z_i$ is computed by $T_i$,
if $(\vec z, y') \in S_q$ then $g(\vec x, y')$ is a
clause in $F_a$ which is false under~$\vec x$
(where we are abusing notation and using $\vec x$ 
as a name both for the variables and the assignment to them).

By definition $\CNF(S_q)$ contains,
for each $y' \in \Out_{S_q}$,
a clause $\neg b$
for each accepting branch
$b$ of the tree $T_{y'}$, where $T_{y'}$ accepts exactly the inputs $\vec z$
for which $y'$ is a solution. 
We claim that each such $\neg b$ can be derived by a shallow treelike
resolution derivation from $F_a \wedge A$. To show this it suffices
to describe a ``Prover'' protocol
which starts with a partial assignment to $\vec z$
satisfying $b$ and, using $\poly(\log a)$ many
adaptive queries to $\vec x$ variables, 
extends this to a partial assignment falsifying some
clause from~$F_a \wedge A$.

Suppose $b$ contains variables $z_{j_1}, \dots, z_{j_d}$.
The protocol is: first run the trees $T_{j_1}, \dots, T_{j_d}$
(which query $\vec x$ variables). Then either we have falsified
a clause of $A$, or the values of $z_{j_1}, \dots, z_{j_d}$
are computed correctly from $\vec x$ by these trees.
Then run the tree for $g(\vec x, y')$ (also querying $\vec x$ variables),
getting the index for some clause $C$ of $F_a$. Finally 
query all variables in $C$.
Since $b$ was an accepting branch of $T_{y'}$, we have that
$(\vec z, y') \in S_q$
(where $\vec z$ is any extension of $b$
to a total assignment) and thus, from the definition of a reduction,
the clause $C$ named by $g(\vec x, y')$ is false in~$\vec x$.
Thus we have grown our assignment to falsify some clause of $F_a$
as required.
\end{proof}

%%%%%%%%%%%%%%%%%%%%%%%%%%%%%%%%%%%%%%%
\subsection{Characterization of $\GI_{k+2}$ by $\LK_{k+\hf}$} \label{sec:GI_characterization}
%%%%%%%%%%%%%%%%%%%%%%%%%%%%%%%%%%%%%%%

\begin{definition}[\cite{skelley2011provably}] \label{def:GI_k}
{\sc $k$-turn game induction}, or $\GI_k$, is the following $\TFNP\dt$ problem.
The input consists  of
\begin{enumerate}
\item
For each $i \in [2^n]$ a relation
$G_i(x_1, \dots, x_k)$, taking $k$ arguments in $[2^n]$, where
$G_0$ is everywhere true and $G_{2^n-1}$ is everywhere false
\item
For each $j=1,\dots , k$ and each $i \in [2^n{-}1]$, a
function $f^j_i$ from $[2^n]^j$ to $[2^n]$.
%for odd $j$, we label the arguments as 
%$f^j_i(x_1, \dots, x_{j-1}, x'_1, \dots, x'_j)$;
%for even $j$, we label them as
%$f^j_i(x_1, \dots, x_j, x'_1, \dots, x'_{j-1})$.
\end{enumerate}
The relations $G_i(x_1, \dots, x_k)$  encode
a sequence of games $G_i$ between two players A and~B,
where $x_j$ is a move of A for odd $j$ and of B for even~$j$.
We interpret $G_i(x_1, \dots, x_k)$ 
as expressing the relation ``B wins the play $x_1, \dots, x_k$ in $G_i$''.

A solution consists of $i \in [2^n{-}1]$ and two sequences
$x_1, \dots, x_k$ and $x'_1, \dots, x'_k$, 
representing plays in respectively
$G_i$ and $G_{i+1}$, such that
\begin{enumerate}
\item
%$x_j = f^j_i(x_1, \dots, x_{j-1}, x'_1, \dots, x'_j)$
$x_j = f^j_i(x'_1, x_2, x'_3, x_4, \dots , x_{j-1}, x'_j)$
for every odd $j$
\item
%$x'_j=f^j_i(x_1, \dots, x_j, x'_1, \dots, x'_{j-1})$
$x'_j = f^j_i(x'_1, x_2, x'_3, x_4, \dots, x'_{j-1}, x_j)$
for every even $j$
\item
$G_i(x_1, \dots, x_k)$ is true, that is, B wins this play of $G_i$
\item
$G_{i+1}(x'_1, \dots, x'_k)$ is false, that is, A wins this play of $G_{i+1}$.
\end{enumerate}
\end{definition}

This is slightly simplified from~\cite{skelley2011provably},
as we have removed the explicit winning strategies for $B$ and $A$
in respectively 
in the first and last game, replacing them with the condition that
$G_0$ is everywhere true and $G_{2^n-1}$ is everywhere false;
now the negation of 1. - 4. implies that $f^2_0, f^4_0, \dots$ give a winning strategy
for $B$ in $G_1$ and $f^1_{2^n-2}, f^3_{2^n-2},
\dots$ give a winning strategy for $A$ in $G_{2^n-2}$.

%Note that for odd $j$ the function $f^j_i$ converts moves $x'_j$
%of player A in $G_{i+1}$ into moves $x_j$ of player A in 
%$G_i$, and for even $j$ it  converts moves of player B in $G_i$ into
%moves of player B in $G_{i+1}$.
To see that the problem is total observe that,
since B always wins $G_0$ and A always wins~$G_{2^n-1}$,
 by induction there is some $i$
such that B has a winning strategy for $G_i$ and A has a winning strategy for 
$G_{i+1}$. That is, there is some $i$ such that
\begin{align*}
& \forall x_1 \exists x_2 \forall x_3 \dots  \ \ G_i(x_1, \dots, x_k) \quad \textrm{and} \\
& \exists x'_1 \forall x'_2 \exists x'_3 \dots \neg G_{i+1}(x'_1, \dots, x'_k),
\end{align*}
where we omitted the quantifiers at the end of the sequences $\dots$ since they will
depend on whether $k$ odd or even.
Substituting $x_1$ with $f^1_i(x'_1)$, then $x'_2$ with $f^2_i(x'_1, x_2)$, etc.
shows that a solution exists.

\begin{lemma} \label{lem:GI_refutation}
$\CNF(\GI_{k+2})$ has polynomial-size refutations in $\LK_k$.
\end{lemma}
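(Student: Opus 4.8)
The plan is to give a direct, uniform-in-$k$ refutation of $\CNF(\GI_{k+2})$ in $\LK_k$, by formalizing the totality argument sketched just before the lemma. Recall that an instance of $\GI_{k+2}$ consists of games $G_i(x_1,\dots,x_{k+2})$ for $i\in[2^n]$ with $G_0$ everywhere true and $G_{2^n-1}$ everywhere false, together with strategy functions $f^j_i$. For each $i$ and each number $m$ of committed moves, I introduce a formula $\mathrm{Win}^B_i$ asserting that B wins $G_i$, of the shape $\forall x_1\exists x_2\cdots G_i(\vec x)$, and dually $\mathrm{Win}^A_{i}$ asserting A wins, of the shape $\exists x_1\forall x_2\cdots \neg G_i(\vec x)$. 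The key point for the depth bound is that these are depth-$(k{+}2)$ \emph{propositional} formulas once we expand the bounded quantifiers $\forall x_j$ and $\exists x_j$ as conjunctions and disjunctions over $[2^n]$; but each quantifier block is a single level of $\bigwedge$ or $\bigvee$ with fan-in $2^n = \poly$, so the formulas live at depth $k{+}2$, not deeper, matching what $\LK_k$ refutations of $\CNF$'s of this shape can manipulate (a line in $\LK_k$ being a cedent of depth-$k$ formulas, i.e.\ essentially a depth-$(k{+}1)$ disjunction, and here the extra level comes from the outermost quantifier block which we peel off).

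First I would establish the \textbf{base cases}: since $G_0$ is everywhere true, $\mathrm{Win}^B_0$ is derivable from propositional axioms (it is a big conjunction/disjunction over true literals $G_0(\vec x)$); since $G_{2^n-1}$ is everywhere false, $\mathrm{Win}^A_{2^n-1}$ is likewise derivable. Next, the \textbf{inductive step / collision extraction}: for a fixed $i$, I would show that from $\mathrm{Win}^B_i$ and $\mathrm{Win}^A_{i+1}$ together with the extension/definition clauses for the functions $f^j_i$, one can derive in $O(1)$-depth-per-step, polynomial-size $\LK_k$ the disjunction over all solution-tuples $(\vec x,\vec x')$ of the conjunction of conditions 1.--4.\ of Definition~\ref{def:GI_k} — i.e.\ a clause of $\CNF(\GI_{k+2})$, or rather its negation, is \emph{refuted}. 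Concretely: $\mathrm{Win}^B_i$ says $\forall x_1\,\ldots$, so instantiate $x_1 := f^1_i(x_1')$ (a valid substitution since it is a $\forall$); $\mathrm{Win}^A_{i+1}$ says $\exists x_1'\,\ldots$, so we may likewise plug its $\exists$-moves via the $f^j_i$ of even index; alternately peeling one quantifier from each side and substituting through $f$ forces, after $k{+}2$ rounds, both $G_i(\vec x)$ true and $G_{i+1}(\vec x')$ false along the play that the $f^j_i$ pick out, which is exactly a solution. This is a purely propositional manipulation: each substitution of a defined function is handled by cutting against the extension clauses defining $f^j_i$, and each quantifier-peeling is a $\bigwedge$- or $\bigvee$-introduction/elimination, so only polynomially many lines of depth $\le k{+}2$ formulas are used, and after peeling the outer block we stay within $\LK_k$.

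Finally I would run the \textbf{$\LK_k$ induction over $i$} to tie it together. Since $\CNF(\GI_{k+2})$ encodes ``there is no solution'', under that assumption the collision-extraction step of the previous paragraph shows: for every $i$, $\mathrm{Win}^B_i \wedge \mathrm{Win}^A_{i+1}$ is refutable; equivalently, $\neg\mathrm{Win}^B_i \vee \neg\mathrm{Win}^A_{i+1}$ is derivable. Combined with the observation that $\neg\mathrm{Win}^A_{i+1}$ propositionally implies $\mathrm{Win}^B_{i+1}$ (they are exact negations up to the dual-quantifier rewriting, by a short $\LK_k$ derivation analogous to Lemma~\ref{lem:QBF_well_defined}'s duality), we get from $\mathrm{Win}^B_i$ a derivation of $\mathrm{Win}^B_{i+1}$, uniformly in $i$. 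Iterating this $2^n-1$ times — which is only polynomially many steps since $2^n = \poly(a)$ in the size parameter — and starting from the base case $\mathrm{Win}^B_0$, we derive $\mathrm{Win}^B_{2^n-1}$; but $G_{2^n-1}$ is everywhere false, so $\mathrm{Win}^B_{2^n-1}$ is propositionally false, i.e.\ $\neg\mathrm{Win}^B_{2^n-1}$ is an axiom-derivable cedent. Cutting yields the empty cedent, completing the refutation. The \textbf{main obstacle} I anticipate is bookkeeping the depth: I must verify that all the intermediate cedents in the collision-extraction and in the duality step genuinely stay at depth $k$ (after the outermost quantifier block is exposed as the cedent structure), in particular that substituting $x_j := f^j_i(\cdots)$ into $G_i$ does not blow up the depth — this works because $G_i$ is an input relation queried by a depth-$1$ decision tree, so $G_i(f^j_i(\cdots),\dots)$ is still a shallow formula in the input bits — and that the "peel one quantifier" bookkeeping is done so that only $k{+}1$ inner quantifier levels ever appear inside a single formula in a cedent. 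Organizing this cleanly (perhaps by carrying the induction on the \emph{partially-played} games, as in the quantifier-by-quantifier unwinding in Section~\ref{sec:pspace}) is where the real care is needed.
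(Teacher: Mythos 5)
Your proposal is correct and follows essentially the same route as the paper's proof: both formalize the totality argument as a length-$2^n$ induction on $i$ over the quantifier-expanded statement that B wins $G_i$, with the two outermost quantifier blocks absorbed into the proof's line and cedent structure so that formulas stay at depth $k$, the inductive step obtained by resolving the solution clauses over all values of the $f^j_i$ and applying $\bigwedge$/$\bigvee$-introduction, and the contradiction extracted at $G_{2^n-1}$. The only cosmetic difference is that you route the step through $\mathrm{Win}^A_{i+1}$ and a De Morgan duality, whereas the paper derives the B-winning cedents for $G_{i+1}$ directly from those for $G_i$.
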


\begin{proof}
This is shown for $\GI_3$ in~\cite{skelley2011provably} 
(which is more concerned with showing quasipolynomial-size refutations
in $\LK_{k+ \hf}$, corresponding to bounded arithmetic).
We sketch the proof for $k{=}2$, which easily generalizes to larger~$k$.
We may take $\CNF(\GI_{4})$ to consist of, for each
tuple $(i, x_1, \dots, x_4, x'_1, \dots, x'_4)$ in $[2^n{-}1] \times [2^n]^8$,
a clause expressing that the tuple does not satisfy all 
the conditions 1. to 4. at the end of Definition~\ref{def:GI_k}.
We write this clause as a cedent
\[
x_1 \neq f^1_i(x'_1), \
x'_2 \neq f^2_i(x_2), \
x_3 \neq f^3_i(x'_3), \
x'_4 \neq f^4_i(x_4), \
\neg G_i(\vec x), \
G_{i+1}(\vec x')
\]
where for clarity we have omitted some arguments from the functions~$f^j_i$ 
and where e.g. $x_1 \neq f^1_i(x'_1)$ is shorthand for a disjunction
(written as a comma-separated list of literals in the cedent)
expressing that one of the bits of the value of $f^1_i(x'_i)$ (which is encoded in binary)
is different from one of the bits of~$x_1$.
By resolving the cedents corresponding to all possible values of $f^4_i(x_4)$
(keeping everything else fixed) and applying $\bigvee$-introduction, we obtain
\[
x_1 \neq f^1_i(x'_1), \
x'_2 \neq f^2_i(x_2), \
x_3 \neq f^3_i(x'_3), \
\neg G_i(\vec x), \
\bigvee\nolimits_{\! x'_4} G_{i+1}(\vec x')
\]
Applying $\bigwedge$ introduction over the cedents for all values of $x_4$, 
we obtain
\[
x_1 \neq f^1_i(x'_1), \
x'_2 \neq f^2_i(x_2), \
x_3 \neq f^3_i(x'_3), \
\bigwedge\nolimits_{x_4} \! \! \! \neg G_i(\vec x), \
\bigvee\nolimits_{\! x'_4} G_{i+1}(\vec x')
\]
We repeat the previous two steps for $f^3_i(x'_3)$ and  $x'_3$ to obtain
\[
x_1 \neq f^1_i(x'_1), \
x'_2 \neq f^2_i(x_2), \
\bigvee\nolimits_{\! x_3} \bigwedge\nolimits_{x_4} \! \! \! \neg G_i(\vec x), \
\bigwedge\nolimits_{x'_3} \bigvee\nolimits_{\! x'_4} G_{i+1}(\vec x').
\]
We then resolve the cedents corresponding to all values of $f^2_i(x_2)$
but do not do $\bigvee$-introduction, to get 
\begin{equation} \label{eq:vees}
x_1 \neq f^1_i(x'_1), \
\bigvee\nolimits_{\! x_3} \bigwedge\nolimits_{x_4}  \! \! \! \neg G_i(\vec x), \
\bigdoublevee\nolimits_{\! x'_2}
\bigwedge\nolimits_{x'_3} \bigvee\nolimits_{\! x'_4} G_{i+1}(\vec x')
\end{equation}
where the notation $\bigdoublevee$ indicates that this is not a single
disjunction but rather stands for a comma-separated list of formulas in the cedent
(which is semantically the same as a disjunction).
We have this cedent for every $i$, $x_1$, $x'_1$ and $x_2$.

Now fix $i$ and suppose inductively that we have derived, for every $x_1$, the
cedent 
\[
M_{i,x_1} := \ \
\bigdoublevee\nolimits_{\! x_2} 
\bigwedge\nolimits_{x_3} \bigvee\nolimits_{\! x_4} G_{i}(\vec x).
\]
We want to derive $M_{i+1, x'_1}$ for every $x'_1$. Fix $x'_1$.
For each $x_1$, we cuts  $M_{i, x_1}$ against~(\ref{eq:vees})
for every value of $x_2$, cutting on the formula 
$\bigvee_{x_3} \bigwedge_{x_4} \! \!  \neg G_i(\vec x)$.
This gives
\[
x_1 \neq f^1_i(x'_1), \
\bigdoublevee\nolimits_{\! x'_2}
\bigwedge\nolimits_{x'_3} \bigvee\nolimits_{\! x'_4} G_{i+1}(\vec x').
\]
Observe that here $x_1$ only appears in the disjunction $x_1 \neq f^1_i(x'_1)$.
So we may resolve all these cedents together on this disjunction,
that is, over all values of $f^1_i(x'_1)$, to get $M_{i+1, x'_1}$ as required.
\end{proof}

\begin{definition}
{\sc 1-Ref$(P)$}, or
{\sc Narrow CNF reflection for $P$}, is the $\TFNP\dt$ problem:
given a narrow CNF $F$, a purported refutation $\Pi$ of $F$ in system $P$,
and an assignment~$X$ to the variables of $F$,
find either
a clause in $F$ that is false under~$X$, or
a syntactical mistake in $\Pi$.
\end{definition}
Here the ``1'' in the name reflects the logical depth
of a narrow CNF.
See~\cite{buss2023tfnp} for some related recent work about this problem.
We observe the following simple connection with
the false-clause search problem.

\begin{lemma} \label{lem:1-ref}
Let $F = (F_a)_{a \in \NN}$ be a family of unsatisfiable narrow CNFs,
which has quasipolynomial size (in $a$) $P$-refutations.
Then $\Search(F) \le ${\sc 1-Ref$(P)$}.
\end{lemma}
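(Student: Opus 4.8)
The plan is to observe that this reduction is almost vacuous: the CNF $F_a$ and a $P$-refutation of it depend only on the size parameter $a$ and not on the assignment, so both can simply be hard-wired into the reduction. In the $\TFNP\dt$ setting the reduction functions need only exist for each fixed $a$, and are not required to be uniform in $a$, so this causes no difficulty.

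Concretely, I would fix, for each $a$, a $P$-refutation $\Pi_a$ of $F_a$ of size quasipolynomial in $a$, which exists by hypothesis. Since $F_a$ is narrow and $\Pi_a$ has quasipolynomial size, there is a $q$ quasipolynomial in $a$ such that $(F_a, \Pi_a, X)$ is a legal instance of {\sc 1-Ref$(P)_q$} for every assignment $X$ to the variables of $F_a$. Define the forward map $f$ by letting $f(X)$ be the encoding of the triple $(F_a, \Pi_a, X)$: the bits encoding $F_a$ and $\Pi_a$ are constants (depending on $a$ only), and the bits encoding $X$ are copied verbatim from the input, so every output bit of $f$ is computed by a decision tree of depth at most $1$, which is certainly $\poly(\log a)$. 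For the solution map, recall that a solution to {\sc 1-Ref$(P)_q$} on the instance $(F_a, \Pi_a, X)$ is either (i) an index of a clause of $F_a$ that is false under $X$, or (ii) a syntactic mistake in $\Pi_a$. Case (ii) never arises, since $\Pi_a$ is a correct $P$-refutation of $F_a$. Hence any genuine solution $y'$ is of type (i), which is already a solution to $\Search(F_a)$ on input $X$; so I set $g(X, y') = y'$ (re-encoded if the two problems index their output sets differently) when $y'$ names a clause of $F_a$, and let $g(X, y')$ be arbitrary otherwise --- in the latter case $(f(X), y')$ is not a solution of {\sc 1-Ref$(P)_q$}, so the reduction condition holds vacuously. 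For each fixed $y'$ the map $X \mapsto g(X, y')$ is constant, hence of decision-tree depth $0$. This gives a depth-$\poly(\log a)$ reduction of $\Search(F_a)$ to {\sc 1-Ref$(P)_q$}, and letting $a$ vary yields $\Search(F) \le $ {\sc 1-Ref$(P)$}.

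There is really no substantive obstacle here. The only points that need a line of care are that the parameter $q$ may be taken quasipolynomial in $a$ --- which is immediate from the size bounds on $F_a$ and $\Pi_a$ --- and that {\sc 1-Ref$(P)$} is genuinely a $\TFNP\dt$ problem, so that a depth-bounded reduction into it is meaningful; but this is part of its definition (a narrow false clause is detected by querying the $\poly(\log a)$ variables it mentions, and a local syntactic defect in the hard-coded $\Pi_a$ is a bounded computation), and I take it as given.
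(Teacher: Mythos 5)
Your proposal is correct and is essentially identical to the paper's proof: both hard-wire the fixed quasipolynomial-size refutation $\pi_a$ together with $F_a$ into the instance, copy the assignment bits verbatim, and note that since $\pi_a$ contains no syntactic mistakes the only possible solutions are false clauses of $F_a$. Your extra remarks on the depth of $g$ and on $q$ being quasipolynomial are fine but add nothing beyond what the paper leaves implicit.
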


\begin{proof}
Let $(\pi_a)_{a \in \NN}$ be  a family of quasipolynomial-sized
$P$-refutations of $(F_a)_{a \in \NN}$. 
The input to $\Search(F_a)$ is an assignment $X$ to the variables
of $F_a$, and a solution is a false clause in $F_a$. We 
need to turn $X$ into an input $f(X)$ for {\sc 1-Ref$(P)$},
where each bit of $f$ is computed by a narrow decision tree. 
We take $f(X)$ to be simply the triple $(F_a, \pi_a, X)$,
where we write $F_a$ and $\pi_a$ to mean these objects coded
appropriately as binary strings. These strings are fixed for
each $a$, so each bit of $f$ is computed by a trivial decision tree. 
Any solution to this instance of {\sc 1-Ref$(P)$} can only 
be a false clause in $F_a$, since by assumption $\pi_a$ does
not contain any syntactical mistakes.
\end{proof}

\begin{theorem} [Essentially \cite{skelley2011provably}] \label{the:GI_characterize}
$\GI_{k+2}$ is characterized by $\LK_{k+\hf}$ under the measure $\log($size$)$,
for $k \in \NN$.
\end{theorem}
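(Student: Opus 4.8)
The plan is to verify both implications in Definition~\ref{def:characterization} for $P = \LK_{k+\hf}$ and $\mu_P = \log(\mathrm{size})$; that is, for a narrow unsatisfiable CNF family $F$, that $\Search(F) \le \GI_{k+2}$ holds if and only if $F_a$ has $\LK_{k+\hf}$-refutations of size $2^{\poly(\log a)}$.

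For the direction $\Search(F) \le \GI_{k+2} \Rightarrow$ small refutations, I would compose the lemmas already available. By Lemma~\ref{lem:reduction_to_treelike} applied with $S = \GI_{k+2}$, from the reduction we obtain, for each $a$, a quasipolynomial-size treelike resolution derivation of $\CNF(\GI_{k+2,q})$ from $F_a \wedge A$, where $A$ is a set of $\poly(\log a)$-depth decision-tree extension axioms over the variables of $F_a$. By Lemma~\ref{lem:GI_refutation}, $\CNF(\GI_{k+2,q})$ has an $\LK_k$ refutation of size polynomial in $q$, hence quasipolynomial in $a$. Appending this to the treelike derivation, and recalling that treelike resolution is contained in $\LK_k$, gives a quasipolynomial-size $\LK_k$ refutation of $F_a \wedge A$; and by Lemma~\ref{lem:decision_tree_to_half_LK}, $\LK_k$ extended by shallow decision trees is quasipolynomially equivalent to $\LK_{k+\hf}$, so $F_a$ itself has an $\LK_{k+\hf}$ refutation of size $2^{\poly(\log a)}$, i.e.\ $\mu_{\LK_{k+\hf}}(F_a) \le \poly(\log a)$.

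For the converse I would use Lemma~\ref{lem:1-ref}: if $F_a$ has $\LK_{k+\hf}$ refutations of size $2^{\poly(\log a)}$ then $\Search(F) \le {\sc 1-Ref}(\LK_{k+\hf})$, so it suffices to give a depth-$\poly(\log a)$ reduction ${\sc 1-Ref}(\LK_{k+\hf}) \le \GI_{k+2}$ --- or, more directly, to build from a fixed correct refutation $\pi_a$ of $F_a$ a $\GI_{k+2}$ instance, depending on the assignment $X$ to $F_a$ only through shallow decision trees, from whose solutions one reads a false clause of $F_a$. This is the main construction of~\cite{skelley2011provably} recast into $\TFNP\dt$. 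The games $G_i$ are indexed by the lines $L_0, \dots, L_{m-1}$ of a normal form of $\pi_a$: after prepending a trivially true initial cedent we have $G_0$ everywhere true, and since $L_{m-1}$ is empty we have $G_{m-1}$ everywhere false. For each line, $G_i(x_1,\dots,x_{k+2})$ is the evaluation game of the disjunction $\bigvee L_i$ under $X$: the $k$ upper moves descend through the alternating $\wedge/\vee$ layers of the normalized formulas in the cedent, with B the existential player who picks disjuncts, and the bottom $\poly(\log a)$-fan-in layer is evaluated by a shallow decision tree querying $X$, B winning precisely when $\bigvee L_i$ is true under $X$. The transition functions $f^j_i$ encode the local soundness of the inference producing $L_{i+1}$: from a B-winning play of $G_i$ (so $L_i$ is true) and an A-winning play of $G_{i+1}$ (so $L_{i+1}$ is false) one plays the two strategies against each other along that inference --- for a cut, forcing B's disjuncts in the $\phi$- and $\neg\phi$-subgames to clash on a literal --- and when the step introduces a clause of $F_a$ the only way this can arise is that that clause is false under $X$, which is the output. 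Since the cut and introduction rules have premises that need not be consecutive, I would first put $\pi_a$ into treelike form (a quasipolynomial blow-up, harmless at the $\poly(\log a)$ threshold) and traverse it with a stack discipline, letting each $G_i$ carry a small conjunction of the currently active cedents so that every transition refers only to $G_i$ and $G_{i-1}$.

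The substantive obstacle is entirely in this second direction, and it is bookkeeping rather than a new idea: making the turn count of the evaluation games land on exactly $k+2$ (which forces one to normalize formulas and to track the parity of $k$, since the cedent-disjunction and the small bottom layer each shift the alternation by a step or a half-step), and organizing the stack so that each inference becomes a local transition between consecutive games while keeping the per-game conjunction --- and hence both the number of turns and the depth of the ``who wins'' trees --- polylogarithmic. All of this is carried out, in the language of bounded arithmetic and the theory $T^{k+2}_2$, in~\cite{skelley2011provably}; what remains is to present it as a depth-$\poly(\log a)$ reduction of $\TFNP\dt$ problems. No proof-complexity lower bounds enter, and the argument relativizes.
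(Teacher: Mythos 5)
Your first direction is exactly the paper's argument: Lemma~\ref{lem:reduction_to_treelike} to get a shallow treelike derivation of $\CNF(\GI_{k+2})$ from $F_a \wedge A$, Lemma~\ref{lem:GI_refutation} to refute the target, and Lemma~\ref{lem:decision_tree_to_half_LK} to eliminate the extension axioms. No issues there.

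The converse is where you diverge from the paper, and your route has a genuine gap in the parameter accounting. You propose to build the $\GI_{k+2}$ instance directly from a refutation by first ``putting $\pi_a$ into treelike form (a quasipolynomial blow-up)'' and then traversing it with a stack of active cedents. For bounded-depth LK at a \emph{fixed} depth this conversion is not quasipolynomial: treelike resolution ($\LK_0$) is exponentially weaker than dag-like resolution, and in general the known simulations of dag-like $\LK_k$ by treelike systems cost a level (or half-level) of depth. Paying that level would land you at $\GI_{k+3}$, not $\GI_{k+2}$, which is precisely the boundary the theorem is about. A second, related problem is that your stack discipline makes each game evaluate a \emph{conjunction} of the currently active cedents; this inserts an extra universal quantifier (``which cedent on the stack'') above the existential choice of a formula in the cedent, and you have not shown how this extra alternation is absorbed into the $k+2$ turns. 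The paper avoids both issues by not converting to treelike form at all: it rewrites the dag-like $\LK_k$ refutation in the system $\PK_k$ of~\cite{skelley2011provably}, applies their Theorem~21 to pass (with quasipolynomial blow-up) to the deep-inference system $\PK^0_k$, in which consecutive lines differ by essentially a single literal so that locality between $G_i$ and $G_{i+1}$ comes for free, and then invokes their Theorem~4 that $\oneref(\PK^0_k) \le \GI_{k+2}$, finishing with Lemma~\ref{lem:1-ref} and the easy reduction $\Search(F_a) \le \Search(F_a \wedge A)$. The normal-form theorem for $\PK^0_k$ is exactly the dag-to-sequential conversion your sketch needs but does not supply; if you want a self-contained argument you would have to reprove something of that strength rather than appeal to treelike balancing.
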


\begin{proof}
Fix $k \in \NN$ and let $F$ be a family of unsatisfiable narrow CNFs.
We want to show that $\Search(F) \le \GI_{k+2}$ if and only if $F$
has quasipolynomial size $\LK_{k+\hf}$ refutations.

First suppose that $\Search(F) \le \GI_{k+2}$. By Lemma~\ref{lem:reduction_to_treelike}
for each $a$
there is a height-$\poly(\log a)$ treelike resolution derivation 
$F_a \wedge A \vdash \CNF(\GI_{k+2})$ (where we are suppressing the
size parameter $2^n$ of $\GI_{k+2}$, which is here quasipolynomial in $a$)
for some set $A$ of $\poly(\log a)$-depth decision-tree extension axioms.
By Lemma~\ref{lem:GI_refutation}, $F_a \wedge A$ has a
quasipolynomial size $\LK_k$ refutation.
By Lemma~\ref{lem:decision_tree_to_half_LK} we conclude that $F_a$ has a quasipolynomial
size $\LK_{k+\hf}$ refutation.

Now suppose that $F_a$ has a quasipolynomial
size $\LK_{k+\hf}$ refutation. By Lemma~\ref{lem:decision_tree_to_half_LK}, 
$F_a \wedge A$ has a quasipolynomial size $\LK_k$ refutation $\Pi$, 
for some set $A$ of $\poly(\log a)$-depth decision-tree extension axioms.
We now use the proof-theoretic constructions from~\cite{skelley2011provably}.
We first rewrite $\Pi$ in the version of depth-$k$ propositional
sequent calculus used there under the name $\PK_k$. This does not allow
depth-$k$ disjunctions (only conjunctions) and uses a combination of quantifier elimination
rules and resolution, rather than allowing cuts on complex formulas.
We then use the construction in~\cite[Theorem 21]{skelley2011provably}
to construct from $\Pi$ a refutation of $F_a \wedge A$ in the system
$\PK^0_k$ introduced there, with at most quasipolynomial blowup in size.
This is a kind of ``deep inference'' system, whose rules allow changes to an arbitrary
subformula of a formula in a cedent, and is set up so that each application
of a rule changes a subformula essentially by a single literal.
For a similar system see the \emph{symmetric calculus} of~\cite{pudlak2021canonical}.

It is shown in~\cite[Theorem 4]{skelley2011provably} that
$\oneref(\PK^0_k) \le \GI_{k+2}$. By Lemma~\ref{lem:1-ref},
using our $\PK^0_k$ refutation of $F_a \wedge A$
we have a reduction of $\Search(F_a \wedge A)$ to 
(an instance of) $\oneref(\PK^0_k)$.
It remains to show that there is a reduction $\Search(F_a) \le
\Search(F_a \wedge A)$. Given any assignment $X$ to the variables
of $F_a$, we can extend $X$
to an assignment $X'$ also to the extension variables defined in $A$, and such that $X'$ satisfies $A$,
by evaluating the corresponding decision trees.
Then any false clause in $F_a \wedge A$ under $X'$
must already be a false clause in $F_a$ under $X$. This gives
the required reduction.
\end{proof}

%%%%%%%%%%%%%%%%%%%%%%%%%%%%%%%%%%%%%%%%%%%%%%%
\subsection{Characterization of {\sc FCon} by Frege}
%%%%%%%%%%%%%%%%%%%%%%%%%%%%%%%%%%%%%%%%%%%%%%%

\begin{definition} [\cite{beckmann2017np}]
{\sc Frege consistency}, or $\FCon$, is the following $\TFNP\dt$ problem.
The input consists of a purported Frege proof of a contradiction, that is,
a Frege derivation using only the standard logical axioms and rules, that
ends with the contradictory formula $(z_1 \wedge \neg z_1)$.
This is coded as a sequence of symbols, namely 
variables, parentheses, logical connectives and commas, annotated
with some extra information about the larger-scale structure 
(for example, each parenthesis comes with a pointer to the matching
parenthesis). A solution consists of a syntactic error in
this purported proof.
\end{definition}

This problem is total, because Frege is in fact consistent.
This fact has short Frege proofs:

\begin{lemma} \label{lem:CNF_FCon}
$\CNF(\FCon)$ has quasipolynomial-sized Frege refutations.
\end{lemma}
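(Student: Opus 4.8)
The statement asserts that the CNF expressing the consistency of Frege — the statement that the coded object~$\Pi$ is not a correct Frege proof of $(z_1 \wedge \neg z_1)$ — has quasipolynomial-size Frege refutations. Equivalently, Frege proves that Frege is consistent, in the sense that if $\Pi$ were a genuine Frege derivation of a contradiction, then following the lines of $\Pi$ one could derive a contradiction. The plan is to carry out the standard model-theoretic soundness argument for Frege, but formalized propositionally, line by line through the proof~$\Pi$.

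First I would introduce, for each line~$L$ of the purported proof~$\Pi$ and each assignment of interest, an evaluation. The key point is that the lines of $\Pi$ are themselves small propositional formulas in some set of variables, say $\vec p$; we do not control these variables, but since $\Pi$ is narrow-ish (its lines have bounded depth once we are in $\TFNP\dt$, or at worst we use shallow decision trees as in Lemma~\ref{lem:decision_tree_to_half_LK}), we can define, by an extension axiom, a variable $e_L$ that is meant to record the truth value of line~$L$ under a ``generic'' assignment to $\vec p$. Concretely I would pick a fresh block of variables $\vec p$ standing for the truth values of the propositional variables occurring in $\Pi$, and for each line~$L$ let $e_L$ be defined by a formula (or shallow decision tree) that evaluates the formula coded at line~$L$ on $\vec p$. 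The soundness argument then says: (i) for each axiom line~$L$ of $\Pi$, Frege proves $e_L$ (tautologies evaluate to true under every assignment — this is a small case analysis on the axiom schema, uniform because all Frege axioms have bounded size); (ii) for each line~$L$ obtained by modus ponens from lines $L'$ and $L'' = (L' \to L)$, Frege proves $(e_{L'} \wedge e_{L''}) \to e_L$, which is immediate from the definition of the evaluation of an implication; and (iii) for the final line, which is $(z_1 \wedge \neg z_1)$, the evaluation $e_{\text{last}}$ is provably false, since $(p_{z_1} \wedge \neg p_{z_1})$ evaluates to $0$ for every value of $p_{z_1}$. Chaining (i) and (ii) along the (quasipolynomially many) lines of $\Pi$ gives a Frege derivation of $e_{\text{last}}$, contradicting~(iii). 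Each step costs polynomial size, and there are quasipolynomially many lines, so the total size is quasipolynomial.

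The one technical subtlety — and the step I expect to be the main obstacle — is managing the depth and size of the evaluation formulas $e_L$ and the bookkeeping that $\Pi$ is supplied as a string of symbols with structural annotations rather than as a clean parse tree. I would handle this exactly as in the treatment of extension by shallow decision trees (Lemma~\ref{lem:decision_tree_to_half_LK}): the value of line~$L$ on the generic assignment $\vec p$ is computed by a decision tree of depth $\poly(\log a)$ that walks the (annotated) syntax of $L$, so $e_L$ can be introduced by a shallow-decision-tree extension axiom, and the resulting refutation lives in Frege extended by shallow decision trees, which by Lemma~\ref{lem:decision_tree_to_half_LK} (its $k\to\infty$ / unbounded-depth version) collapses to ordinary Frege with only quasipolynomial blow-up. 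The correctness of the syntax-walking decision trees, and of the uniform case analysis verifying each Frege axiom schema and the modus ponens rule under this evaluation, is routine but tedious; I would state it as such and not grind through the symbol-pushing, since this is precisely the content of the standard propositional proof of the reflection principle for Frege (see e.g.~\cite{buss1998introduction, krajicek2019proof}).
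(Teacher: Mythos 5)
Your overall shape (a line-by-line soundness argument for the purported proof $\Pi$, formalized inside Frege) is a legitimate direct route to this lemma, and it is genuinely different from what the paper does; but as written it has a real gap at exactly the step you flag as the "one technical subtlety". The lines of $\Pi$ are \emph{not} small or bounded-depth: the input to $\FCon$ is an arbitrary coded Frege derivation of quasipolynomial length, so a single line $L$ can be a formula of quasipolynomial size over quasipolynomially many variables. Its truth value under an assignment $\vec p$ therefore depends on quasipolynomially many bits of the input, and cannot be computed by a decision tree of depth $\poly(\log a)$; Lemma~\ref{lem:decision_tree_to_half_LK} gives you nothing here. What the direct argument actually requires is a quasipolynomial-size Frege \emph{formula} (not a shallow tree) computing the truth value of a coded formula under a coded assignment, together with Frege derivations of the Tarski commutation conditions for this truth definition (truth of a coded conjunction equals the conjunction of the truths of its conjuncts, every coded axiom instance is true, modus ponens preserves truth). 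That such formulas exist rests on Buss's theorem that the Boolean formula value problem is in ALOGTIME/$\mathrm{NC}^1$, and carrying out the Tarski conditions in Frege is the substantive content of Buss's propositional consistency proofs for Frege -- it is the theorem, not routine bookkeeping. The same issue infects your step (i): Frege axiom \emph{schemas} are small, but their instances in $\Pi$ contain large substituted formulas, so verifying them again goes through the truth definition.

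For comparison, the paper sidesteps the entire construction: it cites~\cite{beckmann2017np} for the fact that the totality of $\FCon$ is provable in the second-order theory $U^1_2$, and then invokes the standard translation of $U^1_2$ proofs into quasipolynomial-size Frege proofs~\cite{krajicek1995frege} to obtain the refutation of $\CNF(\FCon)$. If you want to keep your direct route, replace the shallow-decision-tree extension axioms by Frege-formula truth definitions and either cite or reprove the provable Tarski conditions; otherwise the indirect argument through $U^1_2$ is the short path.
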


\begin{proof}
In~\cite{beckmann2017np} it is shown that the totality of $\FCon$,
considered as a type-2 TFNP problem (see Section~\ref{sec:bounded}),
is provable in the second-order bounded arithmetic theory~$U^1_2$.
(It was already known that $U^1_2$ proves that Frege is consistent, 
see e.g.~\cite{krajicek1995frege}, but~\cite{beckmann2017np}
works with this particular formalization in terms of $\FCon$.)
By the translation of $U^1_2$ into propositional logic~\cite[Theorem 9.1.6]{krajicek1995frege},
it follows that there are quasipolynomial-size\footnote{
The translation in~\cite{krajicek1995frege} strictly shows that proofs
in the weaker system $U^1_1$ translate into quasipolynomial-size Frege.
However the difference between $U^1_1$ and $U^1_2$ is an axiom
expressing ``$x^{\log x}$ exists for every $x$''. Incorporating this into
the translation gives at most a quasipolynomial blow-up in proof size, 
and in particular still gives us quasipolynomial-size Frege. }
Frege proofs that $\FCon$ is total.
That is, there are proofs deriving a formula of the form $\bigvee_i B_i$,
where $B_i$ expresses ``there is a mistake in the proof at symbol~$i$''
and where the propositional variables express which symbol occurs
at which location etc. as in $\FCon$. From this we can easily construct
a Frege refutation of $\CNF(\FCon)$.
\end{proof}

The following theorem is not surprising. Its interest lies in the fact that,
by the characterization of $\FCon$ in terms of $U^1_2$ 
in~\cite{beckmann2017np}, we have several other 
TFNP problems equivalent to $\FCon$, including
$\loc\Gqbf$ (see Section~\ref{sec:bounded}).

\begin{theorem}[Essentially~\cite{beckmann2017np}] \label{the:FCon_characterize}
$\FCon$ is characterized by the Frege system under the measure $\log($size$)$.
\end{theorem}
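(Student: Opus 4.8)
The statement has two halves of a characterization (Definition~\ref{def:characterization}): for every unsatisfiable narrow CNF family $F$, we need $\Search(F) \le \FCon$ if and only if $F$ has quasipolynomial-size Frege refutations. The plan is to follow the same two-sided template as the proof of Theorem~\ref{the:GI_characterize}, with $\FCon$ playing the role of $\GI_{k+2}$ and Frege playing the role of $\LK_{k+\hf}$, the key difference being that Frege has no depth restriction so several of the technical manoeuvres simplify. For the easy direction ($\Search(F) \le \FCon$ implies small Frege refutations), I would apply Lemma~\ref{lem:reduction_to_treelike} to get, for each $a$, a shallow treelike resolution derivation $F_a \wedge A \vdash \CNF(\FCon_q)$ where $A$ is a set of shallow decision-tree extension axioms and $q$ is quasipolynomial in $a$; then combine this with Lemma~\ref{lem:CNF_FCon}, which gives quasipolynomial-size Frege refutations of $\CNF(\FCon)$, to obtain a quasipolynomial-size Frege refutation of $F_a \wedge A$. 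Since Frege has unbounded depth, the extension variables defined by shallow decision trees in $A$ can be eliminated by substituting each such variable with the (polylog-width, hence polynomial-size, since depth is unbounded) DNF or CNF computed by its tree and patching the proof with a polynomial-size overhead — this is the Frege analogue of the elimination step inside Lemma~\ref{lem:decision_tree_to_half_LK}, and is routine because we are not trying to control formula depth. The result is a quasipolynomial-size Frege refutation of $F_a$ itself.

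For the converse direction, suppose $F_a$ has a quasipolynomial-size Frege refutation. I would invoke the characterization of $\FCon$ from~\cite{beckmann2017np}, exactly as Theorem~\ref{the:GI_characterize} invokes~\cite{skelley2011provably}: that paper shows $\oneref(P) \le \FCon$ for a suitable ``deep inference'' or line-by-line reformulation $P$ of Frege (analogous to $\PK^0_k$), obtained from an ordinary Frege proof with at most quasipolynomial blow-up. So I would first normalize the Frege refutation of $F_a$ into this system $P$, incurring only quasipolynomial blow-up, then apply Lemma~\ref{lem:1-ref} to get $\Search(F_a) \le \oneref(P) \le \FCon$. (Since Frege itself already handles unbounded-depth formulas, and the narrow CNF $F_a$ is itself a Frege-manipulable object, there is no intermediate extension-axiom set $A$ to reconcile here — or if there is, the trivial reduction $\Search(F_a) \le \Search(F_a \wedge A)$ by evaluating the defining trees, used at the end of the proof of Theorem~\ref{the:GI_characterize}, carries over verbatim.) Chaining the reductions gives $\Search(F_a) \le \FCon$, as required.

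The main obstacle, and the reason this is only ``essentially'' from~\cite{beckmann2017np}, is the bookkeeping needed to bring the statement from~\cite{beckmann2017np} into exactly the form $\oneref(P) \le \FCon$ with the right normalization of Frege proofs: one has to check that the syntactic reformulation used there (the deep-inference style system in which each rule application changes a single literal, the analogue of $\PK^0_k$) accepts ordinary Frege proofs after only quasipolynomial blow-up, and that the decision trees implementing the reduction $\oneref(P) \le \FCon$ genuinely have $\poly(\log a)$ depth in the size parameter. These are the same kind of proof-theoretic details that Theorem~\ref{the:GI_characterize} handles for $\PK^0_k$, and I expect them to go through in the same way, but they are where the real work sits; everything else is an assembly of Lemmas~\ref{lem:reduction_to_treelike}, \ref{lem:1-ref}, and~\ref{lem:CNF_FCon}.
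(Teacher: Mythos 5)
Your first direction coincides with the paper's: apply Lemma~\ref{lem:reduction_to_treelike} together with Lemma~\ref{lem:CNF_FCon} to get quasipolynomial-size refutations of $F_a \wedge A$ in Frege, then eliminate the extension variables by substituting the (shallow, hence small) formulas computing the decision trees. That part is fine and is exactly what the paper does.

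The second direction is where you diverge, and where the one soft spot sits. You route through $\Search(F_a) \le \oneref(P) \le \FCon$ for a deep-inference normalization $P$ of Frege, attributing the reduction $\oneref(P) \le \FCon$ to~\cite{beckmann2017np} by analogy with how~\cite{skelley2011provably} supplies $\oneref(\PK^0_k) \le \GI_{k+2}$. But~\cite{beckmann2017np} is used in this paper only for the $U^1_2$ characterization of $\FCon$ (Lemma~\ref{lem:CNF_FCon} and Section~\ref{sec:U12}); it is not cited as providing $\oneref(P) \le \FCon$, and the paper explicitly sidesteps the reflection-vs-consistency issue rather than resolving it by reference. So as written, the key step of your converse direction is asserted rather than proved. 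The good news is that the missing content is short, and the paper supplies it as a direct, self-contained reduction of $\Search(F_a)$ to $\FCon$ that needs no normalization of the Frege proof at all: given the quasipolynomial-size refutation $\Pi$ of $F_a$ and an assignment $X$, substitute $\top$ or $\bot$ for each variable of $F_a$ according to $X$; for each clause $C$ of $F_a$ that is true under $X$ (checkable by a shallow decision tree), prepend a short derivation of the substituted clause $C'$ by weakening from $\top$; the result is a purported Frege derivation of $\bot$ whose only syntactic errors are at the substituted clauses $C'$ with $C$ false under $X$, so a call to $\FCon$ returns a false clause of $F_a$. This one-paragraph construction is precisely the ``move between $\oneref$ and $\FCon$'' that your proposal defers to the literature; with it in place your argument closes, and you can also drop the detour through a $\PK^0_k$-style system entirely.
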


\begin{proof}
Let $F$ be a family of unsatisfiable CNFs. We want to show
that $\Search(F) \le \FCon$ if and only if $F$ has quasipolynomial-size
Frege refutations.

First suppose that $\Search(F) \le \FCon$. 
As in the proof of Theorem~\ref{the:GI_characterize},
by Lemma~\ref{lem:reduction_to_treelike}
and the refutation of $\CNF(\FCon)$ in Lemma~\ref{lem:CNF_FCon}
there are quasipolynomial-sized refutations of $F$ in Frege extended
by shallow decision trees. Hence there are 
quasipolynomial-sized refutations of $F$ in Frege, since we
can simply substitute formulas computing the decision trees for 
the extension variables.

For the other direction, suppose $F$ has quasipolynomial-size Frege
refutations. As in the proof of Theorem~\ref{the:GI_characterize},
we could use this to get a reduction $\Search(F) \le \oneref($Frege$)$.
However, we want to use the consistency search problem $\FCon$ rather
than the reflection principle $\oneref$. As is observed in~\cite{buss2023tfnp},
 it is not hard to move between these. We give a direct reduction
of $\Search(F)$ to $\FCon$.

As in~\cite{beckmann2017np} we will allow ourselves to use symbols
$\top$ and $\bot$ in our formal Frege proofs. These can be taken as
standing for formulas $(z_1 \vee \neg z_1)$ and $(z_1 \wedge \neg z_1)$,
and we may assume that $\top$ has a short Frege derivation. 
Let $\Pi$ be our refutation of~$F_a$.
Suppose we are given an input to $\Search(F_a)$, in the form of 
a total assignment $X$ to the variables $\vec x$ of $F_a$. 
For every variable $x_i$, 
replace every occurrence of~$x_i$ in~$\Pi$
with $\top$ or $\bot$ depending on the value of $x_i$ in $X$,
and call the result $\Pi'$. 
Each clause~$C$ in $F_a$ gives rise to an initial
formula $C'$ in $\Pi'$.
If $C$ is true in $X$ (which can be checked by a shallow decision tree) 
then $C'$ can be obtained
by weakening from~$\top$; we add such a derivation of $C'$ to $\Pi'$.
If $C$ is false in $X$, we do nothing.
Call the new derivation~$\Pi''$.
Now we can consider $\Pi''$ as a Frege derivation of $\bot$
using the standard axioms and rules, but containing some mistakes,
namely the clauses~$C'$ where $C$ is false in~$X$,
which are not derived by any axiom or rule. 
Therefore we can use a call to $\FCon$ to find
such a clause, as required.
\end{proof}

%%%%%%%%%%%%%%%%%%%%%%%%%%%%%%%%%%%%%%%%%%%
\section{Results using bounded arithmetic} \label{sec:bounded}
%%%%%%%%%%%%%%%%%%%%%%%%%%%%%%%%%%%%%%%%%%%

We show that the problems $\GI_k$ from the last
section are equivalent to $\loc\Gqbf^k$, and the problem
$\FCon$ is equivalent to $\loc\Gqbf$. Rather than giving
direct reductions, we use characterizations of $\GI_k$ and $\FCon$
from bounded arithmetic. We will not give full definitions
of the bounded arithmetic theories here; see e.g.~\cite{buss1998first, krajicek2019proof}.
We are implicitly using the relativized versions of these theories, where the theories
are equipped with relation symbols to talk about unspecified oracles, since we are 
working with relativized search problems.

\subsection{Type-2 TFNP}

A \emph{type-2} TFNP problem 
(see e.g.~\cite{beame1995relative}) takes as input a size parameter~$a$
and a string $\beta$ of length quasipolynomial in $a$,
which we treat like an oracle. The problem is specified
by a polynomial-time oracle machine $R(x,y;B)$
and a polynomial~$p$. A solution to the input $(a, \beta)$
is any $y \in [2^{p(|a|)}]$ such that $R(a,y; \beta)$. The problem
must be total, that is, some solution must exist for
every input.

A type-2 TFNP problem naturally gives rise to a family
of $\TFNP\dt$ search problems, by converting the polynomial-time oracle
machine into a family of decision trees. On the other hand,
a definition of a $\TFNP\dt$ problem is usually sufficiently uniform
that it can also be read as a definition of a type-2 problem.
This is the case for the problems $\GI_k$ and $\FCon$ in the previous section.

For search problems where the inputs are combinatorial objects
given by circuits, the type-2 version is naturally formed by letting
the objects be given by oracles instead.
In this section we want to work with type-2 versions
of $\loc\Gqbf$, $\loc\Gqbf^k$ and $\loc\Gps$
where the situation is messier, since they talk about QBFs/circuits
and evaluations. For these to work correctly,
we also let these circuits be relativized, allowing
them to include gates for an oracle $D$.
Axiom 3 in $\Gqbf$ was
``if $F$ is quantifier-free then $A(F) = \eval(F)$"
but we now understand it as 
``if $F$ is quantifier-free then $A(F) = \eval(F; D)$''
where $\eval(F; D)$ evaluates a circuit with $D$-gates
by making a call to $D$. We understand $\eval$ similarly 
in $\Gps$.

\begin{definition}
Type-2 $\loc\Gqbf$ takes as input a size parameter~$a$,
an oracle-sized string~$\beta$ coding a protocol $\pi$ in the form
of tree of depth $\log a$ querying $A$,
and another oracle-sized string~$\delta$. A solution
is a branch in the protocol, given by a sequence $\alpha$ of 
queries and replies to~$A$, such that the label of the branch does
not witness the failure of any axiom of $\Gqbf$ on $\alpha$,
where we understand axiom 3. as above and evaluate $D$-gates
in QBFs using the string $\delta$.

Type-2 $\loc\Gps$ is similar -- it takes as input a size parameter~$a$,
a string~$\beta$ encoding the protocol, and a string $\delta$ encoding
an oracle that may be queried at each step by the $\mathrm{PSPACE}$ machine it models.
\end{definition}

The proof of equivalence of $\loc\Gqbf$ and $\loc\Gps$ still goes through in
this setting.
Throughout this section, when we talk about these problems
we mean the type-2 versions defined above.
%$\GI_k$ below is {\sc $k$-turn game induction}, defined in
%Section~\ref{sec:GI_characterization} above.

%%%%%%%%%%%%%%%%%%%%%%%%%%%%%%%%%%%%%%%
\subsection{The problems {\sc Local-}$\Gqbf^k$ and the theories $T^k_2$}
\label{sec:T_k_2}

The theories $T^k_2$ 
of Buss~\cite{buss1985bounded} 
can be defined by taking a basic theory for polynomial-time functions 
and adding full induction for every formula from the class $\Sigma^b_k$, which 
expresses $\Sigma^p_k$ properties.
It was shown in~\cite{buss1985bounded} to be closely connected to $\P^{\Sigma^p_k}$
computations, and  from the point of view of which TFNP problems it proves total we can 
take it as having the same strength as the statement ``every $\P^{\Sigma^p_k}$ machine
has a computation''.

\begin{proposition} \label{pro:GI_loc_k}
For $k \in \NN$ the problems $\loc\Gqbf^k$ and $\GI_k$
are equivalent.
\end{proposition}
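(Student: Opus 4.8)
The plan is to go through bounded arithmetic, exploiting the fact that both $\loc\Gqbf^k$ and $\GI_k$ have already been pinned down to the same theory $T^k_2$ in the cited literature. Concretely, I would prove two things: (a) $\GI_k$ is provably total in $T^k_2$, and in fact is complete (among $\TFNP\dt$ problems, or among type-2 $\TFNP$ problems) for the class of problems provably total in $T^k_2$ with $\Sigma^b_1$-definable witnesses; and (b) the same holds for $\loc\Gqbf^k$. Equivalence of the two problems then follows by sandwiching: $\GI_k$ is provably total in $T^k_2$, so it reduces to the complete problem $\loc\Gqbf^k$; and $\loc\Gqbf^k$ is provably total in $T^k_2$, so it reduces to the complete problem $\GI_k$.

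For part (a), I would cite~\cite{skelley2011provably}, which introduced $\GI_k$ precisely as a combinatorial witnessing problem for $T^k_2$: it is shown there that $T^k_2$ proves $\GI_k$ total (the totality argument is the induction on $i$ sketched after Definition~\ref{def:GI_k}, which is a $\Sigma^b_k$-induction, hence available in $T^k_2$), and conversely that every $\TFNP$ problem provably total in $T^k_2$ many-one reduces to $\GI_k$. For part (b), the totality direction is easy: the basic relativized theory $\PV(A)$ (contained in $T^k_2$) proves that any polynomial-time protocol querying~$A$ has a run, and over that run the replies either violate no axiom of $\Gqbf^k$ or exhibit a failure; so $T^k_2 \vdash \forall x \exists y\, \loc\Gqbf^k(x,y)$. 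For the completeness direction one shows that the canonical $\Sigma^b_k$-witnessing problem for $T^k_2$ reduces to $\loc\Gqbf^k$; this amounts to running Buss-style witnessing for $T^k_2$ and observing that the witness-extraction can be coded as an Inspector protocol that asks the $\Gqbf^k$-Adversary the truth values of the $\Sigma^q_k/\Pi^q_k$ QBFs arising in the induction, exactly as in the proof of Proposition~\ref{pro:QBF1} (the $k=1$ case, via $\P^\NP$ computation) but now with $k$ alternations. This is the content I would import from~\cite{buss:axiomatizations, skelley2011provably, beckmann2017np}; Section~9 of the paper (which the excerpt says uses exactly these references) presumably carries this out, and I would lean on those references rather than redo the witnessing theorem.

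The main obstacle is bridging the two slightly different settings in which the two problems live: $\GI_k$ is presented as a $\TFNP\dt$ (decision-tree / type-2) problem in Section~\ref{sec:dt}, while $\loc\Gqbf^k$ was originally defined as a uniform $\TFNP$ problem in Section~\ref{sec:pspace}; to compare them one must use the type-2 version of $\loc\Gqbf^k$ from Section~\ref{sec:bounded}, and check that the bounded-arithmetic characterizations of both are stated for the relativized theories $T^k_2$ with the same notion of provable totality. So the real work is (i) confirming that type-2 $\loc\Gqbf^k$ is provably total in relativized $T^k_2$ and that its totality is, conversely, $T^k_2$-hard in the appropriate sense, and (ii) matching the uniformity conventions (quasipolynomial size, $\poly(\log a)$-depth reductions) on both sides so that ``$T^k_2$-provably total $\Leftrightarrow$ reducible to $\GI_k$'' and ``$T^k_2$-provably total $\Leftrightarrow$ reducible to $\loc\Gqbf^k$'' are genuinely the same statement. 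Once both problems are certified complete for that single class, the equivalence $\loc\Gqbf^k \equiv \GI_k$ is immediate; no direct combinatorial reduction between the two is needed, though Proposition~\ref{pro:rwpp_to_qbf2} and the discussion there indicate such a reduction exists and could in principle be given.

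\begin{proof}
We use the characterizations of both problems in terms of the bounded arithmetic theory $T^k_2$ (in its relativized form, matching the relativized search problems).

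On the one hand, $\GI_k$ was introduced in~\cite{skelley2011provably} as a witnessing problem for $T^k_2$: it is shown there that $T^k_2$ proves $\GI_k$ total --- the totality proof is the $\Sigma^b_k$-induction on $i$ described after Definition~\ref{def:GI_k} --- and that, conversely, every $\TFNP$ problem provably total in $T^k_2$ reduces to $\GI_k$.

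On the other hand, $\loc\Gqbf^k$ is also provably total in $T^k_2$: the theory contains a relativized version of $\PV$, enough to prove that any polynomial-time protocol querying $A$ has a run, and the structure of the run either avoids all $\Gqbf^k$ axioms or exhibits a witnessed failure, so $T^k_2 \vdash \forall x\, \exists y\, \loc\Gqbf^k(x,y)$. Conversely, the canonical $\Sigma^b_k$-witnessing problem for $T^k_2$ reduces to $\loc\Gqbf^k$: by the witnessing theorem for $T^k_2$ (\cite{buss1985bounded, buss:axiomatizations}), any $\Sigma^b_1$ consequence $\forall x\, \exists y\, R(x,y)$ of $T^k_2$ has a witness computable by a $\P^{\Sigma^p_k}$ procedure, and such a procedure can be simulated by an Inspector protocol that asks the $\Gqbf^k$-Adversary for the truth values of the $\Sigma^q_k \cup \Pi^q_k$ QBFs arising along the computation, exactly as in the proof of Proposition~\ref{pro:QBF1_from_PNP} for $k=1$ (there via $\P^\NP$ computation), extended to $k$ quantifier alternations. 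Hence $\loc\Gqbf^k$ is complete for the class of $\TFNP$ problems provably total in $T^k_2$.

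Combining the two: $\GI_k$ is provably total in $T^k_2$, hence reduces to the complete problem $\loc\Gqbf^k$; and $\loc\Gqbf^k$ is provably total in $T^k_2$, hence reduces to the complete problem $\GI_k$. Therefore $\loc\Gqbf^k \equiv \GI_k$.
\end{proof}
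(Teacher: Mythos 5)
Your first direction is exactly the paper's: $T^k_2$ proves $\loc\Gqbf^k$ total because a protocol with correct Adversary replies is just a $\P^{\Sigma^p_k}$ computation, and then the Skelley--Thapen completeness of $\GI_k$ for the $\forall\Sigma^b_1$-consequences of $T^k_2$ gives $\loc\Gqbf^k \le \GI_k$. The divergence, and the gap, is in the converse. The paper does \emph{not} establish that $\loc\Gqbf^k$ is complete for the $T^k_2$-provably-total problems and then feed $\GI_k$ into that; it gives a short direct reduction, showing $\GI_k$ is solvable over $\Gqbf^k$: write ``B has a winning strategy in $G_i$'' as a $\Pi^q_k$ QBF (with oracle gates accessing $G_i$), binary-search over $i$ to find the crossover where B wins $G_i$ and A wins $G_{i+1}$, then extract the two matching plays move by move using the Adversary's claimed strategies and the functions $f^j_i$ --- i.e., it implements the specific induction sketched after Definition~\ref{def:GI_k} as an Inspector protocol. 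Note that the converse half of Corollary~\ref{cor:BA_hierarchy} (``provably total in $T^k_2$ implies reducible to $\loc\Gqbf^k$''), which is exactly what your argument needs, is in the paper a \emph{consequence} of this proposition via $\GI_k$; so you cannot cite it here, and you cannot import it from \cite{skelley2011provably}, which proves completeness of $\GI_k$, not of $\loc\Gqbf^k$.

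That leaves your independent justification of the completeness of $\loc\Gqbf^k$, and as written it does not go through for $k \ge 2$. The witnessing theorem gives you a $\P^{\Sigma^p_k}$ procedure computing a witness, but a many-one reduction to $\loc\Gqbf^k$ requires more: the Inspector must, against an \emph{arbitrary lying} Adversary, either output a solution of $R$ or exhibit a concrete, witnessed violation of the $\Gqbf^k$ axioms on the sparse oracle of replies. For $k=1$ this works because YES answers to $\NP$ queries carry polynomial-size witnesses and the only possible errors are counterexamples to NO answers --- that is the special structure exploited in Propositions~\ref{pro:QBF1} and~\ref{pro:QBF1_from_PNP}. At level $k \ge 2$ the Adversary's answers to $\Sigma^q_k$ queries cannot be certified by short witnesses, and turning a general $T^k_2$ proof into such a locally-refutable protocol is essentially the content of the Skelley--Thapen completeness theorem (proved there via the systems $\PK^0_k$ and reflection), not a routine extension of the $k=1$ case. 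The fix is either to actually carry out such a witnessing/reflection argument targeting $\loc\Gqbf^k$, or --- much more economically --- to replace the whole second half by the direct reduction $\GI_k \le \loc\Gqbf^k$ described above.
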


\begin{proof}
For the direction $\loc\Gqbf^k \le \GI_k$,
recall that $\loc\Gqbf^k$ is the problem: given a protocol for the Inspector
in which she queries QBFs in $\Sigma^q_k \cup \Pi^q_k$
(which may use oracle gates), find a run of the protocol
in which none of the Adversary's replies are contradictory. The theory $T^k_2$ proves
that such a run exists, since we can simulate the protocol, with correct
replies of the Adversary,
by a computation of a $\P^{\Sigma^p_k}$ machine, and such
computations provably exist and are well-behaved in $T^k_2$.
Since $T^k_2$ proves that $\loc\Gqbf^k$ is total,
it follows by the characterization of $\GI_k$ and $T^k_2$ 
in~\cite{skelley2011provably} that $\loc\Gqbf^k \le \GI_k$.

For the other direction, it is enough to show that 
$\GI_k$ is solvable over $\Gqbf^k$. For this we follow
the informal proof of $\GI_k$ sketched before 
Lemma~\ref{lem:GI_refutation}.
Namely, we can write the formula
$\forall x_1 \exists x_2 \forall x_3 \dots G_i(x_1, \dots, x_k)$,
expressing that B has a winning strategy in $G_i$,
as a QBF in $\Pi^q_k$ -- notice that this QBF
needs to make ``oracle'' calls to access the relation~$G_i$,
which was part of the input to our instance of $\GI_k$. 
The Inspector uses binary search with queries of this kind
to find~$i$ such that B has a winning
strategy for $G_i$ and A
has a winning strategy for $G_{i+1}$.
Then she can query the first move $x'_1$ of $A$ in $G_{i+1}$ in $A$'s strategy; apply the 
function $f^1_i$
to get a candidate first move $x_1$ of $A$ in $G_i$; query what $B$'s move $x_2$
in response should be, using $B$'s strategy in $G_i$, and so on. 
This will eventually give plays for $G_i$ and $G_{i+1}$
which match according to the functions $f^1_i, f^2_i, \dots$ 
but in which B wins in $G_i$ but not in $G_{i+1}$,
as required.
\end{proof}

\begin{corollary} \label{cor:LK_k_2}
The decision-tree versions of $\loc\Gqbf^k$ and $\GI_k$ are equivalent.
Therefore by Theorem~\ref{the:GI_characterize},
$\loc\Gqbf^{k+2}$ is characterized by $\LK_{k+\hf}$ under the measure
$\log($size$)$.
\end{corollary}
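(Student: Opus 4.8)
\textbf{Proof plan for Corollary~\ref{cor:LK_k_2}.}
The plan is to derive this corollary by combining Proposition~\ref{pro:GI_loc_k} with Theorem~\ref{the:GI_characterize}, after first observing that the equivalence $\loc\Gqbf^k \equiv \GI_k$ survives the passage from the uniform (type-2) setting to the decision-tree setting $\TFNP\dt$. The first step is to check that the reductions constructed in the proof of Proposition~\ref{pro:GI_loc_k} are \emph{uniform} in exactly the way needed for Definition~\ref{def:dt_reduction}: each direction is given by polynomial-time functions $f,g$ (in one direction built from a $\P^{\Sigma^p_k}$ simulation of the protocol, in the other from binary search over the games $G_i$ using $\Pi^q_k$ queries), and every ``oracle'' access these functions make to the input object (the protocol $\pi$ and the string $\delta$, respectively the relations $G_i$ and functions $f^j_i$) is a single local probe. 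Reading such a uniform definition as a family of reductions between $\TFNP\dt$ problems, each bit of $f(x)$ and each function $x\mapsto g(x,y')$ is computed by a decision tree of depth $\poly(\log a)$ querying the input bits, which is precisely what Definition~\ref{def:dt_reduction} demands; the size parameter blows up only quasipolynomially (the protocol tree has depth $\log a$, and the $\GI_k$ instance has size $2^n$ quasipolynomial in $a$). Hence $\loc\Gqbf^k\dt \equiv \GI_k\dt$.

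The second step is purely formal: substitute $k+2$ for $k$ in this equivalence to get $\loc\Gqbf^{k+2}\dt \equiv \GI_{k+2}\dt$, and then invoke Theorem~\ref{the:GI_characterize}, which states that $\GI_{k+2}$ is characterized by $\LK_{k+\hf}$ under the measure $\log(\mathrm{size})$. Since ``characterized by'' (Definition~\ref{def:characterization}) is a property that transfers along equivalence of $\TFNP\dt$ problems --- for any narrow CNF family $F$, $\Search(F)\le \loc\Gqbf^{k+2}$ iff $\Search(F)\le \GI_{k+2}$ iff $\mu_P(F_a)\le\poly(\log a)$ --- we conclude that $\loc\Gqbf^{k+2}$ is characterized by $\LK_{k+\hf}$ as well. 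No new combinatorial content is needed beyond what is already proved.

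The only point requiring any care, and the main (mild) obstacle, is the claim that Proposition~\ref{pro:GI_loc_k} relativizes and uniformizes correctly. In particular, the $\loc\Gqbf^k \le \GI_k$ direction in that proposition's proof is obtained \emph{indirectly}, via provability in $T^k_2$ and the Skelley--Thapen characterization of $\GI_k$; to get a genuine depth-$\poly(\log a)$ decision-tree reduction one should note that this characterization itself is proved by translating $T^k_2$ proofs into propositional $\LK_{k+\hf}$ proofs and then applying the reduction machinery (Lemmas~\ref{lem:reduction_to_treelike}, \ref{lem:decision_tree_to_half_LK}, \ref{lem:1-ref}), all of which are quasipolynomial and depth-$\poly(\log a)$ by construction. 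The other direction, $\GI_k$ solvable over $\Gqbf^k$, is already given by an explicit polynomial-time Inspector strategy making one $\Pi^q_k$/$\Sigma^q_k$ query per binary-search step, so its translation to a shallow decision-tree reduction is routine. I would present these observations briefly and then state the corollary as an immediate consequence, without reproving Theorem~\ref{the:GI_characterize}.
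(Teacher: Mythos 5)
Your proposal is correct and matches the paper's (implicit) argument: the corollary is obtained exactly by transferring the equivalence of Proposition~\ref{pro:GI_loc_k} to the decision-tree setting and then invoking Theorem~\ref{the:GI_characterize}, using that characterization is invariant under equivalence of $\TFNP\dt$ problems. Your extra care about the $T^k_2$-mediated direction is sound but not a departure --- the Skelley--Thapen characterization of $\GI_k$ is itself stated for type-2/decision-tree problems, so the reduction it produces is already of the required shallow form.
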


From Proposition~\ref{pro:GI_loc_k} it follows that other previously-known
TFNP problems are equivalent to $\loc\Gqbf^k$, since they 
are equivalent to $\GI_k$. These include 
a $k$-round version of the linear local improvement principle, mentioned in 
Section~\ref{sec:U12}~\cite{kol2011so};
 finding a pure Nash equilibrium in a succinctly-given $k$-turn, zero sum 
game, where the players are only allowed polynomial-time improvements to 
their strategies~\cite{pudlak2012alternating}; and
see also~\cite{beckmann2009polynomial, beckmann2010characterising}.
For the case~$k{=}2$, $\loc\Gqbf^2$ is equivalent to the problem CPLS~\cite{krajivcek2007np}.

The first part of the following is really a corollary of the proof of Proposition~\ref{pro:GI_loc_k},
and the second part of the properties of $\GI_k$:

\begin{corollary} \label{cor:BA_hierarchy}
$T^k_2$ proves that $\loc\Gqbf^k$ is total.
Conversely, any  $\TFNP$ problem provably total in $T^k_2$ is reducible to $\loc\Gqbf^k$.
\end{corollary}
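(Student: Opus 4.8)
The plan is to derive \textbf{Corollary~\ref{cor:BA_hierarchy}} directly from Proposition~\ref{pro:GI_loc_k} together with the known equivalence between $\GI_k$ and $T^k_2$ established in~\cite{skelley2011provably}. The first assertion is essentially re-stating the $\le$ direction in the proof of Proposition~\ref{pro:GI_loc_k}, and the second is a transfer along the equivalence $\loc\Gqbf^k \equiv \GI_k$.

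\begin{proof}
For the first claim, we observe that the forward direction of the proof of Proposition~\ref{pro:GI_loc_k} already gives a $T^k_2$-proof that $\loc\Gqbf^k$ is total: a run of the Inspector's protocol, with the Adversary answering correctly according to the true evaluation of $\Sigma^q_k \cup \Pi^q_k$ QBFs, is a computation of a $\P^{\Sigma^p_k}$ machine, and $T^k_2$ proves that such computations exist and are well-behaved (this is the basic content of Buss's witnessing theorem and the $\Sigma^b_k$-induction available in $T^k_2$; see~\cite{buss1985bounded}). Since this run witnesses that no axiom of $\Gqbf^k$ fails, it is a solution, so $T^k_2 \vdash \forall x \exists y\, \loc\Gqbf^k(x,y)$.

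For the converse, suppose $R \in \TFNP$ is provably total in $T^k_2$. By the characterization of $\GI_k$ in~\cite{skelley2011provably}, every $\TFNP$ problem provably total in $T^k_2$ is reducible to $\GI_k$; hence $R \le \GI_k$. By Proposition~\ref{pro:GI_loc_k} we have $\GI_k \equiv \loc\Gqbf^k$, so $\GI_k \le \loc\Gqbf^k$, and composing the two reductions gives $R \le \loc\Gqbf^k$, as required. (Here one should note that the reductions in~\cite{skelley2011provably} and in Proposition~\ref{pro:GI_loc_k} are ordinary many-one reductions of $\TFNP$ problems, which compose, and that the uniform and decision-tree formulations match up as in Corollary~\ref{cor:LK_k_2}.)
\end{proof}

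The only real subtlety -- and the step I would flag as needing care rather than being routine -- is making sure the two ``black-box'' characterizations are being applied in the same category of problems: the result of~\cite{skelley2011provably} is naturally stated about provability in $T^k_2$ of the totality of $\TFNP$ problems and corresponds to many-one reductions between (type-2 or decision-tree) search problems, while Proposition~\ref{pro:GI_loc_k} is stated for the uniform $\TFNP$ versions of $\loc\Gqbf^k$ and $\GI_k$. Since both $\GI_k$ and $\loc\Gqbf^k$ admit sufficiently uniform definitions that their uniform and type-2/$\dt$ versions coincide up to equivalence (this is exactly the content of Corollary~\ref{cor:LK_k_2} and the discussion of type-2 TFNP at the start of Section~\ref{sec:bounded}), the two reductions do live in the same world and compose; but this bookkeeping is the one place where the argument is more than a one-line citation.
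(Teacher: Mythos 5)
Your argument is exactly the paper's: the first claim is extracted from the forward direction of the proof of Proposition~\ref{pro:GI_loc_k} (simulating the protocol by a $\P^{\Sigma^p_k}$ computation, provably in $T^k_2$), and the second is the characterization of $T^k_2$-provably-total problems by $\GI_k$ from~\cite{skelley2011provably} composed with $\GI_k \le \loc\Gqbf^k$. The bookkeeping point you flag about matching the type-2/decision-tree and uniform formulations is a reasonable thing to note but does not change the argument.
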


\begin{corollary} \label{cor:hierarchy_separate}
In the decision-tree setting, for $k {\ge} 1$ the problem 
$\loc\Gqbf^k$ does not contain, and is not contained in,
any of  $\PPP$, $\PPAD$, $\PPADS$ or $\PPA$.
\end{corollary}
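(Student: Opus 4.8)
The plan is to deduce Corollary~\ref{cor:hierarchy_separate} from known relativized separations in the decision-tree TFNP world, using the characterization of $\loc\Gqbf^k$ by constant-depth Frege proof systems that we just established in Corollary~\ref{cor:LK_k_2}. Recall that for $k\ge 1$ the problem $\loc\Gqbf^k$ lies between $\PLS$ and $\loc\Gqbf^2\equiv\CPLS$, and more generally is characterized by $\LK_{(k-2)+\hf}$ (for $k\ge 2$; for $k=1$ it is $\PLS$, characterized by resolution). So in all cases $\loc\Gqbf^k$ is captured by a proof system in the hierarchy from polylog-width resolution up through bounded-depth Frege with small bottom fan-in. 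The four classes $\PPP$, $\PPAD$, $\PPADS$, $\PPA$ are, by the theorem quoted just before our proposition, characterized respectively by (there is no entry for $\PPP$, but see below) unary Nullstellensatz over $\NN$, unary Sherali--Adams, and Nullstellensatz over $\FF_2$.

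First I would handle the ``$\loc\Gqbf^k$ does not contain'' direction. It suffices to exhibit, for each target class $\mathcal{C}\in\{\PPAD,\PPADS,\PPA\}$, a narrow CNF family whose complete problem (pebbling/leaf/etc.\ formulas) has small refutations in the proof system characterizing $\mathcal{C}$ but requires superquasipolynomial size in every bounded-depth Frege system. These are exactly the classical bounded-depth Frege lower bounds: the $\PPA$-style counting principle ($\mathrm{Count}_2$, equivalently the mod-2 Tseitin principle on an odd-charged graph) has constant-size Nullstellensatz-over-$\FF_2$ refutations but, by Ajtai's theorem and its quantitative improvements (\cite{ajtai1994complexity, pitassi1993exponential, krajivcek1995exponential}, already cited in the excerpt), no quasipolynomial-size constant-depth Frege refutation; similarly the onto/left pigeonhole CNFs give $\PPAD$/$\PPADS$ separations via the bounded-depth PHP lower bounds. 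For $\PPP$ we use that $\PPP\supseteq\PPADS$, so any formula witnessing $\PPADS\not\le\loc\Gqbf^k$ also witnesses $\PPP\not\le\loc\Gqbf^k$. Concretely: take $F$ = the narrow CNF complete for $\mathcal{C}\dt$; then $\Search(F)\le\mathcal{C}$, but by the bounded-depth Frege lower bound and Corollary~\ref{cor:LK_k_2}, $\Search(F)\not\le\loc\Gqbf^k$; hence $\mathcal{C}\dt\not\subseteq$ (the class of $\loc\Gqbf^k$), i.e.\ $\loc\Gqbf^k$ does not contain $\mathcal{C}$.

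Next, the ``$\loc\Gqbf^k$ is not contained in'' direction. Here I would use $\CPLS$ (equivalently $\loc\Gqbf^2$, by the discussion in Section~\ref{sec:T_k_2}), which is reducible to $\loc\Gqbf^k$ for every $k\ge 2$ and whose decision-tree version is the one for which \cite{kol2022approx} shows $\CPLS\notin\APPROX$; but more to the point, $\CPLS$ (and already $\PLS$, for $k=1$) is known to be relativized-separate from $\PPP$, $\PPAD$, $\PPADS$ and $\PPA$ --- these are among the standard relativized separations among TFNP classes collected in \cite{beame1995relative, morioka2001classification, goos2022separations}. Since $\PLS\le\loc\Gqbf^k$ for all $k\ge1$, if $\loc\Gqbf^k$ were contained in any of $\PPP$, $\PPAD$, $\PPADS$, $\PPA$ then so would $\PLS$ be, contradicting the relativized separation $\PLS\not\le\PPA$ (and likewise the others). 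Translating these separations into the $\TFNP\dt$ setting is routine since all the reductions and the oracle constructions relativize, as noted at the start of Section~\ref{sec:bounded} and in the corollary following Theorem~\ref{the:TFS_equivalences}.

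The main obstacle is bookkeeping rather than mathematical depth: one must make sure the quantitative thresholds line up --- the bounded-depth Frege lower bounds must be superquasipolynomial (not merely superpolynomial) so that they defeat the $\poly(\log a)$-size bound in the definition of ``characterized by'' (Definition~\ref{def:characterization}), and one must be careful that the depth of the proof system $\LK_{(k-2)+\hf}$ characterizing $\loc\Gqbf^k$ is genuinely a fixed constant for fixed $k$, so that the fixed-depth Ajtai-style lower bounds apply. For the mod-$2$ counting / Tseitin separations this is fine because the known lower bounds \cite{ajtai1994complexity, pitassi1993exponential, krajivcek1995exponential} are exponential in a power of the number of variables, hence a fortiori superquasipolynomial, and hold for every constant depth. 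For the non-containment direction the only thing to check is that the relativized separation of $\PLS$ from the four counting-type classes survives the passage to decision-tree TFNP, which it does because the oracle separations in \cite{morioka2001classification, goos2022separations} are already stated (or immediately adaptable) in that model. So the proof is essentially a citation-and-assembly argument once Corollary~\ref{cor:LK_k_2} is in hand.
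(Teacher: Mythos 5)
Your proposal is correct and follows essentially the same route as the paper: one direction via the characterization in Corollary~\ref{cor:LK_k_2} together with the known superquasipolynomial constant-depth Frege lower bounds for counting/pigeonhole CNFs, and the other via the fact that $\PLS \le \loc\Gqbf^k$ while none of the four classes contains $\PLS$. The only (cosmetic) difference is that the paper collapses your per-class case analysis in the first direction by noting that all four classes contain $\PPAD$, so a single lower bound for the bijective pigeonhole principle suffices.
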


\begin{proof}
In this setting, suppose $\loc\Gqbf^k$ contains any of these classes. Then it contains PPAD,
and so in particular {\sc Onto pigeon} is reducible to $\loc\Gqbf^k$.
It follows by Corollary~\ref{cor:LK_k_2} that the 
propositional bijective pigeonhole principle has quasipolynomial
size constant-depth Frege proofs, which contradicts
known lower bounds~\cite{pitassi1993exponential, krajivcek1995exponential}. 
%
%We will use the Paris-Wilkie translation of proofs in bounded arithmetic
%into small propositional proofs in a constant-depth Frege system~\cite{paris1985counting}.
%By Theorem~\ref{the:BA_hierarchy} and this translation, 
%there are quasipolynomial-size, constant-depth propositional proofs of
%the family of propositional formulas asserting that $\loc\Gqbf^k$ is total.
%By the existence of the reduction {\sc Onto pigeon}$\le \loc\Gqbf^k$
%and a result of~\cite{buss2012propositional}, there are quasipolynomial-size,
%constant-depth proofs of the family of propositional formulas
%asserting the implication that if $\loc \Gqbf^k$ is total, then so
%is {\sc Onto pigeon}. Combining these, there are
%quasipolynomial-size, constant-depth proofs of
%the family of propositional formulas asserting that {\sc Onto pigeon} is total.
%This contradicts known lower bounds on the size of propositional
%proofs of the bijective pigeonhole principle~\cite{pitassi1993exponential, krajivcek1995exponential}.
For the other direction, it is enough to observe that 
none of these four classes contains PLS,  by~\cite{buresh2004relativized, goos2022separations}.
\end{proof}

%%%%%%%%%%%%%%%%%%%%%%%%%%%%%%%%%%%%%%%%%%%
\subsection{The problem {\sc Local-}$\Gqbf$ and the theory $U^1_2$} \label{sec:U12}

 $U^1_2$ was introduced in~\cite{buss1985bounded} 
as a second-order theory of bounded arithmetic, with polynomial-length
induction for formulas that can existentially quantify over oracle-sized objects.
It was shown in~\cite{buss1985bounded} that it is closely connected to PSPACE, and
from our point of view, being only interested in which TFNP problems are provably
total in a theory, it has the same strength as the statement ``every PSPACE machine
has a computation''.

\begin{proposition} \label{pro:U_1_2_Gps}
$U^1_2$ proves that the search problem $\loc\Gps$ is total.
Conversely, any  $\TFNP$ problem provably total in $U^1_2$ is reducible to $\loc\Gps$.
\end{proposition}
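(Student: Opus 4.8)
The plan is to prove the two directions separately, mirroring the treatment of $T^k_2$ and $\loc\Gqbf^k$ in Proposition~\ref{pro:GI_loc_k} and Corollary~\ref{cor:BA_hierarchy}, and freely using the equivalence $\loc\Gps \equiv \loc\Gqbf$ from Theorem~\ref{the:pspace_equivalence} to switch between $\mathrm{PSPACE}$ queries and $\mathrm{TQBF}$ queries whenever convenient. Both directions rest on the basic fact, due to Buss~\cite{buss1985bounded}, that $U^1_2$ is the bounded-arithmetic theory of $\mathrm{PSPACE}$: it $\Sigma^{1,b}_1$-defines the iteration function $\mathrm{Iter}(C,w,s) := C^s(w)$, with $s$ written in binary, and proves its defining recursion $\mathrm{Iter}(C,w,0)=w$ and $\mathrm{Iter}(C,w,s{+}1)=\eval(C,\mathrm{Iter}(C,w,s))$, and more generally its $\Sigma^b_1$-definable functions are exactly $\mathrm{FPSPACE}$. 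All of this stays true relative to the oracle $\delta$ used to evaluate $D$-gates in the type-2 versions of these problems.

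For the forward direction I would argue directly inside $U^1_2$. Given an instance of $\loc\Gps$, namely a protocol $\pi$ together with $\delta$, form the oracle $\alpha(C,w,s) := \mathrm{Iter}^\delta(C,w,s)$. By the recursion above, $\alpha$ satisfies both axioms of $\Gps$. Since $U^1_2$ contains a base theory proving that bounded (here $\poly(\log a)$-time) oracle computations exist and are well-behaved, it proves that running $\pi$ with $\alpha$ as the $A$-oracle produces a well-defined branch; as every reply along that branch is a correct value of $\mathrm{Iter}^\delta$, the label of the branch does not witness the failure of any axiom of $\Gps$, so it is a solution. Hence $U^1_2$ proves $\loc\Gps$ is total. (If one prefers $\loc\Gqbf$, the same argument applies with $\alpha$ the $\mathrm{TQBF}$-evaluating oracle, which $U^1_2$ likewise defines.)

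For the converse, suppose $R \in \TFNP$ and $U^1_2 \vdash \forall x\,\exists y\, R(x,y)$; I want $R \le \loc\Gps$. The cleanest route is via Lemma~\ref{lem:Gamma_over_PV}: by the witnessing theorem for $U^1_2$~\cite{buss1985bounded} (the $\mathrm{PSPACE}$-analogue of the $S^1_2$/$\PV_1$ conservativity), $U^1_2$ is $\forall\Sigma^b_1$-conservative over the universal first-order theory whose function symbols are the polynomial-time functions together with $\mathrm{Iter}$, and whose axioms are the defining equations of these functions. Every such axiom is either a true universal $L_{\PV}$-sentence, hence in the theory $T$ of Lemma~\ref{lem:Gamma_over_PV}, or one of the two recursion equations for $\mathrm{Iter}$, which, reading $\mathrm{Iter}$ as $A$, are literally the axioms of $\Gps$; well-definedness of $\mathrm{Iter}$ as a single-valued function follows from $T+\Gps$ exactly as in Lemma~\ref{lem:pspace_well_defined}. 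Therefore $T + \Gps \vdash \forall x\,\exists y\, R(x,y)$, and Lemma~\ref{lem:Gamma_over_PV} gives $R \le \loc\Gps$. Alternatively, and more self-containedly, one can extract a polynomial-time Inspector strategy directly from the $U^1_2$-proof, as in the proof of Lemma~\ref{lem:Gamma_over_PV} and Proposition~\ref{pro:GI_loc_k}: by induction on the structure of the proof one builds an Inspector protocol over $\Gqbf$ (equivalently $\Gps$, and then invoke Proposition~\ref{pro:solvable}) which for each $x$ either outputs $y$ with $R(x,y)$ or catches the adversary violating $\Gqbf$; each application of $\Sigma^{1,b}_1$-$\mathrm{PIND}$ becomes a binary search over the range of an oracle-sized index using adversary queries, and each $\mathrm{PSPACE}$ subcomputation becomes a $\mathrm{TQBF}$ query whose consistency the Inspector verifies by the binary-search techniques already used in Theorem~\ref{the:pspace_equivalence} and Proposition~\ref{pro:QBF1}.

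I expect the converse to be the main obstacle, and within it the delicate point is that a $\mathrm{PSPACE}$ witness is not polynomial-time computable, so the Inspector cannot simply compute the solution promised by the $U^1_2$-proof; she must instead force the adversary to reveal it, which requires checking that every use of iterated circuit evaluation and of second-order induction in the proof can be recast as an adversary query plus a polynomial-length consistency check — the $\mathrm{PSPACE}$ analogue of how $T^k_2$-provability becomes an Inspector strategy over $\Gqbf^k$. Packaging this as the conservativity statement above keeps the bookkeeping minimal. A detailed witnessing analysis of $U^1_2$ in a closely related form is carried out in~\cite{beckmann2017np} (there phrased via $\FCon$), which could be cited in place of the direct argument; combining the present proposition with that analysis will, in the results that follow in this section, yield the equivalence of $\loc\Gps$, $\loc\Gqbf$ and $\FCon$.
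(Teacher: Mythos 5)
Your argument is correct and takes essentially the same route as the paper: the forward direction is Proposition~\ref{pro:loc_totality} formalized in $U^1_2$ via the provable existence of iterated circuit evaluation, and your converse --- a $\forall\Sigma^b_1$-conservativity of $U^1_2$ over a universal theory whose extra axioms are exactly the $\Gps$ recursion equations, fed into Lemma~\ref{lem:Gamma_over_PV} --- is the same witnessing argument the paper outsources to the first part of the proof of Theorem~16 of~\cite{kol2011so}, which constructs the Inspector protocol directly. The only quibble is your side remark that the $\Sigma^b_1$-definable functions of $U^1_2$ are exactly $\mathrm{FPSPACE}$; that is the class of $\Sigma^{1,b}_1$-definable functions (the graph of a PSPACE function need not be NP), but nothing in your proof depends on this.
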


\begin{proof}
For  one direction, since PSPACE computations exist provably in $U^1_2$ 
our general proof of totality, Proposition~\ref{pro:loc_totality}, goes through.
For the other direction, suppose a type-2 TFNP problem $R$ is provably total
in~$U^1_2$. We may write this as 
\[
U^1_2 \vdash \forall U \ \forall a \ \exists v {\in} [2^{p(|a|)}] \ R(a, v ; U)
\]
where $p$ is the polynomial bound on the size of solutions.
That is, for all oracle-sized inputs $U$ (which we model in bounded arithmetic
as a second-order variable) and all size parameters $a$, there is a solution~$v$
of $R$ on this input. We then follow the first three paragraphs of the proof
of Theorem~16 of~\cite{kol2011so}. This starts with what we have written
above, in slightly different notation (for example exploiting that second-order
objects record their own length, so we do not need an explicit size parameter).
It shows that we can replace $U^1_2$ with a weak theory $\PV^{+}$
plus the assertion that an oracle $\alpha$ correctly records
the sequence of configurations computed by a PSPACE computation with transition function~$f$.
It constructs functions~$g$ and~$h$, which we may consider as parts of
an Inspector protocol trying to solve $R$ over $\Gps$.
These have the property that for every $z$, if we use $z$ to construct
a sparse oracle $A_z$, then either $g$ finds a failure of some $\Gps$
axioms in $A_z$ through querying $A_z$, or $h$ finds a solution to $R$ by querying $A_z$.
In other words, the protocol successfully solves $R$ over $\Gps$.
\end{proof}

%Note that this theorem is a little weaker than what we might expect, 
%which would be that the set of $\TFNP$ problems provably
%total in $U^1_2$ coincides with the set of $\TFNP$ problems reducible to 
%$\loc\Gps$. This is because the properties of the reduction 
%to $\loc\Gps$ may not be provable in $U^1_2$. However
%a simple change to $U^1_2$ will fix this; see e.g. Lemma~1.4 of~\cite{kol2022approx}.

\begin{corollary}
The problems $\loc\Gps$ and $\FCon$ are equivalent.
\end{corollary}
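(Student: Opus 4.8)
The plan is to route both directions through the second-order theory $U^1_2$, using the characterization of $\FCon$ in terms of $U^1_2$ from~\cite{beckmann2017np} together with Proposition~\ref{pro:U_1_2_Gps}. Throughout we work with the type-2 versions of these problems, as set up at the start of Section~\ref{sec:bounded}, and ``reducible'' means reducible in the corresponding type-2 sense (the statement for the decision-tree versions then follows by the usual translation of a type-2 reduction into a family of shallow-decision-tree reductions). Recall from~\cite{beckmann2017np} that $\FCon$ is \emph{complete}, under type-2 many-one reductions, for the class of type-2 TFNP problems provably total in $U^1_2$: that is, $\FCon$ is itself provably total in $U^1_2$ (as recalled in the proof of Lemma~\ref{lem:CNF_FCon}), and every type-2 TFNP problem provably total in $U^1_2$ is reducible to $\FCon$.

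First I would prove $\loc\Gps \le \FCon$. By the first part of Proposition~\ref{pro:U_1_2_Gps}, $U^1_2$ proves that $\loc\Gps$ is total. Applying the completeness half of the $\FCon$ characterization from~\cite{beckmann2017np} to the problem $\loc\Gps$ then gives $\loc\Gps \le \FCon$. For the converse $\FCon \le \loc\Gps$, I would use that $\FCon$ is provably total in $U^1_2$ (\cite{beckmann2017np}, and the proof of Lemma~\ref{lem:CNF_FCon}), and then invoke the second part of Proposition~\ref{pro:U_1_2_Gps}, which states that every TFNP problem provably total in $U^1_2$ is reducible to $\loc\Gps$. Combining the two directions yields $\loc\Gps \equiv \FCon$, and by Theorem~\ref{the:pspace_equivalence} also $\loc\Gqbf \equiv \FCon$. (Alternatively, in the decision-tree setting one could instead combine Theorem~\ref{the:FCon_characterize} with the Frege characterization of $\loc\Gqbf$, since two $\TFNP\dt$ problems characterized by the same proof system under the same measure are equivalent; but the $U^1_2$ route gives the type-2 statement directly.)

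The only real care needed is bookkeeping about matching conventions once everything is cast in the type-2 framework. One has to check that the $U^1_2$-proof of the totality of $\loc\Gps$ extracted via Proposition~\ref{pro:U_1_2_Gps} has the syntactic form ($\forall U\,\forall a\,\exists v{\in}[2^{p(|a|)}]\,R(a,v;U)$ with $R$ a suitable oracle machine) required as input to the completeness result of~\cite{beckmann2017np}, and conversely that the functions $g,h$ produced by Proposition~\ref{pro:U_1_2_Gps} from a $U^1_2$-proof of the totality of $\FCon$ genuinely constitute a type-2 reduction of $\FCon$ to $\loc\Gps$ (e.g.\ that the oracle-sized inputs of $\FCon$ are handled as a second-order variable in the way the proposition expects). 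This is routine, but it is where essentially all the content about lining up the two sources lives; there is no new combinatorial or proof-theoretic obstacle beyond what is already established in Proposition~\ref{pro:U_1_2_Gps} and in~\cite{beckmann2017np}, so I expect this step to be the main — but purely technical — hurdle.
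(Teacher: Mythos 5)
Your proof is correct and is essentially the same as the paper's: the paper also obtains both directions by combining Proposition~\ref{pro:U_1_2_Gps} with Theorem~1.2 of~\cite{beckmann2017np} (the analogous $U^1_2$ characterization for $\FCon$), cross-applying the totality half of each with the completeness half of the other. The paper merely states this in one line, while you have spelled out the bookkeeping.
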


\begin{proof}
This follows from~\cite[Theorem 1.2]{beckmann2017np}, which has the same
statement as Proposition~\ref{pro:U_1_2_Gps} but with $\FCon$
in place of $\loc\Gps$.
\end{proof}

\begin{corollary} \label{cor:Gqbf_characterize}
The decision-tree versions of $\loc\Gqbf$ and $\FCon$ are equivalent.
Therefore by Theorem~\ref{the:FCon_characterize},
$\loc\Gqbf$ is characterized by the Frege system under the measure
$\log($size$)$.
\end{corollary}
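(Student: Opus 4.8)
The plan is to obtain the decision-tree equivalence $\loc\Gqbf \equiv \FCon$ by chaining the two equivalences already in hand, and then to transport the Frege characterization of $\FCon$ along it. First I would invoke Theorem~\ref{the:pspace_equivalence} for $\loc\Gqbf \equiv \loc\Gps$ and the preceding corollary for $\loc\Gps \equiv \FCon$ (the latter being a consequence of Proposition~\ref{pro:U_1_2_Gps} together with \cite[Theorem~1.2]{beckmann2017np}), and compose them to get $\loc\Gqbf \equiv \FCon$; then I would apply Theorem~\ref{the:FCon_characterize}.

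The one point needing care is that every equivalence used here survives the passage to the decision-tree setting of Section~\ref{sec:dt}, working throughout with the type-2 versions of $\loc\Gqbf$, $\loc\Gps$ and $\FCon$ fixed at the start of that section. For $\loc\Gqbf \equiv \loc\Gps$ this was already flagged: the reductions in Theorem~\ref{the:pspace_equivalence} are polynomial-time oracle machines simulating one protocol inside a game against the other adversary, and such a machine, acting on an oracle-sized input (length quasipolynomial in the size parameter $a$), unfolds into a family of depth-$\poly(\log a)$ decision trees, i.e.\ a $\TFNP\dt$ reduction. The same remark covers $\loc\Gps \equiv \FCon$: the two halves of Proposition~\ref{pro:U_1_2_Gps} (and the $\FCon$ analogue in \cite{beckmann2017np}) produce witnessing reductions given by $\PV$-functions with oracle access, which again become shallow-decision-tree reductions between the decision-tree versions of the problems. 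So $\loc\Gqbf \equiv \FCon$ holds as $\TFNP\dt$ problems.

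Finally, being characterized by a proof system in the sense of Definition~\ref{def:characterization} depends only on the downward $\TFNP\dt$-closure of the problem: if $R \equiv S$ then $\Search(F) \le R$ iff $\Search(F) \le S$ for every narrow unsatisfiable CNF family $F$. Hence the characterization of $\FCon$ by Frege (Theorem~\ref{the:FCon_characterize}) passes verbatim to $\loc\Gqbf$. I expect the main (and only real) obstacle to be the bookkeeping in the previous paragraph --- checking that the polynomial-time oracle reductions extracted from the bounded-arithmetic arguments genuinely have the $\poly(\log a)$ query complexity required of decision-tree reductions, and that the type-2 formalizations of the three problems line up exactly; there is no new combinatorial content.
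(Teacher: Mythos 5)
Your proposal is correct and follows the same route the paper takes: the corollary is obtained by composing $\loc\Gqbf \equiv \loc\Gps$ (Theorem~\ref{the:pspace_equivalence}, which the paper notes still goes through for the type-2 versions) with $\loc\Gps \equiv \FCon$ (the preceding corollary, via Proposition~\ref{pro:U_1_2_Gps} and \cite[Theorem 1.2]{beckmann2017np}), and then transporting Theorem~\ref{the:FCon_characterize} along the equivalence. Your extra bookkeeping about the oracle reductions unfolding into $\poly(\log a)$-depth decision-tree reductions is exactly the point the paper leaves implicit.
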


Proposition~\ref{pro:U_1_2_Gps} gives us some other,
more purely combinatorial search problems equivalent to $\loc\Gqbf$,
from the literature on TFNP in bounded arithmetic,
namely the \emph{linear local improvement principle}, about a progressive
sequence of labellings of a graph, and the related \emph{one-round rectangular
local improvement principle}~\cite{kol2011so, beckmann2014improved}.

{\small
\bibliographystyle{alphaabbr}
\bibliography{main}}

\end{document}